\documentclass[11pt]{article}
%!TeX root = main.tex
%!TeX root = ppp.tex

\linespread{1.06}
\usepackage[sc]{mathpazo}

\usepackage[margin=1in]{geometry}
\usepackage[english]{babel}
\usepackage[utf8x]{inputenc}
\usepackage[compact]{titlesec}
\usepackage[usenames,dvipsnames]{xcolor}

%% These have been added at the request of the MIT Libraries, because
%% some PDF conversions mess up the ligatures.  -LB, 1/22/2014
\usepackage{cmap}
\usepackage[T1]{fontenc}
\usepackage{bm}
\pagestyle{plain}

\usepackage{amsmath}
\usepackage{amsfonts}
\usepackage{amssymb}
\usepackage{amsbsy}
\usepackage{amsthm}
\usepackage{dsfont}

% g
\usepackage{mathtools}
\usepackage{xspace}

\usepackage{graphicx, ucs}

\usepackage{subcaption}
\usepackage{rotating}
\usepackage{float}
\usepackage{tikz}

\usepackage{algorithm}
\usepackage[noend]{algpseudocode}
\usepackage{listings}

\usepackage{enumitem}
\usepackage{hyperref}
\hypersetup{
  colorlinks = true,
  urlcolor = {blueGrotto},
  linkcolor = {royalBlue},
  citecolor = {navyBlueP}
}
\usepackage{multirow}
\usepackage{array}

\usepackage{accents}

\usepackage[outline]{contour}% http://ctan.org/pkg/contour

\usepackage[%
linewidth=3pt,
linecolor=gray,
middlelinecolor= black,
middlelinewidth=0.4pt,
roundcorner=1pt,
topline = false,
rightline = false,
bottomline = false,
rightmargin=0pt,
skipabove=0pt,
skipbelow=0pt,
leftmargin=0pt,
innerleftmargin=4pt,
innerrightmargin=0pt,
innertopmargin=0pt,
innerbottommargin=0pt,
]{mdframed}

\usepackage{chngcntr}
\usepackage{soul}
\usepackage{arydshln}

\usepackage{thm-restate}
\usepackage{cleveref}

\usepackage{color-edits}
\addauthor{cd}{niceRed}
\addauthor{sk}{limeGreen}
\addauthor{mz}{blueGrotto}

\definecolor{Gred}{RGB}{219, 50, 54}
\definecolor{Ggreen}{RGB}{60, 186, 84}
\definecolor{Gblue}{RGB}{72, 133, 237}
\definecolor{Gyellow}{RGB}{247, 178, 16}
\definecolor{ToCgreen}{RGB}{0, 128, 0}
\definecolor{myGold}{RGB}{231,141,20}
\definecolor{myBlue}{rgb}{0.19,0.41,.65}
\definecolor{myPurple}{RGB}{175,0,124}

\usepackage{sidecap}
\usepackage{environ}
\sidecaptionvpos{figure}{t}
\usepackage{comment}

\usepackage{ulem}

\usepackage{cancel}
\usepackage{wrapfig}

% bring in the individual header files

% blackboard symbols

\newcommand{\F}{\ensuremath{\mathbb{F}}}

\newcommand{\N}{\ensuremath{\mathbb{N}}}

\newcommand{\R}{\ensuremath{\mathbb{R}}}

% hash functions

\newcommand{\calC}{\ensuremath{\mathcal{C}}}

\newcommand{\calI}{\ensuremath{\mathcal{I}}}

\newcommand{\calO}{\ensuremath{\mathcal{O}}}
\newcommand{\calP}{\ensuremath{\mathcal{P}}}
\newcommand{\calQ}{\ensuremath{\mathcal{Q}}}

% \newcommand{\calH}{\ensuremath{\mathcal{H}}}
% \newcommand{\calR}{\ensuremath{\mathcal{R}}}
% \newcommand{\calX}{\ensuremath{\mathcal{X}}}
% \newcommand{\calY}{\ensuremath{\mathcal{Y}}}
% "left-right" pairs of symbols

%% NOTE: this requires \usepackage{mathtools} in the document preamble

% inner product

% absolute value
%\DeclarePairedDelimiter\abs{\lvert}{\rvert}
% a set
%\DeclarePairedDelimiter\set{\{}{\}}
% parens

% tuple, alias for parens

% square brackets

% rounding off

% floor function
%\DeclarePairedDelimiter\floor{\lfloor}{\rfloor}
% ceiling function
%\DeclarePairedDelimiter\ceil{\lceil}{\rceil}
% length of some vector, element
%\DeclarePairedDelimiter\norm{\lVert}{\rVert}
%\DeclarePairedDelimiter\length{\lVert}{\rVert}
% "lifting" of a residue class

% macros for matrices and vectors
\renewcommand{\vec}[1]{\boldsymbol{#1}}

\newcommand{\matA}{\ensuremath{\boldsymbol{A}}}

\newcommand{\vecb}{\ensuremath{\boldsymbol{b}}}
\newcommand{\vecc}{\ensuremath{\boldsymbol{c}}}

\newcommand{\vecg}{\ensuremath{\boldsymbol{g}}}

\newcommand{\vecp}{\ensuremath{\boldsymbol{p}}}
\newcommand{\vecq}{\ensuremath{\boldsymbol{q}}}

\newcommand{\vecs}{\ensuremath{\boldsymbol{s}}}
\newcommand{\vect}{\ensuremath{\boldsymbol{t}}}
\newcommand{\vecu}{\ensuremath{\boldsymbol{u}}}
\newcommand{\vecv}{\ensuremath{\boldsymbol{v}}}
\newcommand{\vecw}{\ensuremath{\boldsymbol{w}}}
\newcommand{\vecx}{\ensuremath{\boldsymbol{x}}}
\newcommand{\vecy}{\ensuremath{\boldsymbol{y}}}
\newcommand{\vecz}{\ensuremath{\boldsymbol{z}}}

\theoremstyle{plain}            % following are "theorem" style
\newtheorem{theorem}{Theorem}[section]
\newtheorem{inftheorem}{Informal Theorem}
\newtheorem{lemma}[theorem]{Lemma}
\newtheorem{corollary}[theorem]{Corollary}

\newtheorem{claim}[theorem]{Claim}

\theoremstyle{definition}       % following are def style
\newtheorem{definition}[theorem]{Definition}

\theoremstyle{remark}           % following are remark style
\newtheorem{remark}[theorem]{Remark}
\newtheorem{example}[theorem]{Example}

% equation numbering style
\numberwithin{equation}{section}

% without Numbers

% Proofs

% Problem Environment
\contourlength{0.1pt}
\contournumber{10}

\newenvironment{nproblem}[1][\unskip]{%
\medskip
\begin{mdframed}
\noindent
\contour{black}{\underline{$#1$.}} \\
\noindent}
{\end{mdframed}}

% GENERAL COMPUTING STUFF

% asymptotic stuff
%\DeclareMathOperator{\poly}{poly}
%\DeclareMathOperator{\polylog}{polylog}

% assorted

% font for general-purpose algorithms

% font for general-purpose computational problems

% font for complexity classes
\newcommand{\class}[1]{\ensuremath{\mathsf{#1}}\xspace}

% complexity classes and languages
\renewcommand{\P}{\class{P}}
\newcommand{\FP}{\class{FP}}

\newcommand{\NP}{\class{NP}}
\newcommand{\FNP}{\class{FNP}}

\newcommand{\PLS}{\class{PLS}}
\newcommand{\PPAD}{\class{PPAD}}

%

% basic notation

% lattice

% dual

% fundamental region

% smoothing parameter

% smoothing w/ epsilon

% ball
\newcommand{\ball}{\mathcal{B}}
% cube

% covering radius symbol

% Gram-Schmidt

% GS minimum

% volume operation

% Hermite normal form

% rank

% distance operator

% span operator

% error function

% group order

% quantities that show up regularly

% support algorithms

% problems related to lattices

% avg-case stuff

% crypto-related notation

% KEYS AND RELATED

      % public params

% GAMES STUFF

% advantage

% different attack models

 % dummy attack

 % anonymous

 % selective-id
 % adaptive-id

 % forgery: adaptive chosen-message
 % forgery: static chosen-message
 % strong forgery: adaptive chosen-message
 % strong forgery: static chosen-message

% ADVERSARIES

%\newcommand{\For}{\attacker{F}}

% CRYPTO SCHEMES

% collision-resistant hash func family

% secret-key cryptosystem

 % can also use \kemenc and \kemdec

% public-key cryptosystem

 % can also use \kemenc and \kemdec

% extra deniable algorithms for pkc

 % can also use \kemenc and \kemdec

% plan ahead deniability 

 % can also use \kemenc and \kemdec

% bitranslucent sets

% digital signatures

% key-encapsulation mechanism

% symmetric cipher

% identity-based encryption

% hierarchical IBE (as key encapsulation)

% binary tree encryption (as key encapsulation)

% trapdoor functions

% preimage samplable functions

% COMPLEXITY MEASURES

% UC STUFF

%\newcommand{\ideal}{\ensuremath{\text{IDEAL}}}
% \Sim and \Adv are already defined above

% MARGIN NOTES

\newif\ifnotes\notestrue
% \newif\ifnotes\notesfalse

\ifnotes
\usepackage{color}
\definecolor{mygrey}{gray}{0.50}
\newcommand{\notename}[2]{{\textcolor{red}{\footnotesize{\bf (#1:} {#2}{\bf
) }}}}

\else

\newcommand{\notename}[2]{{}}

\fi

% macros for probability distributions

% Expectations

\DeclareMathOperator*{\Prob}{{\mathbb{P}}}

\mathchardef\mdash="2D % Define a "math hyphen"

% new epsilons
\newcommand{\eps}{\varepsilon}

\renewcommand{\epsilon}{\varepsilon}

% paragraph macro (useful because many classes redefine it}

% Macros from Adeline and Damien

% macros for extLWE proof

% Macros for decision problems and such

%\def\getsr{\stackrel{\scriptscriptstyle{\$}}{\gets}}

%\newcommand{\gvc}[1]{\bm{{#1}}}

%!TeX root = main.tex

%%%%%% Compact Itemize and Enumerate
\def\compactify{\itemsep=0pt \topsep=0pt \partopsep=0pt \parsep=0pt}
\let\latexusecounter=\usecounter

\newenvironment{Enumerate}
  {\def\usecounter{\compactify\latexusecounter}
   \begin{enumerate}}
  {\end{enumerate}\let\usecounter=\latexusecounter}

{\begin{itemize}%
\setlength{\itemsep}{0pt}%
\setlength{\topsep}{0pt}%
\setlength{\partopsep}{0 in}%
\setlength{\parskip}{0 pt}}%
{\end{itemize}}

%%%%%% General
%\def\eps{\varepsilon}

\DeclareMathOperator*{\argmin}{argmin}

\def\ceil#1{\mathop{\left\lceil#1\right\rceil}}
\def\abs#1{\left|#1\right|}
\def\p#1{\left(#1\right)}
\def\b#1{\left[#1\right]}
\def\set#1{\left\{#1\right\}}

\newcommand{\paragr}[1]{\noindent \textbf{#1}}

%%%%%% Vectors
%\def\matr#1{\mathbf{#1}}
%\def\vect#1{\vec{#1}}

\def\norm#1{\left\|#1\right\|}

%%%%%% Algorithms & Complexity
\def\poly{\mathrm{poly}}

% \def\PSPACE{\textsc{PSPACE}}
% \def\P{\textsc{P}}
% \def\FP{\textsc{FP}}
% \def\NP{\textsc{NP}}
% \def\FNP{\textsc{FNP}}
% \def\coNP{\textsc{coNP}}
%
% \def\TFNP{\textsc{TFNP}}
% \def\PLS{\textsc{PLS}}
% \def\PPA{\textsc{PPA}}
% \def\PPP{\textsc{PPP}}
% \def\PPAD{\textsc{PPAD}}
% \def\PPADS{\textsc{PPADS}}
% \def\CLS{\textsc{CLS}}

%%%%%% Probability
%\def\Prob{\mathbb{P}}
%\def\Exp{\mathbb{E}}

%%%%%% Sets of numbers

%%%%%% Topology

\newcommand{\diam}[2]{\mathrm{diam}_{#1}\left[ #2 \right]}

% %%%%%% Theorems
% \newtheorem{theorem}{Theorem}
% \newtheorem{inftheorem}{Informal Theorem}
% \newtheorem{proposition}{Proposition}
% \newtheorem{lemma}{Lemma}
% \newtheorem{claim}{Claim}
% \newtheorem{corollary}{Corollary}
%
% \newtheorem{assumption}{Assumption}
% \newtheorem{definition}{Definition}
%
% % Without Numbers
% \newtheorem*{ntheorem}{Theorem}
% \newtheorem*{ninftheorem}{Informal Theorem}
% \newtheorem*{nproposition}{Proposition}
% \newtheorem*{nlemma}{Lemma}
% \newtheorem*{nclaim}{Claim}
% \newtheorem*{ncorollary}{Corollary}
%
% \newtheorem*{nassumption}{Assumption}
% \newtheorem*{ndefinition}{Definition}
%
% % Proofs
% \newenvironment{proofb}{\noindent \textit{Proof.}}{\hfill \rule{1.5ex}{1.5ex}}
%
% \newenvironment{prevproof}[2]{\noindent {\bf {Proof of {#1}~\ref{#2}:}}}{$\blacksquare$\vskip \belowdisplayskip}

%%%%%% Some Colors
%\definecolor{myC}{rgb}{0, 255, 255}
%\definecolor{myY}{rgb}{204, 204, 0}
%\definecolor{myM}{rgb}{255, 0, 255}
%\definecolor{secinhead}{RGB}{249,196,95}
%\definecolor{lgray}{gray}{0.8}
\definecolor{niceRed}{RGB}{190,38,38}
\definecolor{blueGrotto}{HTML}{059DC0}
\definecolor{royalBlue}{HTML}{057DCD}
\definecolor{navyBlueP}{HTML}{0B579C}
\definecolor{limeGreen}{HTML}{81B622}

%%%%%% Appearance

%%%%%% Paper specific commands
\definecolor{lred}{RGB}{236,33,39}
\definecolor{lyellow}{RGB}{251,168,26}
\definecolor{lblue}{RGB}{94,202,231}
\definecolor{lgreen}{RGB}{124,194,67}
\definecolor{lgray}{RGB}{179,181,184}
\definecolor{lbrown}{RGB}{139,94,60}
\definecolor{lredBlue}{RGB}{165,118,135}
\definecolor{lredGreen}{RGB}{186,105,50}
\definecolor{lyellowGreen}{RGB}{188,181,50}
\definecolor{lyellowBlue}{RGB}{166,187,139}

\renewcommand{\bar}[1]{\overline{#1}}
\renewcommand{\diam}{\mathrm{diam}}

\renewcommand{\ball}{\class{B}}

\newcommand{\stat}{\textsc{StationaryPoint}}
\newcommand{\lmin}{\textsc{LocalMin}}
\newcommand{\lNash}{\textsc{LocalMinMax}}
\newcommand{\lrlNash}{\text{\sc LR-\lNash}}

\newcommand{\gdFixed}{\textsc{GDFixedPoint}}
\newcommand{\gdaFixed}{\textsc{GDAFixedPoint}}

\newcommand{\Sperner}{\textsc{Sperner}}
\newcommand{\BiSperner}{\textsc{BiSperner}}
\newcommand{\TD}{2\mathrm{D}}
\newcommand{\HD}{\textsc{High}\mathrm{D}}

\newcommand{\nm}[1]{\b{#1} - 1}

\title{The Complexity of Constrained Min-Max Optimization}
\author{
  Constantinos Daskalakis \\
  MIT \\
  \url{costis@csail.mit.edu} \normalsize
  \and Stratis Skoulakis \\
  SUTD \\
  \url{efstratios@sutd.edu.sg} \normalsize
  \and Manolis Zampetakis \\
  MIT \\
  \url{mzampet@mit.edu} \normalsize
}
\begin{document}
\thispagestyle{empty}
\maketitle

\begin{abstract}
    Despite its important applications in Machine Learning, min-max optimization
  of objective functions that are \textit{nonconvex-nonconcave} remains elusive.
  Not only are there no known first-order methods converging even to approximate
  local min-max points, but the computational complexity of identifying them is
  also poorly understood. In this paper, we provide a characterization of the
  computational complexity of the problem, as well as of the limitations of
  first-order methods in constrained min-max optimization problems with
  nonconvex-nonconcave objectives and linear constraints.

    As a warm-up, we show that, even when the objective is a Lipschitz and
  smooth differentiable function, deciding whether a min-max point exists, in
  fact even deciding whether an approximate min-max point exists, is $\NP$-hard.
  More importantly, we show that an approximate local min-max point of large
  enough approximation is guaranteed to exist, but finding one such point is
  $\PPAD$-complete. The same is true of computing an approximate fixed point of
  the (Projected) Gradient Descent/Ascent update dynamics.

    An important byproduct of our proof is to establish an unconditional
  hardness result in the~Nemirovsky-Yudin~\cite{nemirovsky1983problem} oracle
  optimization model. We show that, given oracle access to some function
  $f : \calP \to [-1, 1]$ and its gradient $\nabla f$, where
  $\calP \subseteq [0, 1]^d$ is a known convex polytope, every algorithm that
  finds a $\eps$-approximate local min-max point needs to make a number of
  queries that is exponential in at least one of $1/\eps$, $L$, $G$, or $d$,
  where $L$ and $G$ are respectively the smoothness and Lipschitzness of $f$ and
  $d$ is the dimension. This comes in sharp contrast to minimization problems,
  where finding approximate local minima in the same setting can be done
  with Projected Gradient Descent using  $O(L/\eps)$ many queries. Our result is
  the first to show an exponential separation between these two fundamental
  optimization problems in the oracle model.
\end{abstract}
\thispagestyle{empty}
\newpage
\setcounter{page}{1}

\thispagestyle{empty}
\tableofcontents
\thispagestyle{empty}
\newpage
\setcounter{page}{1}

\section{Introduction} \label{sec:intro}

  {\em Min-Max Optimization} has played a central role in the development of
Game Theory~\cite{VN28}, Convex Optimization~\cite{Dantzig1951,Adler13}, and
Online Learning
\cite{B56, C06, shalev2012online, bubeck2012regret, shalev2014understanding, hazan2016introduction}.
In its general constrained form, it can be written down as follows:
\begin{align}
  \min_{\vecx \in \mathbb{R}^{d_1}} \max_{\vecy \in \mathbb{R}^{d_2}} f(\vecx,\vecy); \label{eq:costis1} \\
  ~~~~~~~~\text{s.t.}~~g(\vecx,\vecy) \le 0. \notag
\end{align}
Here, $f : \R^{d_1} \times \R^{d_2} {\to [-B, B]}$ with $B \in \R_+$, and
$g : \R^{d_1} \times \R^{d_2} {\to \R}$ is typically taken to be a convex
function so that the constraint set $g(\vecx, \vecy) \le 0$ is convex. In this
paper, we only use linear functions $g$ so the constraint set is a polytope,
thus projecting on this set and checking feasibility of a point with respect to
this set can both be done in polynomial time.

  The goal in \eqref{eq:costis1} is to find a feasible pair
$(\vecx^{\star}, \vecy^{\star})$, i.e., $g(\vecx^{\star}, \vecy^{\star}) \le 0$,
that satisfies the following
\begin{align}
    & f(\vecx^{\star}, \vecy^{\star}) \le f(\vecx, \vecy^{\star}), ~~\text{for all~$\vecx$ s.t.~$g(\vecx,\vecy^{\star}) \le 0$}; \label{eq:local min costis}\\
    & f(\vecx^{\star}, \vecy^{\star}) \ge f(\vecx^{\star},\vecy), ~~\text{for all~$\vecy$ s.t.~$g(\vecx^{\star},\vecy) \le 0$}. \label{eq:local max costis}
\end{align}

  It is well-known that, when $f(\vecx,\vecy)$ is a convex-concave function,
i.e., $f$ is convex in $\vecx$ for all $\vecy$ and it is concave in $\vecy$ for
all $\vecx$, then Problem~\eqref{eq:costis1} is guaranteed to have a solution,
under compactness of the constraint set~\cite{VN28,rosen1965existence}, while
computing a solution is amenable to convex programming. In fact, if $f$ is
$L$-smooth, the problem can be solved via first-order methods, which are
iterative, only access $f$ through its gradient,\footnote{In general, the access
to the constraints $g$ by these methods is more involved, namely through an
optimization oracle that optimizes convex functions
(in fact, quadratic suffices) over $g(\vecx,\vecy)\le 0$. In the settings
considered in this paper $g$ is linear and these tasks are computationally straightforward.} and achieve an approximation error of $\poly(L,1/T)$ in $T$
iterations; see e.g.~\cite{korpelevich1976extragradient,nemirovski2004interior}.\footnote{In the stated error rate, we are suppressing factors that depend on the diameter of the feasible set. Moreover, the stated error of $\varepsilon(L,T) \triangleq \poly(L, 1/T)$ reflects that these methods return an approximate min-max solution, wherein the inequalities on the LHS of~\eqref{eq:local min costis} and~\eqref{eq:local max costis} are satisfied to within an additive $\varepsilon(L,T)$.
}
When the function is strongly convex-strongly concave, the rate becomes
geometric~\cite{facchinei2007finite}.

  Unfortunately, our ability to solve Problem~\eqref{eq:costis1} remains rather
poor in settings where our objective function $f$ is {\em not} convex-concave.
This is emerging as a major challenge in Deep Learning, where min-max
optimization has recently found many important applications, such as training
Generative Adversarial Networks (see e.g.~\cite{GAN14,arjovsky2017wasserstein}),
and robustifying deep neural network-based models against adversarial attacks
(see e.g.~\cite{madry2017towards}). These applications are indicative of a
broader deep learning paradigm wherein robustness properties of a deep learning
system are tested and enforced by another deep learning system. In these
applications, it is very common to encounter min-max problems with objectives that are
nonconvex-nonconcave, and thus evade treatment by the classical algorithmic
toolkit targeting convex-concave objectives.

  Indeed, the optimization challenges posed by objectives that are
nonconvex-nonconcave are not just theoretical frustration. Practical experience
with first-order methods is rife with frustration as well. A common experience
is that the training dynamics of first-order methods is unstable, oscillatory
or divergent, and the quality of the points encountered in the course of
training can be poor; see e.g.~\cite{goodfellow2016nips,metz2016unrolled,daskalakis2017training,mescheder2018training,daskalakis2018limit,mazumdar2018convergence,MertikopoulosPP18,adolphs2018local}.
This experience is in stark contrast to minimization (resp.~maximization)
problems, where even for nonconvex (resp.~nonconcave) objectives, first-order
methods have been found to efficiently converge to approximate local optima or
stationary points (see e.g.~\cite{agarwal2017finding,jin2017escape,LeePPSJR19}),
while practical methods such Stochastic Gradient Descent, Adagrad, and
Adam~\cite{duchi2011adaptive,kingma2014adam,ReddiKK18} are driving much of the
recent progress in Deep Learning.
\medskip

  The goal of this paper is to \textit{shed light on the complexity of min-max
optimization problems}, and \textit{elucidate its difference to minimization and
maximization problems}---as far as the latter is concerned without loss of
generality we focus on minimization problems, as maximization problems behave
exactly the same; we will also think of minimization problems in the framework
of~\eqref{eq:costis1}, where the variable $\vec y$ is absent, that is
$d_2 = 0$. An important driver of our comparison between min-max optimization
and minimization is, of course, the nature of the objective. So let us discuss:
\bigskip

\noindent $\triangleright$ \textit{Convex-Concave Objective.} The benign
setting for min-max optimization is that where the objective function is
convex-concave, while the benign setting for minimization is that where the
objective function is convex. In their corresponding benign settings, the two
problems behave quite similarly from a computational perspective in that they
are amenable to convex programming, as well as first-order methods which only
require gradient information about the objective function. Moreover, in their
benign settings, both problems have guaranteed existence of a solution under
compactness of the constraint set. Finally, it is clear how to define
approximate solutions. We just relax the inequalities on the left
hand side of~\eqref{eq:local min costis} and~\eqref{eq:local max costis} by some
$\eps > 0$.
\medskip

\noindent $\triangleright$ \textit{Nonconvex-Nonconcave Objective.} By
contrapositive, the challenging setting for min-max optimization is that where
the objective is \textit{not} convex-concave, while the challenging setting for minimization is that where the objective is not convex. In these
challenging settings, the behavior of the two problems diverges significantly.
The first difference is that, while a solution to a minimization problem is
still guaranteed to exist under compactness of the constraint set even when the
objective is not convex, a solution to a min-max problem is \textit{not}
guaranteed to exist when the objective is not convex-concave, even under
compactness of the constrained set. A trivial example is this:
$\min_{x \in [0,1]} \max_{y \in [0,1]} (x-y)^2$. Unsurprisingly, we show that
checking whether a min-max optimization problem has a solution is $\NP$-hard. In
fact, we show that checking whether there is an approximate min-max solution is
$\NP$-hard, even when the function is Lispchitz and smooth and the desired
approximation error is an absolute constant (see Theorem~\ref{t:local_nash}).
\bigskip

  Since min-max solutions may not exist, what could we plausibly hope to
compute? There are two obvious targets:

\begin{enumerate}
  \item[(I)] approximate stationary points of $f$, as
considered e.g.~by~\cite{abernethy2019last}; and \label{item:costas1}

  \item[(II)] some type of approximate {\em local} min-max solution. \label{item:costas2}
\end{enumerate}

Unfortunately, as far as (I) is concerned, it is still
possible that (even approximate) stationary points may not exist, and we show
that checking if there is one is $\NP$-hard, even when the constraint set is
$[0,1]^d$, the objective has Lipschitzness and smoothness polynomial in $d$, and
the desired approximation is an absolute constant
(Theorem~\ref{thm:approximateStationaryPointsHardness}). So we focus on~(II),
i.e.~(approximate) local min-max solutions. Several kinds of those have been
proposed in the literature~\cite{daskalakis2018limit,mazumdar2018convergence,jin2019minmax}.
We consider a generalization of the concept of local min-max equilibria,
proposed in~\cite{daskalakis2018limit,mazumdar2018convergence}, that also
accommodates approximation.

\begin{definition}[Approximate Local Min-Max Equilibrium]\label{def:general constraint local min-max}
    Given $f$, $g$ as above, and $\varepsilon,\delta>0$, some point
  $(\vecx^{\star}, \vecy^{\star})$ is an
  {\em $(\varepsilon, \delta)$-local min-max solution of~\eqref{eq:costis1}}, or
  a {\em $(\varepsilon, \delta)$-local min-max equilibrium}, if it is feasible,
  i.e.~$g(\vecx^{\star}, \vecy^{\star})\le 0$, and satisfies:
  \begin{align}
    &f(\vecx^{\star}, \vecy^{\star}) < f(\vecx, \vecy^{\star}) + \eps, ~\text{for all~$\vecx$ such that $\norm{\vecx - \vecx^{\star}} \le \delta$ and $g(\vecx, \vecy^{\star}) \le 0$}; \label{eq:local min costis2}\\
    &f(\vecx^{\star}, \vecy^{\star}) > f(\vecx^{\star},\vecy) - \eps, ~\text{for all~$\vecy$ such that $\norm{\vecy - \vecy^{\star}} \le \delta$ and~$g(\vecx^{\star},\vecy) \le 0$}. \label{eq:local max costis2}
  \end{align}
\end{definition}

\noindent In words, $(\vecx^{\star}, \vecy^{\star})$ is an
$(\eps, \delta)$-local min-max equilibrium, whenever the min player cannot
update $\vecx$ to a feasible point within $\delta$ of $\vecx^{\star}$ to reduce
$f$ by at least $\eps$, and symmetrically the max player cannot change $\vecy$
locally to increase $f$ by at least $\eps$.
\medskip

  We show that the existence and complexity of computing such approximate local
min-max equilibria depends on the relationship of $\varepsilon$ and $\delta$
with the smoothness, $L$, and the Lipschitzness, $G$, of the objective function
$f$. We distinguish the following regimes, also shown in Figure
\ref{fig:regimes} together with a summary of our associated results.
\smallskip

\paragr{$\blacktriangleright$ Trivial Regime.} This occurs when
$\delta < \frac{\eps}{G}$. This regime is trivial because the $G$-Lipschitzness
of $f$ guarantees that all feasible points are $(\varepsilon, \delta)$-local
min-max solutions.
\smallskip

\paragr{$\blacktriangleright$ Local Regime.} This occurs when
$\delta < \sqrt{\frac{2 \eps}{L}}$, and it represents the interesting regime for
min-max optimization. In this regime, we use the smoothness of $f$ to show that
$(\varepsilon, \delta)$-local min-max solutions always exist. Indeed, we show
(Theorem~\ref{t:LocalMaxMin-GDA}) that computing them is computationally
equivalent to the following variant of (I) which is more suitable for the
constrained setting:
\begin{enumerate}
    \item[(I')] (approximate) fixed points of the projected gradient descent-ascent dynamics (Section~\ref{sec:computational:syntactic}).
\end{enumerate}
We show via an application of Brouwer's fixed point theorem to the iteration map
of the projected gradient descent-ascent dynamics that (I)' are guaranteed to
exist. In fact, not only do they exist, but computing them is in $\PPAD$, as can
be shown by bounding the Lipschitzness of  the projected gradient descent-ascent
dynamics (Theorem~\ref{t:PPAD_inclusion}).
\smallskip

\paragr{$\blacktriangleright$ Global Regime.} This occurs when $\delta$ is
comparable to the diameter of the constraint set. In this case, the existence of
$(\varepsilon, \delta)$-local min-max solutions is not guaranteed, and
determining their existence is $\NP$-hard, even if $\varepsilon$ is an absolute
constant (Theorem~\ref{t:local_nash}).
\medskip

\definecolor{themePurple}{RGB}{146,39,143}
\definecolor{themeBlue}{RGB}{76,90,106}
\definecolor{themeBrown1}{RGB}{134,86,64}
\begin{figure}[!p]
  \centering
  \includegraphics[width=\textwidth]{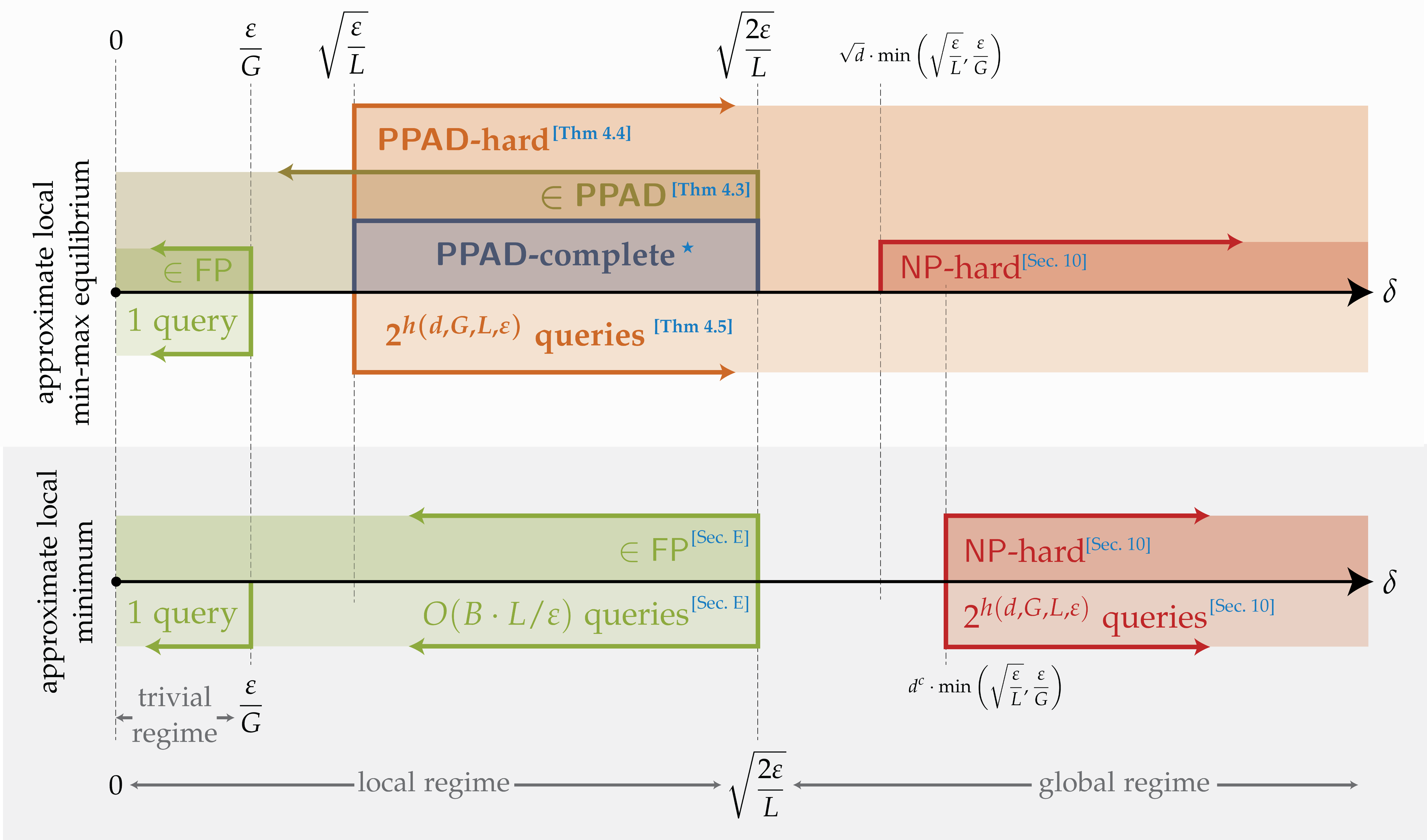}

  \caption{Overview of the results proven in this paper and comparison
  between the complexity of computing an $(\varepsilon,\delta)$-approximate
  local minimum and an $(\varepsilon,\delta)$-approximate local min-max
  equilibrium of a $G$-Lipschitz and $L$-smooth function over a $d$-dimensional
  polytope taking values in the interval $[-B, B]$. We assume that
  $\eps < G^2/L$, thus the trivial regime is a strict subset of the local
  regime. Moreover, we assume that the approximation parameter $\eps$ is
  provided in unary representation in the input to these problems, which makes
  our hardness results stronger and the comparison to the upper bounds known for
  finding approximate local minima fair, as these require time/oracle queries
  that are polynomial in $1/\eps$. We note that the unary representation is not
  required for our results proving inclusion in ${\PPAD}$. The figure portrays a
  sharp contrast between the computational complexity of approximate local
  minima and approximate local min-max equilibria in the local regime. Above the
  black lines, tracking the value of $\delta$, we state our ``white box''
  results and below the black lines we state our ``black-box'' results. The main
  result of this paper is the $\PPAD$-hardness of approximate local min-max
  equilibrium for $\delta \ge \sqrt{\eps/L}$ and the corresponding query lower
  bound. In the query lower bound the function $h$ is defined as
  $h(d, G, L, \eps) = \p{\min(d, \sqrt{L/\eps}, G/\eps)}^p$ for some universal
  constant $p \in \R_+$. With \textcolor{royalBlue}{$\star$} we indicate our
  $\PPAD$-completeness result which directly follows from Theorems
  \ref{thm:localNashExistence} and \ref{thm:localNashHardness}. The
  $\NP$-hardess results in the global regime are presented in Section
  \ref{sec:localNash}. Finally, the folklore result showing the tractability of
  finding approximate local minima is presented
  for completeness of  exposition in Appendix \ref{sec:gdStationary}. The
  claimed results for the trivial regime follow from the definition of
  Lipschitzness.}
  \label{fig:regimes}
\end{figure}

  The main results of this paper, summarized in Figure \ref{fig:regimes}, are to
characterize the complexity of computing local min-max solutions in the local
regime. Our first main theorem is the following:
\begin{inftheorem}[see Theorems~\ref{thm:localNashExistence},  \ref{thm:localNashHardness} and~\ref{t:LocalMaxMin-GDA}] \label{informal:ppad hardness}
    Computing  $(\varepsilon, \delta)$-local min-max solutions of Lipschitz
  and smooth objectives over convex compact domains in the local regime is
  $\PPAD$-complete. The hardness holds even when the constraint set is a
  polytope that is a subset of $[0, 1]^d$, {the objective takes values in
  $[-1, 1]$} and the smoothness, Lipschitzness, $1/\varepsilon$ and $1/\delta$
  are polynomial in the dimension. Equivalently, computing $\alpha$-approximate
  fixed points of the Projected Gradient Descent-Ascent dynamics on smooth and
  Lipschitz objectives is $\PPAD$-complete, and the hardness holds even when the
  the constraint set is a polytope that is a subset of $[0, 1]^d$, {the
  objective takes values in $[-d, d]$} and smoothness, Lipschitzness, and
  $1/\alpha$ are polynomial in the dimension.
\end{inftheorem}

  For the above complexity result we assume that we have
``white box'' access to the objective function. An important byproduct of our
proof, however, is to also establish an {\em unconditional hardness result} in
the Nemirovsky-Yudin~\cite{nemirovsky1983problem} oracle optimization model,
wherein we are given black-box access to oracles computing the objective
function and its gradient. Our second main result is informally stated in
Informal Theorem \ref{informal:blackBoxhardness}.

\begin{inftheorem}[see Theorem~\ref{thm:localNashBlackBoxLowerBound}] \label{informal:blackBoxhardness}
    Assume that we have black-box access to an oracle computing a $G$-Lipschitz
  and $L$-smooth objective function {$f : \calP \to [-1, 1]$, where
  $\calP \subseteq [0, 1]^d$ is a known polytope,} and its gradient $\nabla f$. Then,
  computing an $(\eps, \delta)$-local min-max solution in the local regime
  (i.e., when $\delta < \sqrt{2 \eps/L}$) requires a number of oracle queries
  that is exponential in at least one of the following: $1/\varepsilon$, $L$,
  $G$, or $d$. In fact, exponential in $d$-many queries are required even when $L$, $G$, $1/\varepsilon$ and $1/\delta$ are all
  polynomial in $d$.
\end{inftheorem}
\noindent Importantly, the above lower bounds, in both the white-box and the
black-box setting, come in sharp contrast to minimization problems, given that
finding approximate local minima of smooth non-convex objectives {
ranging in $[-B, B]$ in the local regime can be done using first-order methods using
$O(B \cdot L / \eps)$ time/queries (see Section \ref{sec:gdStationary})}. Our
results are the first to show an exponential separation between these two
fundamental problems in optimization in the black-box setting, and a
super-polynomial separation in the white-box setting assuming $\PPAD \neq \FP$.

\subsection{Brief Overview of the Techniques} \label{sec:intro:techniques}

  We very briefly outline some of the main ideas for the $\PPAD$-hardness proof
that we present in Sections \ref{sec:hardness:2D} and \ref{sec:hardness}. Our
starting point as in many $\PPAD$-hardness results is a discrete analog of the
problem of finding Brouwer fixed points of a continuous map. Departing from
previous work, however, we do not use Sperner's lemma as the discrete analog of
Brouwer's fixed point theorem. Instead, we define a new problem, called
$\BiSperner$, which is useful for showing our hardness results. $\BiSperner$ is
closely related to the problem of finding panchromatic simplices guaranteed by
Sperner's lemma except, roughly speaking, that the vertices of the
simplicization of a $d$-dimensional hypercube are colored with $2d$ rather than
$d + 1$ colors,  every point of the simplicization is colored with $d$ colors
rather than one, and we are seeking a vertex of the simplicization so that the
union of colors on the vertices in its neighborhood covers the full set of
colors. The first step of our proof is to show that $\BiSperner$ is
$\PPAD$-hard. This step follows from the hardness of computing Brouwer fixed
points.
\smallskip

  The step that we describe next is only implicitly done by our proof, but it
serves as useful intuition for reading and understanding it. We want to define
a {\em discrete} two-player zero-sum game whose local equilibrium points
correspond to solutions of a given $\BiSperner$ instance. Our two players,
called  ``minimizer'' and  ``maximizer,'' each choose a vertex of the
simplicization of the $\BiSperner$ instance. For every pair of strategies in our
discrete game, i.e.~vertices, chosen by our players, we define a function value
and gradient values. Note that, at this point, we treat these values at
different vertices of the simplicization as independent choices, i.e.~are not
defining a function over the continuum whose function values and gradient values
are consistent with these choices. It is our intention, however, that in the
{\em continuous} two-player zero-sum game that we  obtain in the next paragraph
via our interpolation scheme, wherein the minimizer and maximizer  may choose
any point in the continuous hypercube, the function value determines the payment
of the minimizer to the maximizer, and the gradient value determines the
direction of the best-response dynamics of the game. Before getting to that
continuous game in the next paragraph, the main technical step of this discrete
part of our construction is showing that every local equilibrium of the discrete
game corresponds to a solution of the $\BiSperner$ instance we are reducing
from. In order to achieve this we need to add some constraints to couple the
strategies of the minimizer and the maximizer player. This step is the reason
that the constraints $g(\vecx, \vecy) \le 0$ appear in the final  min-max
problem that we produce.
\smallskip

  The third and quite  challenging step of the proof is to show that we
can interpolate in a \textit{smooth} and \textit{computationally efficient} way
the discrete zero-sum game of the previous step. In low dimensions (treated in
Section \ref{sec:hardness:2D}) such smooth and efficient interpolation can be
done in a relatively simple way using single-dimensional smooth step functions.
In high dimensions, however, the smooth and efficient interpolation becomes a
challenging problem and to the best of our knowledge no simple solution exists.
For this reason we construct our novel
\textit{smooth and efficient interpolation coefficients} of Section
\ref{sec:seic}. These are a technically involved construction that we believe
will prove to be very useful for characterizing the complexity of approximate
solutions of other  optimization problems.
\smallskip

  The last part of our proof is to show that all the previous steps  can be
implemented in an efficient way both with respect to computational but also with
respect to query complexity. This part is essential for both our white-box and
black-box results. Although this seems like a relatively easy step,
it becomes more difficult due to the complicated expressions in our smooth and
efficient interpolation coefficients used in our previous step.
\smallskip

  Closing this section we mention that all our $\NP$-hardness results are proven
using a cute application of Lov\'asz Local Lemma \cite{ErdosL1973}, which
provides a  powerful rounding tool that can drive the inapproximability all the
way up to an absolute constant.

\subsection{Local Minimization vs Local Min-Max Optimization} \label{sec:intro:minimization}

  Because our proof is convoluted, involving multiple steps, it is difficult to
discern from it why finding local min-max solutions is so much harder than
finding local minima. For this reason, we illustrate in this section a
fundamental difference between local minimization and local min-max
optimization. This provides good intuition about why our hardness construction
would fail if we tried to apply it to prove hardness results for finding local
minima (which we know don't exist).

  So let us illustrate a key difference between min-max problems that can be
expressed in the form $\min_{x \in {\cal X}} \max_{y \in {\cal Y}} f(x,y)$,
i.e.~two-player zero-sum games wherein the players optimize opposing objectives,
and min-min problems of the form
$\min_{x \in {\cal X}} \min_{y \in {\cal Y}} f(x,y)$, i.e., two-player
coordination games wherein the players optimize the same objective. For
simplicity, suppose ${\cal X} = {\cal Y} = \R$ and let us consider long paths of
best-response dynamics in the strategy space, ${\cal X} \times {\cal Y}$, of the
two players; these are paths along which at least one of the players improves their payoff. For our illustration, suppose that the derivative of the function
with respect to either variable is either $1$ or $-1$. Consider a long path of
best-response dynamics starting at a pair of strategies $(x_0, y_0)$ in either a
min-min problem or a min-max problem, and a specific point $(x, y)$ along that
path. We claim that in min-min  problems the function value at $(x, y)$ will
have to reveal how far from $(x_0, y_0)$ point $(x, y)$ lies within the path in $\ell_1$ distance. On
the other hand, in min-max problems the function value at $(x, y)$ may reveal
very little about how far $(x, y)$ lies from $(x_0, y_0)$. We
illustrate this in Figure \ref{fig:easyMinimum}. While in our min-min example
the function value must be monotonically decreasing inside the best-response
path, in the min-max example the function values repeat themselves in every
straight line segment of length $3$, without revealing where in the
path each segment is.

  Ultimately a key difference between min-min  and min-max optimization is that
best-response paths in min-max optimization problems can be closed, i.e., can
form a cycle, as shown in Figure \ref{fig:easyMinimum}, Panel (b). On the other
hand, this is impossible in min-min problems as the function value must
monotonically decrease along  best-response paths, thus cycles may not exist.
\medskip

\begin{figure}[t]
  \centering
  \begin{subfigure}[b]{0.42\textwidth}
    \centering
    \includegraphics[width=\textwidth]{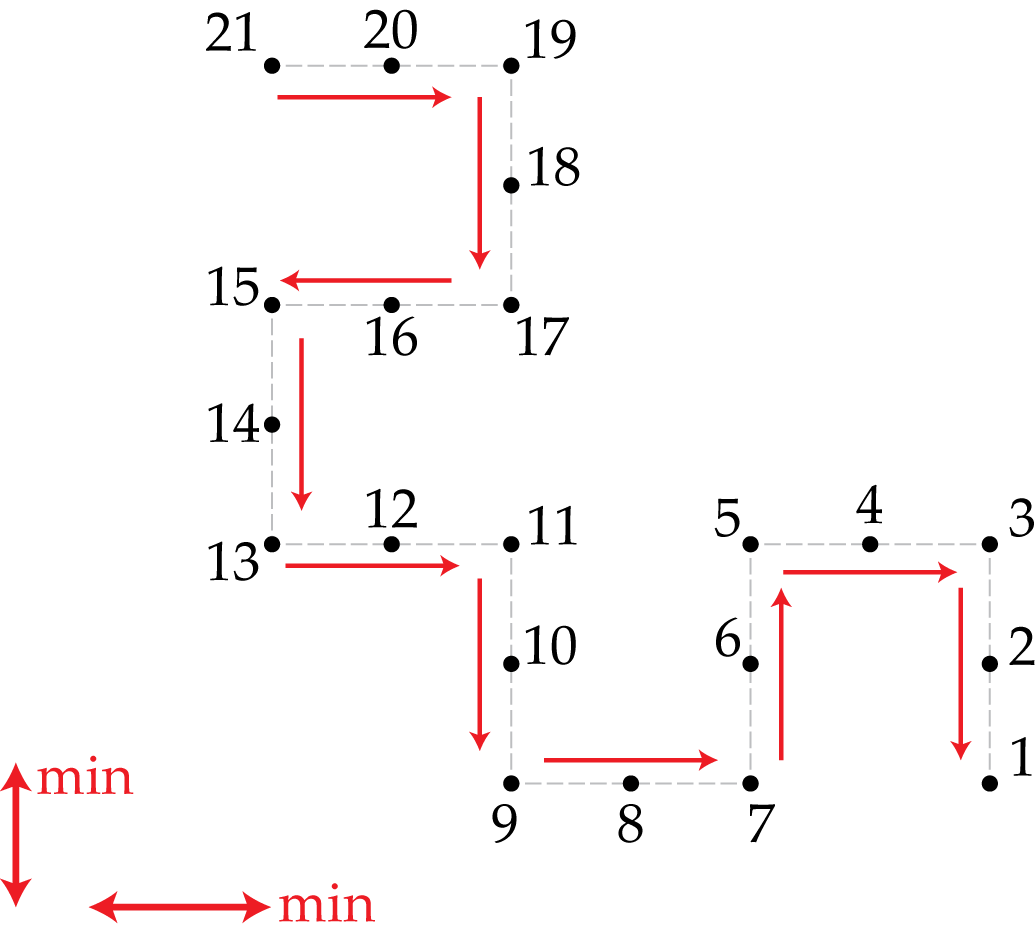}
    \caption{Min-min problem; the function values reveal the location of the
    points within best response path.}
  \end{subfigure}
  ~~~~~~~
  \begin{subfigure}[b]{0.42\textwidth}
    \centering
    \includegraphics[width=\textwidth]{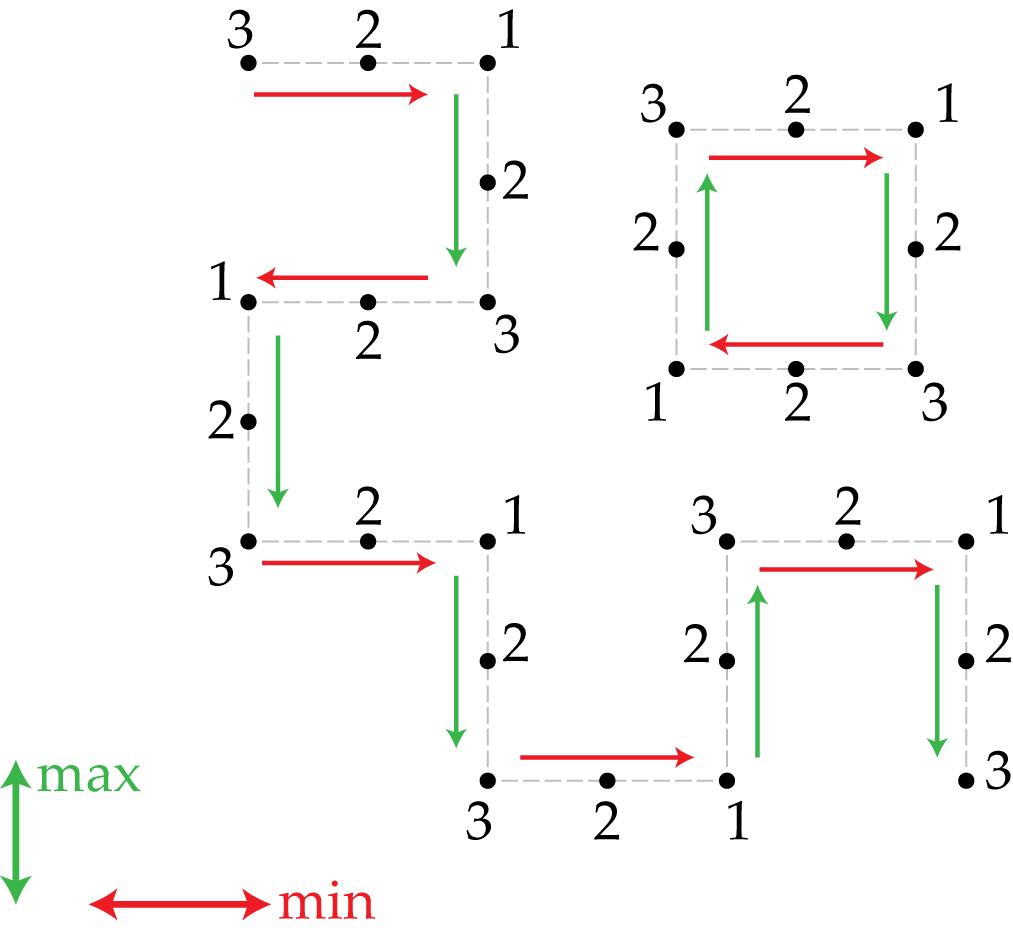}
    \caption{Min-max problem; the function values do not reveal the location of
    the points within best response path.}
  \end{subfigure}
  \caption{Long paths of best-response dynamics in min-min problems (Panel (a))
  and min-max problems (Panel (b)), where horizontal moves correspond to one
  player (who is a minimizer in both (a) and (b)) and vertical moves correspond
  to the other player (who is minimizer in (a) but a maximizer in (b)). In
  Panels (a) and (b), we show the function value at a subset of discrete points
  in a 2D grid along a long path of best-response dynamics, where for our
  illustration we assumed that the derivative of the objective with respect to
  either variable always has absolute value $2$. As we see in Panel (a), the
  function value at some point along a long path of the best-response dynamics
  in a min-min problem reveals information about where in the path that point
  lies. This is in sharp contrast to min-max problems where only local
  information is revealed about the objective as shown in Panel (b), due to the
  frequent turns of the path. In Panel (b) we also show that the best-response
  dynamics in min-max problems can form closed paths. This cannot happen in
  min-min problems as the function value must decrease along paths of
  best-response dynamics, and hence it is impossible in min-min problems to
  build long best-response paths with function values that can be computed
  locally.}
  \label{fig:easyMinimum}
\end{figure}

  The above discussion offers qualitative differences between min-min and
min-max optimization, which lie in the heart of why our computational
intractability results are possible to prove for min-max but not min-min
problems. For the precise step in our construction that breaks if we were to
switch from a min-max to a min-min problem we refer the reader to Remark
\ref{rem:easyMinimum:2D}.
\bigskip

\subsection{Further Related Work} \label{sec:intro:related}

  There is a broad literature on the complexity of equilibrium computation.
Virtually all these results are obtained within the computational complexity
formalism of \textit{total search problems} in $\NP$, which was spearheaded
by~\cite{JohnsonPY1988, MeggidoP1989, Papadimitriou1994} to capture the
complexity of search problems that are guaranteed to have a solution. Some key
complexity classes in this landscape are shown in Figure \ref{fig:tfnp}. We give
a non-exhaustive list of intractability results for equilibrium computation:
\cite{FabrikantPT04} prove that  computing pure Nash equilibria in congestion
games is $\PLS$-complete; \cite{daskalakis2009complexity} and later
\cite{chen2009settling} show that computing approximate Nash equilibria in
normal-form games is $\PPAD$-complete; \cite{etessami2010complexity} study the
complexity of computing exact Nash equilibria (which may use irrational
probabilities), introducing the complexity class $\class{FIXP}$;
\begin{wrapfigure}{l}{0.45\textwidth}
  \begin{center}
    \includegraphics[width=0.32\textwidth]{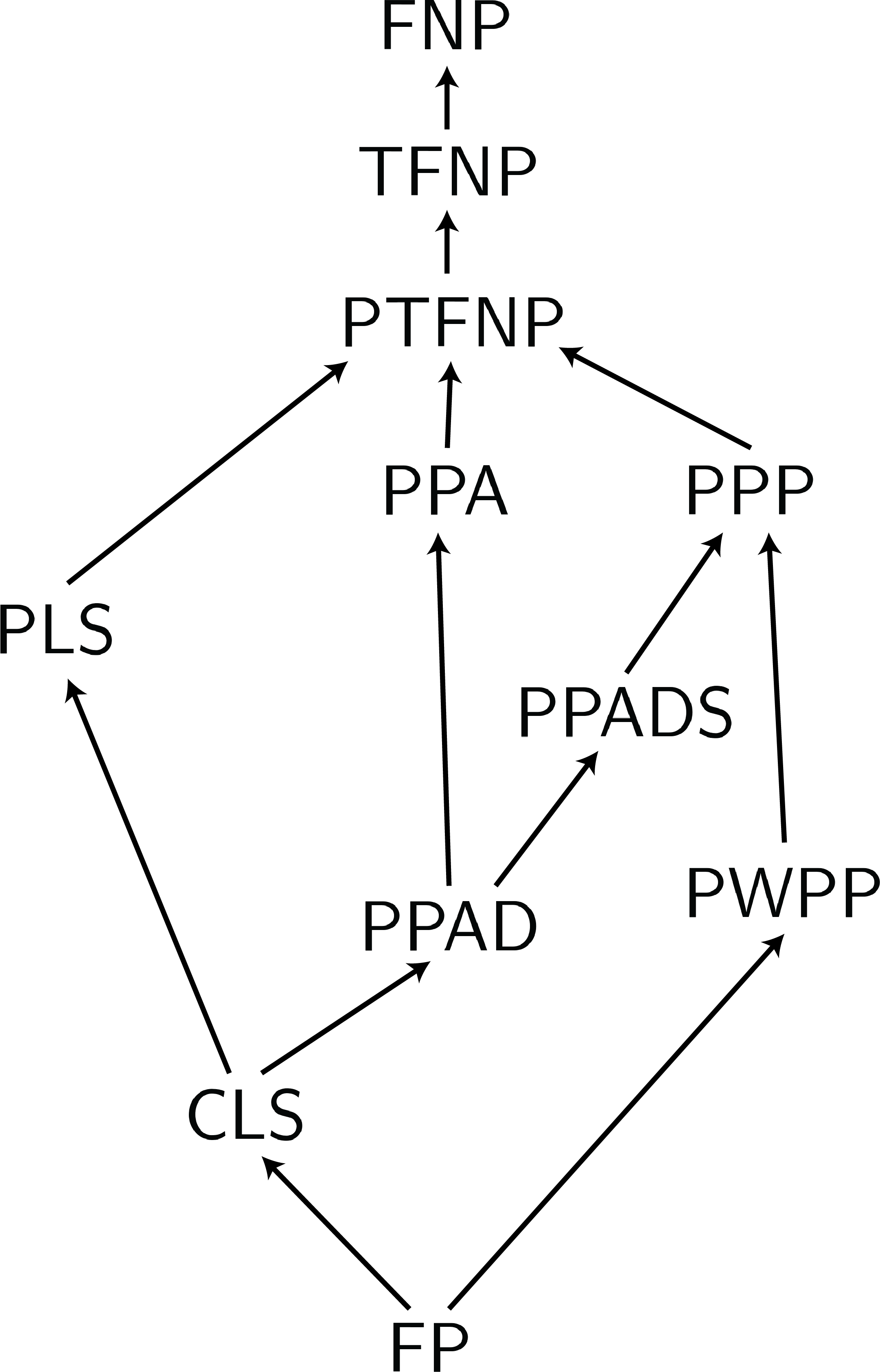}
  \end{center}
  \caption{The complexity-theoretic landscape of total search problems in $\NP$.}
  \label{fig:tfnp}
\end{wrapfigure}
 \cite{VaziraniY11, ChenPY17}
consider the complexity of computing Market equilibria;
\cite{daskalakis2013complexity,R15, R16} consider the complexity of computing
approximate Nash equilibria of constant approximation; \cite{KM18} establish a
connection between approximate Nash equilibrium computation and the SoS
hierarchy; \cite{M14, DFS19} study the complexity of computing Nash equilibria
in specially structured games. A result that is particularly useful for our work
is the result of \cite{HPV88} which shows black-box query lower bounds for
computing Brouwer fixed points of a continuous function. We use this result in
Section~\ref{sec:blackBox} as an ingredient for proving our black-box lower
bounds for computing approximate local min-max solutions.

  Beyond equilibrium computation and its applications to Economics and Game
Theory, the study of total search problems has found profound connections to
many scientific fields, including continuous  optimization
\cite{daskalakis2011continuous, DaskalakisTZ18}, combinatorial  optimization
\cite{SchafferY91}, query complexity \cite{BeameCEIP95}, topology~\cite{GH19},
topological combinatorics and social choice theory
\cite{FG18, GF19, FilosHSZ20, FilosHSZ20c}, algebraic combinatorics
\cite{BelovsIQSY17, GoosKSZ19}, and cryptography
\cite{jevrabek2016integer, BitanskyPR15, SotirakiZZ18}. For a more extensive
overview of total search problems we refer the reader to the recent survey by
Daskalakis~\cite{DaskalakisNevanlinna19}.
\medskip

  As already discussed, min-max optimization has intimate connections to the
foundations of Game Theory, Mathematical Programming, Online Learning,
Statistics, and several other fields. Recent applications of min-max
optimization to Machine Learning, such as Generative Adversarial Networks and
Adversarial Training, have motivated a slew of recent work targeting first-order
(or other light-weight online learning) methods for solving min-max optimization
problems for convex-concave, nonconvex-concave, as well as nonconvex-nonconcave
objectives. Work on convex-concave and nonconvex-concave objectives has focused
on obtaining online learning methods with improved rates
\cite{kong2019accelerated, lin2019gradient, thekumparampil2019efficient, nouiehed2019solving, lu2019hybrid, ouyang2019lower, zhao2019optimal, alkousa2019accelerated, AzizianMLG20, GolowichPDO, lin2020near}
and last-iterate convergence guarantees
\cite{daskalakis2017training, daskalakis2018limit, mazumdar2018convergence, MertikopoulosPP18, rafique2018non, hamedani2018primal, adolphs2018local, daskalakis2019last, liang2019interaction, gidel2019negative, mokhtari2019unified, abernethy2019last},
while work on nonconvex-nonconcave problems has focused on identifying different
notions of local min-max solutions \cite{jin2019minmax, MangoubiV20} and studying
the existence and (local) convergence properties of learning methods at these
points \cite{wang2019solving, MangoubiV20, MangoubiSV20}.

\section{Preliminaries} \label{sec:model}

\noindent \textbf{Notation.} For any compact and convex $K \subseteq \R^d$ and
$B \in \R_+$, we define $L_{\infty}(K, B)$ to be the set of all continuous
functions $f : K \to \R$ such that
$\max_{\vec{x} \in K} \abs{f(\vec{x})} \le B$. When $K = [0, 1]^d$, we use
$L_{\infty}(B)$ instead of $L_{\infty}([0, 1]^d, B)$ for ease of notation. For
$p > 0$, we define
$\diam_{p}(K) = \max_{\vec{x}, \vec{y} \in K} \norm{\vec{x} - \vec{y}}_{p}$,
where $\norm{\cdot}_p$ is the usual $\ell_p$-norm of vectors. For an alphabet
set $\Sigma$, the set $\Sigma^*$, called the Kleene star of $\Sigma$, is equal
to $\cup_{i = 0}^{\infty} \Sigma^i$. For any string $\vecq \in \Sigma$ we use
$\abs{\vecq}$ to denote the length of $\vecq$. We use the symbol $\log(\cdot)$
for base $2$ logarithms and $\ln(\cdot)$ for the natural logarithm. We use
$[n] \triangleq \{1, \ldots, n\}$, $\nm{n} \triangleq \{0, \dots, n - 1\}$, and
$[n]_0 \triangleq \{0, \dots, n\}$.
\medskip

\paragr{Lipschitzness, Smoothness, and Normalization.} Our main objects of study are
continuously differentiable Lipschitz and smooth functions $f : {\cal P}  \to \R$, where ${\cal P} \subseteq [0, 1]^d$ is some polytope. A continuously
differentiable function $f$ is called \textit{$G$-Lipschitz}
if $\abs{f(\vecx) - f(\vecy)} \le G \norm{\vecx - \vecy}_2$, for all $\vecx,\vecy$, and
\textit{$L$-smooth} if
$\norm{\nabla f(\vecx) - \nabla f(\vecy)}_2 \le L \norm{\vecx - \vecy}_2$, for all $\vecx,\vecy$.
\smallskip

\begin{remark}[Function Normalization] \label{rem:boundOnValues}
  \em
    Note that the $G$-Lipschitzness of a function $f: {\cal P} \to \mathbb{R}$,
  where ${\cal P} \subseteq [0,1]^d$ implies that for any $\vecx$ and $\vecy$ it
  holds that $\abs{f(\vecx) - f(\vecy)} \le G \sqrt{d}$. Whenever the range of a
  $G$-Lipschitz function is taken to be $[-B,B]$, for some $B$, we always assume
  that $B \le G \sqrt{d}$. This can be accomplished by setting
  $\tilde{f}(\vecx) = f(\vecx) - f(\vecx_0)$ for some fixed $\vecx_0$ in the
  domain of $f$. For all the problems that we consider in this paper any
  solution for $\tilde{f}$ is also a solution for $f$ and vice-versa.
\end{remark}
\smallskip

\paragraph{Function Access.} We study optimization problems involving
real-valued functions, considering two access models to such functions.

\begin{enumerate}
  \item[$\blacktriangleright$] \textbf{Black Box Model.} In this model we are
  given access to an oracle $\calO_f$ such that given a point
  $\vecx \in [0, 1]^d$ the oracle $\calO_f$ returns the values $f(\vecx)$ and
  $\nabla f(\vecx)$. In this model we assume that we can perform real number
  arithmetic operations. This is the traditional model used to prove lower
  bounds in Optimization and Machine Learning \cite{nemirovsky1983problem}.
  \item[$\blacktriangleright$] \textbf{White Box Model.} In this model we are
  given the description of a polynomial-time Turing machine $\calC_f$ that
  computes $f(\vecx)$ and $\nabla f(\vecx)$. More precisely, given some input
  $\vecx \in [0, 1]^d$, described using $B$ bits, and some accuracy $\eps$,
  $\calC_f$ runs in time upper bounded by some polynomial in $B$ and
  $\log(1/\eps)$ and outputs approximate values for $f(\vecx)$ and
  $\nabla f(\vecx)$, with approximation error that is at most $\eps$ in $\ell_2$
  distance. We note that a running time upper bound on a given Turing Machine
  can be enforced syntactically by stopping the computation and outputting a
  fixed output whenever the computation exceeds the bound. See also
  Remark~\ref{rem:representation} for an important remark about how to formally
  study the computational complexity of problems that take as input a
  polynomial-time Turing Machine.
\end{enumerate}

\paragr{Promise Problems.} To simplify the exposition of our paper, make the
definitions of our computational problems and theorem statements clearer, and
make our intractability results stronger, we choose to enforce the following
constraints on our function access, $\calO_f$ or $\calC_f$, as a
\textit{promise}, rather than enforcing these constraints in some syntactic
manner.
\begin{enumerate}
  \item \textbf{Consistency of Function Values and Gradient Values.} Given some
  oracle $\calO_f$ or Turing machine $\calC_f$, it is difficult to determine by
  querying the oracle or examining the description of the Turing machine whether
  the function and gradient values output on different inputs  are consistent
  with some differentiable function. In all our computational problems, we will
  only consider instances where this is promised to be the case.  Moreover, for
  all our computational hardness results, the instances of the problems arising
  from our reductions satisfy these constraints, which are guaranteed
  syntactically by our reduction.

  \item \textbf{Lipschitzness, Smoothness and Boundedness.} Similarly, given
  some oracle $\calO_f$ or Turing machine $\calC_f$, it is difficult to
  determine, by querying the oracle or examining the description of the Turing
  machine, whether the function and gradient values output by $\calO_f$ or
  $\calC_f$ are consistent with some Lipschitz, smooth and bounded function with
  some prescribed Lipschitzness, smoothness, and  bound on its absolute value.
  In all our computational problems, we only consider instances where the
  $G$-Lipschitzness, $L$-smoothness and $B$-boundedness of the function are
  promised to hold for the prescribed, in the input of the problem, parameters
  $G$, $L$ and $B$. Moreover, for all our computational hardness results, the
  instances of the problems arising from our reductions satisfy this constraint,
  which is guaranteed syntactically by our reduction.
\end{enumerate}

  In summary, in the rest of this paper, whenever we prove
\textit{an upper bound} for some computational problem, namely an upper bound on
the number of steps or queries to the function oracle required to solve the
problem in  the black-box model, or the containment of the problem in some
complexity class in the white-box model, we assume that the afore-described
properties are satisfied by the $\calO_f$ or $\calC_f$ provided in the input. On
the other hand, whenever we prove a \textit{lower bound} for some computational
problem, namely a lower bound on the number of steps/queries required to solve
it in the black-box model, or its hardness for some complexity class in the
white-box model, the instances arising in our lower bounds are guaranteed to
satisfy the above properties syntactically by our constructions. As such, our
hardness results will not exploit the difficulty in checking whether $\calO_f$
or $\calC_f$ satisfy the above constraints in order to infuse computational
complexity into our problems, but will faithfully target the computational
problems pertaining to min-max optimization of smooth and Lipschitz objectives
that we aim to understand in this paper.

\addtocontents{toc}{\protect\setcounter{tocdepth}{1}}
\subsection{Complexity Classes and Reductions} \label{sec:model:complexity}
\addtocontents{toc}{\protect\setcounter{tocdepth}{2}}

  In this section we define the main complexity classes that we use in this
paper, namely $\NP$, $\FNP$ and $\PPAD$, as well as the notion of reduction used
to show containment or hardness of a problem for one of these complexity
classes.

\begin{definition}[Search Problems, ${\NP}$, ${\FNP}$] \label{def:FNP}
    A binary relation $\calQ \subseteq \set{0, 1}^* \times \set{0, 1}^*$ is in
  the class $\FNP$ if (i) for every $\vecx, \vecy \in \set{0, 1}^*$ such that
  $(\vecx, \vecy) \in \calQ$, it holds that
  $\abs{\vecy} \le \poly(\abs{\vecx})$; and (ii) there exists an algorithm that
  verifies whether $(\vecx, \vecy) \in \calQ$ in time
  $\poly(\abs{\vecx}, \abs{\vecy})$. The \textit{search problem} associated with
  a binary relation $\calQ$ takes some $\vecx$ as input and requests as output
  some $\vecy$ such that $(\vecx, \vecy) \in \calQ$ or outputting $\bot$ if no
  such $\vecy$ exists. The \textit{decision problem} associated with $\calQ$
  takes some $\vecx$ as input and requests as output the bit $1$, if there
  exists some $\vecy$ such that $(\vecx, \vecy) \in \calQ$, and the bit $0$,
  otherwise. The class $\NP$ is defined as the set of decision problems
  associated with relations $\calQ \in \FNP$.
\end{definition}

  To define the complexity class $\PPAD$ we first define the notion of
polynomial-time reductions between search problems\footnote{In this paper we
only define and consider Karp-reductions between search problems.}, and the computational problem
\textsc{End-of-a-Line}\footnote{This problem is sometimes called
\textsc{End-of-the-Line}, but we adopt the nomenclature proposed by~\cite{R16}
since we agree that it describes the problem better.}.

\begin{definition}[Polynomial-Time Reductions]
    A search problem $P_1$ is \textit{polynomial-time reducible} to a search
  problem $P_2$ if there exist polynomial-time computable functions
  $f : \set{0, 1}^* \to \set{0, 1}^*$ and
  $g : \set{0, 1}^* \times \set{0, 1}^* \times \set{0, 1}^* \to \set{0, 1}^*$
  with the following properties: (i) if $\vecx$ is an input to $P_1$, then
  $f(\vecx)$ is an input to $P_2$; and (ii) if $\vecy$ is a solution to $P_2$ on
  input $f(\vecx)$, then $g(\vecx, f(\vecx), \vecy)$ is a solution to $P_1$ on
  input $\vecx$.
\end{definition}

\begin{nproblem}[\textsc{End-of-a-Line}]
  \textsc{Input:} Binary circuits $\calC_S$ (for successor) and $\calC_P$ (for
  predecessor) with $n$ inputs and $n$ outputs.
  \smallskip

  \noindent \textsc{Output:} One of the following:
  \begin{Enumerate}
    \item[0.] $\vec{0}$ if either both $\calC_P(\calC_S(\vec{0}))$ and
              $\calC_S(\calC_P(\vec{0}))$ are equal to $\vec{0}$, or if they are
              both different than $\vec{0}$, where $\vec{0}$ is the all-$0$
              string.
    \item[1.] a binary string $\vecx \in \set{0, 1}^n$ such that
              $\vecx \neq \vec{0}$ and $\calC_P(\calC_S(\vecx)) \neq \vecx$ or
              $\calC_S(\calC_P(\vecx)) \neq \vecx$.
  \end{Enumerate}
\end{nproblem}
\medskip

  To make sense of the above definition, we envision that the circuits $\calC_S$
and $\calC_P$ implicitly define a directed graph, with vertex set $\{0, 1\}^n$,
such that the directed edge
$(\vecx, \vecy) \in \set{0, 1}^n \times \set{0, 1}^n$ belongs to the graph if
and only if $\calC_S(\vecx) = \vecy$ and $\calC_P(\vecy) = \vecx$. As such, all
vertices in the implicitly defined graph have in-degree and out-degree at most
$1$. The above problem permits an output of $\vec{0}$ if $\vec{0}$ has equal
in-degree and out-degree in this graph. Otherwise it permits an output
$\vecx \neq \vec{0}$ such that $\vecx$ has in-degree or out-degree equal to $0$.
It follows by the parity argument on directed graphs, namely that in every
directed graph the sum of in-degrees equals the sum of out-degrees, that
\textsc{End-of-a-Line} is a \textit{total problem}, i.e. that for any possible
binary circuits $\calC_S$ and $\calC_P$ there exists a solution of the ``0.''
kind or the ``1.'' kind in the definition of our problem (or both). Indeed, if
$\vec{0}$ has unequal in- and out-degrees, there must exist another vertex
$\vec{x} \neq \vec{0}$ with unequal in- and out-degrees, thus one of these
degrees must be $0$ (as all vertices in the graph have in- and out-degrees
bounded by $1$).
\medskip

  We are finally ready to define the complexity class $\PPAD$ introduced by
\cite{Papadimitriou1994}.

\begin{definition}[$\boldsymbol{\PPAD}$] \label{def:PPAD}
    The complexity class $\PPAD$ contains all search problems that are
  polynomial time reducible to the \textsc{End-of-a-Line} problem.
\end{definition}

  The complexity class $\PPAD$ is of particular importance, since it contains
lots of fundamental problems in Game Theory, Economics, Topology and several
other fields~\cite{daskalakis2009complexity, DaskalakisNevanlinna19}. A
particularly important $\PPAD$-complete problem is finding fixed points of
continuous functions, whose existence is guaranteed by Brouwer's fixed point
theorem.

\begin{nproblem}[\textsc{Brouwer}]
  \textsc{Input:} Scalars $L$ and $\gamma$ and a polynomial-time Turing machine
  $\calC_M$ evaluating a $L$-Lipschitz function $M : [0, 1]^d \to [0, 1]^d$.
  \smallskip

  \noindent \textsc{Output:} A point $\vecz^{\star} \in [0, 1]^d$ such that
  $\norm{\vecz^{\star} - M(\vecz^{\star})}_2 < \gamma$.
\end{nproblem}

\smallskip \noindent While not stated exactly in this form, the following is a
straightforward implication of the results presented in \cite{chen2009settling}.

\begin{lemma}[\cite{chen2009settling}] \label{lem:BrouwerPPAD}
    $\textsc{Brouwer}$ is $\PPAD$-complete even when $d = 2$.
  Additionally, $\textsc{Brouwer}$ is $\PPAD$-complete even when
  $\gamma = \poly(1/d)$ and $L = \poly(d)$.
\end{lemma}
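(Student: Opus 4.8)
The statement has two parts --- membership of \textsc{Brouwer} in $\PPAD$, and $\PPAD$-hardness of the two stated restrictions --- and the plan is to obtain the first from the textbook Sperner argument and the second by importing the $\PPAD$-hardness construction of~\cite{chen2009settling} (together with its standard higher-dimensional variant). For membership I would discretize $[0,1]^d$ into a regular grid of side $1/N$ with $N$ chosen large enough that the $L$-Lipschitzness of $M$ makes the displacement $\vecx \mapsto M(\vecx) - \vecx$ vary by at most $\gamma/4$ over any grid cell (taking $N = \Theta(\sqrt{d}\,L/\gamma)$ suffices), color each grid vertex $\vecv$ with a label in $\{0,1,\dots,d\}$ read off from the sign pattern of $M(\vecv) - \vecv$ --- using the standard boundary rule near $\partial[0,1]^d$ to enforce the Sperner boundary condition --- so that any panchromatic simplex of the canonical (Kuhn) triangulation contains a vertex $\vecz^{\star}$ with $\norm{\vecz^{\star} - M(\vecz^{\star})}_2 < \gamma$, and finally reduce the search for a panchromatic simplex to \textsc{End-of-a-Line} by the usual ``follow-the-door'' walk. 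Since a grid vertex is encoded in $d\lceil \log N\rceil$ bits and its label is computed by a single call to $\calC_M$ run at accuracy $\gamma/8$, the resulting successor/predecessor circuits have size polynomial in the input; this proves \textsc{Brouwer}$\,\in \PPAD$ for every $d$, in particular for $d = 2$ and in the regime $L = \poly(d)$, $\gamma = \poly(1/d)$ (where $N = \poly(d)$).

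For hardness I would not re-derive anything but instead observe that the reductions behind~\cite{chen2009settling} produce, from an \textsc{End-of-a-Line} instance with parameter $n$, an explicitly described piecewise-linear displacement field on a $d$-dimensional grid of side $N$ whose approximate zeros encode panchromatic simplices and hence \textsc{End-of-a-Line} solutions, where one is free to route the embedded line through $d$ of the coordinates so that $N = 2^{\Theta(n/d)}$. Turning this into an instance of \textsc{Brouwer} as defined here is mechanical: rescale the grid into $[0,1]^d$, replace the combinatorial field by its piecewise-linear (hat-function) interpolation over the triangulation to get a genuinely Lipschitz $M : [0,1]^d \to [0,1]^d$ with explicitly computed constant $L$ (both $L$ and the required accuracy $\gamma$ being written in binary in the input), and decode any $\gamma$-approximate fixed point to the panchromatic simplex containing it. Setting $d = 2$ gives the first claim, and setting $d = \Theta(n/\log n)$ makes $N = \poly(n) = \poly(d)$, hence $L = \poly(d)$ and $\gamma = \poly(1/d)$, which is the second; in both cases $M$ is evaluated by a polynomial-time Turing machine that only reads the \textsc{End-of-a-Line} circuits, so the reduction is polynomial time. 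Together with membership, this yields $\PPAD$-completeness of both restrictions.

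The one genuinely delicate step, and the place I would spend most of the effort, is the discrete-to-continuous translation in the hardness direction: one must check that the interpolated displacement field is Lipschitz with exactly the claimed constant even at the ``turns'' of the embedded path, where the combinatorial field changes most abruptly; that it has no spurious near-zeros away from the embedded path, so that \emph{every} $\gamma$-approximate fixed point --- not just one lying on the path --- decodes to a valid \textsc{End-of-a-Line} solution (this is precisely what the careful vector assignments and boundary gadgets of~\cite{chen2009settling} are designed to guarantee); and that the map remains evaluable in polynomial time. Everything else --- the grid-resolution bookkeeping in the membership direction, the \textsc{End-of-a-Line} walk, and the accounting of $L$ and $\gamma$ --- is routine.
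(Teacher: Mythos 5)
The paper offers no proof of this lemma at all: it is imported as a ``straightforward implication'' of \cite{chen2009settling}, and your sketch is precisely the standard derivation of that implication --- membership by the classical grid-discretization/Sperner/\textsc{End-of-a-Line} argument, and hardness by instantiating the Chen--Deng--Teng embedding at $d = 2$ (accepting exponential $L$ and $1/\gamma$, which the first claim permits) and at $d = \Theta(n/\log n)$ so that the grid side becomes $\poly(d)$, giving $L = \poly(d)$ and $\gamma = \poly(1/d)$. Your proposal is therefore correct and consistent with the paper's treatment; the delicate points you isolate (Lipschitzness at the turns of the embedded path, absence of spurious near-zeros, polynomial-time evaluability) are exactly the guarantees supplied by the cited construction rather than something to be re-proved here.
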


\begin{remark}[Respresentation of a polynomial-time Turing Machine] \label{rem:representation}
  \em
  In the definition of the problem $\textsc{Brouwer}$ we assume that we are
given in the input the description of a Turing Machine $\calC_M$ that computes
the map $M$. In order for polynomial-time reductions to and from this problem to
be meaningful we need to have an upper bound on the running time of this Turing
Machine which we want to be polynomial in the input of the Turing Machine. The
formal way to ensure this and derive meaningful complexity results is to define
a different problem, say $k$-$\textsc{Brouwer}$, for every $k \in \N$. In the
problem $k$-$\textsc{Brouwer}$ the input Turing Machine $\calC_M$ has running
time bounded by $n^k$ in the size $n$ of its input. In the rest of the paper
whenever we say that a polynomial-time Turing Machine is required in the input
to a computational problem $\textsc{Pr}$, we formally mean that we define a
hierarchy of problems $k$-$\textsc{Pr}$, $k\in \mathbb{N}$, such that
$k$-$\textsc{Pr}$ takes as input Turing Machines with running time bounded by
$n^k$, and we interpret computational complexity results for $\textsc{Pr}$ in
the following way: whenever we prove that $\textsc{Pr}$ belongs to some
complexity class, we prove that $k$-$\textsc{Pr}$ belongs to the complexity
class for all $k \in \mathbb{N}$; whenever we prove that $\textsc{Pr}$ is hard
for some complexity class, we prove that, for some absolute constant $k_0$
determined in the hardness proof, $k$-$\textsc{Pr}$ is hard for that class,
for all $k \ge k_0$. For simplicity of exposition of our problems and results
we do not repeat this discussion in the rest of this paper.
\end{remark}

\section{Computational Problems of Interest} \label{sec:computational}

  In this section, we define the computational problems that we study in this
paper and discuss our main results, postponing formal statements to Section
\ref{sec:summary}. We start in Section \ref{sec:computational:math} by defining
the mathematical objects of our study, and proceed in Section
\ref{sec:computational:problems} to define our main computational problems,
namely: (1) finding approximate stationary points; (2) finding approximate local
minima; and (3) finding approximate local min-max equilibria. In Section
\ref{sec:computational:syntactic}, we present some bonus problems, which are
intimately related, as we will see, to problems (2) and (3). As discussed in
Section \ref{sec:model}, for ease of presentation, we define our problems as
promise problems.

\subsection{Mathematical Definitions} \label{sec:computational:math}

  We define the concepts of \textit{stationary points}, \textit{local minima},
and \textit{local min-max equilibria} of real valued functions, and make some
remarks about their existence, as well as their computational complexity. The
formal discussion of the latter is postponed to Sections
\ref{sec:computational:problems} and \ref{sec:summary}.

  Before we proceed with our definitions, recall that the goal of this paper is
to study constrained optimization. Our domain will be the hypercube $[0, 1]^d$,
which we might intersect with the set $\{\vecx~|~\vecg(\vecx) \le \vec{0}\}$,
for some convex (potentially multivariate) function $\vecg$. Although most of
the definitions and results that we explore in this paper can be extended to
arbitrary convex functions, we will focus on the case where $\vecg$ is linear,
and the feasible set is thus a polytope. Focusing on this case avoids additional
complications related to the representation of $\vecg$ in the input to the
computational problems that we define in the next section, and avoids also
issues related to verifying the convexity of $\vecg$.

\begin{definition}[Feasible Set and Refutation of Feasibility] \label{def:feasibleSet}
    Given $\matA \in \R^{d \times m}$ and $\vecb \in \R^m$, we define the set of
  feasible solutions to be
  $\calP(\matA, \vecb) = \{\vecz \in [0, 1]^d \mid  \matA^T \vecz  \le \vecb\}$.
  Observe that testing whether $\calP(\matA, \vecb)$ is empty can be done in
  polynomial time in the bit complexity of $\matA$ and $\vecb$.
\end{definition}

\begin{definition}[Projection Operator] \label{def:projectionOperator}
    For a nonempty, closed, and convex set $K \subset \R^d$, we define the projection operator
  $\Pi_K : \R^d \to K$ as follows
  $\Pi_K ~ \vecx = \argmin_{\vecy \in K} \norm{\vecx - \vecy}_2$. It is well-known that for any nonempty, closed, and convex set $K$ the
  $\argmin_{\vecy \in K} \norm{\vecx - \vecy}_2$ exists and is unique, hence
  $\Pi_K$ is well defined.
\end{definition}

  Now that we have defined the domain of the real-valued functions that we
consider in this paper we are ready to define a notion of approximate stationary
points.

\begin{definition}[$\eps$-Stationary Point] \label{def:math:stationary}
    Let $f : [0, 1]^d \to \R$ be a $G$-Lipschitz and $L$-smooth function and
  $\matA \in \R^{d \times m}$, $\vecb \in \R^m$. We call a point
  $\vecx^{\star} \in \calP(\matA, \vecb)$ a $\eps$-\textit{stationary point} of
  $f$ if $\norm{\nabla f(\vecx^{\star})}_2 < \eps$.
\end{definition}

\noindent It is easy to see that there exist continuously differentiable
functions $f$ that do not have any (approximate) stationary points, e.g. linear
functions. As we will see later in this paper, deciding whether a given function
$f$ has a stationary point is $\NP$-hard and, in fact, it is even $\NP$-hard to
decide whether a function has an approximate stationary point of a very gross
approximation. At the same time, verifying whether a given point is
(approximately) stationary can be done efficiently given access to a
polynomial-time Turing machine that computes $\nabla f$, so the problem of
deciding whether an (approximate) stationary point exists lies in $\NP$, as long
as we can guarantee that, if there is such a point, there will also be one with
polynomial bit complexity. We postpone a formal discussion of the computational
complexity of finding (approximate) stationary points or deciding their
existence  until we have formally defined our corresponding computational
problem  and settled the bit complexity of its solutions.
\smallskip

  For the definition of local minima and local min-max equilibria we need the
notion of closed $d$-dimensional Euclidean balls.

\begin{definition}[Euclidean Ball] \label{def:EuclideanBall}
    For $r \in \R_+$ we define the
  \textit{closed Euclidean ball of radius $r$} to be the set
  $\class{B}_d(r) = \set{\vecx \in \R^d \mid \norm{\vecx}_2 \le r}$. We also
  define the
  \textit{closed Euclidean ball of radius $r$ centered at $\vecz \in \R^d$} to
  be the set
  $\class{B}_d(r; \vecz) = \set{\vecx \in \R^d \mid \norm{\vecx - \vecz}_2 \le r}$.
\end{definition}

\begin{definition}[$(\eps, \delta)$-Local Minimum] \label{def:math:localMinimum}
    Let $f : [0, 1]^d \to \R$ be a $G$-Lipschitz and $L$-smooth function,
  $\matA \in \R^{d \times m}$, $\vecb \in \R^m$, and $\eps,\delta>0$. A point
  $\vecx^{\star} \in \calP(\matA, \vecb)$ is an
  $(\eps, \delta)$-\textit{local minimum} of $f$ constrained on $\calP(\matA, \vecb)$
  if and only if $f(\vecx^{\star}) < f(\vecx) + \eps$ for every
  $\vecx \in \calP(\matA, \vecb)$ such that
  $\vecx \in \class{B}_d(\delta; \vecx^{\star})$.
\end{definition}

\noindent To be clear, using the term ``local minimum'' in Definition
\ref{def:math:localMinimum} is a bit of a misnomer, since for large enough
values of $\delta$ the definition captures global minima as well. As $\delta$
ranges from large to small, our notion of $(\eps, \delta)$-local minimum
transitions from being an $\eps$-globally optimal point to being an
$\eps$-locally optimal point. Importantly, unlike (approximate) stationary
points, a $(\eps, \delta)$-local minimum is guaranteed to exist for all
$\eps,\delta > 0$ due to the compactness of $[0, 1]^d \cap \calP(\matA, \vecb)$ and
the continuity of $f$. Thus the problem of finding an $(\eps, \delta)$-local
minimum is \textit{total} for arbitrary values of $\eps$ and $\delta$. On the
negative side, for arbitrary values of $\eps$ and $\delta$, there is no
polynomial-size and polynomial-time verifiable witness for certifying that a
point $\vecx^{\star}$ is an $(\eps, \delta)$-local minimum. Thus the problem of
finding an $(\eps, \delta)$-local minimum is not known to lie in $\FNP$. As we
will see in Section \ref{sec:summary}, this issue can be circumvented if we
focus on particular settings of $\eps$ and $\delta$, in relationship to the
Lipschitzness and smoothness of $f$ and the dimension $d$.
\smallskip

  Finally we define $(\eps, \delta)$-local min-max equilibrium as follows,
recasting Definition \ref{def:general constraint local min-max} to the
constraint set $\calP(\matA, \vecb)$.

\begin{definition}[$(\eps, \delta)$-Local Min-Max Equilibrium] \label{def:math:localNash}
    Let $f : [0, 1]^{d_1} \times [0, 1]^{d_2} \to \R$ be a $G$-Lipschitz and
  $L$-smooth function, $\matA \in \R^{d \times m}$ and $\vecb \in \R^m$, where
  $d = d_1 + d_2$, and $\eps,\delta>0$. A point
  $(\vecx^{\star}, \vecy^{\star}) \in \calP(\matA, \vecb)$ is an
  $(\eps, \delta)$-\textit{local min-max equilibrium} of $f$ if and only if the
  following hold:
  \begin{enumerate}
    \item[$\blacktriangleright$]
      $f(\vecx^{\star}, \vecy^{\star}) < f(\vecx, \vecy^{\star}) + \eps$ for
      every $\vecx \in \class{B}_{d_1}(\delta; \vecx^{\star})$ with
      $(\vecx, \vecy^{\star}) \in \calP(\matA, \vecb)$; and
    \item[$\blacktriangleright$]
      $f(\vecx^{\star}, \vecy^{\star}) > f(\vecx^{\star}, \vecy) - \eps$ for
      every $\vecy \in \class{B}_{d_2}(\delta; \vecy^{\star})$ with
      $(\vecx^{\star}, \vecy) \in \calP(\matA, \vecb)$.
  \end{enumerate}
\end{definition}
\smallskip

\noindent Similarly to Definition \ref{def:math:localMinimum}, for large enough
values of $\delta$, Definition \ref{def:math:localNash} captures global min-max
equilibria as well. As $\delta$ ranges from large to small, our notion of
$(\eps, \delta)$-local min-max equilibrium transitions from being an
$\eps$-approximate min-max equilibrium to being an $\eps$-approximate local
min-max equilibrium. Moreover, in comparison to local minima and stationary
points, the problem of finding an $(\eps, \delta)$-local min-max equilibrium is
neither total nor can its solutions be verified efficiently for all values of
$\eps$ and $\delta$, even when $\calP(\matA, \vecb) = [0, 1]^d$. Again, this
issue can be circumvented if we focus on particular settings of $\eps$ and
$\delta$ values, as we will see in Section \ref{sec:summary}.

\subsection{First-Order Local Optimization Computational Problems} \label{sec:computational:problems}

  In this section, we define the search problems associated with our
aforementioned definitions of approximate stationary points, local minima, and
local min-max equilibria. We state our problems in terms of white-box access to
the function $f$ and its gradient. Switching to the black-box variants of our
computational problems amounts to simply replacing the Turing machines provided
in the input of the problems with oracle access to the function and its
gradient, as discussed in Section \ref{sec:model}. As per our discussion in the
same section, we define our computational problems as \textit{promise problems},
the promise being that the Turing machine (or oracle) provided in the input to
our problems outputs function values and gradient values that are consistent
with a smooth and Lipschitz function with the prescribed in the input smoothness
and Lipschitzness. Besides making the presentation cleaner, as we discussed in
Section \ref{sec:model}, the motivation for doing so is to prevent the
possibility that computational complexity is tacked into our problems due to the
possibility that the Turing machines/oracles provided in the input do not output
function and gradient values that are consistent with a Lipschitz and smooth
function. Importantly, all our computational hardness results syntactically
guarantee that the Turing machines/oracles provided as input to our constructed
hard instances satisfy these constraints.

  Before stating our main computational problems below, we note that, for each
problem, the dimension $d$ (in unary representation) is also an implicit input,
as the description of the Turing machine $\calC_f$ (or the interface to the
oracle $\calO_f$ in the black-box counterpart of each problem below) has size at
least linear in $d$. We also refer to Remark \ref{rem:representation} for how we
may formally study complexity problems that take a polynomial-time Turing
Machine in their input.

\medskip
\begin{nproblem}[\textsc{StationaryPoint}]
  \textsc{Input:} Scalars $\eps, G, L, B > 0$ and a polynomial-time Turing
  machine $\calC_f$ evaluating a $G$-Lipschitz and $L$-smooth function
  $f : [0, 1]^d \to [-B, B]$ and its gradient $\nabla f: [0, 1]^d \to \R^d$; a
  matrix $\matA \in \R^{d \times m}$ and vector $\vecb \in \R^m$ such that
  $\calP(\matA, \vecb) \neq \emptyset$.
  \smallskip

  \noindent \textsc{Output:} If there exists some point
  $\vecx \in \calP(\matA, \vecb)$ such that
  $\norm{\nabla f(\vecx)}_2 < \eps/2$, output some point
  $\vecx^{\star} \in \calP(\matA, \vecb)$ such that
  $\norm{\nabla f(\vecx^{\star})}_{2} < \eps$; if, for all
  $\vecx \in \calP(\matA, \vecb)$, $\norm{\nabla f(\vecx)}_{2} > \eps$, output
  $\bot$; otherwise, it is allowed to either output
  $\vecx^{\star} \in \calP(\matA, \vecb)$ such that
  $\norm{\nabla f(\vecx^{\star})}_2 < \eps$ or to output $\bot$.
\end{nproblem}
\medskip

\noindent It is easy to see that $\stat$ lies in $\FNP$. Indeed, if there exists
some point $\vecx \in \calP(\matA, \vecb)$ such that
$\norm{\nabla f(\vecx)}_2 < \eps/2$, then by the $L$-smoothness of $f$ there
must exist some point $\vecx^{\star} \in \calP(\matA, \vecb)$ of bit complexity
polynomial in the size of the input such that
$\norm{\nabla f(\vecx^{\star})}_2 < \eps$.  On the other hand, it is clear that
no such point exists if for all $\vecx \in \calP(\matA, \vecb)$,
$\norm{\nabla f(\vecx)}_{2} > \eps$. We note that the looseness of the output
requirement in our problem for functions $f$ that do not have points
$x \in \calP(\matA, \vecb)$ such that $\norm{\nabla f(\vecx)}_{2} < \eps/2$ but
do have points $x \in \calP(\matA, \vecb)$ such that
$\norm{\nabla f(\vecx)}_{2} \le \eps$ is introduced for the sole purpose of
making the problem lie in $\FNP$, as otherwise we would not be able to guarantee
that the solutions to our search problem have polynomial bit complexity. As we
show in Section \ref{sec:summary}, $\stat$ is also $\FNP$-hard, even when $\eps$
is a constant, the constraint set is very simple, namely
$\calP(\matA, \vecb) = [0, 1]^d$, and $G, L$ are both polynomial in $d$.

\smallskip
  Next, we define the computational problems associated with local minimum and
local min-max equilibrium. Recall that the first is guaranteed to have a
solution, because, in particular, a global minimum exists due to the continuity
of $f$ and the compactness of $\calP(\matA, \vecb)$.

\smallskip
\begin{nproblem}[\textsc{LocalMin}]
  \textsc{Input:} Scalars $\eps, \delta, G, L, B > 0$ and a polynomial-time
  Turing machine $\calC_f$ evaluating a $G$-Lipschitz and $L$-smooth function
  $f : [0, 1]^d \to [-B, B]$ and its gradient $\nabla f: [0, 1]^d \to \R^d$; a
  matrix $\matA \in \R^{d \times m}$ and vector $\vecb \in \R^m$ such that
  $\calP(\matA, \vecb) \neq \emptyset$.
  \smallskip

  \noindent \textsc{Output:} A point $\vecx^{\star} \in \calP(\matA, \vecb)$
  such that $f(\vecx^{\star}) < f(\vecx) + \eps$ for all
  $\vecx \in \ball_d(\delta; \vecx^{\star}) \cap \calP(\matA, \vecb)$.
\end{nproblem}
\medskip

\begin{nproblem}[\textsc{LocalMinMax}]
  \textsc{Input:} Scalars $\eps, \delta, G, L, B > 0$; a polynomial-time Turing
  machine $\calC_f$ evaluating a  $G$-Lipschitz and $L$-smooth function
  $f : [0, 1]^{d_1} \times [0, 1]^{d_2} \to [-B, B]$ and its gradient
  $\nabla f: [0, 1]^{d_1}\times [0, 1]^{d_2} \to \R^{d_1 + d_2}$; a matrix
  $\matA \in \R^{d \times m}$ and vector $\vecb \in \R^m$ such that
  $\calP(\matA, \vecb) \neq \emptyset$, where $d = d_1 + d_2$.
  \smallskip

  \noindent \textsc{Output:} A point
  $(\vecx^{\star}, \vecy^{\star}) \in \calP(\matA, \vecb)$ such that
  \begin{Enumerate}
    \item[$\triangleright$]
    $f(\vecx^{\star}, \vecy^{\star}) < f(\vecx, \vecy^{\star}) + \eps$ for all
    $\vecx \in B_{d_1}(\delta; \vecx^{\star})$ with
    $(\vecx, \vecy^{\star}) \in \calP(\matA, \vecb)$ and
    \item[$\triangleright$]
    $f(\vecx^{\star}, \vecy^{\star}) > f(\vecx^{\star}, \vecy) - \eps$ for all
    $\vecy \in B_{d_2}(\delta; \vecy^{\star})$ with
    $(\vecx^{\star}, \vecy) \in \calP(\matA, \vecb)$,
  \end{Enumerate}
  or $\bot$ if no such point exists.
\end{nproblem}
\medskip

\noindent Unlike $\stat$ the problems $\lmin$ and $\lNash$ exhibit vastly
different behavior, depending on the values of the inputs $\eps$ and $\delta$ in
relationship to $G$, $L$ and $d$, as we will see in Section \ref{sec:summary}
where we summarize our computational complexity results. This range of behaviors
is rooted at our earlier remark that, depending on the value of $\delta$
provided in the input to these problems, they capture the complexity of finding
\textit{global} minima/min-max equilibria, for large values of $\delta$, as well
as finding \textit{local} minima/min-max equilibria, for small values of
$\delta$.

\subsection{Bonus Problems: Fixed Points of Gradient Descent/Gradient Descent-Ascent} \label{sec:computational:syntactic}

  Next we present a couple of bonus problems, $\gdFixed$ and $\gdaFixed$, which
respectively capture the computation of fixed points of the (projected) gradient
descent and the (projected) gradient descent-ascent dynamics, with learning rate
$ = 1$. As we see in Section \ref{sec:existence}, these problems are intimately
related, indeed equivalent under polynomial-time reductions, to problems $\lmin$ and $\lNash$ respectively, in certain regimes of the approximation parameters.
Before stating problems $\gdFixed$ and $\gdaFixed$, we define the mappings
$F_{GD}$ and $F_{GDA}$ whose fixed points these problems are targeting.

\begin{definition}[Projected Gradient Descent] \label{def:FGD}
    For a closed and convex $K \subseteq \R^d$ and some continuously
  differentiable function $f : K \to \R$, we define the
  \textit{Projected Gradient Descent Dynamics with learning rate $1$} as the map
  $F_{GD} : K \to K$, where $F_{GD}(\vecx) = \Pi_K (\vecx - \nabla f(\vecx))$.
\end{definition}

\begin{definition}[Projected Gradient Descent/Ascent] \label{def:FGDA}
    For a closed and convex $K \subseteq \R^{d_1} \times \R^{d_2}$ and some
  continuously differentiable function $f : K \to \R$, we define the
  \textit{Unsafe Projected Gradient Descent/Ascent Dynamic with learning rate $1$}
  as the map  $F_{GDA} : K \to \R^{d_1} \times \R^{d_2}$ defined as follows
  \[ F_{GDA}(\vecx, \vecy) \triangleq \begin{bmatrix} \Pi_{K(\vecy)} (\vecx - \nabla_{\vecx} f(\vecx, \vecy)) \\ \Pi_{K(\vecx)} (\vecy + \nabla_{\vecy} f(\vecx, \vecy)) \end{bmatrix}  \triangleq \begin{bmatrix} F_{GDAx} (\vecx, \vecy)\\F_{GDAy} (\vecx, \vecy) \end{bmatrix} \]
  \noindent for all $(\vecx, \vecy) \in K$, where
  $K(\vecy) = \{\vecx' \mid (\vecx', \vecy) \in K\}$ and
  $K(\vecx) = \{\vecy' \mid (\vecx, \vecy') \in K\}$.
\end{definition}

\noindent Note that $F_{GDA}$ is called ``unsafe'' because the projection
happens individually for $\vecx - \nabla_{\vecx} f(\vecx, \vecy)$ and
$\vecy + \nabla_{\vecy} f(\vecx, \vecy)$, thus $F_{GDA}(\vecx, \vecy)$ may not
lie in $K$. We also define the ``safe'' version $F_{sGDA}$, which projects the
pair
$(\vecx - \nabla_{\vecx} f(\vecx, \vecy), \vecy + \nabla_{\vecy} f(\vecx, \vecy))$
jointly onto $K$. As we show in Section \ref{sec:existence} (in particular inside the proof of Theorem~\ref{t:PPAD_inclusion}), computing fixed
points of $F_{GDA}$ and $F_{sGDA}$ are computationally equivalent so we stick to
$F_{GDA}$ which makes the presentation slightly cleaner.

  We are now ready to define $\gdFixed$ and $\gdaFixed$. As per earlier
discussions, we  define these computational problems as
\textit{promise problems}, the promise being that the Turing machine provided in
the input to these problems outputs function values and gradient values that are
consistent with a smooth and Lipschitz function with the prescribed, in the
input to these problems, smoothness and Lipschitzness.

\medskip
\begin{nproblem}[\gdFixed]
  \textsc{Input:} Scalars $\alpha, G, L, B > 0$ and a polynomial-time Turing
  machine $\calC_f$ evaluating a $G$-Lipschitz and $L$-smooth function
  $f : [0, 1]^d \to [-B, B]$ and its gradient $\nabla f: [0, 1]^d \to \R^d$; a
  matrix $\matA \in \R^{d \times m}$ and vector $\vecb \in \R^m$ such that
  $\calP(\matA, \vecb) \neq \emptyset$.
  \smallskip

  \noindent \textsc{Output:} A point $\vecx^{\star}\in \calP(\matA, \vecb)$ such
  that $\norm{\vecx^{\star} - F_{GD}(\vecx^{\star})}_2 < \alpha$, where
  $K = \calP(\matA, \vecb)$ is the projection set used in the definition of
  $F_{GD}$.
\end{nproblem}
\medskip

\begin{nproblem}[\gdaFixed]
  \textsc{Input:} Scalars $\alpha, G, L, B > 0$ and a polynomial-time Turing
  machine $\calC_f$ evaluating a  $G$-Lipschitz and $L$-smooth function
  $f : [0, 1]^{d_1}\times [0, 1]^{d_2} \to [-B, B]$ and its gradient
  $\nabla f: [0, 1]^{d_1}\times [0, 1]^{d_2} \to \R^{d_1 + d_2}$; a matrix
  $\matA \in \R^{d \times m}$ and vector $\vecb \in \R^m$ such that
  $\calP(\matA, \vecb) \neq \emptyset$, where $d = d_1 + d_2$.
  \smallskip

  \noindent \textsc{Output:} A point
  $(\vecx^{\star},\vecy^{\star})\in \calP(\matA, \vecb)$ such that
  $\norm{(\vecx^{\star}, \vecy^{\star}) - F_{GDA}(\vecx^{\star},\vecy^{\star})}_2 < \alpha$,
  where $K = \calP(\matA, \vecb)$ is the projection set used in the definition
  of $F_{GDA}$.
\end{nproblem}
\medskip

\noindent In Section \ref{sec:existence} we show that the problems $\gdFixed$ and $\lmin$ are equivalent under polynomial-time reductions, and the problems $\gdaFixed$ and $\lNash$ are equivalent under polynomial-time reductions, in certain regimes of the approximation parameters.

\section{Summary of Results} \label{sec:summary}

  In this section we summarize our results for the optimization problems that we
defined in the previous section. We start with our theorem about the complexity
of finding approximate stationary points, which we show to be $\FNP$-complete
even for large values of the approximation.

\begin{theorem}[Complexity of Finding Approximate Stationary Points] \label{thm:approximateStationaryPointsHardness}
    The computational problem $\stat$ is $\FNP$-complete, even when $\eps$ is
  set to any value $\le 1/24$, and even when $\calP(\matA, \vecb) = [0, 1]^d$,
  $G = \sqrt{d}$, $L = d$, and $B = 1$.
\end{theorem}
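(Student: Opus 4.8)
\noindent\emph{Proof plan.} Membership in $\FNP$ was already argued above (by $L$-smoothness an $\eps$-stationary point of small bit-complexity exists whenever one of norm $<\eps/2$ does, and candidates are poly-time checkable via $\calC_f$), so the task is to prove $\FNP$-hardness, and I would aim to get the strong parameter restrictions essentially for free from the construction. I would reduce from the search version of \textsc{CircuitSat} (equivalently search-$3$-\textsc{Sat}), which is $\FNP$-complete: given a $3$-CNF $\varphi$ on $n$ variables and $m=\Theta(n)$ clauses, output a satisfying assignment or $\bot$. Fix $d=\Theta(n)$, identify candidate assignments with vertices of $[0,1]^d$ via $x_i<1/2\leftrightarrow\text{false}$, and build in polynomial time a Turing machine $\calC_f$ for a function $f:[0,1]^d\to[-1,1]$, together with the trivial constraints $\calP(\matA,\vecb)=[0,1]^d$, so that: (i) if $\varphi$ is satisfiable then $f$ has an \emph{exact} stationary point and every $\eps$-stationary point lies within $\ell_\infty$-distance $<1/2$ of a satisfying vertex; and (ii) if $\varphi$ is unsatisfiable then $\norm{\nabla f(\vecx)}_2>\eps$ for all $\vecx$, where $\eps$ is an absolute constant that will be taken to be $1/24$.

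For the construction I would take $f$ to consist of a fixed linear ``drift'' $\langle\vec c,\vecx\rangle$ with $\norm{\vec c}_2$ an absolute constant, plus a smooth, explicitly computable ``correction'' term that is a localized bump around each vertex, \emph{gated} by a smooth product of per-clause ``satisfaction'' factors, arranged so that the correction's gradient cancels $\vec c$ exactly in a tiny ball around each satisfying vertex (where all gating factors equal $1$ and the univariate profiles are flat) and is inactive near non-satisfying vertices (where some gating factor vanishes). Writing everything as a sum of products of univariate smooth step/ramp functions makes $f$ a bona fide gradient, makes $\calC_f$ immediate, and reduces the Lipschitz and smoothness estimates to bounding one-dimensional derivatives, which are absolute constants; since the gating factors localize the corrections, at any point only $O(1)$ pieces contribute, so $\norm{f}_\infty\le 1$ and the smoothness and Lipschitzness are $O(1)$, hence in particular $\le d$ and $\le\sqrt d$ respectively (using Remark~\ref{rem:boundOnValues} for the additive normalization; the finitely many small $d$ for which the $O(1)$ constants would beat $\sqrt d$ are handled by brute force). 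By design satisfying vertices are exact stationary points of $f$, and away from small balls around satisfying vertices $\nabla f$ equals $\vec c$ plus a controlled perturbation.

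The hard part will be ensuring that this perturbation never cancels $\vec c$ in the ``fuzzy'' region---points with some coordinate near $1/2$, where the gating products have nonzero gradient and contribute a sum of one vector per clause incident to a fuzzy variable---so that no spurious $\eps$-stationary point appears far from any satisfying assignment. I would handle this by randomizing the orientations and signs inside the per-clause satisfaction factors (and, if needed, assigning to each clause a single variable supporting its factor's gradient, from a small palette) and invoking the Lov\'asz Local Lemma~\cite{ErdosL1973}: the bad event ``$\nabla f$ can be $\eps$-small somewhere within a given local pattern of coordinate regimes'' depends only on the $O(1)$ clauses incident to the variables of that pattern, so the dependency degree is bounded and the LLL condition holds for a suitable constant $\eps$; the resulting choice of signs forces $\norm{\nabla f(\vecx)}_2>\eps$ outside small neighbourhoods of satisfying vertices. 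An algorithmic version of the Lov\'asz Local Lemma then makes the choice computable in polynomial time, so the whole reduction is polynomial-time, and this probabilistic step is exactly what pushes the inapproximability up to an absolute constant such as $1/24$.

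Finally, for correctness and decoding: run the hypothetical $\stat$ algorithm on the constructed instance with $\eps=1/24$, $G=\sqrt d$, $L=d$, $B=1$, $\calP(\matA,\vecb)=[0,1]^d$. If it returns a point $\vecx^\star$ with $\norm{\nabla f(\vecx^\star)}_2<\eps$, then by (i) $\vecx^\star$ is within $\ell_\infty$-distance $<1/2$ of a satisfying vertex, so rounding each coordinate to the nearer of $\{0,1\}$ yields a satisfying assignment of $\varphi$; if it returns $\bot$, output ``unsatisfiable''. This is correct: when $\varphi$ is satisfiable the instance has an exact stationary point, so the promise of $\stat$ forbids the output $\bot$ and forces a genuine $\eps$-stationary point; when $\varphi$ is unsatisfiable $\norm{\nabla f}_2>\eps$ everywhere, so the promise forces the output $\bot$. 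The main obstacle throughout is the construction of the previous two paragraphs: choosing the univariate profiles, the gated bump, and the randomized palette so that simultaneously (a) the correction cancels the drift exactly on neighbourhoods of satisfying vertices and nowhere else, (b) the LLL's dependency degree and event probability are small enough to yield a constant $\eps$, (c) the $[-1,1]$ range and the crude parameters $G=\sqrt d$, $L=d$ hold, and (d) the object is polynomial-time computable---all of which requires controlling the behaviour in the fuzzy transition region and not merely at the cube's vertices.
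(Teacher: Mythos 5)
Your completeness direction (exact stationary points at satisfying vertices, obtained by cancelling a drift with a gated bump) is plausible, but the soundness step you yourself flag as ``the hard part''---that no spurious $\eps$-stationary point appears in the fuzzy region where the clause gates transition---is left as a hope rather than an argument, and as stated it does not go through. You propose to randomize signs/orientations inside the gates and invoke the Lov\'asz Local Lemma over bad events of the form ``$\nabla f$ can be $\eps$-small somewhere within a given local pattern,'' but you give no bound on the probability of such an event, and none is apparent: these are continuous geometric events about near-cancellation between a constant-norm drift and gate-gradient terms of comparable magnitude, and a constant-size random palette per clause yields no obvious probability bound of the kind the LLL needs. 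Moreover, your dependency-degree claim (``only the $O(1)$ clauses incident to the variables of that pattern'') is false for the starting problem you chose (search \textsc{3-Sat} / \textsc{CircuitSat} with no occurrence bound); you would need a bounded-occurrence variant such as \textsc{3-SAT(3)}. There is also an unresolved parameter tension: a drift $\langle \vec{c}, \vecx\rangle$ with $\norm{\vec{c}}_2=\Theta(1)$ spread over many coordinates has $\ell_1$-mass $\Theta(\sqrt d)$ and already breaks $B=1$, while concentrating $\vec{c}$ on $O(1)$ coordinates forces the cancellation and the gate-gradient interplay into those same few coordinates, exactly where spurious zeros are hardest to exclude. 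Since your decoding is nearest-vertex rounding, which needs the strong locality property that every $\eps$-stationary point is $\ell_\infty$-close to a satisfying vertex, the whole reduction stands or falls with this unproven step.

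The paper's proof avoids all of this with a different mechanism: it reduces from \textsc{3-SAT(3)} and introduces one auxiliary variable $w_j$ per clause, setting $f(\vecx,\vecw)=\sum_{j} w_j P_j(\vecx)$, where $P_j$ is the product of the three literal polynomials. Then $\partial f/\partial w_j = P_j(\vecx)$, so any point with gradient norm below $1/24$ certifies $P_j(\vecx^\star)<1/24$ for every clause; interpreting the $x_i^\star$ as independent marginals, the Lov\'asz Local Lemma (dependency degree at most $6$, thanks to the occurrence bound) shows a satisfying assignment exists, and the constructive Moser--Tardos version extracts one in polynomial time from the fractional point---no geometric claim that approximate stationary points are close to satisfying vertices is needed. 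Completeness is immediate by taking the satisfying assignment and $\vecw=\veczero$, and the Lipschitz/smoothness bounds $G=\Theta(\sqrt{n+m})$, $L=\Theta(n+m)$ follow from the bounded occurrence. In short, the LLL belongs in the analysis of an arbitrary small-gradient point (with event probabilities literally equal to $P_j(\vecx^\star)$), not in a randomized construction of $f$; to salvage your route you would need to actually prove the fuzzy-region estimate, or switch to a device like the bilinear $w_jP_j$ trick where clause violation is read directly off a gradient coordinate.
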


  It is folklore and easy to verify that  approximate stationary points always
exist and can be found in time $\poly(B, 1/\eps, L)$ when the domain of $f$ is
unconstrained, i.e.~it is the whole $\R^d$, and the range of $f$ is bounded,
i.e., when $f(\R^d) \subseteq [-B, B]$. Theorem
\ref{thm:approximateStationaryPointsHardness} implies that such a guarantee
should not be expected in the bounded domain case, where the existence of
approximate stationary points is not guaranteed and must also be verified. In
particular, it follows from our theorem that any algorithm that verifies the
existence of and computes approximate stationary points in the constrained case
should take time that is super-polynomial in at least one of $G$, $L$, or $d$,
unless $\P = \NP$. The proof of Theorem
\ref{thm:approximateStationaryPointsHardness} is based on an elegant
construction for converting (real valued) stationary points of an appropriately
constructed function to (binary) solutions of a target $\textsc{Sat}$ instance.
This conversion involves the use of Lov\'asz Local Lemma \cite{ErdosL1973}. The
details of the proof can be found in Appendix
\ref{sec:proof:approximateStationaryPointsHardness}.
\medskip

  The complexity of $\lmin$ and $\lNash$ is more difficult to characterize, as
the nature of these problems changes drastically depending on the relationship
of $\delta$ with with $\eps$, $G$, $L$ and $d$, which determines whether these
problems ask for a \textit{globally} vs \textit{locally} approximately optimal
solution. In particular, there are two regimes wherein the complexity of both
problems is simple to characterize.
\begin{enumerate}
  \item[$\triangleright$] \textbf{Global Regime.} When $\delta \ge \sqrt{d}$
  then both $\lmin$ and $\lNash$ ask for a \textit{globally} optimal solution.
  In this regime it is not difficult to see that both problems are $\FNP$-hard
  to solve even when $\eps = \Theta(1)$ and $G$, $L$ are $O(d)$
  (see Section \ref{sec:localNash}).

  \item[$\triangleright$] \textbf{Trivial Regime.} When $\delta$ satisfies
  $\delta < \eps/G$, then for every point $\vecz \in \calP(\matA, \vecb)$ it
  holds that $\abs{f(\vecz) - f(\vecz')} < \eps$ for every
  $\vecz' \in \ball_d(\delta; \vecz)$ with $\vecz' \in \calP(\matA, \vecb)$.
  Thus, every point $\vecz$ in the domain $\calP(\matA, \vecb)$ is a solution to
  both $\lmin$ and $\lNash$.
\end{enumerate}

\noindent It is clear from our discussion above, and in earlier sections, that,
to really capture the complexity of finding local as opposed to global
minima/min-max equilibria, we should restrict the value of $\delta$. We identify
the following regime, which we call the ``\textit{local regime}.'' As we argue shortly,
this regime is markedly different from the global regime identified above in
that (i) a solution is guaranteed to exist for both our problems of interest,
where in the global regime only $\lmin$ is guaranteed to have a solution; and
(ii) their computational complexity transitions to lower complexity classes.

\begin{enumerate}
  \item[$\triangleright$] \textbf{Local Regime.} Our main focus in this paper
  is the regime defined by $\delta < \sqrt{2\eps / L}$. In this regime it is
  well known that Projected Gradient Descent can solve $\lmin$ in time
  $O(B \cdot L / \eps)$ (see Appendix \ref{sec:gdStationary}). Our main
  interest is understanding the complexity of $\lNash$, which is not well
  understood in this regime. We note that the use of the constant $2$ in the
  constraint $\delta < \sqrt{2\eps / L}$ which defines the local regime has a
  natural motivation: consider a point $\vecz$ where a $L$-smooth function $f$
  has $\nabla f(\vecz) = 0$; it follows from the definition of smoothness that
  $\vecz$ is both an $(\eps, \delta)$-local min and an $(\eps, \delta)$-local
  min-max equilibrium, as long as $\delta < \sqrt{2 \eps / L}$.
\end{enumerate}

\noindent The following  theorems provide tight upper and lower bounds on the
computational complexity of solving $\lNash$ in the local regime. For
compactness, we define the following problem:

\begin{definition}[Local Regime $\lNash$] \label{def:local regime local Nash}
    We define the
  \textit{local-regime local min-max equilibrium computation problem}, in short
  $\lrlNash$, to be the search problem $\lNash$ restricted to instances in the
  local regime, i.e. satisfying $\delta < \sqrt{2 \eps/L}$.
\end{definition}

\begin{theorem}[Existence of Approximate Local Min-Max Equilibrium] \label{thm:localNashExistence}
    The computational problem $\lrlNash$ belongs to $\PPAD$. As a byproduct, if
  some function $f$ is $G$-Lipschitz and $L$-smooth, then an
  $(\eps, \delta)$-local min-max equilibrium is guaranteed to exist when
  $\delta < \sqrt{2 \eps/L}$, i.e.~in the local regime.
\end{theorem}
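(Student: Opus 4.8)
The plan is to route through the projected gradient descent--ascent map $F_{GDA}$ of Definition~\ref{def:FGDA} and establish two facts: (i) in the local regime $\delta<\sqrt{2\eps/L}$, every sufficiently accurate approximate fixed point of the projected GDA dynamics is an $(\eps,\delta)$-local min-max equilibrium; and (ii) computing such an approximate fixed point lies in $\PPAD$. Composing the reduction from (i) with the membership in (ii) gives $\lrlNash\in\PPAD$, and instantiating (i) at an \emph{exact} fixed point of $F_{sGDA}$---which exists by Brouwer's fixed point theorem applied to the continuous self-map $F_{sGDA}$ of the compact convex set $\calP(\matA,\vecb)$---yields the claimed existence byproduct.

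For step (i), suppose $(\vecx^{\star},\vecy^{\star})\in\calP(\matA,\vecb)$ satisfies $\norm{(\vecx^{\star},\vecy^{\star})-F_{GDA}(\vecx^{\star},\vecy^{\star})}_2<\alpha$, and write $\vecx'=\Pi_{K(\vecy^{\star})}(\vecx^{\star}-\nabla_{\vecx}f(\vecx^{\star},\vecy^{\star}))$, $\vecy'=\Pi_{K(\vecx^{\star})}(\vecy^{\star}+\nabla_{\vecy}f(\vecx^{\star},\vecy^{\star}))$, so that $\norm{\vecx'-\vecx^{\star}}_2,\norm{\vecy'-\vecy^{\star}}_2<\alpha$. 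The variational characterization of Euclidean projection onto the convex slices $K(\vecy^{\star})$ and $K(\vecx^{\star})$, combined with $\norm{\nabla f}_2\le G$ and $\diam_2(\calP(\matA,\vecb))\le\sqrt d$, yields the approximate first-order conditions $\inner{\nabla_{\vecx}f(\vecx^{\star},\vecy^{\star})}{\vecx-\vecx^{\star}}\ge -\alpha(\sqrt d+G)$ for all $\vecx$ feasible with $\vecy^{\star}$, and symmetrically $\inner{\nabla_{\vecy}f(\vecx^{\star},\vecy^{\star})}{\vecy-\vecy^{\star}}\le \alpha(\sqrt d+G)$. Feeding these into the $L$-smoothness (descent) inequality, for any feasible $\vecx\in\ball_{d_1}(\delta;\vecx^{\star})$ one gets $f(\vecx,\vecy^{\star})\ge f(\vecx^{\star},\vecy^{\star})-\alpha(\sqrt d+G)-\tfrac{L}{2}\delta^2$ and, analogously, $f(\vecx^{\star},\vecy)\le f(\vecx^{\star},\vecy^{\star})+\alpha(\sqrt d+G)+\tfrac{L}{2}\delta^2$. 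In the local regime $\tfrac{L}{2}\delta^2<\eps$, so choosing $\alpha<(\eps-\tfrac{L}{2}\delta^2)/(\sqrt d+G)$---a quantity of polynomial bit complexity in the input---forces $(\vecx^{\star},\vecy^{\star})$ to be a genuine $(\eps,\delta)$-local min-max equilibrium. This is the polynomial-time reduction from $\lrlNash$ to $\gdaFixed$ with this value of $\alpha$.

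For step (ii), the gradient-step map $(\vecx,\vecy)\mapsto(\vecx-\nabla_{\vecx}f(\vecx,\vecy),\,\vecy+\nabla_{\vecy}f(\vecx,\vecy))$ is $(1+L)$-Lipschitz by $L$-smoothness, and Euclidean projection onto a convex set is nonexpansive, so $F_{sGDA}$ is $(1+L)$-Lipschitz; precomposing with $\Pi_{\calP(\matA,\vecb)}$ produces a $(1+L)$-Lipschitz self-map $M:[0,1]^d\to\calP(\matA,\vecb)\subseteq[0,1]^d$, computable in polynomial time since projection onto a polytope is a convex quadratic program, with the usual care about approximate arithmetic in the white-box model. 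An exact fixed point of $M$ lies in $\calP(\matA,\vecb)$ and is therefore an exact fixed point of $F_{sGDA}$; and a $\gamma$-approximate fixed point $\vecz$ of $M$ satisfies $\dist(\vecz,\calP(\matA,\vecb))<\gamma$, so $\Pi_{\calP(\matA,\vecb)}(\vecz)$ is a $2\gamma$-approximate fixed point of $F_{sGDA}$ inside $\calP(\matA,\vecb)$. Since fixed points of $F_{GDA}$ and $F_{sGDA}$ coincide, and an approximate fixed point of one transfers to an approximate fixed point of the other up to a constant factor (as in the proof of Theorem~\ref{t:PPAD_inclusion}), finding an $\alpha$-approximate fixed point of $F_{GDA}$ reduces to \textsc{Brouwer} with Lipschitz constant $1+L$ and accuracy $\Theta(\alpha)$, which lies in $\PPAD$ by Lemma~\ref{lem:BrouwerPPAD}. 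Chaining the two reductions gives $\lrlNash\in\PPAD$, and the existence byproduct follows by taking $\alpha\to 0$ in step (i) at the Brouwer fixed point of $M$.

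I expect the main obstacle to be the bookkeeping in step (ii): handling the polytope constraint cleanly---so that approximate Brouwer fixed points of the extended map $M$ translate back to approximate $F_{GDA}$ fixed points that lie \emph{inside} $\calP(\matA,\vecb)$---and verifying that $M$ (together with the approximately computed gradient) can be evaluated in polynomial time with error small enough not to swamp $\alpha$, i.e.\ making the white-box reduction fully rigorous. The descent-lemma argument in step (i) is conceptually the crux, as it pinpoints why the threshold $\delta<\sqrt{2\eps/L}$ (equivalently $\tfrac{L}{2}\delta^2<\eps$) is exactly what is needed, but it is short and the existence byproduct then comes for free.
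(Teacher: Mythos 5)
Your overall route is exactly the paper's: your step (i) is claim 1 of Theorem~\ref{t:LocalMaxMin-GDA} (the descent-lemma argument and the threshold $\tfrac{L}{2}\delta^2<\eps$ are right, and your choice $\alpha<(\eps-\tfrac{L}{2}\delta^2)/(\sqrt d+G)$ is a valid variant of the paper's bound), and your step (ii) is the paper's proof of Theorem~\ref{t:PPAD_inclusion}, which also goes through the jointly-projected map $F_{sGDA}$ and \textsc{Brouwer} via Lemma~\ref{lem:BrouwerPPAD}, with the same handling of the polytope and of approximate arithmetic.

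The genuine gap is the one sentence you do not prove: that an approximate fixed point of $F_{sGDA}$ is an approximate fixed point of $F_{GDA}$ ``up to a constant factor,'' so that Brouwer accuracy $\Theta(\alpha)$ suffices. That claim is false, and it is precisely the technical core of the membership proof (citing ``as in the proof of Theorem~\ref{t:PPAD_inclusion}'' is circular here, since that is the statement being established). What is true, and what the paper proves via the variational inequality for the joint projection plus Cauchy--Schwarz, is a quadratic transfer: a $\gamma$-approximate fixed point of $F_{sGDA}$ is only an $O\bigl(\sqrt{(G+\sqrt d)\,\gamma}\bigr)$-approximate fixed point of $F_{GDA}$, and the square root is unavoidable. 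For instance, with $d_1=d_2=1$, $\calP=\{(x,y)\in[0,1]^2: x\le c_0+\mu y\}$ for small $\mu$, a point with slack $\mu^2 G$ from the coupling facet, and $\nabla f=(-G,-\mu G)$, the joint projection moves the point by only $\gamma=\Theta(\mu^2 G)$ while the slice projection $\Pi_{K(\vecx^{\star})}(y^{\star}+\nabla_y f)$ moves $y$ by $\mu G=\Theta(\sqrt{G\gamma})$. (Relatedly, the full ``coincidence'' you invoke holds only in the direction you need: on $\calP=\{x=y\}$ every feasible point is an exact $F_{GDA}$ fixed point while $F_{sGDA}$ can displace it by $\Theta(G)$.) The damage is quantitative, not structural: proving the transfer as the paper does and taking Brouwer accuracy $\gamma=\Theta\bigl(\alpha^2/(G+\sqrt d)\bigr)$, still of polynomial bit complexity, repairs step (ii), and the existence byproduct at an exact Brouwer fixed point is unaffected since exact fixed points of $F_{sGDA}$ are exact fixed points of $F_{GDA}$.
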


\begin{theorem}[Hardness of Finding Approximate Local Min-Max Equilibrium] \label{thm:localNashHardness}
    The search problem $\lrlNash$ is $\PPAD$-hard, for any
  $\delta \ge \sqrt{\eps/L}$, and even when it holds that
  $1/\eps = \poly(d)$, $G = \poly(d)$, $L = \poly(d)$, and $B = d$.
\end{theorem}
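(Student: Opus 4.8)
The plan is to reduce from a new combinatorial problem, $\BiSperner$, a variant of the panchromatic‐cell problem guaranteed by Sperner's lemma, and to first establish that $\BiSperner$ is itself $\PPAD$-hard by reduction from $\textsc{Brouwer}$. Recall that Lemma~\ref{lem:BrouwerPPAD} gives $\PPAD$-hardness of $\textsc{Brouwer}$ even with $\gamma = \poly(1/d)$ and $L = \poly(d)$; retaining polynomial parameters here is exactly what lets all of $1/\eps$, $G$, $L$ come out polynomial in $d$ at the end. In $\BiSperner$ one has a simplicization of $[0,1]^d$ whose vertices carry $2d$ colors, every vertex receives exactly $d$ of them, and a solution is a vertex in whose closed neighborhood the union of colors is the full set of $2d$ colors. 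Given an $L$-Lipschitz map $M:[0,1]^d\to[0,1]^d$, I would color a grid vertex $\vecv$ from the signs and magnitudes of the coordinates of the displacement $M(\vecv)-\vecv$: color $2i-1$ if the $i$-th displacement coordinate is sufficiently positive and color $2i$ if it is sufficiently negative, with a deterministic tie-break ensuring exactly $d$ colors per vertex. A vertex whose neighborhood realizes all $2d$ colors then certifies a near-zero displacement of $M$ in every coordinate, i.e.\ an approximate Brouwer fixed point, giving $\PPAD$-hardness of $\BiSperner$.

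The conceptual heart of the reduction, which is carried out only implicitly, is to build a \emph{discrete} two-player zero-sum game whose approximate local equilibria are precisely the $\BiSperner$ solutions. Both the minimizer and the maximizer choose a vertex of the simplicization; for each pair of vertices I specify — independently, at this stage — a payoff value and a pair of gradient vectors, one for the $\vecx$-block and one for the $\vecy$-block. The payoff is engineered so that a player can always locally improve by more than $\eps$ unless it is ``aligned'' with the opponent, and the gradients encode the best-response directions. To force the two players onto a common vertex, so that a joint local equilibrium genuinely witnesses the neighborhood-covering condition at that vertex, I would introduce linear constraints coupling $\vecx$ and $\vecy$; this is precisely where the polytope $\calP(\matA,\vecb)$ in the final $\lNash$ instance originates. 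A purely combinatorial argument then shows that any point from which neither player can locally gain $\eps$ must sit at a $\BiSperner$ solution.

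The third and by far the most delicate step is to realize the discrete game as the values and gradient of a single globally defined $f:\calP(\matA,\vecb)\to[-B,B]$ that is $L$-smooth, $G$-Lipschitz with $G,L,1/\eps,1/\delta$ all $\poly(d)$ and $B=d$, while keeping $f$ and $\nabla f$ computable in time polynomial in $d$ and the query bit-length (equivalently, with $\poly(d)$ oracle accesses in the black-box version). In two dimensions (Section~\ref{sec:hardness:2D}) one can do this by hand, smoothing the piecewise vertex data along grid lines with one-dimensional smooth step functions; but in high dimension a naive product or convolution smoothing blows up either the smoothness constant, the bit complexity, or the evaluation time. The remedy is the \emph{smooth and efficient interpolation coefficients} of Section~\ref{sec:seic}: a carefully engineered partition-of-unity-type family attached to the cells of the simplicization that (a) interpolates the prescribed vertex values and gradients, (b) has all derivatives polynomially bounded, and (c) admits closed-form, polynomial-time-evaluable expressions. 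I expect this interpolation — simultaneously smooth, parameter-controlled, and efficiently computable in high dimension — to be the main obstacle; the rest is bookkeeping by comparison.

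Finally I would assemble the pieces: verify that the composed map $\textsc{Brouwer}\to\BiSperner\to\text{discrete game}\to (f,\calP(\matA,\vecb))$ is polynomial-time computable, and that a solution of the resulting $\lrlNash$ instance maps back, through the local-equilibrium-to-$\BiSperner$ correspondence, to a $\textsc{Brouwer}$ solution. By choice of parameters the whole construction lies in the local regime $\delta<\sqrt{2\eps/L}$, and in fact one obtains hardness for every $\delta\ge\sqrt{\eps/L}$ within that regime, as claimed; combined with $\lrlNash\in\PPAD$ from Theorem~\ref{thm:localNashExistence}, this also gives $\PPAD$-completeness. The same construction, together with the equivalence of $\lrlNash$ and $\gdaFixed$ from Theorem~\ref{t:LocalMaxMin-GDA}, transfers the hardness to computing approximate fixed points of projected gradient descent–ascent.
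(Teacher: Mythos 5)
Your proposal follows essentially the same route as the paper: $\PPAD$-hardness of a high-dimensional Bi-Sperner problem obtained from Brouwer, an (implicit) discrete zero-sum game whose coupling constraints give rise to the polytope $\calP(\matA,\vecb)$, the smooth and efficient interpolation coefficients of Section~\ref{sec:seic} to realize the discrete data as a $\poly(d)$-Lipschitz, $\poly(d)$-smooth function computable in polynomial time, and the transfer to $\lrlNash$ via Theorem~\ref{t:LocalMaxMin-GDA}. The only notable deviation is that you base the Bi-Sperner hardness on Lemma~\ref{lem:BrouwerPPAD} rather than on the $\gamma$-\textsc{SuccinctBrouwer} result (Theorem~\ref{t:Brouwer}) used in the paper, which is harmless since either starting point keeps the grid size, and hence $G$, $L$, and $1/\eps$, polynomial in $d$.
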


\noindent Theorem \ref{thm:localNashHardness} implies that any algorithm that
computes an $(\eps, \delta)$-local min-max equilibrium of a $G$-Lipschitz and
$L$-smooth function $f$ in the local regime should take time that is
super-polynomial in at least one of $1/\eps$, $G$, $L$ or $d$, unless
$\FP = \PPAD$. As such, the complexity of computing local min-max equilibria in
the local regime is markedly different from the complexity of computing local
minima, which can be found using Projected Gradient Descent in
$\poly(G, L, 1/\eps, d)$ time and function/gradient evaluations
(see Appendix \ref{sec:gdStationary}).
\medskip

  An important property of our reduction in the proof of Theorem
\ref{thm:localNashHardness} is that it is a \textit{black-box reduction}. We can
hence prove the following unconditional lower bound in the black-box model.

\begin{theorem}[Black-Box Lower Bound for Finding Approximate Local Min-Max Equilibrium] \label{thm:localNashBlackBoxLowerBound}
    Suppose $\matA \in \R^{d \times m}$ and $\vecb \in \R^m$ are given together
  with an oracle $\calO_f$ that outputs a $G$-Lipschtz and $L$-smooth function
  $f : \calP(\matA, \vecb) \to [-1, 1]$ and its gradient $\nabla f$. Let also
  $\delta \ge \sqrt{L/\eps}$, $\eps \le G^2/L$, and let all the parameters
  $1/\eps$, $1/\delta$, $L$, $G$ be upper bounded by $\poly(d)$. Then any
  algorithm that has access to $f$ only through $\calO_f$ and computes an
  $(\eps, \delta)$-local min-max equilibrium has to make a number of queries to
  $\calO_f$ that is exponential in at least one of the parameters: $1/\eps$,
  $G$, $L$ or $d$ even when $\calP(\matA, \vecb) \subseteq [0, 1]^d$.
\end{theorem}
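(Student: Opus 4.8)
The plan is to observe that the $\PPAD$-hardness reduction behind Theorem~\ref{thm:localNashHardness} is, in fact, a \emph{query-efficient black-box reduction}, and to feed it an unconditional query lower bound for computing Brouwer fixed points. Concretely, the starting point will be the lower bound of~\cite{HPV88}: there is a family of $O(1)$-Lipschitz maps $M : [0,1]^{d'} \to [0,1]^{d'}$, presented by an oracle, such that every algorithm finding a point $\vecz$ with $\norm{\vecz - M(\vecz)}_2 < \gamma$ (for $\gamma$ inverse-polynomial in $d'$) must make $2^{\Omega(d')}$ oracle queries. Equivalently, one can phrase the source lower bound directly at the level of $\BiSperner$ with an oracle-provided coloring, since the first step of our proof already shows $\BiSperner$ is $\PPAD$-hard via a reduction from $\textsc{Brouwer}$ that is itself black-box; either way we obtain an exponential query lower bound for the combinatorial object that our high-dimensional construction (Section~\ref{sec:hardness}) reduces from.

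The core of the argument is then a \emph{simulation lemma}: given the oracle $\calO_M$ (equivalently, the oracle coloring of a $\BiSperner$ instance of dimension $d' = \Theta(d)$), the reduction of Section~\ref{sec:hardness} produces a $G$-Lipschitz, $L$-smooth $f : \calP(\matA,\vecb) \to [-1,1]$ with $\calP(\matA,\vecb) \subseteq [0,1]^d$ and with $1/\eps$, $1/\delta$, $G$, $L$ all $\poly(d)$ --- exactly the parameter regime of Theorem~\ref{thm:localNashHardness} --- such that: (i) any single query $\calO_f(\vecz)$, returning $f(\vecz)$ and $\nabla f(\vecz)$, can be answered using only $\poly(d)$ queries to $\calO_M$, because the value and gradient of $f$ at $\vecz$ depend only on the colors of the $\poly(d)$ vertices of the simplicial subdivision in a bounded neighborhood of $\vecz$ (this locality is precisely what the smooth and efficient interpolation coefficients of Section~\ref{sec:seic} are engineered to provide, once one fixes the active simplex containing $\vecz$); and (ii) there is a $\poly(d)$-time, \emph{query-free} map turning any $(\eps,\delta)$-local min-max equilibrium of $f$ in the local regime, for $\delta$ in the range $[\sqrt{\eps/L},\,\sqrt{2\eps/L})$ handled by Theorem~\ref{thm:localNashHardness}, into a solution of the $\BiSperner$ instance, and hence into an approximate fixed point of $M$.

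Given the simulation lemma, the lower bound follows by a standard composition. If some algorithm $\calA$ computes an $(\eps,\delta)$-local min-max equilibrium using $q$ queries to $\calO_f$, then running $\calA$ on the instance produced by the reduction, answering each of its oracle calls by the $\poly(d)$-query subroutine of item (i), and post-processing $\calA$'s output by the map of item (ii), yields an algorithm for the hard $\textsc{Brouwer}/\BiSperner$ family making at most $q \cdot \poly(d)$ oracle queries; by~\cite{HPV88} this number is $2^{\Omega(d)}$, so $q = 2^{\Omega(d)}$. For the ``exponential in at least one of $1/\eps$, $G$, $L$, $d$'' phrasing: in the stated regime all of $1/\eps$, $1/\delta$, $G$, $L$ are $\poly(d)$, so the bound $2^{\Omega(d)}$ is exponential in $d$ and the claim holds, while the more general statement of Informal Theorem~\ref{informal:blackBoxhardness}, where these quantities are not polynomially tied to $d$, is recovered by choosing the dimension of the source $\textsc{Brouwer}$ instance as a function of whichever of $1/\eps$, $G$, $L$, $d$ one wants the bound to be exponential in, and rescaling the interpolation accordingly.

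The step I expect to be the main obstacle is establishing the locality claim in item (i) with a genuinely $\poly(d)$ --- as opposed to $2^{\Theta(d)}$ --- query overhead: the subdivision of a $d$-dimensional cube has exponentially many simplices, so one must argue both that, given $\vecz$, the containing simplex and its $d{+}1$ vertices are identifiable from $O(1)$ coordinate comparisons, and that the $\BiSperner$ ``neighborhood'' whose colors determine $f(\vecz)$ and $\nabla f(\vecz)$ has size $\poly(d)$; this is exactly where the explicit structure of the interpolation coefficients of Section~\ref{sec:seic}, together with the fact that they and their derivatives are computable in $\poly(d)$ time and read off only polynomially many colorings, must be invoked. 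A secondary point to get right is matching the Lipschitz/smoothness constant of the source $\textsc{Brouwer}$ family (which must stay $O(1)$, or at worst $\poly(d)$) with the regime in which the~\cite{HPV88} bound is $2^{\Omega(d)}$, so that no parameter of the produced min-max instance is forced outside the $\poly(d)$ window claimed in the theorem.
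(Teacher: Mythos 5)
Your proposal is correct and follows essentially the same route as the paper: invoke the \cite{HPV88} query lower bound, note that the \textsc{Brouwer}-to-$\HD$-$\BiSperner$ reduction of Theorem~\ref{t:Bi_Sperner_PPAD} is black-box with one coloring query per map evaluation, and then simulate each $\calO_f$ call using the locality of the construction in Section~\ref{sec:hardness} before composing with the correctness of that reduction. The locality obstacle you flag is exactly what property~\ref{enu:properties:Coefficients:3} of the SEIC coefficients resolves (only $d+1$ coefficients are active and $R_+(\vecx)$ is computable in $\poly(d)$ time), which is how the paper gets its $d+1$ coloring queries per $\calO_f$ query and the resulting $2^d/(d+1)$ bound.
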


  Our main goal in the rest of the paper is to provide the proofs of Theorems
\ref{thm:localNashExistence}, \ref{thm:localNashHardness}
and \ref{thm:localNashBlackBoxLowerBound}. In Section \ref{sec:existence}, we
show how to use Brouwer's fixed point theorem to prove the existence of
approximate local min-max equilibrium in the local regime. Moreover, we
establish an equivalence between $\lNash$ and $\gdaFixed$, in the local regime,
and show that both belong to $\PPAD$. In Sections \ref{sec:hardness:2D} and
\ref{sec:hardness}, we provide a detailed proof of our main result, i.e. Theorem
\ref{thm:localNashHardness}. Finally, in Section \ref{sec:blackBox}, we show how
our proof from Section \ref{sec:hardness} produces as a byproduct the black-box,
unconditional lower bound of Theorem \ref{thm:localNashBlackBoxLowerBound}. In
Section \ref{sec:seic}, we outline a useful interpolation technique which allows
as to interpolate a function given its values and the values of its gradient on
a hypergrid, so as to enforce the Lipschitzness and smoothness of the
interpolating function. We make heavy use of this technically involved result in
all our hardness proofs.

\section{Existence of Approximate Local Min-Max Equilibrium} \label{sec:existence}

  In this section, we establish the totality of \lrlNash, i.e.~\lNash\ for
instances satisfying $\delta < \sqrt{2\eps/L}$ as defined in
Definition~\ref{def:local regime local Nash}. In particular, we prove that every
$G$-Lipschitz and $L$-smooth function admits an $(\eps, \delta)$-local min-max
equilibrium, as long as $\delta < \sqrt{2\eps/L}$. A byproduct of our proof is in
fact that $\lrlNash$ lies inside $\PPAD$. Specifically the main tool that we use to
prove our result is a computational equivalence between the problem of finding
fixed points of the Gradient Descent/Ascent dynamic, i.e.~$\gdaFixed$, and the
problem $\lrlNash$. A similar equivalence between \gdFixed\ and \lmin\ also holds,
but the details of that are left to the reader as a simple exercise. Next, we first
present the equivalence between \gdaFixed\ and \lrlNash, and we then show that
\gdaFixed\ is in \PPAD, which then also establishes that \lrlNash\ is in \PPAD.

\begin{theorem}\label{t:LocalMaxMin-GDA}
    The search problems \lrlNash\ and \gdaFixed\ are equivalent under
  polynomial-time reductions. That is, there is a polynomial-time reduction from
  \lrlNash\ to \gdaFixed\ and vice versa. In particular, given some
  $\matA \in \R^{d \times m}$ and $\vecb \in \R^m$ such that
  $\calP(\matA, \vecb) \neq \emptyset$, along with a $G$-Lipschitz and $L$-smooth
  function $f : \calP(\matA, \vecb) \to \R$:
  \begin{enumerate}
    \item For arbitrary $\eps >0$ and $0 < \delta < \sqrt{2 \eps / L}$, suppose
    that $(\vecx^{\ast},\vecy^{\ast})\in \calP(\matA, \vecb)$ is an
    $\alpha$-approximate fixed point of $F_{GDA}$, i.e.,
    $\norm{(\vecx^{\ast}, \vecy^{\ast}) - F_{GDA}(\vecx^{\ast},\vecy^{\ast})}_2 < \alpha$, where
    $\alpha \le \frac{\sqrt{(G+\delta)^2+4(\eps - \frac{L}{2}\delta^2)} - (G+\delta)}{2}$.
    Then $(\vecx^\ast, \vecy^\ast)$ is also a $(\eps,\delta)$-local min-max equilibrium of $f$.
    \item For arbitary $\alpha>0$, suppose that $(\vecx^\ast,\vecy^\ast)$ is an $(\eps,\delta)$-local min-max equilibrium of $f$ for $\eps = \frac{\alpha^2 \cdot L}{(5 L + 2)^2}$ and $\delta = \sqrt{\eps/L}$. Then $(\vecx^\ast, \vecy^\ast)$ is also an $\alpha$-approximate fixed point of $F_{GDA}$.
    \end{enumerate}
\end{theorem}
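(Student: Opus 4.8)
The plan is to establish both directions of the reduction by analyzing the relationship between an approximate fixed point of $F_{GDA}$ and the local min-max equilibrium conditions, exploiting $L$-smoothness to pass between ``the projected gradient step moves little'' and ``the function cannot be improved locally.'' The two directions are largely symmetric in spirit but use the smoothness inequality in opposite ways: in direction (1), a small fixed-point residual forces the objective to be nearly optimal in a $\delta$-ball; in direction (2), near-optimality in a $\delta$-ball forces the projected gradient step to be small. Throughout I would work separately with the $\vecx$-block and the $\vecy$-block, since $F_{GDA}$ projects them independently onto $K(\vecy^\ast)$ and $K(\vecx^\ast)$ respectively, and the two min-max inequalities decouple accordingly; by the min/max symmetry it suffices to argue one block.

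For direction (1), fix the max-player's strategy at $\vecy^\ast$ and consider the min-player. Let $\vecx^+ = \Pi_{K(\vecy^\ast)}(\vecx^\ast - \nabla_{\vecx} f(\vecx^\ast,\vecy^\ast))$, so by hypothesis $\norm{\vecx^+ - \vecx^\ast}_2 < \alpha$. The standard projection/variational-inequality property gives, for every feasible $\vecx$ in $K(\vecy^\ast)$, that $\langle \vecx^\ast - \nabla_{\vecx} f(\vecx^\ast,\vecy^\ast) - \vecx^+, \vecx - \vecx^+\rangle \le 0$, which rearranges to a lower bound on $\langle \nabla_{\vecx} f(\vecx^\ast,\vecy^\ast), \vecx - \vecx^\ast\rangle$ in terms of $\norm{\vecx^+ - \vecx^\ast}_2$ and $\norm{\vecx - \vecx^\ast}_2$. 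Combining this with the $L$-smoothness upper bound $f(\vecx,\vecy^\ast) \ge f(\vecx^\ast,\vecy^\ast) + \langle \nabla_{\vecx} f(\vecx^\ast,\vecy^\ast), \vecx - \vecx^\ast\rangle - \tfrac{L}{2}\norm{\vecx - \vecx^\ast}_2^2$ and using $\norm{\vecx - \vecx^\ast}_2 \le \delta$, $\norm{\vecx^+ - \vecx^\ast}_2 < \alpha$, one gets $f(\vecx,\vecy^\ast) \ge f(\vecx^\ast,\vecy^\ast) - \alpha^2 - (G+\delta)\alpha - \tfrac{L}{2}\delta^2$ or a similar expression; here $G$ enters as a crude bound on $\norm{\nabla_{\vecx} f}_2$. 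The stated threshold $\alpha \le \tfrac{\sqrt{(G+\delta)^2+4(\eps-\tfrac{L}{2}\delta^2)}-(G+\delta)}{2}$ is exactly the root of the quadratic $\alpha^2 + (G+\delta)\alpha + \tfrac{L}{2}\delta^2 = \eps$, so this choice makes the slack equal to $\eps$, yielding $f(\vecx,\vecy^\ast) > f(\vecx^\ast,\vecy^\ast) - \eps$, i.e. condition~\eqref{eq:local min costis2} (with roles consistent with the min-player). The symmetric argument on the $\vecy$-block with $+\nabla_{\vecy}f$ gives the other inequality.

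For direction (2), assume $(\vecx^\ast,\vecy^\ast)$ is an $(\eps,\delta)$-local min-max equilibrium with $\delta = \sqrt{\eps/L}$, and show $\norm{\vecx^\ast - F_{GDAx}(\vecx^\ast,\vecy^\ast)}_2$ is small (then combine the $\vecx$- and $\vecy$-bounds by the triangle/Pythagorean inequality to control the joint residual by $\alpha$). The idea is to suppose for contradiction that the projected gradient step $\vecx^+ = \Pi_{K(\vecy^\ast)}(\vecx^\ast - \nabla_{\vecx}f(\vecx^\ast,\vecy^\ast))$ satisfies $\norm{\vecx^+ - \vecx^\ast}_2 = \beta$ for some not-too-small $\beta$; then the point obtained by moving a short distance (of order $\min(\delta,\beta)$) from $\vecx^\ast$ toward $\vecx^+$ is feasible, lies in $B_{d_1}(\delta;\vecx^\ast)$, and by the projection property the gradient has a substantial negative inner product along that direction, so by smoothness $f$ decreases by strictly more than $\eps$ there — contradicting the equilibrium condition. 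Quantitatively this yields $\beta \le c\cdot(L+1)\sqrt{\eps/L}$ for an explicit small constant, and with $\eps = \tfrac{\alpha^2 L}{(5L+2)^2}$ the two block-residuals each become at most $\tfrac{\alpha}{\sqrt 2}$-ish, so the joint residual is below $\alpha$.

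The main obstacle I anticipate is bookkeeping the constants so the claimed thresholds come out exactly, particularly in direction (2): one must handle the case where the unconstrained gradient step is long but the feasible region is ``in the way,'' so that only a short feasible step toward $\vecx^+$ is available, and one must choose the probe radius as a minimum of $\delta$ and a multiple of $\beta$ and then carefully balance the first-order gain against the $\tfrac{L}{2}r^2$ smoothness loss — this is where the choice $\delta = \sqrt{\eps/L}$ (rather than $\sqrt{2\eps/L}$) and the factor $(5L+2)^2$ get pinned down. A secondary subtlety is that $F_{GDA}$ is the ``unsafe'' map with independent projections, so the feasibility of the probe point $\vecx^\ast + r(\vecx^+-\vecx^\ast)/\norm{\vecx^+-\vecx^\ast}_2$ in $\calP(\matA,\vecb)$ must be argued from convexity of $K(\vecy^\ast)$ and $\vecx^+, \vecx^\ast \in K(\vecy^\ast)$, keeping $\vecy^\ast$ fixed so that the pair stays in $\calP(\matA,\vecb)$; none of this is deep but it needs to be stated cleanly.
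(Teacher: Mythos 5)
Your proposal matches the paper's own argument in both directions: direction (1) uses the same projection variational inequality plus $L$-smoothness, with $\alpha$ chosen as the root of $\alpha^2+(G+\delta)\alpha+\tfrac{L}{2}\delta^2=\eps$, and direction (2) uses the same probe point a distance $\min(\delta,\cdot)$ along the segment toward the projected gradient step, combining the local min-max condition with the projection inequality to bound each block residual by $\alpha/2$-type quantities. So the proposal is correct and essentially identical in approach to the paper's proof.
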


\noindent The  proof of Theorem~\ref{t:LocalMaxMin-GDA} is presented in
Appendix~\ref{sec:proof:LocalMaxMin-GDA}. As already discussed, we use \gdaFixed\ as an intermediate step to establish the totality of \lrlNash\ and to show its inclusion in $\PPAD$. This leads to the
following theorem.

\begin{theorem}\label{t:PPAD_inclusion}
    The computational problems $\gdaFixed$ and $\lrlNash$ are both total search
  problems and they both lie in~$\PPAD$.
\end{theorem}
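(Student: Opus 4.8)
The plan is to leverage the equivalence established in Theorem~\ref{t:LocalMaxMin-GDA}: since $\lrlNash$ and $\gdaFixed$ reduce to one another in polynomial time and $\PPAD$ is closed under polynomial-time reductions, it suffices to prove that $\gdaFixed$ is a total search problem lying in $\PPAD$. Totality of $\lrlNash$ then follows: in the local regime we have $\eps - \frac{L}{2}\delta^2 > 0$, so the bound on $\alpha$ appearing in part~1 of Theorem~\ref{t:LocalMaxMin-GDA} is strictly positive, and any $\alpha$-approximate fixed point of $F_{GDA}$ at that scale is an $(\eps,\delta)$-local min-max equilibrium; hence the existence of such fixed points, guaranteed once $\gdaFixed$ is total, yields the existence of equilibria claimed in Theorem~\ref{thm:localNashExistence}.

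First I would pass from the ``unsafe'' map $F_{GDA}$, which need not map $\calP(\matA,\vecb)$ into itself, to its ``safe'' counterpart $F_{sGDA}(\vecx,\vecy) = \Pi_{\calP(\matA,\vecb)}\big((\vecx,\vecy) - (\nabla_{\vecx} f(\vecx,\vecy),\,-\nabla_{\vecy} f(\vecx,\vecy))\big)$, which is a continuous self-map of the nonempty, compact, convex set $\calP(\matA,\vecb)$. Brouwer's fixed point theorem guarantees an exact fixed point of $F_{sGDA}$. The first technical step is to show that (approximate) fixed points of $F_{sGDA}$ and of $F_{GDA}$ are interreducible — i.e.\ from an exact, or sufficiently good approximate, fixed point of $F_{sGDA}$ one can extract an $\alpha$-approximate fixed point of $F_{GDA}$, and conversely — which both justifies stating the problem in terms of $F_{GDA}$ and establishes totality of $\gdaFixed$ (the analogous but simpler argument relates $F_{GD}$ and $\lmin$).

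For the inclusion in $\PPAD$ I would reduce $\gdaFixed$ to \textsc{Brouwer}, which lies in $\PPAD$ by Lemma~\ref{lem:BrouwerPPAD}. Extend $F_{sGDA}$ to all of $[0,1]^d$ by precomposing with the projection onto $\calP(\matA,\vecb)$, obtaining a map $M : [0,1]^d \to [0,1]^d$ whose image is contained in $\calP(\matA,\vecb)$ and whose (approximate) fixed points correspond to (approximate) fixed points of $F_{sGDA}$. Two things must be checked: (i) $M$ is Lipschitz with a polynomially bounded constant — this holds because $\vecx \mapsto \vecx - \nabla_{\vecx} f$ and $\vecy \mapsto \vecy + \nabla_{\vecy} f$ are $(1+L)$-Lipschitz by $L$-smoothness of $f$, and Euclidean projection onto a convex set is nonexpansive, so $M$ is $O(1+L)$-Lipschitz; and (ii) $M$ is evaluable by a polynomial-time Turing machine — evaluating $\nabla f$ is polynomial-time by the promise on $\calC_f$, and the projection step is a convex quadratic program over the polytope $\{\vecy \in [0,1]^d : \matA^T\vecy \le \vecb\}$, solvable to any additive accuracy $\zeta$ in time polynomial in the bit-complexity of $(\matA,\vecb)$ and $\log(1/\zeta)$. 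Choosing $\zeta$ and the Brouwer tolerance $\gamma$ as suitably small polynomial functions of $\alpha$ and the input, an approximate Brouwer fixed point of $M$ of polynomial bit complexity translates into an $\alpha$-approximate fixed point of $F_{GDA}$, completing the reduction.

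The main obstacle will be the careful bookkeeping of error terms in two places: relating approximate fixed points of the jointly-projected map $F_{sGDA}$ to those of the block-separately-projected map $F_{GDA}$ (the two projections are onto genuinely different convex sets — the full polytope versus its axis-aligned slices $K(\vecy)$ and $K(\vecx)$ — so the equivalence is only up to controlled losses in the approximation parameter), and propagating the unavoidable inexactness of the quadratic-program projection through the Lipschitz map $M$ so that the returned point still satisfies $\norm{(\vecx^\star,\vecy^\star) - F_{GDA}(\vecx^\star,\vecy^\star)}_2 < \alpha$ exactly, with a representation of polynomial size. Everything else is a routine composition of known facts: Brouwer, nonexpansiveness of Euclidean projection, polynomial-time convex quadratic programming, and Lemma~\ref{lem:BrouwerPPAD}.
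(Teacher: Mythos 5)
Your proposal follows essentially the same route as the paper: reduce \gdaFixed\ to \textsc{Brouwer} via the jointly-projected (``safe'') map $M(\vecx,\vecy)=\Pi_{\calP(\matA,\vecb)}\bigl[(\vecx-\nabla_{\vecx} f,\,\vecy+\nabla_{\vecy} f)\bigr]$, which is $(L+1)$-Lipschitz by smoothness plus nonexpansiveness of projection, and then transfer approximate fixed points of $M$ to approximate fixed points of the block-separately-projected $F_{GDA}$, with totality and \PPAD-membership of \lrlNash\ following from Theorem~\ref{t:LocalMaxMin-GDA}. The error-bookkeeping step you flag as the main obstacle is exactly what the paper carries out via the variational inequality (its Claim showing $\langle(\vecx_\Delta,\vecy_\Delta)-(\vecx^\star,\vecy^\star),(\vecx,\vecy)-(\vecx^\star,\vecy^\star)\rangle<(G+2\sqrt{d})\gamma$), choosing the Brouwer tolerance $\gamma=\alpha^2/4(G+2\sqrt{d})$, so your plan is sound and matches the paper's argument.
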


\noindent Observe that Theorem \ref{thm:localNashExistence} is implied by
Theorem \ref{t:PPAD_inclusion} whose proof is presented in Appendix
\ref{sec:proof:t:PPAD_inclusion}.

\section{Hardness of Local Min-Max Equilibrium -- Four-Dimensions} \label{sec:hardness:2D}

  In Section~\ref{sec:existence}, we established that $\lrlNash$ belongs to
$\PPAD$. Our proof is via the intermediate problem $\gdaFixed$ which we showed
that it is computationally equivalent to $\lrlNash$. Our next step is to prove
the $\PPAD$-hardness of $\lrlNash$ using again $\gdaFixed$ as an intermediate
problem.
\smallskip

  In this section we prove that $\gdaFixed$ is $\PPAD$-hard in four dimensions.
To establish this hardness result we introduce a variant of the classical
$\TD$-$\Sperner$ problem which we call $\TD$-$\BiSperner$ which we show is
$\PPAD$-hard. The main technical part of our proof is to show that
$\TD$-$\BiSperner$ with input size $n$ reduces to $\gdaFixed$, with input
size $\poly(n)$, $\alpha = \exp(-\poly(n))$, $G = L = \exp(\poly(n))$, and
$B = 2$. This reduction proves the hardness of $\gdaFixed$. Formally, our main
result of this section is the following theorem.

\begin{theorem} \label{thm:hardness2D}
    The problem $\gdaFixed$ is $\PPAD$-complete even in dimension $d = 4$
  and $B = 2$. Therefore, $\lrlNash$ is $\PPAD$-complete even in dimension
  $d = 4$ and $B = 2$.
\end{theorem}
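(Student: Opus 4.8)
The plan is to reduce from the $\PPAD$-complete problem $\TD$-$\textsc{Brouwer}$ (Lemma~\ref{lem:BrouwerPPAD}) to $\gdaFixed$ in four dimensions, passing through a purely combinatorial intermediate that I will call $\TD$-$\BiSperner$. This is a coloring problem on a fine triangulation of $[0,1]^2$: vertices carry one of $2d = 4$ colors, every point of the triangulation is assigned $d = 2$ of them (one ``direction'' per coordinate), and one must output a vertex whose incident vertices exhibit all $4$ colors. Given a $\TD$-$\textsc{Brouwer}$ map $M : [0,1]^2 \to [0,1]^2$, I would color a vertex $\vecv$ by the two signs (and coarse magnitudes) of the coordinates of the displacement $M(\vecv) - \vecv$, using a ``positive'' and a ``negative'' color for each coordinate, so that a vertex with a panchromatic neighborhood localizes a cell where every coordinate of the displacement changes sign, i.e.\ an approximate fixed point of $M$. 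This yields the $\PPAD$-hardness of $\TD$-$\BiSperner$; the real work is the next step.

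Next I would turn a $\TD$-$\BiSperner$ instance of size $n$ into an instance of $\gdaFixed$ with $d = 4$, $B = 2$. The idea is to build (implicitly) a discrete two-player zero-sum game in which the minimizer controls a grid point $\vecx \in [0,1]^2$ and the maximizer a grid point $\vecy \in [0,1]^2$, for $d_1 = d_2 = 2$ variables in total, together with a handful of \emph{linear} constraints $g(\vecx,\vecy)\le 0$ that keep $\vecx$ and $\vecy$ within a bounded slack of each other, so that the joint point effectively tracks a single location in the $\TD$-$\BiSperner$ grid. At each grid configuration I would prescribe a value of $f$ and values of $\nabla_{\vecx} f$, $\nabla_{\vecy} f$ --- treated at this stage as free data --- so that a gradient-descent step on $\vecx$ and a gradient-ascent step on $\vecy$, each followed by projection onto the coupling polytope $\calP(\matA,\vecb)$, push the joint point along the ``$\BiSperner$ flow.'' The zero-sum wiring is essential here: it lets the flow cycle (the minimizer's decrease of $f$ is offset by the maximizer's increase), exactly the feature that would be impossible in a minimization problem, as discussed in Section~\ref{sec:intro:minimization}. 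The combinatorial core of this step is the claim that any (approximate) fixed point of this discrete descent-ascent dynamic must lie in a cell whose vertices are panchromatic, hence decodes to a $\TD$-$\BiSperner$ solution.

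The third step is to realize the prescribed function and gradient data by an honest continuously differentiable $f : \calP(\matA,\vecb) \to [-2,2]$ (so $B = 2$) and to control its parameters. Since the ambient dimension is only four, this smoothing can be done ``by hand'': I would express $f$ through sums and products of univariate smooth step/bump functions along each of the four axes, interpolating between the discrete gradient regimes, and then verify directly that (i) $f$ stays in $[-2,2]$; (ii) its Lipschitzness $G$ and smoothness $L$ are at most $\exp(\poly(n))$; (iii) $f$ and $\nabla f$ are computable by a polynomial-time Turing machine, and with a single oracle call for the black-box corollary; and (iv) the smoothing creates no approximate fixed point of $F_{GDA}$ outside the cells already handled by the discrete analysis --- this last point is the delicate one, since the interpolation must not blur the sign structure of the gradient field between adjacent grid points. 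Taking $\alpha = \exp(-\poly(n))$ small enough then guarantees that every $\alpha$-approximate fixed point of $F_{GDA}$ decodes to a genuine $\TD$-$\BiSperner$ solution, which establishes the $\PPAD$-hardness of $\gdaFixed$ with $d = 4$, $B = 2$.

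Finally, $\PPAD$-membership of $\gdaFixed$ is Theorem~\ref{t:PPAD_inclusion}, so $\gdaFixed$ is $\PPAD$-complete in dimension $4$ with $B = 2$. To obtain the same for $\lrlNash$, I would apply the second item of Theorem~\ref{t:LocalMaxMin-GDA}: keeping the same $f$ and constraints and setting $\eps = \alpha^2 L/(5L+2)^2$ and $\delta = \sqrt{\eps/L}$ (which lies in the local regime, $\delta < \sqrt{2\eps/L}$), any $(\eps,\delta)$-local min-max equilibrium is an $\alpha$-approximate fixed point of $F_{GDA}$, so a solution of the resulting $\lrlNash$ instance decodes to a solution of $\gdaFixed$; dimension and $B$ are preserved, and Theorem~\ref{thm:localNashExistence} supplies the matching $\PPAD$ upper bound. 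I expect the main obstacle to be the second step: designing the coupling constraints together with the discrete gradient field so that the zero-sum descent-ascent dynamics faithfully simulates the $\TD$-$\BiSperner$ flow and admits no spurious stationary points, while remaining compatible with a bounded, efficiently computable, smoothly interpolable $f$ in the third step.
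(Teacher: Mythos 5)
Your proposal follows essentially the same route as the paper: reduce \textsc{Brouwer} to the intermediate $\TD$-$\BiSperner$ problem by coloring grid vertices with the signs of $M(\vecv)-\vecv$ (with boundary overrides), then build $f_{\calC_l}(\vecx,\vecy)=\sum_i \alpha_i(\vecx)(y_i-x_i)$ via smooth-step interpolation of the vertex colors, couple $\vecx$ and $\vecy$ through the linear constraints $|x_i-y_i|\le\Delta$ so that interpolation derivatives cannot overpower the color-induced gradients, and show every $\alpha$-approximate GDA fixed point lands in a panchromatic or boundary-violating cell, with exponential $G,L,1/\alpha$ and $B=2$. The concluding transfer to $\lrlNash$ via Theorems~\ref{t:PPAD_inclusion} and~\ref{t:LocalMaxMin-GDA} is likewise exactly the paper's argument, so your plan is correct and matches the paper's proof.
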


\noindent The above result excludes the existence of an algorithm for
$\gdaFixed$ whose running time is $\poly(\log G,\log L,\log(1/\alpha), B)$ and,
equivalently, the existence of an algorithm for the problem $\lrlNash$ with
running time $\poly(\log G,\log L,\log(1/\eps), \log(1/\delta), B)$, unless
$\FP = \PPAD$. Observe that it would  not be possible to get a stronger hardness
result for the four dimensional $\gdaFixed$ problem since it is simple to
construct brute-force search algorithms with running time
$\poly(1/\alpha, G, L, B)$. We elaborate more on such algorithms towards the end
of this section. In order to prove the hardness of $\gdaFixed$ for polynomially
(rather than exponentially) bounded (in the size of the input) values of
$1/\alpha$, $G$, and $L$ (See Theorem \ref{thm:localNashHardness}) we need to
consider optimization problems in higher dimensions. This is the problem that we
explore in Section \ref{sec:hardness}. Beyond establishing the hardness of the
problem for $d = 4$ dimensions, the purpose of this section is to provide a
simpler reduction that helps in the understanding of our main result in the next
section.

\subsection{The 2D Bi-Sperner Problem} \label{sec:2DBiSperner}

  We start by introducing the $\TD$-$\BiSperner$ problem. Consider a coloring of
the $N \times N$, $2$-dimensional grid, where instead of coloring each vertex of
the grid with a single color (as in Sperner's lemma), each vertex is colored
via a combination of two out of  four available colors. The four available
colors are $1^-, 1^+, 2^-, 2^+$. The five rules that define a proper coloring of
the $N \times N$ grid are the following.
\medskip

\begin{enumerate}
  \item The first color of every vertex is either $1^-$ or $1^+$ and the second
  color is either $2^-$ or $2^+$.
  \item The first color of all  vertices on the left boundary of the grid is
  $1^+$. \label{legal1}
  \item The first color of all  vertices on the right boundary of the grid is
  $1^-$. \label{legal2}
  \item The second color of all  vertices on the bottom boundary of the grid
  is $2^+$. \label{legal3}
  \item The second color of all  vertices on the top boundary of the grid is
  $2^-$. \label{legal4}
\end{enumerate}

\begin{figure}[t]
  \centering
  \includegraphics[scale=1.2]{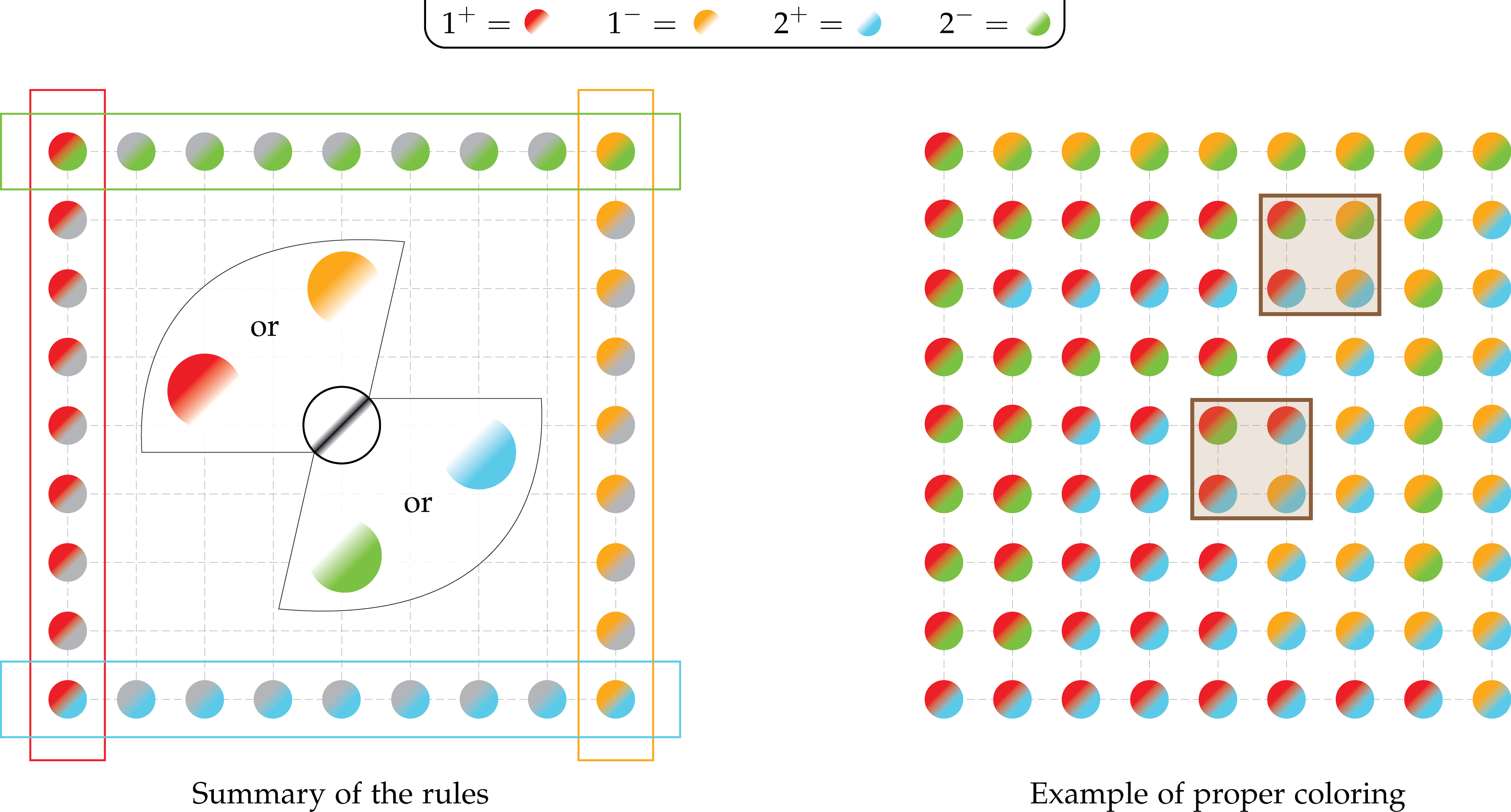}
  \caption{\textit{Left}: Summary of the rules from a proper coloring of the
  grid. The \textcolor{lgray}{\textbf{gray}} color on the left and the right
  side can be replaced with either \textcolor{lblue}{\textbf{blue}} or
  \textcolor{lgreen}{\textbf{green}}. Similarly the
  \textcolor{lgray}{\textbf{gray}} color on the top and the bottom side can be
  replaced with either \textcolor{lred}{\textbf{red}} or
  \textcolor{lyellow}{\textbf{yellow}}. \textit{Right:} An example of a proper
  coloring of a $9 \times 9$ grid. The \textcolor{lbrown}{\textbf{brown}} boxes
  indicate the two panchromatic cells, i.e., the cells where all the four available
  colors appear.}
  \label{fig:biSpernerRules&Example}
\end{figure}

\noindent Using similar proof ideas as in Sperner's lemma it is not
hard to establish via a combinatorial argument that, in every proper coloring of the $N \times N$ grid, there
exists a square cell where each of the four colors in $\{1^-, 1^+, 2^-, 2^+\}$
appears in at least one of its vertices. We call such a cell a \textit{panchromatic square}. In
the $\TD$-$\BiSperner$ problem, defined formally below, we are given the description of some coloring of the grid and are asked to find either a panchromatic square or the violation of the proper coloring conditions. In this paper,
we will not present a direct combinatorial argument guaranteeing the existence of panchromatic squares under proper colorings of the grid, since the existence of panchromatic squares will be implied by the totality of the
$\TD$-$\BiSperner$ problem, which will follow from our reduction from
$\TD$-$\BiSperner$ to $\gdaFixed$ as well as our proofs in Section \ref{sec:existence}
establishing the totality of $\gdaFixed$. In Figure
\ref{fig:biSpernerRules&Example} we summarize the five rules that define proper
colorings and we present an example of a proper coloring of the grid with $9$
discrete points on each side.
\medskip

  In order to formally define the computational problem $\TD$-$\BiSperner$ in a
way that is useful for our reductions we need to allow for colorings of the
$N \times N$ grid described in a succinct way, where the value $N$ can be
exponentially large compared to the size of the input to the problem. A standard
way to do this, introduced by \cite{Papadimitriou1994} in defining the
computational version of Sperner's lemma, is to describe a coloring via a binary
circuit $\calC_l$ that takes as input the coordinates of a vertex in the grid
and outputs the combination of colors that is used to color this vertex. In the
input, each one of the two coordinates of the input vertex is given via the
binary representation of a number in $\nm{N}$. Setting $N = 2^n$ we have that
the representation of each coordinate  belongs to $\{0, 1\}^n$. In the rest of
the section we abuse the notation and we use a coordinate $i \in \{0, 1\}^n$
both as a binary string and as a number in $\nm{2^n}$ and it is clear from the
context which of the two we use. The output of $\calC_l$ should be a combination
of one of the colors $\{1^-, 1^+\}$ and one of the colors $\{2^-, 2^+\}$. We
represent this combination as a pair of $\{-1, 1\}^2$. The first coordinate of this
pair refers to the choice of $1^-$ or $1^+$ and the second coordinate refers to the
choice of $2^-$ or $2^+$.
\medskip

  In the definition of the computational problem $\TD$-$\BiSperner$ the input is
a circuit $\calC_l$, as described above. One type of possible solutions to
$\TD$-$\BiSperner$ is providing a pair of coordinates
$(i, j) \in \{0, 1\}^n \times \{0, 1\}^n$ indexing a cell of the grid
whose bottom left vertex is $(i, j)$. For this type of solution to be valid it must
be that the output of $\calC_l$ when evaluated on all the vertices of this square
contains at least one negative and one positive value for each one of the two
output coordinates of $\calC_l$, i.e.~the cell must be panchromatic. Another type
of possible solution to $\TD$-$\BiSperner$ is a vertex whose coloring violates the
proper coloring conditions for the boundary, namely~\ref{legal1}--\ref{legal4}
above. For notational convenience we refer to the first coordinate of the output of
$\calC_l$ by $\calC_l^1$ and to the second coordinate by $\calC_l^2$. The formal
definition of the computational problem  $\TD$-$\BiSperner$ is then the following.
\medskip

\begin{nproblem}[\TD\text{-}\BiSperner]
  \textsc{Input:} A boolean circuit
  $\calC_l : \{0, 1\}^n \times \{0, 1\}^n \to \{-1, 1\}^2$.
  \smallskip

  \noindent \textsc{Output:} A vertex $(i, j) \in \{0, 1\}^n \times \{0, 1\}^n$
  such that one of the following holds
  \begin{Enumerate}
    \item $i \neq \vec{1}$, $j \neq \vec{1}$, and
    \[ \bigcup_{\substack{i' - i \in \{0, 1\} \\ j' - j \in \{0, 1\}}} \calC_l^1(i', j') = \{-1 , 1\} ~~~~ \text{ and } ~~~~ \bigcup_{\substack{i' - i \in \{0, 1\} \\ j' - j \in \{0, 1\}}} \calC_l^2(i', j') = \{-1 , 1\}, \text{ or} \]
    \item $i = \vec{0}$ and $\calC_l^1(i, j) = -1$, or
    \item $i = \vec{1}$ and $\calC_l^1(i, j) = +1$, or
    \item $j = \vec{0}$ and $\calC_l^2(i, j) = -1$, or
    \item $j = \vec{1}$ and $\calC_l^2(i, j) = +1$.
  \end{Enumerate}
\end{nproblem}
\medskip

  Our next step is to show that the problem $\TD$-$\BiSperner$ is $\PPAD$-hard.
Thus our reduction from $\TD$-$\BiSperner$ to $\gdaFixed$ in the next section establishes both the
$\PPAD$-hardness of $\gdaFixed$ and the inclusion of $\TD$-$\BiSperner$ to
$\PPAD$.

\begin{lemma} \label{lem:hardness2DBiSperner}
    The problem $\TD$-$\BiSperner$ is $\PPAD$-hard.
\end{lemma}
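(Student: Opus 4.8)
The plan is to reduce from the two‑dimensional \textsc{Brouwer} problem, which is $\PPAD$‑complete by Lemma~\ref{lem:BrouwerPPAD}. Given a \textsc{Brouwer} instance consisting of scalars $L,\gamma$ (we may assume $\gamma<1$, as otherwise every point is a solution) and a polynomial‑time Turing machine $\calC_M$ computing an $L$‑Lipschitz map $M:[0,1]^2\to[0,1]^2$, I will construct in polynomial time a boolean circuit $\calC_l:\{0,1\}^{n'}\times\{0,1\}^{n'}\to\{-1,1\}^2$, defining a $\TD$‑$\BiSperner$ instance, together with a polynomial‑time map taking any solution of that instance to a point $\vecz^\star$ with $\norm{\vecz^\star-M(\vecz^\star)}_2<\gamma$. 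The guiding idea is that $\calC_l^1$ and $\calC_l^2$ should record the signs of the two coordinates of the displacement field $\vecz\mapsto M(\vecz)-\vecz$; a panchromatic cell is then a small region over which both coordinates of the displacement change sign, hence a region containing a point where the displacement is tiny.

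For the construction, first tilt $M$ slightly toward the center: set $\eta=\gamma/4$ and $M'(\vecz)=(1-\eta)M(\vecz)+\eta\cdot(\tfrac12,\tfrac12)$, which still maps $[0,1]^2\to[0,1]^2$ and is $L$‑Lipschitz. Its displacement $\vech(\vecz)=M'(\vecz)-\vecz=(h_1(\vecz),h_2(\vecz))$ satisfies $h_1\ge\eta/2>0$ on the left edge $x=0$, $h_1\le-\eta/2<0$ on the right edge $x=1$, and symmetrically $h_2>0$ on the bottom edge and $h_2<0$ on the top edge; that is, the tilt makes the displacement point \emph{strictly} inward on the boundary. Next choose $N=2^{n'}$ with $n'$ a suitable polynomial in the input size (it suffices that $N-1>16(L+1)/\gamma$), and embed the grid $\{0,\dots,N-1\}^2$ into $[0,1]^2$ via $\vecz_{i,j}=\bigl(\tfrac{i}{N-1},\tfrac{j}{N-1}\bigr)$. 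Finally let $\calC_l(i,j)$ compute $\vech(\vecz_{i,j})$ to a fixed fine precision $\tau$ (say $\tau=\gamma/100$, using $\calC_M$) and output $\calC_l^1(i,j)=+1$ iff the computed value of $h_1(\vecz_{i,j})$ is $\ge 0$, and $\calC_l^2(i,j)=+1$ iff the computed value of $h_2(\vecz_{i,j})$ is $\ge 0$. Since $\calC_M$ is polynomial and all remaining operations are elementary arithmetic and comparisons, $\calC_l$ has polynomial size.

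For correctness, the inward‑pointing property on the boundary (together with $\tau<\eta/2$) guarantees that no boundary‑violation output of types~2--5 in the definition of $\TD$‑$\BiSperner$ can arise: e.g. on the left edge $h_1>\eta/2$ forces $\calC_l^1=+1$, matching rule~\ref{legal1}, and the other three edges are analogous. Hence every solution is a panchromatic cell: a pair $(i,j)$ with $i,j\le N-2$ over whose four corners $\calC_l^1$ takes both values $\pm1$ and $\calC_l^2$ takes both values $\pm1$. Let $\vecz_0=\vecz_{i,j}$ be its base corner. Because $h_1$ is $(L+1)$‑Lipschitz, takes a value $\ge -\tau$ at some corner and a value $<\tau$ at another, and all four corners lie within $\sqrt2/(N-1)$ of $\vecz_0$, we get $|h_1(\vecz_0)|\le\tau+(L+1)\sqrt2/(N-1)$, and likewise for $h_2$; hence $\norm{M'(\vecz_0)-\vecz_0}_2\le 2\bigl(\tau+(L+1)\sqrt2/(N-1)\bigr)$. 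Undoing the tilt via $M'(\vecz)-\vecz=(1-\eta)(M(\vecz)-\vecz)+\eta\bigl((\tfrac12,\tfrac12)-\vecz\bigr)$ gives $\norm{M(\vecz_0)-\vecz_0}_2\le\bigl(2\tau+2(L+1)\sqrt2/(N-1)+\eta\bigr)/(1-\eta)<\gamma$ by the choices of $\eta,\tau,N$. So $\vecz_0$ is a valid \textsc{Brouwer} solution, and the reduction outputs $(i,j)\mapsto\vecz_0$.

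The one delicate point — which is where I expect the real work to be — is enforcing the boundary conditions of $\TD$‑$\BiSperner$ \emph{exactly} for the circuit we build: the sign of the raw displacement $M(\vecz)-\vecz$ can vanish on the right or top edge, where $M$ may touch the boundary of the cube, producing spurious type‑3/5 solutions that carry no information about a fixed point. The inward tilt by $\eta$ removes this, but one must then carry the induced quantitative loss through, simultaneously absorbing the evaluation error $\tau$ of $\calC_M$, to certify that a panchromatic cell on a grid of only polynomially many bits of resolution still pins the displacement below $\gamma$ at its base corner.
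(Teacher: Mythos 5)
Your proof is correct, and its skeleton is the same as the paper's: reduce from two-dimensional \textsc{Brouwer} (Lemma~\ref{lem:BrouwerPPAD}), color each grid point of a $\Theta\!\left(\gamma/L\right)$-pitch grid by the signs of the two coordinates of the (approximately evaluated) displacement $M(\vecz)-\vecz$, and use $(L+1)$-Lipschitzness of the displacement inside a panchromatic cell to certify a $\gamma$-approximate fixed point. Where you genuinely diverge is the boundary treatment. The paper hard-codes the colors on the four boundary facets so that the coloring is proper by fiat, and then patches the correctness argument (in a footnote) for boundary cells: a forced color may disagree with the sign of $g_{\eta}$, but since $M$ maps into $[0,1]^2$ the true displacement has the correct sign there, so a disagreement already certifies $\abs{g_1}\le\eta$ at that corner. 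You instead tilt the map, $M'(\vecz)=(1-\eta)M(\vecz)+\eta(\tfrac12,\tfrac12)$, so that the displacement points strictly inward on the boundary with margin $\eta/2>\tau$; then the purely sign-based coloring is automatically proper and no case analysis for boundary cells is needed, at the price of carrying the extra $\eta$ term (and the $1/(1-\eta)$ factor) through the final estimate, which your choice $\eta=\gamma/4$, $\tau=\gamma/100$, $N-1>16(L+1)/\gamma$ indeed absorbs. Your within-cell bound (lower bound on $h_1$ from the corner colored $+1$, upper bound from the corner colored $-1$, then Lipschitz transfer to the base corner) is the same argument as the paper's Claim~\ref{clm:proof:lem:hardness2DBiSperner}, just stated with one-sided inequalities rather than a sign-change argument. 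The only cosmetic imprecision is the remark that for $\gamma\ge 1$ ``every point is a solution'' (false for $1\le\gamma<\sqrt2$), but this is harmless: hardness already holds for small $\gamma$, and one may always run the reduction with $\min(\gamma,1/2)$.
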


\begin{proof}
    To prove this Lemma we will use Lemma \ref{lem:BrouwerPPAD}. Let $\calC_M$
  be a polynomial-time Turing machine that computes a function
  $M : [0, 1]^2 \to [0, 1]^2$ that is $L$-Lipschitz. We know from Lemma
  \ref{lem:BrouwerPPAD} that finding $\gamma$-approximate fixed points of $M$
  is $\PPAD$-hard. We will use $\calC_M$ to define a circuit $\calC_l$ such that
  a solution of $\TD$-$\BiSperner$ with input $\calC_l$ will give us a
  $\gamma$-approximate fixed point of $M$.
  \smallskip

    Consider the function $g(\vecx) = M(\vecx) - \vecx$. Since $M$ is
  $L$-Lipschitz, the function $g: [0, 1]^2 \to [-1,1]^2$ is also
  $(L + 1)$-Lipschitz. Additionally $g$ can be easily computed via a
  polynomial-time Turing machine $\calC_g$ that uses $\calC_M$ as a subroutine.
  We construct a proper coloring of a fine grid of $[0, 1]^2$ using the signs of
  the outputs of $g$. Namely we set $n = \ceil{\log(L/\gamma) + 2}$ and this
  defines a $2^n \times 2^n$ grid over $[0,1]^2$ that is indexed by
  $\{0, 1\}^n \times \{0, 1\}^n$. Let $g_{\eta} : [0, 1]^2 \to [-1, 1]^2$ be the
  function that the Turing Machine $\calC_g$ evaluate when the requested
  accuracy is $\eta > 0$. Now we can define the circuit $\calC_l$ as
  follows, \footnote{We remind that we abuse the notation and we use a coordinate
  $i \in \{0, 1\}^n$ both as a binary string and as a number in
  $\p{\nm{2^n - 1}}$ and it is clear from the context which of the two we use.}
  \[ \calC_l^1(i, j) = \begin{cases}

                        1  &  i = 0 \\
                                              -1  &  i = 2^n - 1\\

                         1  & g_{\eta, 1}\p{\frac{i}{2^n - 1}, \frac{j}{2^n - 1}} \ge 0 \text{ and } i \neq -1\\
                         -1 & g_{\eta, 1}\p{\frac{i}{2^n - 1}, \frac{j}{2^n - 1}} < 0 \text{ and } i \neq 0
                       \end{cases}, \]
  \[ \calC_l^2(i, j) = \begin{cases}

                        1  &  i = 0 \\
                                              -1  &  i = 2^n - 1\\

                         1  & g_{\eta, 2}\p{\frac{i}{2^n - 2}, \frac{j}{2^n - 1}} \ge 0 \text{ and } i \neq -1  \\
                         -1 & g_{\eta, 2}\p{\frac{i}{2^n - 2}, \frac{j}{2^n - 1}} < 0 \text{ and } i \neq 0
                       \end{cases}, \]
  where $g_i$ is the $i$th output coordinate of $g$. It is not hard then to
  observe that the coloring $\calC_l$ is proper, i.e. it satisfies the boundary
  conditions due to the fact that the image of $M$ is always inside $[0, 1]^2$.
  Therefore the only possible solution to $\TD$-$\BiSperner$ with input
  $\calC_l$ is a cell that contains all the colors $\{1^-, 1^+, 2^-, 2^+\}$. Let
  $(i, j)$ be the bottom-left vertex of this cell which we denote by $R$, namely
  \[ R = \set{\vecx \in [0, 1]^2 \mid x_1 \in \b{\frac{i}{2^n - 1}, \frac{i + 1}{2^n - 1}}, x_2 \in \b{\frac{j}{2^n - 1}, \frac{j + 1}{2^n - 1}}}. \]

  \begin{claim} \label{clm:proof:lem:hardness2DBiSperner}
      Let $\eta = \frac{\gamma}{2 \sqrt{2}}$, there exists $\vecx \in R$ such
    that $\abs{g_1(\vecx)} \le \frac{\gamma}{2 \sqrt{2}}$ and $\vecy \in R$ such
    that $\abs{g_2(\vecy)} \le \frac{\gamma}{2 \sqrt{2}}$.
  \end{claim}

  \begin{proof}[Proof of Claim \ref{clm:proof:lem:hardness2DBiSperner}]
      We will prove the existence of $\vecx$ and the existence of $\vecy$
    follows using an identical argument. If there exists a corner $\vecx$ of $R$
    such that $g_1(\vecx)$ is in the range $[-\eta, \eta]$ then the claim
    follows. Suppose not. Using this together with the fact that the first color
    of one of the corners of $R$ is $1^-$ and also the first color of one of the
    corners of $R$ is $1^+$ we conclude that there exist points $\vecx, \vecx'$
    such that $g_{\eta, 1}(\vecx) \ge 0$ and $g_{\eta, 1}(\vecx') \le 0$ \footnote{ The latter is inaccurate for the cases where the vertex $(0,j)$ belongs to either facets $i=0$ or $i=2^{n}-1$. Notice that the coloring in such vertices does not depend on the value of $g_{\eta}$. However in case where the color of such a corner is not consistent with the value of $g_{\eta}$, i.e. $g_{\eta,1}( 0 , j) <0$ and  $\calC_l^1(0, j) = 1$ then this means that $|g_{1}(0,j)| \leq \eta$. This is due to the fact that $g_{1}(0,j) \geq 0$ and $|g_{1}(0,j) - g_{1,\eta}(0,j) |\leq \eta$.}.  But we
    have that $\norm{g_{\eta} - g}_2 \le \eta$. This together with the fact that
    $g_1(\vecx) \not\in [-\eta, \eta]$ and $g_1(\vecx') \not\in [-\eta, \eta]$
    implies that $g_1(\vecx) \ge 0$ and also $g_1(\vecx') \le 0$. But because of
    the $L$-Lipschitzness of $g$ and because the distance between $\vecx$ and
    $\vecx'$ is at most $\sqrt{2} \frac{\gamma}{4 L}$ we conclude that
    $\abs{g_1(\vecx) - g_1(\vecx')} \le \frac{\gamma}{2 \sqrt{2}}$. Hence due to
    the signs of $g_1(\vecx)$ and $g_1(\vecx')$ we conclude that
    $\abs{g_1(\vecx)} \le \frac{\gamma}{2 \sqrt{2}}$. The same way we can prove
    that $\abs{g_1(\vecy)} \le \frac{\gamma}{2 \sqrt{2}}$ and the claim follows.
  \end{proof}

  \noindent Using the Claim \ref{clm:proof:lem:hardness2DBiSperner} and the
  $L$-Lipschitzness of $g$ we get that for every $\vecz \in R$
  \begin{align*}
    \abs{g_1(\vecz) - g_1(\vecx)} & \le L \norm{\vecx - \vecz}_2 \le \sqrt{2} \cdot L \cdot \frac{\gamma}{4 L} \implies \abs{g_1(\vecz)} \le \frac{\gamma}{\sqrt{2}}, \text{ and } \\
    \abs{g_2(\vecz) - g_2(\vecy)} & \le L \norm{\vecy - \vecz}_2 \le \sqrt{2} \cdot L \cdot \frac{\gamma}{4 L} \implies \abs{g_2(\vecz)} \le \frac{\gamma}{\sqrt{2}}
  \end{align*}
  where we have used also the fact that for any two points $\vecz, \vecw$ it
  holds that $\norm{\vecz - \vecw}_{2} \le \sqrt{2} \frac{\gamma}{4 L}$ which
  follows from the definition of the size of the grid. Therefore we have that
  $\norm{g(\vecz)}_2 \le \gamma$ and hence
  $\norm{M(\vecz) - \vecz}_2 \le \gamma$ which implies that any point
  $\vecz \in R$ is a $\gamma$-approximate fixed point of $M$ and the lemma
  follows.
\end{proof}

\noindent Now that we have established the $\PPAD$-hardness of
$\TD$-$\BiSperner$ we are ready to present our main result of this section which
is a reduction from $\TD$-$\BiSperner$ to $\gdaFixed$.

\subsection{From 2D Bi-Sperner to Fixed Points of Gradient Descent/Ascent} \label{sec:2DmainReduction}

  We start with presenting a construction of a Lipschitz and smooth real-valued
function $f: [0, 1]^2 \times [0, 1]^2 \to \R$ based on a given coloring circuit
$\calC_l : \{0, 1\}^n \times \{0, 1\}^n \to \{-1, 1\}^2$. Then in Section
\ref{sec:2DmainReduction:Reduction} we will show that any solution to
$\gdaFixed$ with input the representation $\calC_f$ of $f$ is also a solution to
the $\TD$-$\BiSperner$ problem with input $\calC_l$. Constructing Lipschitz and
smooth functions based on only local information is a surprisingly challenging
task in high-dimensions as we will explain in detail in Section
\ref{sec:hardness}. Fortunately in the low-dimensional case that we consider in
this section the construction is much more simple and the main ideas of our
reduction are more clear.
\smallskip

  The basic idea of the construction of $f$ consists in interpreting the
coloring of a given point in the grid as the directions of the gradient of
$f(\vecx, \vecy)$  with respect to the variables $x_1, y_1$ and $x_2, y_2$
respectively. More precisely, following the ideas in the proof of Lemma
\ref{lem:hardness2DBiSperner}, we divide the $[0, 1]^2$ square in
\textit{square-cells} of length $1/(N - 1) = 1/(2^n - 1)$ where the corners of
these cells correspond to vertices of the $N \times N$ grid of the
$\TD$-$\BiSperner$ instance described by $\calC_l$. When $\vecx$ is on a vertex
of this grid, the first color of this vertex determines the direction of
gradient with respect to the variables $x_1$ and $y_1$, while the second color
of this vertex determines the direction of the gradient of the variables $x_2$
and $y_2$. As an example, if $\vecx = (x_1, x_2)$ is on a vertex of the
$N \times N$ grid, and the coloring of this vertex is $(1^-,2^+)$, i.e. the
output of $\calC_l$ on this vertex is $(-1, +1)$, then we would like to have
\[ \frac{\partial f}{\partial x_1}(\vecx, \vecy) \ge 0, \quad \frac{\partial f}{\partial y_1}(\vecx, \vecy) \le 0, \quad \frac{\partial f}{\partial x_2}(\vecx, \vecy) \le 0, \quad \frac{\partial f}{\partial y_2}(\vecx, \vecy) \ge 0. \]
The simplest way to achieve this is to define the function $f$ locally close to
$(\vecx, \vecy)$ to be equal to
\[ f(\vecx,\vecy) = (x_1 - y_1) - (x_2 - y_2). \]
Similarly, if $\vecx$ is on a vertex of the $N \times N$ grid, and the coloring
of this vertex is $(1^-,2^-)$, i.e. the output of $\calC_l$ on this vertex is
$(-1, -1)$, then we would like to have
\[ \frac{\partial f}{\partial x_1}(\vecx, \vecy) \ge 0, \quad \frac{\partial f}{\partial y_1}(\vecx, \vecy) \le 0, \quad \frac{\partial f}{\partial x_2}(\vecx, \vecy) \ge 0, \quad \frac{\partial f}{\partial y_2}(\vecx, \vecy) \le 0. \]
The simplest way to achieve this is to define the function $f$ locally close to
$(\vecx, \vecy)$ to be equal to
\[ f(\vecx,\vecy) = (x_1 - y_1) + (x_2 - y_2). \]
\noindent In Figure \ref{fig:colorsAssignment} we show pictorially the
correspondence of the colors of the vertices of the grid with the gradient of
the function $f$ that we design. As shown in the figure, any set of vertices
that share at least one of the colors $1^+$, $1^-$, $2^+$, $2^-$, agree on the
direction of the gradient with respect  the horizontal or the vertical
axis. This observation is one of the main ingredients in the proof of
correctness of our reduction that we present  later in this section.

\begin{figure}[t]
  \centering
  \includegraphics[scale=1.2]{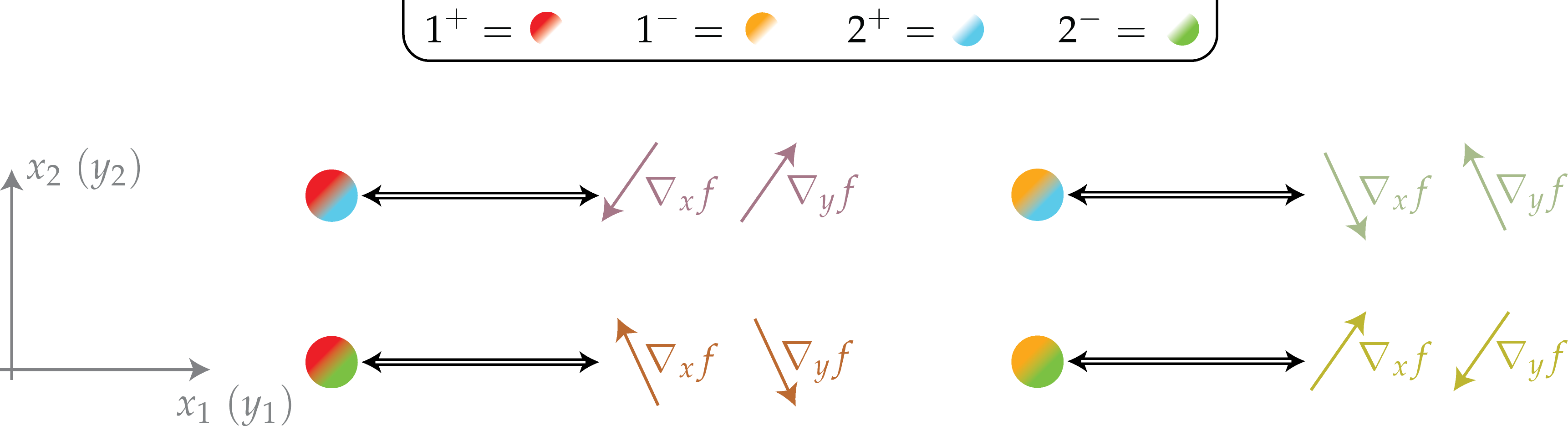}
  \caption{The correspondence of the colors of the vertices of the $N \times N$
  grid with the directions of the gradient of the function $f$ that we design.}
  \label{fig:colorsAssignment}
\end{figure}

  When $\vecx$ is not on a vertex of the $N \times N$ grid then our goal is to
define $f$ via interpolating the functions corresponding to the corners of the
cell in which $\vecx$ belongs. The reason that this interpolation is challenging
is that we need to make sure the following properties are satisfied
\begin{enumerate}
  \item[$\triangleright$] the resulting function $f$ is both Lipschitz and
  smooth inside every cell,
  \item[$\triangleright$] the resulting function $f$ is both Lipschitz and
  smooth even at the boundaries of every cell, where two differect cells stick
  together,
  \item[$\triangleright$] no solution to the $\gdaFixed$ problem is created
  inside cells that are not solutions to the $\TD$-$\BiSperner$ problem. In
  particular, it has to be true that if all the vertices of one cell agree on
  some color then the gradient of $f$ inside that cell has large enough gradient
  in the corresponding direction.
\end{enumerate}
\noindent For the low dimensional case, that we explore in this section,
satisfying the first two properties is not a very difficult task, whereas for
the third property we need to be careful and achieving this property is the main
technical contribution of this section. On the contrary, for the
high-dimensional case that we explore in Section \ref{sec:hardness} even
achieving the first two properties is very challenging and technical.

  As we will see in Section \ref{sec:2DmainReduction:Reduction}, if we
accomplish a construction of a function $f$ with the aforementioned properties,
then the fixed points of the projected Gradient Descent/Ascent can only appear
inside cells that have all of the colors $\{1^-, 1^+, 2^-, 2^+\}$ at their
corners. To see this consider a cell that misses some color, e.g. $1^+$. Then
all the corners of this cell have as first color $1^-$. Since $f$ is defined as
interpolation of the functions in the corners of the cells, with the
aforementioned properties, inside that cell there is always a direction with
respect to $x_1$ and $y_1$ for which the gradient is large enough. Hence any
point inside that cell cannot be a fixed point of the projected Gradient
Descent/Ascent. Of course this example provides just an intuition of our
construction and ignores case where the cell is on the boundary of the grid. We
provide a detailed explanation of this case in Section
\ref{sec:2DmainReduction:Reduction}.

  The above neat idea needs some technical adjustments in order to work. At
first, the interpolation of the function in the interior of the cell must be
smooth enough so that the resulting function is both Lipschitz and smooth. In
order to satisfy this, we need to choose appropriate coefficients of the
interpolation that interpolate smoothly not only the value of the function but
also its derivatives. For this purpose we use the following smooth step function
of order $1$.

\begin{definition}[Smooth Step Function of Order $1$]
    We define $S_1 : [0, 1] \to [0, 1]$ to be the
  \textit{smooth step function of order $1$} that is equal to
  $S_1(x) = 3 x^2 - 2 x^3$. Observe that the following hold $S_{1}(0) = 0$,
  $S_{1}(1) = 1$, $S_{1}'(0) = 0$, and $S_{1}'(1) = 0$.
\end{definition}

  As we have discussed, another issue is that since the interpolation
coefficients depend on the value of $\vecx$ it could be that the derivatives of
these coefficients overpower the derivatives of the functions that we
interpolate. In this case we could be potentially creating fixed points of
Gradient Descent/Ascent even in \textit{non} panchromatic squares. As we will
see later the magnitude of the derivatives from the interpolation coefficients
depends on the differences $x_1 - y_1$ and $x_2 - y_2$. Hence if we ensure that
these differences are small then the derivatives of the interpolation
coefficients will have to remain small and hence they can never overpower the
derivatives from the corners of every cell. This is the place in our reduction
where we add the constraints $\matA \cdot (\vecx, \vecy) \le \vecb$ that define
the domain of the function $f$ as we describe in Section
\ref{sec:computational}.
\smallskip

\noindent Now that we have summarized the main ideas of our construction we are
ready for the formal definition of $f$ based on the coloring circuit $\calC_l$.

\begin{definition}[Continuous and Smooth Function from Colorings of 2D-Bi-Sperner] \label{d:payoff}
    Given a binary circuit
  $\calC_l : \{0, 1\}^n \times \{0, 1\}^n \to \{-1, 1\}^2$, we
  define the function $f_{\calC_l} : [0, 1]^2 \times [0, 1]^2 \to \R$ as
  follows. For any $\vecx \in [0, 1]^2$, let $A = (i_A, j_A)$, $B = (i_B, j_B)$,
  $C = (i_C, j_C)$, $D = (i_D, j_D)$ be the vertices of the cell of the
  $N (= 2^n) \times N$ grid which contains $\vecx$ and $\vecx^A$, $\vecx^B$,
  $\vecx^C$ and $\vecx^C$ the corresponding points in the unit square
  $[0, 1]^2$, i.e. $x_1^A = i_A/(2^n - 1)$, $x_2^A = j_A/(2^n - 1)$ etc. Let also $A$ be
  down-left corner of this cell and $B$, $C$, $D$ be the rest of the vertices in
  clockwise order, then we define
  \[ f_{\calC_l}(\vecx, \vecy) = \alpha_1(\vecx) \cdot (y_1 - x_{1}) + \alpha_2(\vecx) \cdot (y_2 - x_2) \]
  \noindent where the coefficients
  $\alpha_1(\vecx), \alpha_2(\vecx) \in [-1, 1]$ are defined as follows
  \begin{eqnarray*}
    \alpha_i(\vecx)
    & = & S_1 \left(\frac{x^C_1 - x_1}{\delta} \right) \cdot S_1 \left(\frac{x^C_2 - x_2}{ \delta} \right) \cdot \calC_l^i(A) + S_1 \left(\frac{x^D_1 - x_1}{ \delta} \right) \cdot S_1 \left(\frac{x_2 - x^D_2}{\delta} \right) \cdot \calC_l^i(B) \\
    &   & + S_1 \left(\frac{x_1 - x^A_1}{ \delta} \right) \cdot S_1 \left(\frac{x_2 - x^A_2}{ \delta} \right) \cdot \calC_l^i(C) + S_1 \left(\frac{x_1 - x^B_1}{ \delta} \right) \cdot S_1 \left(\frac{x^B_2 - x_2}{ \delta} \right) \cdot \calC_l^i(D) \\
  \end{eqnarray*}
  where $\delta \triangleq 1/(N - 1) = 1/(2^n - 1)$.
\end{definition}

\noindent In Figure \ref{fig:interpolationExample} we present an example of the
application of Definition \ref{d:payoff} to a specific cell with some given
coloring on the corners.

\begin{figure}[t]
  \centering
  \includegraphics[scale=1.1]{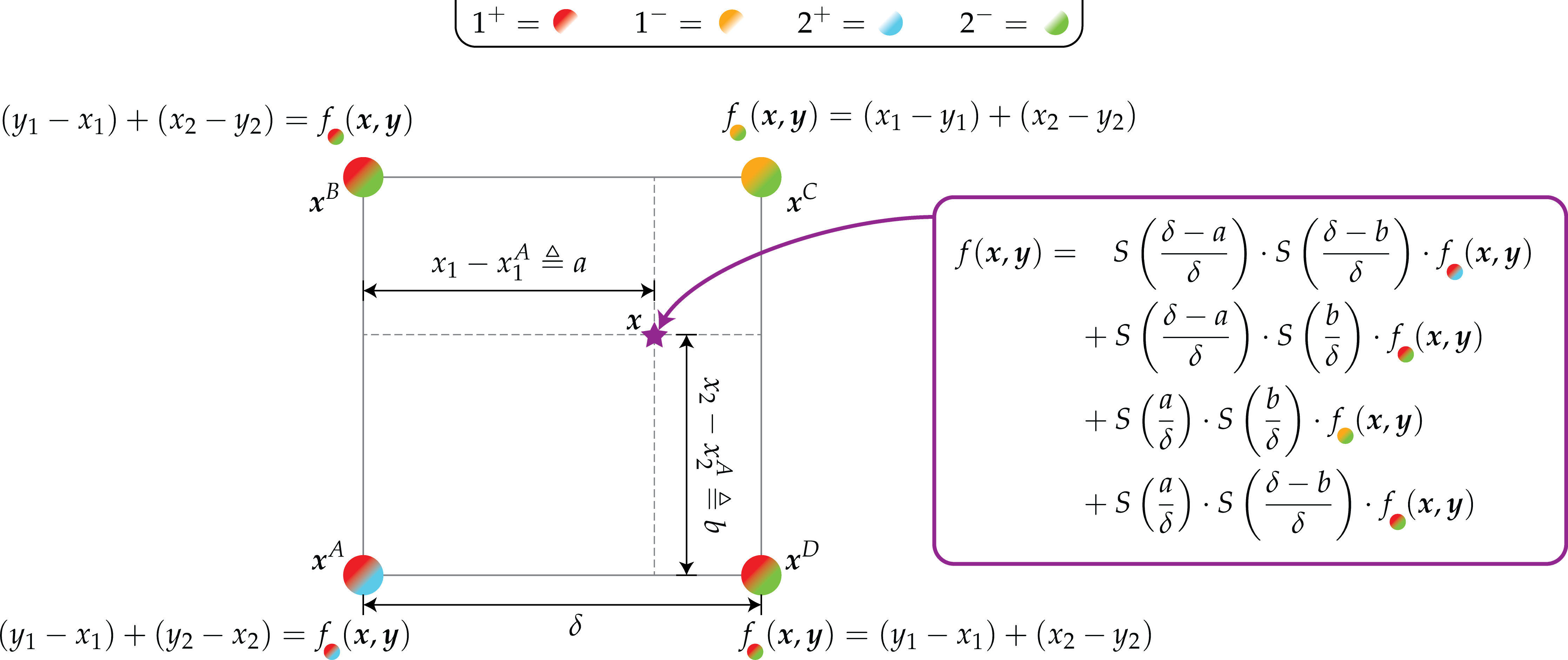}
  \caption{Example of the definition of the Lipschitz and smooth function $f$ on
  some cell given the coloring on the corners of the cell. For details see
  Definition \ref{d:payoff}.}
  \label{fig:interpolationExample}
\end{figure}

  An important property of the definition of the function $f_{\calC_l}$ is that
the coefficients used in the definition of $\alpha_i$ have the following two
properties
\[ S_1 \left(\frac{x^C_1 - x_1}{\delta} \right) \cdot S_1 \left(\frac{x^C_2 - x_2}{ \delta} \right) \ge 0, ~ S_1 \left(\frac{x^D_1 - x_1}{ \delta} \right) \cdot S_1 \left(\frac{x_2 - x^D_2}{\delta} \right) \ge 0, \]
\[ S_1 \left(\frac{x_1 - x^A_1}{ \delta} \right) \cdot S_1 \left(\frac{x_2 - x^A_2}{ \delta} \right) \ge 0, ~ S_1 \left(\frac{x_1 - x^B_1}{ \delta} \right) \cdot S_1 \left(\frac{x^B_2 - x_2}{ \delta} \right) \ge 0, \text{ and } \]
\[ S_1 \left(\frac{x^C_1 - x_1}{\delta} \right) \cdot S_1 \left(\frac{x^C_2 - x_2}{ \delta} \right) + S_1 \left(\frac{x^D_1 - x_1}{ \delta} \right) \cdot S_1 \left(\frac{x_2 - x^D_2}{\delta} \right) \]
\[ \text{ } ~~~~~~~~~~~ + S_1 \left(\frac{x_1 - x^A_1}{ \delta} \right) \cdot S_1 \left(\frac{x_2 - x^A_2}{ \delta} \right) + S_1 \left(\frac{x_1 - x^B_1}{ \delta} \right) \cdot S_1 \left(\frac{x^B_2 - x_2}{ \delta} \right) = 1. \]
Hence the function $f_{\calC_l}$ inside a cell is a smooth convex combination of
the functions on the corners of the cell, as is suggested from Figure
\ref{fig:interpolationExample}. Of course there are many ways to define such
convex combination but in our case we use the smooth step function $S_1$ to
ensure the Lipschitz continuous gradient of the overall function $f_{\calC_l}$.
We prove this formally in the next lemma.

\begin{lemma}\label{l:smoothness}
    Let $f_{\calC_l}$ be the function defined based on a coloring circuit
  $\calC_l$, as per Definition \ref{d:payoff}. Then $f_{\calC_l}$ is continuous
  and differentiable at any point $(\vecx, \vecy) \in [0, 1]^4$. Moreover,
  $f_{\calC_l}$ is $\Theta(1/\delta)$-Lipschitz and $\Theta(1/\delta^2)$-smooth
  in the whole 4-dimensional hypercube $[0, 1]^4$, where
  $\delta = 1/(N - 1) = 1/(2^n - 1)$.
\end{lemma}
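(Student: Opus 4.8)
\emph{Strategy.} The plan is to reduce the whole computation to the properties of the single scalar bump obtained from $S_1$, by first rewriting $f_{\calC_l}$ in a cell-independent form. Define $\psi_\delta:\R\to[0,1]$ by $\psi_\delta(t)=S_1(1-|t|/\delta)$ for $|t|\le\delta$ and $\psi_\delta(t)=0$ otherwise, and for $i\in\{1,2\}$ set
\[ \alpha_i(\vecx)=\sum_{v}\psi_\delta(x_1-v_1)\,\psi_\delta(x_2-v_2)\,\calC_l^i(v), \]
where $v$ ranges over the $N\times N$ grid $\{0,\tfrac1{N-1},\dots,1\}^2$. The first step is the bookkeeping claim that this coincides, on every grid cell and on the boundary of $[0,1]^2$, with the four-term expression of Definition~\ref{d:payoff}. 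Indeed, inside the cell with corners $A,B,C,D$ the only grid points with both $\psi_\delta(x_1-v_1)\neq0$ and $\psi_\delta(x_2-v_2)\neq0$ are $A,B,C,D$; writing $x_1=x_1^A+s\delta$ one has $\psi_\delta(s\delta)=S_1(1-s)$ and $\psi_\delta((s-1)\delta)=S_1(s)$, so the weight $\psi_\delta(x_1-v_1)\psi_\delta(x_2-v_2)$ attached to each corner is exactly the $S_1$-product weight of Definition~\ref{d:payoff} (for $A$ it is built from $C$'s coordinates, and so on). On $\partial[0,1]^2$ any grid point falling outside $[0,1]$ contributes $\psi_\delta(\pm\delta)=S_1(0)=0$, so the identity persists there; in particular $f_{\calC_l}$ is well-defined where cells overlap.

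\emph{Elementary properties of $\psi_\delta$.} Using $S_1(0)=0,\,S_1(1)=1,\,S_1'(0)=S_1'(1)=0$ one checks $\psi_\delta\in C^1(\R)$ — continuity of $\psi_\delta$ and $\psi_\delta'$ at $t=0$ and at $t=\pm\delta$ is exactly where these endpoint values are used — and $\psi_\delta'$ is Lipschitz, with $\|\psi_\delta\|_\infty\le1$, $\|\psi_\delta'\|_\infty\le\sup_{[0,1]}|S_1'|/\delta=\tfrac3{2\delta}$, and $\mathrm{Lip}(\psi_\delta')\le\sup_{[0,1]}|S_1''|/\delta^2=6/\delta^2$ (off $t=\pm\delta$ one has $\psi_\delta''(t)=S_1''(1-|t|/\delta)/\delta^2$). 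Hence each $\alpha_i$, being a finite sum of products of $C^1$ functions of a single coordinate, is $C^1$ on $[0,1]^2$, so $f_{\calC_l}=\alpha_1(\vecx)(y_1-x_1)+\alpha_2(\vecx)(y_2-x_2)$ is $C^1$ on $[0,1]^4$; this is the asserted continuity and differentiability. I would also record the partition-of-unity identity $\sum_{v_1}\psi_\delta(x_1-v_1)=1$ for all $x_1\in[0,1]$ (because $S_1(s)+S_1(1-s)$ has vanishing derivative, hence equals its value $1$ at $s=0$), which gives $|\alpha_i(\vecx)|\le1$.

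\emph{The quantitative bounds.} These now follow by differentiating the global formula and bounding $S_1,S_1',S_1''$ on $[0,1]$. From $\partial\alpha_i/\partial x_k=\sum_v\psi_\delta'(\cdot)\psi_\delta(\cdot)\calC_l^i(v)$, together with $\sum_{v_2}\psi_\delta(x_2-v_2)=1$ and $\sum_{v_1}|\psi_\delta'(x_1-v_1)|\le 3/\delta$, one gets $\|\nabla\alpha_i\|_\infty=O(1/\delta)$ and, differentiating once more, $\|D^2\alpha_i\|_\infty=O(1/\delta^2)$ wherever it exists. Plugging into $\partial f/\partial y_i=\alpha_i$ and $\partial f/\partial x_k=\sum_i(\partial\alpha_i/\partial x_k)(y_i-x_i)-\alpha_k$ and using $|y_i-x_i|\le1$ yields $\|\nabla f\|_2=O(1/\delta)$, so $f$ is $O(1/\delta)$-Lipschitz on the convex set $[0,1]^4$. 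One more differentiation shows every entry of the Hessian of $f$ is $O(1/\delta^2)$ where it exists; since $\nabla f$ is continuous on $[0,1]^4$ and polynomial — hence $C^1$ with Hessian $O(1/\delta^2)$ — on each piece of the form $K\times[0,1]^2$ with $K$ a closed grid cell, it is globally $O(1/\delta^2)$-Lipschitz, i.e.\ $f$ is $O(1/\delta^2)$-smooth. The matching $\Omega(1/\delta)$ and $\Omega(1/\delta^2)$ lower bounds come from any cell on which some $\calC_l^i$ is non-constant in one coordinate — which we may assume, since a coloring constant in a coordinate immediately yields a boundary-violating solution — where $\partial\alpha_i/\partial x_k$ and $\partial^2\alpha_i/\partial x_k^2$ attain magnitudes $\Omega(1/\delta)$ and $\Omega(1/\delta^2)$.

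\emph{Main obstacle.} The only genuinely delicate part is the first step: checking that the cell-independent formula really reproduces Definition~\ref{d:payoff} on every cell, with the corner labels $A,B,C,D$ and the ``opposite-corner'' coordinate assignments in the four $S_1$-product weights matched correctly, and that the degenerate behaviour on $\partial[0,1]^2$ causes no inconsistency. This is also where the endpoint conditions $S_1(0)=0$, $S_1(1)=1$, $S_1'(0)=S_1'(1)=0$ do all the work, since they are precisely what makes the cell formulas (and their gradients) glue across shared faces; once the identification is in place, the remaining estimates are a routine application of the product and chain rules.
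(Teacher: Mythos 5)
Your proposal is correct, and it reaches the same quantitative bounds as the paper, but it organizes the argument differently. The paper works cell by cell: it differentiates the four-term formula of Definition \ref{d:payoff} in the interior of each cell, bounds the derivatives using $|S_1|\le 1$, $|S_1'|,|S_1''|\le 6$ scaled by $1/\delta$ and $1/\delta^2$, and then separately checks that the gradient values computed from two adjacent cells agree on their shared face, invoking $S_1(0)=S_1'(0)=S_1'(1)=0$ there. You instead rewrite $\alpha_i$ once and for all as a global tensor-product partition of unity, $\alpha_i(\vecx)=\sum_{v}\psi_\delta(x_1-v_1)\psi_\delta(x_2-v_2)\calC_l^i(v)$ with $\psi_\delta(t)=S_1(1-|t|/\delta)$, verify (correctly — the opposite-corner weights in Definition \ref{d:payoff} are exactly $S_1(1-|x_1-v_1|/\delta)S_1(1-|x_2-v_2|/\delta)$) that this reproduces the per-cell formula on every closed cell, and then push all the calculus through the single fact that $\psi_\delta\in C^1(\R)$ with Lipschitz derivative. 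What this buys is a cleaner, one-shot treatment of exactly the two points the paper handles by hand: well-definedness where the lexicographic cell choice is ambiguous (neighbouring-cell corners sit at distance exactly $\delta$ and get weight $S_1(0)=0$), and differentiability across cell faces; it also gives slightly sharper constants ($|S_1'|\le 3/2$ rather than the paper's loose bound of $6$), and your final step — gradient continuous everywhere, Hessian bounded on each closed cell, hence globally $O(1/\delta^2)$-Lipschitz gradient by splitting segments across cells — is a valid replacement for the paper's Frobenius-norm bound. Two small remarks: your closing $\Omega(1/\delta)$, $\Omega(1/\delta^2)$ discussion is not needed (the paper itself only proves the upper bounds, which is all that is used downstream since Lipschitzness and smoothness are upper-bound properties), and indeed without your caveat it would fail for a globally constant coloring; and the sentence about $\partial[0,1]^2$ should really be phrased about interior cell faces as well, which is where the vanishing of the distance-$\delta$ contributions is doing the work for well-definedness — but that is exactly the computation you already describe, so it is an imprecision of wording, not a gap.
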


\begin{proof}
    Clearly from Definition \ref{d:payoff}, $f_{\calC_l}$ is differentiable at
  any point $(\vecx, \vecy) \in [0, 1]^4$ in which $\vecx$ lies on the strict
  interior of its respective cell. In this case the derivative with respect to
  $x_1$ is
  \[ \frac{\partial f_{\calC_l}(\vecx,\vecy)}{\partial x_1} = \frac{\partial \alpha_1(\vecx)}{\partial x_1} \cdot (y_1 - x_1) - \alpha_1(\vecx) + \frac{\partial \alpha_2(\vecx)}{\partial x_1}\cdot(y_2 - x_2). \]
  \noindent where for $\partial \alpha_1(\vecx)/\partial x_1$ we have that
  \begin{align*}
    \frac{\partial \alpha_1(\vecx)}{\partial x_1}
      = & - \frac{1}{\delta} S'_1 \left(\frac{x^C_1 - x_1}{\delta} \right) \cdot S_1 \left(\frac{x^C_2 - x_2}{\delta} \right) \cdot \calC_l^{1}(A) \\
        & - \frac{1}{\delta} S'_1 \left(\frac{x^D_1 - x_1}{\delta} \right) \cdot S_1 \left(\frac{x_2 - x^D_2}{\delta} \right) \cdot \calC_l^{1}(B) \\
        & + \frac{1}{\delta} S'_1 \left(\frac{x_1 - x^A_1}{\delta} \right) \cdot S_1 \left(\frac{x_2 - x^A_2}{\delta} \right) \cdot \calC_l^{1}(C) \\
        & + \frac{1}{\delta} S'_1 \left(\frac{x_1 - x^B_1}{\delta} \right) \cdot S_1 \left(\frac{x^B_2 - x_2}{\delta} \right) \cdot \calC_l^{1}(D).
  \end{align*}
  \noindent Now since $\max_{z \in [0, 1]} \abs{S'_1(z)} \le 6$, we can conclude
  that $\abs{\frac{\partial \alpha_1(\vecx)}{\partial x_1}} \le 24/\delta$.
  Similarly we can prove that
  $\abs{\frac{\partial \alpha_2(\vecx)}{\partial x_1}} \le 24/\delta$, which
  combined with $\abs{\alpha_1(\vecx)} \le 1$ implies
  $\abs{\frac{\partial f_{\calC_l}(\vecx,\vecy)}{\partial x_1}} \le O(1/\delta)$. Using
  similar reasoning we can prove that
  $\abs{\frac{\partial f_{\calC_l}(\vecx,\vecy)}{\partial x_2}} \le O(1/\delta)$ and that
  $\abs{\frac{\partial f_{\calC_l}(\vecx,\vecy)}{\partial y_i}} \le 1$ for
  $i = 1, 2$. Hence
  \[ \norm{\nabla f_{\calC_l}(\vecx,\vecy)}_2 \le O(1/\delta). \]
  The only thing we are missing to prove the Lipschitzness of $f_{\calC_l}$ is
  to prove its continuity on the boundaries of the cells of our subdivision.
  Suppose $\vecx$ lies on the boundary of some cell, e.g.~let $\vecx$ lie on
  edge $(C, D)$ of one cell that is the same as the edge $(A',B')$ of the cell
  to the right of that cell. Since $S_1(0) = 0$, $S'_1(0) = 0$ and $S'_1(1) = 0$
  it holds that $\partial \alpha_1(\vecx)/\partial x_1 = 0$ and the same for
  $\alpha_2$. Therefore the value of $\partial f_{\calC_l}/\partial x_1$ remains
  the same no matter the cell according to which it was calculated. As a result,
  $f_{\calC_l}$ is differentiable with respect to $x_1$ even if $\vecx$ belongs
  in the boundary of its cell. Using the exact same reasoning for the rest of
  the variables, one can show that the function $f_{\calC_l}$ is differentiable
  at any point $(\vecx, \vecy) \in [0, 1]^4$ and because of the aforementioned
  bound on the gradient $\nabla f_{\calC_l}$ we can conclude that $f_{\calC_l}$
  is $O(1/\delta)$-Lipschitz.
  \medskip

  Using very similar calculations, we can compute the closed formulas of the
  second derivatives of $f_{\calC_l}$ and using the bounds
  $\abs{f_{\calC_l}(\cdot)} \le 2$, $\abs{S_1(\cdot)} \le 1$,
  $\abs{S'_1(\cdot)} \le 6$, and $\abs{S''_1(\cdot)} \le 6$, we can prove that
  each entry of the Hessian $\nabla^2 f_{\calC_l}(\vecx, \vecy)$ is bounded by
  $O(1/\delta^2)$ and thus
  \[\norm{\nabla^2 f_{\calC_l}(\vecx,\vecy)}_2 \leq O(1/\delta^2)\]
  \noindent which implies the $\Theta(1/\delta^2)$-smoothness of $f_{\calC_l}$.
\end{proof}

\subsubsection{Description and Correctness of the Reduction -- Proof of Theorem
\ref{thm:hardness2D}} \label{sec:2DmainReduction:Reduction}

  In this section, we present and prove the exact polynomial-time construction
of the instance of the problem $\gdaFixed$ from an instance $\calC_l$ of the
problem $\TD$-$\BiSperner$.
\medskip

\noindent \textbf{$\boldsymbol{(+)}$ Construction of Instance for Fixed Points
of Gradient Descent/Ascent.} \label{lbl:construction2D}

\noindent Our construction can be described via the following properties.
\begin{enumerate}
  \item[$\blacktriangleright$] The payoff function is the real-valued function
  $f_{\calC_l}(\vecx, \vecy)$ from the Definition \ref{d:payoff}.
  \item[$\blacktriangleright$] The domain is the polytope $\calP(\matA, \vecb)$
  that we described in Section \ref{sec:computational}. The matrix $\matA$ and
  the vector $\vecb$ have constant size and they are computed so that the
  following inequalities hold
  \begin{equation} \label{eq:2Dproof:domainDefinition}
    x_1 - y_1 \le \Delta, ~~ y_1 - x_1 \le \Delta, ~~ x_2 - y_2 \le \Delta, ~\text{ and }~ y_2 - x_2 \le \Delta
  \end{equation}
  where $\Delta = \delta / 12$ and $\delta = 1/(N - 1) = 1/(2^n - 1)$.
  \item[$\blacktriangleright$] The parameter $\alpha$ is set to be equal to
  $\Delta/3$.
  \item[$\blacktriangleright$] The parameters $G$ and $L$ are set to be equal to
  the upper bounds on the Lipschitzness and the smoothness of $f_{\calC_l}$
  respectively that we derived in Lemma \ref{l:smoothness}. Namely we have that
  $G = O(1/\delta) = O(2^n)$ and $L = O(1/\delta^2) = O(2^{2 n})$.
\end{enumerate}

  The first thing that is simple to observe in the above reduction is that it
runs in polynomial time with respect to the size of the the circuit $\calC_l$
which is the input to the $\TD$-$\BiSperner$ problem that we started with. To
see this, recall from the definition of $\gdaFixed$ that our reduction needs to
output: (1) a Turing machine $\calC_{f_{\calC_l}}$ that computes the value
and the gradient of the function $f_{\calC_l}$ in time polynomial in the number
of requested bits of accuracy; (2) the required scalars $\alpha$, $G$, and $L$.
For the first, we observe from the definition of $f_{\calC_l}$ that it is actually a piece-wise polynomial function with a closed form that
only depends on the values of the circuit $\calC_l$ on the corners of the
corresponding cell. Since the size of $\calC_l$ is the size of the input to
$\TD$-$\BiSperner$ we can easily construct a polynomial-time Turing machine that
computes both function value and the gradient of the piecewise polynomial
function $f_{\calC_l}$. Also, from the aforementioned description of the
reduction we have that $\log(G)$, $\log(L)$ and $\log(1/\alpha)$ are linear in
$n$ and hence we can construct the binary representation of all this scalars in
time $O(n)$. The same is true for the coefficients of $\matA$ and $\vecb$ as we
can see from their definition in \hyperref[lbl:construction2D]{$(+)$}. Hence we
conclude that our reduction runs in time that is polynomial in the size of the
circuit $\calC_l$.
\smallskip

  The next thing to observe is that, according to Lemma \ref{l:smoothness}, the
function $f_{\calC_l}$ is both $G$-Lipschitz and $L$-smooth and hence the output
of our reduction is a valid input for the promise problem $\gdaFixed$. So the
last step to complete the proof of Theorem \ref{thm:hardness2D} is to prove that
the vector $\vecx^{\star}$ of every solution $(\vecx^{\star}, \vecy^{\star})$ of
$\gdaFixed$ with input $\calC_{f_{\calC_l}}$, lies in a cell that is either
panchromatic or violates the rules for proper coloring, in any of these cases we
can find a solution to the $\TD$-$\BiSperner$ problem. This proves that our
construction reduces $\TD$-$\BiSperner$ to $\gdaFixed$.

  We prove this last statement in Lemma \ref{l:main_2}, but before that we need
the following technical lemma that will be useful to argue about solution on the
boundary of $\calP(\matA, \vecb)$.

\begin{lemma}\label{c:conditions}
    Let $\calC_l$ be an input to the $\TD$-$\BiSperner$ problem, let
  $f_{\calC_l}$ be the corresponding $G$-Lipschitz and $L$-smooth function
  defined in Definition \ref{d:payoff}, and let $\calP(\matA, \vecb)$ be the
  polytope defined by \eqref{eq:2Dproof:domainDefinition}. If
  $(\vecx^{\star}, \vecy^{\star})$ is any solution to the $\gdaFixed$ problem
  with inputs $\alpha$, $G$, $L$, $\calC_{f_{\calC_l}}$, $\matA$, and $\vecb$,
  defined in \hyperref[lbl:construction2D]{$(+)$} then the following statements
  hold, where recall that $\Delta = \delta/12$. For $i\in \{1,2\}$:
  \begin{enumerate}
    \item[$\diamond$] If $x_i^\star \in (\alpha, 1 - \alpha)$ and
    $x_i^{\star} \in (y_i^\star - \Delta + \alpha, y_i^\star + \Delta - \alpha)$
    then
    $\abs{\frac{\partial f_{\calC_l}(\vecx^\star,\vecy^\star)}{\partial x_i}}
    \leq \alpha$.
    \item[$\diamond$] If $x^\star_i \le \alpha$ or
    $x^\star_i \le y^\star_i - \Delta + \alpha$ then
    $\frac{\partial f_{\calC_l}(\vecx^\star,\vecy^\star)}{\partial x_i} \ge - \alpha$.
    \item[$\diamond$] If $x^\star_i \ge 1 - \alpha$ or
    $x^\star_i \ge y^\star_i + \Delta - \alpha$ then
    $\frac{\partial f_{\calC_l}(\vecx^\star, \vecy^\star)}{\partial x_i} \le \alpha$.
  \end{enumerate}
  The symmetric statements for $y_i^{\star}$ hold. For $i\in \{1,2\}$:
  \begin{enumerate}
    \item[$\diamond$] If $y_i^\star \in  (\alpha, 1 - \alpha)$ and
    $y_i^{\star} \in (x_i^\star - \Delta + \alpha, x_i^\star + \Delta - \alpha)$
    then
    $\abs{\frac{\partial f_{\calC_l}(\vecx^\star, \vecy^\star)}{\partial y_i}} \le \alpha$.
    \item[$\diamond$] If $y^\star_i \le \alpha$ or
    $y^\star_i \le x^\star_i - \Delta + \alpha$ then
    $\frac{\partial f_{\calC_l}(\vecx^\star, \vecy^\star)}{\partial y_i} \le \alpha$.
    \item[$\diamond$] If $y^\star_i \ge 1 - \alpha$ or
    $y^\star_i \ge x^\star_i + \Delta - \alpha$ then
    $\frac{\partial f_{\calC_l}(\vecx^\star,\vecy^\star)}{\partial y_i} \ge - \alpha$.
  \end{enumerate}
\end{lemma}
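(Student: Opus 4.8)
The plan is to exploit two structural features of the instance built in \hyperref[lbl:construction2D]{$(+)$}: the defining inequalities of $\calP(\matA,\vecb)$ couple only the $i$-th coordinate of $\vecx$ with the $i$-th coordinate of $\vecy$ and are otherwise separable across coordinates, and the approximate-fixed-point tolerance $\alpha=\Delta/3$ is a fixed fraction of the width of every axis-parallel slice of $\calP(\matA,\vecb)$.

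First I would record that, for a feasible pair $(\vecx^\star,\vecy^\star)\in\calP(\matA,\vecb)$, the slice $K(\vecy^\star)$ used by $F_{GDA}$ is the axis-aligned box $\prod_{i=1}^{2}[a_i,b_i]$ with $a_i=\max\{0,\,y_i^\star-\Delta\}$ and $b_i=\min\{1,\,y_i^\star+\Delta\}$, and symmetrically $K(\vecx^\star)=\prod_i[a_i',b_i']$ with $a_i'=\max\{0,\,x_i^\star-\Delta\}$ and $b_i'=\min\{1,\,x_i^\star+\Delta\}$. Feasibility gives $x_i^\star\in[a_i,b_i]$ and $y_i^\star\in[a_i',b_i']$, and since $\Delta<1/2$ at most one endpoint of each slice is clipped by $[0,1]$, so a short case check yields $b_i-a_i\ge\Delta$ and $b_i'-a_i'\ge\Delta$, hence both widths strictly exceed $2\alpha$. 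Because projection onto a box is coordinatewise, $F_{GDAx}(\vecx^\star,\vecy^\star)_i=\Pi_{[a_i,b_i]}(x_i^\star-\partial f_{\calC_l}/\partial x_i)$, and likewise $F_{GDAy}(\vecx^\star,\vecy^\star)_i=\Pi_{[a_i',b_i']}(y_i^\star+\partial f_{\calC_l}/\partial y_i)$; the hypothesis $\norm{(\vecx^\star,\vecy^\star)-F_{GDA}(\vecx^\star,\vecy^\star)}_2<\alpha$ then forces each coordinate of the update to move by strictly less than $\alpha$.

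The rest is a per-coordinate case split driven by three elementary facts about projection onto an interval $[a,b]$: it is monotone, it restricts to the identity on $[a,b]$, and if $\Pi_{[a,b]}(z)\in(a,b)$ then $z=\Pi_{[a,b]}(z)$. For the first bullet, the two stated conditions $x_i^\star\in(\alpha,1-\alpha)$ and $x_i^\star\in(y_i^\star-\Delta+\alpha,\,y_i^\star+\Delta-\alpha)$ together yield $x_i^\star\in(a_i+\alpha,\,b_i-\alpha)$ --- one inequality coming from each condition, depending on which endpoint is clipped --- so the projected point lies in $(a_i,b_i)$, the projection is inactive, and $x_i^\star-\partial f_{\calC_l}/\partial x_i$ is itself within $\alpha$ of $x_i^\star$, giving $\abs{\partial f_{\calC_l}/\partial x_i}<\alpha$. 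For the second bullet the disjunctive hypothesis unpacks to $x_i^\star\le a_i+\alpha$; assuming $\partial f_{\calC_l}/\partial x_i<-\alpha$ puts $z:=x_i^\star-\partial f_{\calC_l}/\partial x_i>x_i^\star+\alpha>a_i$, so $\Pi_{[a_i,b_i]}(z)=\min\{z,b_i\}$ is either $z$ (which immediately contradicts that the update moved by less than $\alpha$) or $b_i$, and then $b_i<x_i^\star+\alpha\le a_i+2\alpha$ contradicts $b_i-a_i\ge\Delta=3\alpha$. The third bullet is the mirror image with the clip occurring at $a_i$, and the three statements for $y_i^\star$ are verbatim repetitions with the ascent step $y_i^\star+\partial f_{\calC_l}/\partial y_i$ and the box $[a_i',b_i']$, the only change being that ascent rather than descent flips the signs in the conclusions. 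I expect the sole obstacle to be bookkeeping: matching each disjunct and each half of a conjunction in the hypotheses to the correct (possibly clipped) endpoint, and verifying $b_i-a_i\ge\Delta$ in the corners where $y_i^\star$ lies within $\Delta$ of $0$ or $1$. Neither is deep, but these are precisely the spots where a careless argument breaks, so I would write the clipping cases out in full.
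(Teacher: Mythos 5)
Your proposal is correct and follows essentially the same route as the paper's proof: both exploit that $K(\vecy^{\star})$ and $K(\vecx^{\star})$ are axis-aligned boxes so the projection acts coordinatewise, and both argue per coordinate that either the projection is inactive (forcing $\abs{\partial f_{\calC_l}/\partial x_i}$ or $\abs{\partial f_{\calC_l}/\partial y_i}$ below $\alpha$) or the projected point is clipped at an endpoint that the hypotheses keep at distance more than $\alpha$ from $x_i^{\star}$ (or $y_i^{\star}$), contradicting the $\alpha$-approximate fixed-point condition with $\Delta = 3\alpha$. Your explicit endpoints $a_i, b_i$ and the width bound $b_i - a_i \ge \Delta$ are just a slightly tidier bookkeeping of the same case analysis the paper performs.
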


\begin{proof}
    For this proof it is convenient to define
  $\hat{\vecx} = \vecx^\star - \nabla_x f_{\calC_l}(\vecx^{\star}, \vecy^{\star})$,
  $K(\vecy^{\star}) =\{\vecx \mid (\vecx,\vecy^\star) \in \calP (\matA,\vecb))\}$,
  and $\vecz = \Pi_{K(\vecy^{\star})} \hat{\vecx}$.

    We first consider the first statement, so for the sake of contradiction
  let's assume that $x_i^\star \in (\alpha, 1 - \alpha)$, that
  $x_i^{\star} \in (y_i^\star - \Delta + \alpha, y_i^\star + \Delta - \alpha)$,
  and that
  $\abs{\frac{\partial f_{\calC_l}(\vecx^\star,\vecy^\star)}{\partial x_i}} > \alpha$.
  Due to the definition of $\calP(\matA, \vecb)$ in
  \eqref{eq:2Dproof:domainDefinition} the set $K(\vecy^{\star})$ is an axes
  aligned box of $\R^2$ and hence the projection of any vector $\vecx$ onto
  $K(\vecy^{\star})$ can be implemented independently for every coordinate $x_i$
  of $\vecx$. Therefore if it happens that
  $\hat{x}_i \in (0,1)\cap (y_i^\star - \Delta, y_i^\star + \Delta)$,
  then it holds that $\hat{x}_i = \vecz_i$. Now from the definition of
  $\hat{x}_i$ and $z_i$, and the fact that $K(\vecy^{\star})$ is an axes aligned
  box, we get that
  $\abs{x_i^\star -  z_i} = \abs{x_i^\star -  \hat{x}_i} = \abs{\frac{\partial f_{\calC_l}(\vecx^\star,\vecy^\star)}{\partial x_i}} > \alpha$
  which contradicts  the fact that $(\vecx^\star, \vecy^\star)$ is a
  solution to the problem $\gdaFixed$. On the other hand if
  $\hat{x}_i \not\in (y_i^\star - \Delta, y_i^\star + \Delta) \cap (0, 1)$ then
  $z_i$ has to be on the boundary of $K(\vecy^{\star})$ and hence $z_i$ has to
  be equal to either $0$, or $1$, or $y^{\star}_i - \Delta$, or
  $y^{\star}_i + \Delta$. In any of these cases since we assumed that
  $x_i^\star \in (\alpha, 1 - \alpha)$ and that
  $x_i^{\star} \in (y_i^\star - \Delta + \alpha, y_i^\star + \Delta - \alpha)$
  we conclude that $\abs{x^{\star}_i - z_i} > \alpha$ and hence we get again a
  contradiction with the fact that $(\vecx^{\star}, \vecy^{\star})$ is a
  solution to the problem $\gdaFixed$. Hence we have that
  $\abs{\frac{\partial f_{\calC_l}(\vecx^\star,\vecy^\star)}{\partial x_i}} \le \alpha$.

    For the second case, we assume for the sake of contradiction that
  $x^{\star}_i \le \alpha$ and
  $\frac{\partial f_{\calC_l}(\vecx^\star, \vecy^\star)}{\partial x_i} < - \alpha$.
  These imply that $\hat{x}_i > x_i^\star + \alpha$ and that
  $z_i = \min(y_i^{\star} + \Delta, \hat{x}_i, 1) > \min(\Delta, \hat{x}_i, 1) \ge \min(3 \alpha, x_i^{\star} + \alpha)$.
  As a result,
  $\abs{x_i^\star - z_i} = z_i - x_i^\star > \min(3 \alpha, \hat{x}_i + \alpha) - x_i^\star$
  which is greater than $\alpha$. The latter is a contradiction with the
  assumption that $(\vecx^{\star}, \vecy^{\star})$ is a solution to the
  $\gdaFixed$ problem. Also if we assume that
  $x_i^\star \leq y_i^\star - \Delta + \alpha$ using the same reasoning we get
  that $z_i = \min(\hat{x}_i, y_i^\star + \Delta - \alpha, 1)$. From this we can
  again prove that $\abs{x_i^\star -  z_i} > \alpha$ which contradicts the fact
  that $(\vecx^{\star}, \vecy^{\star})$ is a solution to $\gdaFixed$.

    The third case can be proved using the same arguments as the second case.
  Then using the corresponding arguments we can prove the corresponding
  statements for the $y$ variables.
\end{proof}

  We are now ready to prove that solutions of $\gdaFixed$ can only occur in
cells that are either panchromatic or violate the boundary conditions of a
proper coloring. For convenience in the rest of this section we define
$R(\vecx)$ to be the cell of the $2^n \times 2^n$ grid that contains $\vecx$.
\begin{equation} \label{eq:2Dhardnees:cellDefinition}
  R(\vecx) = \left[\frac{i}{2^n - 1}, \frac{i + 1}{2^n - 1}\right] \times \left[\frac{j}{2^n - 1}, \frac{j + 1}{2^n - 1}\right],
\end{equation}
\noindent for $i, j$ such that
$x_1 \in \left[\frac{i}{2^n - 1}, \frac{i + 1}{2^n - 1}\right] \text{ and } x_2 \in \left[\frac{j}{2^n - 1}, \frac{j + 1}{2^n - 1}\right]$
if there are multiple $i$, $j$ that satisfy the above condition then we choose
$R(\vecx)$ to be the cell that corresponds to the $i$, $j$ such that the pair
$(i, j)$ it the lexicographically first such that $i$, $j$ satisfy the above
condition. We also define the corners $R_c(\vecx)$ of $R(\vecx)$ as
\begin{equation} \label{eq:2Dhardnees:cornersDefinition}
  R_c(\vecx) = \set{(i, j), (i, j + 1), (i + 1, j), (i + 1), (j + 1)}
\end{equation}
where
$R(\vecx) = \left[\frac{i}{2^n - 1}, \frac{i + 1}{2^n - 1}\right] \times \left[\frac{j}{2^n - 1}, \frac{j + 1}{2^n - 1}\right]$.

\begin{lemma} \label{l:main_2}
    Let $\calC_l$ be an input to the $\TD$-$\BiSperner$ problem, let
  $f_{\calC_l}$ be the corresponding $G$-Lipschitz and $L$-smooth function
  defined in Definition \ref{d:payoff}, and let $\calP(\matA, \vecb)$ be the
  polytope defined by \eqref{eq:2Dproof:domainDefinition}. If
  $(\vecx^{\star}, \vecy^{\star})$ is any solution to the $\gdaFixed$ problem
  with inputs $\alpha$, $G$, $L$, $\calC_{f_{\calC_l}}$, $\matA$, and $\vecb$
  defined in \hyperref[lbl:construction2D]{$(+)$} then none of the following
  statements hold for the cell $R(\vecx^{\star})$.
  \begin{enumerate}
    \item $x_1^{\star} \ge 1/(2^n - 1)$ and, for all $\vecv \in R_c(\vecx^{\star})$,
      it holds that $\calC_l^1(\vecv) = -1$.
    \item $x_1^{\star} \le (2^n - 2)/(2^n - 1)$ and, for all
      $\vecv \in R_c(\vecx^{\star})$, it holds that $\calC_l^1(\vecv) = +1$.
    \item $x_2^{\star} \ge 1/(2^n - 1)$ and, for all $\vecv \in R_c(\vecx^{\star})$,
      it holds that $\calC_l^2(\vecv) = -1$.
    \item $x_2^{\star} \le (2^n - 2)/(2^n - 1)$ and, for all
      $\vecv \in R_c(\vecx^{\star})$, it holds that $\calC_l^2(\vecv) = +1$.
  \end{enumerate}
\end{lemma}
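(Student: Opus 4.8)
The plan is to rule out conditions 1--4 one at a time, each by contradicting the assumption that $(\vecx^{\star}, \vecy^{\star})$ is a solution of $\gdaFixed$ on the instance built in \hyperref[lbl:construction2D]{$(+)$}. Conditions 2, 3, and 4 are obtained from condition 1 by the obvious symmetries of Definition \ref{d:payoff} and of the constraints \eqref{eq:2Dproof:domainDefinition} (flipping the sign of a color, or swapping the roles of the first and second coordinates), so I would write the argument in full only for condition 1: assume, towards a contradiction, that $x_1^{\star} \ge 1/(2^n - 1)$ and that $\calC_l^1(\vecv) = -1$ for every corner $\vecv$ of the cell $R(\vecx^{\star})$.

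The first step is to show that $\partial f_{\calC_l}/\partial x_1$ is bounded away from $0$ at $(\vecx^{\star},\vecy^{\star})$. Since the four interpolation coefficients in Definition \ref{d:payoff} are nonnegative and sum to $1$ on $R(\vecx^{\star})$, and all four corners of $R(\vecx^{\star})$ carry first color $-1$, the coefficient $\alpha_1(\cdot)$ is identically $-1$ on $R(\vecx^{\star})$; hence $\alpha_1(\vecx^{\star}) = -1$ and $\partial \alpha_1(\vecx^{\star})/\partial x_1 = 0$ (this derivative also vanishes on cell boundaries, by the computation inside the proof of Lemma \ref{l:smoothness}, so the conclusion is valid even if $\vecx^{\star}$ lies on an edge shared by $R(\vecx^{\star})$ and a neighbouring cell). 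Plugging this into the gradient formula from the proof of Lemma \ref{l:smoothness}, $\partial f_{\calC_l}/\partial x_1 = (\partial\alpha_1/\partial x_1)(y_1 - x_1) - \alpha_1 + (\partial\alpha_2/\partial x_1)(y_2 - x_2)$, yields $\partial f_{\calC_l}(\vecx^{\star},\vecy^{\star})/\partial x_1 = 1 + (\partial\alpha_2(\vecx^{\star})/\partial x_1)(y_2^{\star} - x_2^{\star})$. Using $\max_{z\in[0,1]}|S_1'(z)| = 3/2$ one checks $|\partial\alpha_2(\vecx^{\star})/\partial x_1| \le 6/\delta$, while the constraints \eqref{eq:2Dproof:domainDefinition} force $|y_2^{\star} - x_2^{\star}| \le \Delta = \delta/12$; so the last term has absolute value at most $1/2$ and $\partial f_{\calC_l}(\vecx^{\star},\vecy^{\star})/\partial x_1 \ge 1/2 > \alpha$ (recall $\alpha = \Delta/3 = \delta/36$).

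The second step feeds this into Lemma \ref{c:conditions}. Its third statement for $x_1$ says that if $x_1^{\star} \ge 1 - \alpha$ or $x_1^{\star} \ge y_1^{\star} + \Delta - \alpha$ then $\partial f_{\calC_l}(\vecx^{\star},\vecy^{\star})/\partial x_1 \le \alpha$, contradicting the bound above; hence $x_1^{\star} < 1-\alpha$ and $x_1^{\star} < y_1^{\star} + \Delta - \alpha$. Its first statement for $x_1$ forbids $x_1^{\star}$ from lying simultaneously in $(\alpha, 1-\alpha)$ and in $(y_1^{\star} - \Delta + \alpha, y_1^{\star} + \Delta - \alpha)$; together with the two strict upper bounds just obtained, this forces $x_1^{\star} \le \alpha$ or $x_1^{\star} \le y_1^{\star} - \Delta + \alpha$. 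The first alternative is impossible because $\alpha = \delta/36 < \delta = 1/(2^n - 1) \le x_1^{\star}$. So $x_1^{\star} \le y_1^{\star} - \Delta + \alpha$, i.e. $y_1^{\star} \ge x_1^{\star} + \Delta - \alpha$; then the third statement for $y_1$ in Lemma \ref{c:conditions} gives $\partial f_{\calC_l}(\vecx^{\star},\vecy^{\star})/\partial y_1 \ge -\alpha$, whereas $\partial f_{\calC_l}/\partial y_1 = \alpha_1(\vecx^{\star}) = -1 < -\alpha$---the desired contradiction. For conditions 3 and 4 the same argument runs with $\alpha_2$ constant, the variables $x_2, y_2$, and the identity $\partial f_{\calC_l}/\partial y_2 = \alpha_2(\vecx)$ in the role of $\partial f_{\calC_l}/\partial y_1 = \alpha_1(\vecx)$; for condition 2, $\alpha_1 \equiv +1$ makes $\partial f_{\calC_l}/\partial x_1 \le -1/2$ and one uses the second (rather than the third) statements of Lemma \ref{c:conditions}, with $x_1^{\star} \le (2^n - 2)/(2^n - 1)$ playing the role of $x_1^{\star} \ge 1/(2^n-1)$.

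The conceptual content is entirely in the first step---reading the colors as gradient directions and exploiting that a cell missing a color has $\alpha_1$ (or $\alpha_2$) frozen, so that the interpolation coefficients cannot mask the drift---and the calibration $\Delta = \delta/12$, $\alpha = \Delta/3$ is exactly what makes the $1/2$-sized drift dominate $\alpha$ while ruling out the ``boundary'' alternatives of Lemma \ref{c:conditions}. The part I expect to require the most care is precisely this bookkeeping: verifying $|\partial\alpha_2/\partial x_1|\cdot\Delta \le 1/2$ with the sharp constant for $S_1'$, confirming $\partial\alpha_1/\partial x_1$ really vanishes at $\vecx^{\star}$ even on a cell boundary, and checking that each branch of the case analysis lands on one of the six inequalities supplied by Lemma \ref{c:conditions}.
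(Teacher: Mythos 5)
Your proposal is correct and follows essentially the same route as the paper: freeze $\alpha_1\equiv -1$ (resp.\ $+1$) on the monochromatic cell to get $\partial f_{\calC_l}/\partial y_1=\mp 1$ and $|\partial f_{\calC_l}/\partial x_1|\ge 1/2$ via the bound $|\partial\alpha_2/\partial x_1|\cdot\Delta\le 1/2$ with $|S_1'|\le 3/2$ and $\Delta=\delta/12$, then run the case analysis of Lemma~\ref{c:conditions} to force $x_1^{\star}\le y_1^{\star}-\Delta+\alpha$ (the $x_1^{\star}\le\alpha$ branch being excluded by $x_1^{\star}\ge\delta>\alpha$) and derive the contradiction with the $y$-derivative. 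The constants, the use of the constraint polytope, and the symmetry reduction to statement 1 all match the paper's argument.
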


\begin{proof}
    We prove that there is no solution $(\vecx^\star, \vecy^\star)$ of
  $\gdaFixed$ that satisfies the statement 1. and the fact that
  $(\vecx^\star, \vecy^\star)$ cannot satisfy the other statements follows
  similarly. It is convenient for us to define
  $\hat{\vecx} = \vecx^\star - \nabla_x f_{\calC_l}(\vecx^{\star}, \vecy^{\star})$,
  $K(\vecy^{\star}) = \{\vecx \mid (\vecx,\vecy^\star) \in \calP (\matA,\vecb))\}$,
  $\vecz = \Pi_{K(\vecy^{\star})} \hat{\vecx}$, and
  $\hat{\vecy} = \vecy^\star + \nabla_y f_{\calC_l}(\vecx^{\star}, \vecy^{\star})$,
  $K(\vecx^{\star}) = \{\vecy \mid (\vecx^\star, \vecy) \in \calP (\matA,\vecb))\}$,
  $\vecw = \Pi_{K(\vecx^{\star})} \hat{\vecy}$.

    For the sake of contradiction we assume that there exists a solution of
  $(\vecx^\star,\vecy^\star)$ such that $x_1^{\star} \ge 1/(2^n - 1)$ and for all
  $\vecv \in R_c(\vecx^{\star})$ it holds that $\calC_l^1(\vecv) = -1$. Using
  the fact that the first color of all the corners of $R(\vecx^{\star})$ is
  $1^-$, we will prove that (1)
  $\frac{\partial f_{\calC_l}(\vecx^\star, \vecy^\star)}{\partial x_1} \geq 1/2$,
  and (2)
  $\frac{\partial f_{\calC_l}(\vecx^\star, \vecy^\star)}{\partial y_1} = -1$.

    Let
  $R(\vecx^{\star}) = \left[\frac{i}{2^n - 1}, \frac{i + 1}{2^n - 1}\right] \times \left[\frac{j}{2^n - 1}, \frac{j + 1}{2^n - 1}\right]$,
  then since all the corners $\vecv \in R_c(\vecx^{\star})$ have
  $\calC_l^1(\vecv) = -1$, from the Definition \ref{d:payoff} we have that
  \begin{align*}
    f_{\calC_l}(\vecx^{\star}, \vecy^{\star})
      = & ~ (x_1^{\star} - y_1^{\star}) - (x_2^{\star} - y_2^{\star}) \cdot S_1 \left(\frac{x^C_1 - x_1^{\star}}{\delta} \right) \cdot S_1 \left(\frac{x^C_2 - x_2^{\star}}{\delta} \right) \cdot \calC_l^2(i, j) \\
        & - (x_2^{\star} - y_2^{\star}) \cdot S_1 \left(\frac{x^D_1 - x_1^{\star}}{\delta} \right) \cdot S_1 \left(\frac{x_2^{\star} - x^D_2}{\delta} \right) \cdot \calC_l^2(i, j + 1) \\
        & - (x_2^{\star} - y_2^{\star}) \cdot S_1 \left(\frac{x_1^{\star} - x^A_1}{\delta} \right) \cdot S_1 \left(\frac{x_2^{\star} - x^A_2}{\delta} \right) \cdot \calC_l^2(i + 1, j + 1) \\
        & - (x_2^{\star} - y_2^{\star}) \cdot S_1 \left(\frac{x_1^{\star} - x^B_1}{\delta} \right) \cdot S_1 \left(\frac{x^B_2 - x_2^{\star}}{\delta} \right) \cdot \calC_l^2(i + 1, j)
  \end{align*}
  where
  $(x_1^A, x_2^A) = (i/(2^n - 1), j/(2^n - 1))$, $(x_1^B, x_2^B) = (i/(2^n - 1), (j + 1)/(2^n - 1))$,
  $(x_1^C, x_2^C) = ((i + 1)/(2^n - 1), (j + 1)/(2^n - 1))$, and
  $(x_1^D, x_2^D) = ((i + 1)/(2^n - 1), j/(2^n - 1))$. If we differentiate this with respect
  to $y_1$ we immediately get that
  $\frac{\partial f_{\calC_l}(\vecx^{\star}, \vecy^{\star})}{\partial y_1} = -1$.
  On the other hand if we differentiate with respect to $x_1$ we get
  \begin{align}
    \frac{\partial f_{\calC_l}(\vecx^{\star}, \vecy^{\star})}{\partial x_1} = ~
        & 1 + (x_2^{\star} - y_2^{\star}) \cdot \frac{1}{\delta} \cdot S'_1 \left(1 - \frac{x_1^{\star} - x^A_1}{\delta} \right) \cdot S_1 \left(1 - \frac{x_2^{\star} - x^A_2}{\delta} \right) \cdot \calC_l^2(i, j) \nonumber \\
        & + (x_2^{\star} - y_2^{\star}) \cdot \frac{1}{\delta} \cdot S'_1 \left(1 - \frac{x_1^{\star} - x^A_1}{\delta} \right) \cdot S_1 \left(\frac{x_2^{\star} - x^{A}_2}{\delta} \right) \cdot \calC_l^2(i, j + 1) \nonumber \\
        & - (x_2^{\star} - y_2^{\star}) \cdot \frac{1}{\delta} \cdot S'_1 \left(\frac{x_1^{\star} - x^A_1}{\delta} \right) \cdot S_1 \left(\frac{x_2^{\star} - x^A_2}{\delta} \right) \cdot \calC_l^2(i + 1, j + 1) \nonumber \\
        & - (x_2^{\star} - y_2^{\star}) \cdot \frac{1}{\delta} \cdot S'_1 \left(\frac{x_1^{\star} - x^A_1}{\delta} \right) \cdot S_1 \left(1 - \frac{x_2^{\star} - x^A_2}{\delta} \right) \cdot \calC_l^2(i + 1, j) \nonumber \\
    \ge & 1 - 4 \abs{x_2^{\star} - y_2^{\star}} \cdot \frac{3}{2 \delta} \nonumber \\
    \ge & 1 - 6 \cdot \frac{\Delta}{\delta} \ge 1/2 \label{eq:easyMinimum:2D}
  \end{align}
  \noindent where the last inequality follows from the fact that
  $\abs{S'_1(\cdot)} \le 3/2$ and the fact that, due to the constraints that
  define the polytope $\calP(\matA, \vecb)$, it holds that
  $\abs{x_2 - y_2} \le \Delta$.

    Hence we have established that if $x_1^{\star} \ge 1/(2^n - 1)$ and for all
  $\vecv \in R_c(\vecx^{\star})$ it holds that $\calC_l^1(\vecv) = -1$ then it
  holds that that (1)
  $\frac{\partial f_{\calC_l}(\vecx^\star, \vecy^\star)}{\partial x_1} \geq 1/2$,
  and (2)
  $\frac{\partial f_{\calC_l}(\vecx^\star, \vecy^\star)}{\partial y_1} = -1$.
  Now it is easy to see that the only way to satisfy both
  $\frac{\partial f_{\calC_l}(\vecx^\star, \vecy^\star)}{\partial x_1} \geq 1/2$
  and $\abs{z_1 - x_1^{\star}} \le \alpha$ is that either $x_1^\star \le \alpha$
  or $x_1^\star \le y_1^\star - \Delta + \alpha$. The first case is excluded by
  the assumption in the first statement of our lemma and our choice of
  $\alpha = \Delta/3 = 1/(36 \cdot (2^n - 1))$ thus it holds that
  $x_1^\star \le y_1^\star - \Delta + \alpha$. But then we can use the case 3
  for the $y$ variables of Lemma \ref{c:conditions} and we get that
  $\frac{\partial f_{\calC_l}(\vecx^\star, \vecy^\star)}{\partial y_1} \ge - \alpha$,
  which cannot be true since we proved that
  $\frac{\partial f_{\calC_l}(\vecx^\star, \vecy^\star)}{\partial y_1} = -1$.
  Therefore we have a contradiction and the first statement of the lemma holds.
  Using the same reasoning we prove the rest of the statements.
\end{proof}

\begin{remark} \label{rem:easyMinimum:2D}
    The computations presented in \eqref{eq:easyMinimum:2D} is the precise point
  where an attempt to prove the hardness of minimization problems would fail. In
  particular, if our goal was to construct a hard minimization instance then the
  function $f_{\calC_l}$ would need to have the terms $x_i + y_i$ instead of
  $x_i - y_i$ so that the fixed points of gradient descent coincide with
  approximate local minimum of $f_{\calC_l}$. In that case we cannot lower bound
  the gradient of \eqref{eq:easyMinimum:2D} below from $1/2$ because the term
  $\abs{x^{\star}_2 + y^{\star}_2}$ will be the dominant one and hence the
  sign of the derivative can change depending on the value
  $\abs{x^{\star}_2 + y^{\star}_2}$. For a more intuitive explanation of the
  reason why we cannot prove hardness of minimization problems we refer to the
  Introduction, at Section \ref{sec:intro:minimization}.
\end{remark}
\medskip

  We have now all the ingredients to prove Theorem \ref{thm:hardness2D}.

\begin{proof}[Proof of Theorem \ref{thm:hardness2D}]
    Let $(\vecx^{\star}, \vecy^{\star})$ be a solution to the $\gdaFixed$
  instance that we construct based on the instance $\calC_l$ of
  $\TD$-$\BiSperner$. Let also $R(\vecx^{\star})$ be the cell that contains
  $\vecx^{\star}$. If the corners $R_c(\vecx^{\star})$ contain all the colors
  $1^-$, $1^+$, $2^-$, $2^+$ then we have a solution to the $\TD$-$\BiSperner$
  instance and the Theorem \ref{thm:hardness2D} follows. Otherwise there is at
  least one color missing from $R_c(\vecx^{\star})$, let's assume without loss
  of generality that one of the missing colors is $1^-$, hence for every
  $\vecv \in R_c(\vecx^{\star})$ it holds that $\calC_l(\vecv) = + 1$. Now from
  Lemma \ref{l:main_2} the only way for this to happen is that
  $x^{\star}_1 > (2^n - 2)/(2^n - 1)$ which implies that in $R_c(\vecx^{\star})$
  there is at least one corner of the form $\vecv = (2^n - 1, j)$. But we have
  assumed that $\calC_l(\vecv) = + 1$, hence $\vecv$ is a violation of the
  proper coloring rules and hence a solution to the $\TD$-$\BiSperner$ instance.
  We can prove the corresponding statement if any other color from $1^+$, $2^-$,
  $2^+$ is missing. Finally, we observe that the function that we define has
  range $[-2, 2]$ and hence the Theorem \ref{thm:hardness2D} follows.
\end{proof}

\section{Hardness of Local Min-Max Equilibrium -- High-Dimensions} \label{sec:hardness}

  Although the results of Section \ref{sec:hardness:2D} are quite indicative
about the computational complexity of $\gdaFixed$ and $\lrlNash$, we have
not yet
excluded the possibility of the existence of algorithms running in
$\poly(d, G, L, 1/\eps)$ time. In this section we present a, significantly more
challenging, high dimensional version of the reduction that we presented in
Section \ref{sec:hardness:2D}. The advantage of this reduction is that it rules
out the existence even of algorithms running in $\poly(d, G, L, 1/\eps)$ steps
unless $\FP = \PPAD$, for details see Theorem \ref{thm:localNashHardness}. An
easy consequence of our result is an unconditional lower bound on the
\textit{black-box model} that states that the running time of any algorithm for
$\lrlNash$ that has only oracle access to $f$ and $\nabla f$ has to be
exponential in  $d$, or $G$, or $L$, or $1/\eps$, for details we refer to
the Theorem \ref{thm:localNashBlackBoxLowerBound} and Section
\ref{sec:blackBox}.

  The main reduction that we use to prove Theorem \ref{thm:localNashHardness} is
from the high dimensional generalization of the problem $\TD$-$\BiSperner$,
which we call $\HD$-$\BiSperner$, to $\gdaFixed$. Our reduction in this section
resembles some of the ideas of the reductions of Section \ref{sec:hardness:2D}
but it has many additional significant technical difficulties. The main
difficulty that we face is how to define a function on a $d$-dimensional simplex
that is: (1) both Lipschitz and smooth, (2) interpolated between some fixed
functions at the $d + 1$ corners of the simplex, and (3) remains Lipschitz and
smooth even if we glue together different simplices. It is well understood from
previous works how to construct such a function if we are interested only in
achieving the Lipschitz continuity. Surprisingly adding the smoothness
requirement makes the problem very different and significantly more difficult.
Our proof overcomes this technical difficulty by introducing a novel but very
technically involved way to define interpolation within a simplex of some fixed
functions on the corners of the simplex. We believe that our novel interpolation
technique is of independent interest and we hope that it will be at the heart of
other computational hardness results of optimization problems in continuous
optimization.

\subsection{The High Dimensional Bi-Sperner Problem} \label{sec:highDBiSperner}

  We start by presenting the $\HD$-$\BiSperner$ problem. The $\HD$-$\BiSperner$
is a straightforward $d$-dimensional generalization of the $\TD$-$\BiSperner$
that we defined in the Section \ref{sec:hardness:2D}. Assume that we have a
$d$-dimensional grid $N \times \cdots (d \text{ times}) \cdots \times N$. We
assign to every vertex of this grid a sequence of $d$ colors and we say that a
coloring is \textit{proper} if the following rules are satisfied.
\begin{enumerate}
  \item The $i$th color of every vertex is either the color $i^+$ or the color
  $i^-$.
  \item All the vertices whose $i$th coordinate is $0$, i.e. they are at the
  lower boundary of the $i$th direction, should have the $i$th color equal to
  $i^+$.
  \item All the vertices whose $i$th coordinate is $1$, i.e. they are at the
  higher boundary of the $i$th direction, should have the $i$th color equal to
  $i^-$.
\end{enumerate}

\noindent Using proof ideas similar to the proof of the original Sperner's Lemma
it is not hard to prove via a combinatorial argument that in every proper
coloring of a $d$-dimensional grid, there exists a cubelet of the grid where all
the $2 \cdot d$ colors $\{1^-,1^+, \dots, d^-, d^+\}$ appear in some of its
vertices, we call such a cubelet \textit{panchromatic}. In the
$\HD$-$\BiSperner$ problem we are asked to find such a cubelet, or a violation
of the rules of proper coloring. As in Section \ref{sec:2DBiSperner} we do not
present this combinatorial argument in this paper since the totality of the
$\HD$-$\BiSperner$ problem will follow from our reduction from
$\HD$-$\BiSperner$ to $\gdaFixed$ and our proofs in Section \ref{sec:existence}
that establish the totality of $\gdaFixed$.
\smallskip

  As in the case of $\TD$-$\BiSperner$, in order to formally define the
computational problem $\HD$-$\BiSperner$ we need to define the coloring of the
$d$-dimensional grid $N \times \cdots \times N$ in a succinct way. The
fundamental difference compared to the definition of $\TD$-$\BiSperner$ is that
for the $\HD$-$\BiSperner$ we assume that $N$ is only
\textit{polynomially large}. This difference will enable us to exclude
algorithms for $\gdaFixed$ that run in time $\poly(d, 1/\alpha, G, L)$. The
input to $\HD$-$\BiSperner$ is a coloring via a binary circuit $\calC_l$ that
takes as input the coordinates of a vertex of the grid and outputs the sequence
of colors that are used to color this vertex. Each one of $d$ coordinates is
given via the binary representation of a number in $\nm{N}$. Setting
$N = 2^\ell$, where here $\ell$ is a logarithmically in $d$ small number, we
have that the representation of each coordinate is a member of $\{0, 1\}^\ell$.
In the rest of the section we abuse the notation and we use a coordinate
$i \in \{0, 1\}^\ell$ both as a binary string and as a number in $\nm{2^{\ell}}$
and which of the two we use it is clear from the context. The output of
$\calC_l$ should be a sequence of $d$ colors, where the $i$th member of this
sequence is one of the colors $\{i^-, i^+\}$. We represent this sequence as a
member of $\{-1, +1\}^d$, where the $i$th coordinate refers to the choice of
$i^-$ or $i^+$.

  In the definition of the computational problem $\HD$-$\BiSperner$ the input is
a circuit $\calC_l$, as we described above. As we discussed above in the
$\HD$-$\BiSperner$ problem we are asking for a panchromatic cubelet of the grid.
One issue with this high-dimensional setting is that in order to check whether a
cubelet is panchromatic or not we have to query all the $2^d$ corners of this
cubelet which makes the verification problem inefficient and hence a containment
to the $\PPAD$ class cannot be proved. For this reason as a solution to the
$\HD$-$\BiSperner$ we ask not just for a cubelet but for $2 \cdot d$ vertices
$\vecv^{(1)}$, $\dots$, $\vecv^{(d)}$ $\vecu^{(1)}$, $\dots$, $\vecu^{(d)}$, not
necessarily different, such that they all belong to the same cubelet and the
$i$th output of $\calC_l$ with input $\vecv_i$ is $-1$, i.e. corresponds to the
color $i^-$, whereas the $i$th output of $\calC_l$ with input $\vecu_i$ is $+1$,
i.e. corresponds to the color $i^+$. This way we have a certificate of size
$2 \cdot d$ that can be checked in polynomial time. Another possible solution of
$\HD$-$\BiSperner$ is a vertex whose coloring violates the aforementioned
boundary conditions 2. and 3.. of a proper coloring. For notational convenience
we refer to the $i$th coordinate of $\calC_l$ by $\calC_l^i$. The formal
definition of $\HD$-$\BiSperner$ is then the following.
\smallskip

\begin{nproblem}[\HD$-$\BiSperner]\label{d:Bi-Sperner}
  \textsc{Input:} A boolean circuit
  $\calC_l : \underbrace{\{0, 1\}^\ell \times \dots \times \{0, 1\}^\ell}_{d \text{ times}} \to \{-1, 1\}^d$
  \smallskip

  \noindent \textsc{Output:} One of the following:
  \begin{Enumerate}
    \item Two sequences of $d$ vertices $\vecv^{(1)}$, $\dots$, $\vecv^{(d)}$ an
    $\vecu^{(1)}$, $\dots$, $\vecu^{(d)}$ with
    $\vecv^{(i)}, \vecu^{(i)} \in \p{\{0, 1\}^\ell}^d$ such that
    $\calC_l^i(\vecv^{(i)}) = -1$ and $\calC_l^i(\vecu^{(i)}) = +1$.
    \item A vertex $\vecv \in \p{\{0, 1\}^\ell}^d$ with $v_i = \vec{0}$ such
    that $\calC_l^i(\vecv) = -1$.
    \item A vertex $\vecv \in \p{\{0, 1\}^\ell}^d$ with $v_i = \vec{1}$ such
    that $\calC_l^i(\vecv) = +1$.
  \end{Enumerate}
\end{nproblem}
\medskip

\noindent Our first step is to establish the $\PPAD$-hardness of
$\HD$-$\BiSperner$ in Theorem~\ref{t:Bi_Sperner_PPAD}. To prove this we use a
stronger version of the $\textsc{Brouwer}$ problem that is called
\textsc{$\gamma$-SuccinctBrouwer} and was first introduced in \cite{R16}.
\medskip

\begin{nproblem}[\textsc{$\gamma$-SuccinctBrouwer}]\label{d:Brouwer}
  \textsc{Input:} A polynomial-time Turing machine $\calC_M$ evaluating a
  $1/\gamma$-Lipschitz continuous vector-valued function
  $M : [0, 1]^d \to [0, 1]^d$.
  \smallskip

  \noindent \textsc{Output:} A point $\vecx^{\star} \in [0, 1]^d$ such that
  $\norm{M(\vecx^{\star}) - \vecx^{\star}}_2 \le \gamma$.
\end{nproblem}
\medskip

\begin{theorem}[\cite{R16}] \label{t:Brouwer}
    \textsc{$\gamma$-SuccinctBrouwer} is $\PPAD$-complete for any fixed constant
  $\gamma > 0$.
\end{theorem}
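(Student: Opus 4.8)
\noindent The plan is to establish both that \textsc{$\gamma$-SuccinctBrouwer} lies in $\PPAD$ and that it is $\PPAD$-hard for every fixed constant $\gamma>0$.

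Membership in $\PPAD$ is the easy half, and I would treat it as standard: given the circuit for an $L$-Lipschitz self-map $M$ of $[0,1]^d$ (here $L=1/\gamma$), discretize the cube at resolution $s=\Theta(\gamma/(L\sqrt d))$, color each grid vertex $\vecv$ by the coordinate/sign pair $(i,\sigma)$ maximizing $\sigma\cdot(M(\vecv)-\vecv)_i$, check that this coloring satisfies a Sperner-type boundary condition (arranging, as usual, that $M$ points inward on $\partial[0,1]^d$, or composing with a retraction that does), and invoke the fact that high-dimensional \textsc{Sperner} with a circuit-given coloring reduces to \textsc{End-of-a-Line}. A panchromatic cell of the triangulation contains, along each axis, two vertices whose displacements point in opposite senses, and since the displacement map $\vecx\mapsto M(\vecx)-\vecx$ varies by at most $\gamma/2$ over a cell of that resolution, some point of the cell has $\norm{M(\vecx)-\vecx}_2\le\gamma$. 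Rescaling $\gamma$ by a universal constant yields membership.

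The substance is the $\PPAD$-hardness, and here I would follow the strategy of Rubinstein~\cite{R16}, reducing from \textsc{End-of-a-Line}. The classical Hirsch--Papadimitriou--Vavasis paradigm embeds the line graph of an \textsc{End-of-a-Line} instance on $\{0,1\}^n$ as a thin ``tube'' inside a hypercube of dimension $\poly(n)$ and defines a self-map $M$ that (i) pushes points off the tube toward it and (ii) pushes points on the tube along it, so that the only approximate fixed points sit at the two ends of the line. The difficulty is that in the naive embedding the along-the-tube displacement is only $2^{-\Omega(n)}$, so it certifies $2^{-\Omega(n)}$-fixed points rather than $\gamma$-fixed points. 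To amplify the displacement to $\Omega(1)$ I would replace the naive encoding of vertices by an encoding through a fixed constant-rate, constant-relative-distance error-correcting code $E:\{0,1\}^n\to\{0,1\}^m$ with $m=O(n)$, building the tube from the images $E(v)$ (augmented by $O(1)$ auxiliary ``gadget'' coordinates per vertex that record the local structure of the line). The constant relative distance guarantees that the portions of the tube belonging to non-adjacent line vertices are $\Omega(\sqrt m)$ apart in $\ell_2$, i.e.\ a constant fraction of the ambient diameter apart; hence from any point within an $\Omega(1)$-neighborhood of the tube one can \emph{robustly and locally} decode which edge of the line it is near, and therefore in which direction $M$ should move it. This lets $M$ have displacement of magnitude $\Omega(1)$ both off the tube (toward it) and on it (along the line), while the transitions between these regimes and between consecutive tube segments are smoothed so that $M$ is globally $O(1)$-Lipschitz in the Euclidean metric; one then fixes $\gamma$ to be the resulting universal constant. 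Any $\gamma$-approximate fixed point must then lie near an end of the line, whence an \textsc{End-of-a-Line} solution is read off; and since the displacement is $\Omega(1)$ in $\ell_2$ away from the ends, the construction handles the $\ell_2$ formulation stated here directly (the $\ell_\infty$ version following a fortiori).

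I expect the crux to be precisely the construction of this robustly decodable tube: arranging that consecutive codewords---which may disagree on a constant fraction of the $m$ coordinates---are joined by a path that stays uniformly far from every other part of the tube, that along this path and in a constant-size neighborhood of it the direction field is locally computable and mutually consistent, and that all of these local pieces glue into a single $O(1)$-Lipschitz map on $[0,1]^d$ in the Euclidean metric (so that no $\sqrt d$ blow-up sneaks in when passing from coordinatewise control to the $\ell_2$ norm). Orchestrating the error-correcting code, the auxiliary coordinates, and the smoothing gadgets to meet these requirements simultaneously is the technical heart of the argument; the remaining ingredients (totality, polynomial-time constructibility and evaluability of $\calC_M$, and correctness of the final decoding) are routine by comparison.
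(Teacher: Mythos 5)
This theorem is not proved in the paper at all: it is imported verbatim from \cite{R16}, so there is no in-paper proof to compare against. Your outline correctly reconstructs the strategy of the cited work: membership in $\PPAD$ by discretizing the cube, defining a circuit-computable Sperner-type coloring from the displacement field, and reducing to \textsc{End-of-a-Line}; and hardness via the Hirsch--Papadimitriou--Vavasis ``tube'' embedding of an \textsc{End-of-a-Line} instance, with the crucial twist---amplifying the along-the-path displacement from $2^{-\Omega(n)}$ to a constant by routing the tube through a constant-rate, constant-distance error-correcting code so that the direction field is locally and robustly decodable---which is precisely Rubinstein's contribution and the reason the result holds for a fixed constant $\gamma$. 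Be aware, though, that what you have written is a faithful plan rather than a proof: the part you yourself flag as the crux (building the robustly decodable tube, smoothing the regime transitions, and verifying that the pieces glue into a single $O(1)$-Lipschitz map in the Euclidean norm with no $\sqrt{d}$ loss) is exactly where the bulk of the technical work in \cite{R16} lies, and none of it is carried out here; for the purposes of this paper, which uses the theorem as a black box, citing \cite{R16} is the appropriate resolution.
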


\begin{theorem} \label{t:Bi_Sperner_PPAD}
    There is a polynomial time reducton from any instance of the
  \textsc{$\gamma$-SuccinctBrouwer} problem to an instance of $\HD$-$\BiSperner$
  with $N = \Theta(d/\gamma^2)$.
\end{theorem}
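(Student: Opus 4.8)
The plan is to follow the template of the proof of Lemma~\ref{lem:hardness2DBiSperner}, lifted to $d$ dimensions, paying close attention to how the resolution $N$ must scale with $d$. Given the $1/\gamma$-Lipschitz map $M$ computed by $\calC_M$, I would work with the displacement field $g(\vecx) \triangleq M(\vecx) - \vecx$; since $M$ maps into $[0,1]^d$, we have $g : [0,1]^d \to [-1,1]^d$, and $g$ is $(1/\gamma + 1)$-Lipschitz. Let $g_{\eta}$ denote the vector that $\calC_M$, used as a subroutine, lets us evaluate with $\ell_2$-accuracy $\eta$, i.e. $\norm{g_{\eta}(\vecx) - g(\vecx)}_2 \le \eta$. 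I would fix $\ell = \ceil{\log(c\,d/\gamma^2)}$ for a universal constant $c$ to be pinned down at the end, so that $N = 2^{\ell} = \Theta(d/\gamma^2)$, the grid spacing is $h \triangleq 1/(N-1) = \Theta(\gamma^2/d)$, and set $\eta = \Theta(\gamma/\sqrt{d})$. The coloring circuit is defined coordinate by coordinate: $\calC_l^i(\vecv) = +1$ if $v_i = \vec{0}$; $\calC_l^i(\vecv) = -1$ if $v_i = \vec{1}$ (i.e. the $i$th coordinate equals $N-1$); and otherwise $\calC_l^i(\vecv) = +1$ when $g_{\eta, i}(h\vecv) \ge 0$ and $-1$ when $g_{\eta, i}(h\vecv) < 0$. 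Since $\ell$ is logarithmic in $d$ and $1/\gamma$, this circuit has size polynomial in $\abs{\calC_M}$ and $d$, so the reduction runs in polynomial time.

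Next I would observe that this coloring is \emph{proper} by construction: the hard-coded values on the facets $v_i = \vec 0$ and $v_i = \vec 1$ make boundary conditions~2 and~3 of $\HD$-$\BiSperner$ hold syntactically, so no solution of the second or third type can be returned. Hence any solution of $\HD$-$\BiSperner$ on input $\calC_l$ consists of $2d$ vertices $\vecv^{(1)}, \dots, \vecv^{(d)}, \vecu^{(1)}, \dots, \vecu^{(d)}$ lying in a common cubelet, with $\calC_l^i(\vecv^{(i)}) = -1$ and $\calC_l^i(\vecu^{(i)}) = +1$ for every $i \in [d]$. Writing $R \subseteq [0,1]^d$ for the continuum cubelet they span, note $\diam_2(R) \le \sqrt{d}\,h$.

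Now I would extract an approximate fixed point. Fix a coordinate $i$. If $\vecv^{(i)}$ is interior in direction $i$, then $\calC_l^i(\vecv^{(i)}) = -1$ forces $g_{\eta, i}(h\vecv^{(i)}) < 0$, hence $g_i(h\vecv^{(i)}) < \eta$. If instead $v^{(i)}_i = \vec{1}$ --- the only way a $-1$ can arise on a facet --- then $h v^{(i)}_i = 1$, so $g_i(h\vecv^{(i)}) = M_i(h\vecv^{(i)}) - 1 \le 0 < \eta$. Either way the point of $R$ corresponding to $\vecv^{(i)}$ has $g_i < \eta$; symmetrically, the one corresponding to $\vecu^{(i)}$ has $g_i > -\eta$. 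Both points lie in $R$, so the $(1/\gamma+1)$-Lipschitzness of $g$ gives, for every $\vecz \in R$,
\[ -\eta - \p{\tfrac{1}{\gamma}+1}\sqrt{d}\,h \;\le\; g_i(\vecz) \;\le\; \eta + \p{\tfrac{1}{\gamma}+1}\sqrt{d}\,h. \]
With $h = \Theta(\gamma^2/d)$ we have $(1/\gamma)\sqrt{d}\,h = \Theta(\gamma/\sqrt{d})$, and with $\eta = \Theta(\gamma/\sqrt{d})$ each side is $O(\gamma/\sqrt{d})$; summing squares over $i \in [d]$ yields $\norm{g(\vecz)}_2 \le O(\gamma)$, which I would make $\le \gamma$ by taking $c$ large enough. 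Therefore $\norm{M(\vecz) - \vecz}_2 \le \gamma$ for every $\vecz \in R$, so $h\vecv^{(1)}$ is a valid answer to $\textsc{$\gamma$-SuccinctBrouwer}$, and the recovery function mapping a $\HD$-$\BiSperner$ solution to this point is clearly polynomial-time. Combined with Theorem~\ref{t:Brouwer}, this also yields the $\PPAD$-hardness of $\HD$-$\BiSperner$.

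The two places requiring care are exactly those flagged in the footnote to the proof of Lemma~\ref{lem:hardness2DBiSperner}: first, the boundary bookkeeping --- a certificate vertex that the coloring forces onto a facet may carry a color ``inconsistent'' with the raw sign of $g_{\eta}$, and one must invoke $M([0,1]^d) \subseteq [0,1]^d$ to argue that $g_i$ there still has the benign sign up to $\eta$; second, the quantitative balance --- because the $\ell_2$-norm of the $d$-dimensional error vector inflates the per-coordinate slack by a factor $\sqrt{d}$, the cubelet diameter must be pushed down to $\Theta(\gamma^2/d)$, which is precisely the source of the bound $N = \Theta(d/\gamma^2)$ rather than the $\Theta(1/\gamma^2)$ that would suffice in any single coordinate. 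Everything else --- properness, the polynomial running time, and the recovery map --- is routine.
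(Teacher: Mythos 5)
Your proposal is correct and follows essentially the same route as the paper's proof: color each coordinate by the sign of an approximate evaluation of $g = M - \vecx$, hard-code the facet colors so properness holds syntactically, and use the $(1+1/\gamma)$-Lipschitzness of $g$ on the panchromatic cubelet of diameter $\Theta(\gamma^2/\sqrt{d})$ to get $\norm{M(\vecz)-\vecz}_2 \le \gamma$, which forces $N = \Theta(d/\gamma^2)$. Your one-sided bounding of $g_i$ via the two certificate vertices (including the facet case handled through $M([0,1]^d)\subseteq[0,1]^d$) is just a slightly streamlined version of the paper's sign-flip argument and footnote.
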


\begin{proof}
    Consider the function $g(\vecx) = M(\vecx) - \vecx$. Since $M$ is
  $1/\gamma$-Lipschitz, $g : [0, 1]^d \to [-1, 1]^d$ is also
  $(1 + 1/\gamma)$-Lipschitz. Additionally $g$ can be easily computed via a
  polynomial-time Turing machine $\calC_g$ that uses $\calC_M$ as a subroutine.
  We construct the coloring sequences of every vertex of a $d$-dimensional grid
  with $N = \Theta(d/\gamma^2)$ points in every direction using $g$. Let
  $g_{\eta} : [0, 1]^2 \to [-1, 1]^2$ be the function that the Turing Machine
  $\calC_g$ evaluate when the requested accuracy is $\eta > 0$. For each vertex
  $\vecv = (v_1, \ldots, v_n) \in \p{\nm{N}}^d$ of the $d$-dimensional grid its
  coloring sequence $\calC_l(\vecv) \in \{-1, 1\}^d$ is constructed as follows: For each coordinate $j=1,\ldots,d$,
    \[ \calC_l^j( \vecv) = \begin{cases}

                        1  &  v_j = 0 \\
                                              -1  &  v_j = 2^n - 1\\
                                              \mathrm{sign}\left(g_j\p{\frac{v_1}{N - 1}, \ldots, \frac{v_n}{N - 1}}\right)  & \text{otherwise}
                       \end{cases}, \]

  \noindent where $\mathrm{sign} : [-1, 1] \mapsto \{-1, 1\}$ is the sign
  function and $g_{\eta, j}(\cdot)$ is the $j$-th coordinate of $g_{\eta}$.
  Observe that since $M : [0, 1]^d \to [0, 1]^d$, for any vertex $\vecv$ with
  $v_j = 0$ it holds that $\calC_l^j(\vecv) = + 1$ and respectively for any
  vertex $\vecv$ with $v_j = N - 1$ it holds that $\calC_l^j(\vecv) = - 1$ due
  to the fact that the value of $M$ is always in $[0, 1]^d$ and hence there are
  no vertices in the grid satisfying the possible outputs 2. or 3. of the
  $\HD$-$\BiSperner$ problem. Thus the only possible solution of the above
  $\HD$-$\BiSperner$ instance is a sequence of $2 d$ vertices $\vecv^{(1)}$,
  $\ldots$, $\vecv^{(d)}$, $\vecu^{(1)}$, $\ldots$, $\vecu^{(d)}$ on the same
  cubelet that certify that the corresponding cubelet is panchromatic, as per
  possible output 1. of the $\HD$-$\BiSperner$ problem. We next prove that any
  vertex $\vecv$ of that cubelet it holds that
  \[ \abs{g_j\p{\frac{\vecv}{N - 1}}} \leq \frac{2\sqrt{d}}{\gamma N} \text{ ~~ for all coordinates } j = 1, \ldots, d. \]
  \noindent Let $\vecv$ be any vertex on the same cubelet with the output
  vertices $\vecv^{(1)}$, $\ldots$, $\vecv^{(d)}$, $\vecu^{(1)}$, $\ldots$,
  $\vecu^{(d)}$. From the guarantees of colors of the sequences $\vecv^{(1)}$,
  $\ldots$, $\vecv^{(d)}$, $\vecu^{(1)}$, $\ldots$, $\vecu^{(d)}$ we have that
  either $\calC_l^j(\vecv) \cdot \calC_l^j(\vecv^{(j)}) = -1$ or
  $\calC_l^j(\vecv) \cdot \calC_l^j(\vecu^{(j)}) = -1$, let $\bar{\vecv}^{(j)}$
  be the vertex $\vecv^{(j)}$ or $\vecu^{(j)}$ depending on which one the $j$th
  color has product equal to $-1$ with $\calC_l^j(\vecv)$. Now let
  $\eta = \frac{2 \sqrt{d}}{\gamma N}$ if
  $g_j\p{\frac{\vecv}{N - 1}} \in [-\eta, \eta]$ then the wanted inequality
  follows. On the other hand if $g_j\p{\frac{\vecv}{N - 1}} \in [-\eta, \eta]$
  then using the fact that
  $\norm{g\p{\frac{\vecv}{N - 1}} - g_{\eta}\p{\frac{\vecv}{N - 1}}}_{\infty} \le \eta$
  and that from the definition of the colors we have that either
  $g_{\eta, j}\p{\frac{\vecv}{N - 1}} \ge 0$,
  $g_{\eta, j}\p{\frac{\bar{\vecv}^{(j)}}{N - 1}} < 0$ or
  $g_{\eta, j}\p{\frac{\vecv}{N - 1}} < 0$,
  $g_{\eta, j}\p{\frac{\hat{\vecv}^{(j)}}{N - 1}} \geq 0$ we conclude that
  $g_j\p{\frac{\vecv}{N - 1}} \ge 0$,
  $g_j\p{\frac{\bar{\vecv}^{(j)}}{N - 1}} < 0$ or
  $g_j\p{\frac{\vecv}{N - 1}} < 0$,
  $g_j\p{\frac{\hat{\vecv}^{(j)}}{N - 1}} \geq 0$ and thus,
  \[ \abs{g_j\p{\frac{\vecv}{N - 1}}} \leq \abs{g_j\p{\frac{\vecv}{N - 1}} - g_j\p{\frac{\bar{\vecv}^{(j)}}{N - 1}}} \le \p{1 + \frac{1}{\gamma}} \cdot \norm{\frac{\vecv}{N - 1} - \frac{\bar{\vecv}^{(j)}}{N - 1}}_2 \le \frac{2 \sqrt{d}}{\gamma N} \]
  \noindent where in the second inequality we have used the
  $(1 + 1/\gamma)$-Lipschitzness of $g$. As a result, the point
  $\hat{\vecv} = \vecv/(N - 1) \in [0,1]^d$ satisfies
  $\norm{M(\hat{\vecv}) - \hat{\vecv}}_2 \leq 2 d/(\gamma N)$ and thus for if we
  pick $N = \Theta(d/\gamma^2)$ then any vertex $\vecv$ of the panchromatic
  cell is a solution for \textsc{$\gamma$-SuccinctBrouwer}.
\end{proof}

\noindent Now that we have established the $\PPAD$-hardness of
$\HD$-$\BiSperner$ we are ready to present our main result of this section which
is a reduction from the problem $\HD$-$\BiSperner$ to the problem $\gdaFixed$
with the additional constraints that the scalars $\alpha$, $G$, $L$ in the input
satisfy $1/\alpha = \poly(d)$, $G = \poly(d)$, and $L = \poly(d)$.

\subsection{From High Dimensional Bi-Sperner to Fixed Points of Gradient Descent/Ascent} \label{sec:hardness:mainConstruction}

  Given the binary circuit $\calC_l : \p{\nm{N}}^d \to \{-1, +1\}^d$ that is an
instance of $\HD$-$\BiSperner$, we construct a $G$-Lipschitz and $L$-smooth
function $f_{\calC_l} : [0, 1]^d \times [0, 1]^d \to \R$. To do so, we divide
the $[0, 1]^d$ hypercube into cubelets of length $\delta = 1/(N - 1)$. The
corners of such cubelets have coordinates that are integer multiples of
$\delta = 1/(N - 1)$ and we call them \textit{vertices}. Each vertex can be
represented by the vector $\vecv = (v_1, \ldots, v_d) \in \p{\nm{N}}^d$ and
admits a coloring sequence defined by the boolean circuit
$\calC_l : \p{\nm{N}}^d \to \{-1, +1\}^d$. For every $\vecx \in [0, 1]^d$, we
use $R(\vecx)$ to denote the cubelet that contains $\vecx$, formally
\[ R(\vecx) = \left[\frac{c_1}{N - 1}, \frac{c_1 + 1}{N - 1} \right] \times \cdots \times \left[\frac{c_d}{N - 1}, \frac{c_d + 1}{N - 1} \right] \]
\noindent where $\vecc \in \p{\nm{N - 1}}^d$ such that
$\vecx \in \left[\frac{c_1}{N - 1}, \frac{c_1 + 1}{N - 1} \right] \times \cdots \times \left[\frac{c_d}{N - 1}, \frac{c_d + 1}{N - 1} \right]$
and if there are multiple corners $\vecc$ that satisfy this condition then we
choose $R(\vecx)$ to be the cell that corresponds to the $\vecc$ that is
lexicographically first among those that satisfy the condition. We also define
$R_c(\vecx)$ to be the set of vertices that are corners of the cublet
$R(\vecx)$, namely
\[ R_c(\vecx) = \set{c_1, c_1 + 1} \times \cdots \times \set{c_d, c_d + 1} \]
\noindent where $\vecc \in \p{\nm{N - 1}}^d$ such that
$R(\vecx) = \left[\frac{c_1}{N - 1}, \frac{c_1 + 1}{N - 1} \right] \times \cdots \times \left[\frac{c_d}{N - 1}, \frac{c_d + 1}{N - 1} \right]$
Every $\vecy$ that belongs to the cubelet $R(\vecx)$ can be written as a convex
combination of the vectors $\vecv/(N - 1)$ where $\vecv \in R_c(\vecx)$. The value of
the function $f_{\calC_l}(\vecx, \vecy)$ that we construct in this section is
determined by the coloring sequences $\calC_l(\vecv)$ of the vertices
$\vecv \in R_c(\vecx)$. One of the main challenges that we face though is that
the size of $R_c(\vecx)$ is $2^d$ and hence if we want to be able to compute the
value of $f_{\calC_l}(\vecx, \vecy)$ efficiently then we have to find a
consistent rule to pick a subset of the vertices of $R_c(\vecx)$ whose coloring
sequence we need to define the function value $f_{\calC_l}(\vecx, \vecy)$.
Although there are traditional ways to overcome this difficulty using the
\textit{canonical simplicization} of the cubelet $R(\vecx)$, these technique
leads only to functions that are continuous and Lipschitz but they are not
enough to guarantee continuity of the gradient and hence the resulting functions
are not smooth.

\subsubsection{Smooth and Efficient Interpolation Coefficients} \label{sec:hardness:seic}

  The problem of finding a computationally efficient way to define a continuous
function as an interpolation of some fixed function in the corners of a cubelet
so that the resulting function is both Lischitz and smooth is surprisingly
difficult to solve. For this reason we introduce in this section the
\textit{smooth and efficient interpolation coefficients (SEIC)} that as we will
see in Section \ref{sec:hardness:definitionOfFunction}, is the main technical
tool to implement such an interpolation. Our novel interpolation coefficients
are of independent interest and we believe that they will serve as a main
technical tool for proving other hardness results in continuous optimization in
the future.

  In this section we only give a high level description of the smooth and
efficient interpolation coefficients via their properties that we use in Section
\ref{sec:hardness:definitionOfFunction} to define the function $f_{\calC_l}$.
The actual construction of the coefficients is very challenging and technical
and hence we postpone a detail exposition for Section \ref{sec:seic}.

\begin{restatable}[Smooth and Efficient Interpolation Coefficients]{definition}{seic} \label{def:seic}
    For every $N \in \N$ we define the set of
  \textit{smooth and efficient interpolation coefficients (SEIC)} as the family
  of functions, called \textit{coefficients},
  $\calI_{d, N} = \set{\P_{\vecv} : [0, 1]^d \to \R \mid \vecv \in \p{\nm{N}}^d}$
  with the following properties.
  \begin{enumerate}[label=(\Alph*)]
    \item For all vertices $\vecv \in \p{\nm{N}}^d$, the coefficient
          $\P_{\vecv}(\vecx)$ is a twice-differentiable function and satisfies
          \begin{enumerate}
            \item[$\blacktriangleright$] $\abs{\frac{\partial \P_{\vecv}(\vecx)}{\partial x_i}} \leq \Theta(d^{12}/\delta)$.
            \item[$\blacktriangleright$] $\abs{\frac{\partial^2 \mathrm{P}_{\vecv}(\vecx)}{\partial x_i ~\partial x_\ell}} \leq \Theta(d^{24}/\delta^2)$.
          \end{enumerate} \label{enu:properties:Coefficients:1}
    \item For all $\vecv \in \p{\nm{N}}^d$, it holds that
          $\P_{\vecv} (\vecx) \ge 0$ and
          $\sum_{\vecv \in \p{\nm{N}}^d}\P_{\vecv}(\vecx) = \sum_{\vecv \in R_c(\vecx)}\P_{\vecv}(\vecx) = 1$.
          \label{enu:properties:Coefficients:2}
    \item For all $\vecx \in [0, 1]^d$, it holds that all but $d + 1$ of the
          coefficients $\P_{\vecv} \in \calI_{d, N}$ satisfy
          $\P_{\vecv} (\vecx) = 0$, $\nabla \P_{\vecv} (\vecx) = 0$ and
          $\nabla^2 \P_{\vecv} (\vecx) = 0$. We denote this set of $d + 1$
          vertices by $R_+(\vecx)$. Furthermore, it holds that
          $R_+(\vecx) \subseteq R_c(\vecx)$ and given $\vecx$ we can compute the
          set $R_{+}(\vecx)$ it time $\poly(d)$.
          \label{enu:properties:Coefficients:3}
    \item For all $\vecx \in [0, 1]^d$, if $x_i \le 1/(N - 1)$ for some $i \in [d]$
          then there exists $\vecv \in R_+(\vecx)$ such that $v_i = 0$.
          Respectively, if $x_i \ge 1 - 1/(N - 1)$ then there exists
          $\vecv \in R_+(\vecx)$ such that $v_i = 1$.
          \label{enu:properties:Coefficients:4}
  \end{enumerate}
\end{restatable}

\noindent An intuitive explanation of the properties of the SEIC coefficients is
the following
\begin{description}
  \item[\ref{enu:properties:Coefficients:1} --] The coefficients $\P_{\vecv}$
       are both Lipschitz and smooth with Lipschitzness and smoothness
       parameters that depends polynomially in $d$ and $N = 1/\delta + 1$.
  \item[\ref{enu:properties:Coefficients:2} --] The coefficients
       $\P_{\vecv}(\vecx)$ define a convex combination of the vertices
       $R_c(\vecx)$.
  \item[\ref{enu:properties:Coefficients:3} --] For every $\vecx \in [0, 1]^d$,
       out of the $N^d$ coefficients $\P_{\vecv}$ only $d + 1$ have non-zero
       value, or non-zero gradient or non-zero Hessian when evaluated at the
       point $\vecx$. Moreover, given $\vecx \in [0, 1]^d$  we can identify
       these $d + 1$ coefficients efficiently.
  \item[\ref{enu:properties:Coefficients:4} --] For every $\vecx \in [0, 1]^d$
       that is in a cubelet that touches the boundary there is at least one of
       the vertices in $R_+(\vecx)$ that is on the boundary of the continuous
       hypercube $[0, 1]^d$.
\end{description}

  In Section \ref{sec:localNash} in the proof of Theorem
\ref{thm:globalMinOracle} we present a simple application of the existence of
the SEIC coefficients for proving very simple black box oracle lower bounds for
the global minimization problem.

  Based on the existence of these coefficients we are now ready to define the
function $f_{\calC_l}$ which is the main construction of our reduction.

\subsubsection{Definition of a Lipschitz and Smooth Function Based on a BiSperner Instance} \label{sec:hardness:definitionOfFunction}

  In this section our goal is to formally define the function $f_{\calC_l}$ and
prove its Lipschitzness and smoothness properties in Lemma \ref{l:smoothness_d}.

\begin{definition}[Continuous and Smooth Function from Colorings of Bi-Sperner] \label{d:d-payoff}
    Given a binary circuit $\calC_l : \p{\nm{N}}^d \to \{-1, 1\}^d$, we define
  the function $f_{\calC_l} : [0, 1]^d \times [0, 1]^d \to \R$ as follows
  \[ f_{\calC_l}(\vecx, \vecy) = \sum_{j = 1}^d (x_j - y_j)\cdot \alpha_j(\vecx) \]
  \noindent where
  $\alpha_j(\vecx) = - \sum_{\vecv \in \p{\nm{N}}^d} \P_{\vecv}(\vecx) \cdot \calC^j_l(\vecv)$,
  and $\P_{\vecv}$ are the coefficients defined in Definition \ref{def:seic}.
\end{definition}

  We first prove that the function $f_{\calC_l}$ constructed in Definition
\ref{d:d-payoff} is $G$-Lipschitz and $L$-smooth for some appropriately selected
parameters $G$, $L$ that are polynomial in the dimension $d$ and in the
discretization parameter $N$. We use this property to establish that
$f_{\calC_l}$ is a valid input to the promise problem $\gdaFixed$.

\begin{lemma} \label{l:smoothness_d}
    The function $f_{\calC_l}$ of Definition~\ref{d:d-payoff} is
  $O(d^{15}/\delta)$-Lipschitz and $O(d^{27}/\delta^2)$-smooth.
\end{lemma}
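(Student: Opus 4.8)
The plan is to differentiate the closed form of Definition~\ref{d:d-payoff} directly, bound the resulting first- and second-order partial derivatives one entry at a time using the SEIC properties of Definition~\ref{def:seic}, and then pass from an entrywise Hessian bound to a spectral-norm bound. Before the estimates I would record that $f_{\calC_l}$ is twice continuously differentiable on $[0,1]^d\times[0,1]^d$: each coefficient $\P_{\vecv}$ is globally twice-differentiable by property~\ref{enu:properties:Coefficients:1}, and by property~\ref{enu:properties:Coefficients:3} at every $\vecx$ all but $d+1$ of the $\P_{\vecv}$, together with their gradients and Hessians, vanish, so each $\alpha_j(\vecx)=-\sum_{\vecv}\P_{\vecv}(\vecx)\calC_l^j(\vecv)$ is a locally finite sum of $C^2$ functions and hence $C^2$, and therefore so is $f_{\calC_l}$. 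Two consequences of the SEIC properties that I will use throughout: since $\abs{\calC_l^j(\vecv)}=1$ and $\sum_{\vecv}\P_{\vecv}(\vecx)=1$ by property~\ref{enu:properties:Coefficients:2}, we have $\abs{\alpha_j(\vecx)}\le 1$; and since only $d+1$ of the coefficients have nonzero derivatives at any point, property~\ref{enu:properties:Coefficients:1} gives $\abs{\partial\alpha_j/\partial x_i}\le (d+1)\cdot\Theta(d^{12}/\delta)=\Theta(d^{13}/\delta)$ and $\abs{\partial^2\alpha_j/\partial x_i\,\partial x_\ell}\le (d+1)\cdot\Theta(d^{24}/\delta^2)=\Theta(d^{25}/\delta^2)$.

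For the Lipschitz bound I would use $\partial f_{\calC_l}/\partial y_j=-\alpha_j(\vecx)$, so each of the $d$ partials in $\vecy$ is bounded by $1$; and $\partial f_{\calC_l}/\partial x_i=\alpha_i(\vecx)+\sum_{j=1}^d (x_j-y_j)\,\partial\alpha_j(\vecx)/\partial x_i$, which, since $\abs{x_j-y_j}\le 1$ on the unit cube, is bounded by $1+d\cdot\Theta(d^{13}/\delta)=\Theta(d^{14}/\delta)$. Summing the squares of the $2d$ coordinates of $\nabla f_{\calC_l}$ and using $\delta\le 1$ yields $\norm{\nabla f_{\calC_l}}_2\le O(d^{15}/\delta)$, the claimed Lipschitzness.

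For the smoothness bound I would split $\nabla^2 f_{\calC_l}$ into its $\vecx\vecx$, $\vecx\vecy$, and $\vecy\vecy$ blocks. The $\vecy\vecy$-block is identically zero, because $\partial f_{\calC_l}/\partial y_j=-\alpha_j(\vecx)$ does not depend on $\vecy$. The $\vecx\vecy$-entries are $\partial^2 f_{\calC_l}/\partial x_i\,\partial y_j=-\partial\alpha_j/\partial x_i$, hence bounded by $\Theta(d^{13}/\delta)$. Differentiating $\partial f_{\calC_l}/\partial x_i$ once more in $x_\ell$ gives $\partial^2 f_{\calC_l}/\partial x_i\,\partial x_\ell=\partial\alpha_i/\partial x_\ell+\partial\alpha_\ell/\partial x_i+\sum_{j=1}^d (x_j-y_j)\,\partial^2\alpha_j/\partial x_i\,\partial x_\ell$, which is bounded by $2\,\Theta(d^{13}/\delta)+d\cdot\Theta(d^{25}/\delta^2)=\Theta(d^{26}/\delta^2)$ using $\delta\le 1$. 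Thus every entry of the $2d\times 2d$ matrix $\nabla^2 f_{\calC_l}$ is $O(d^{26}/\delta^2)$, so $\norm{\nabla^2 f_{\calC_l}}_2\le\norm{\nabla^2 f_{\calC_l}}_F\le 2d\cdot O(d^{26}/\delta^2)=O(d^{27}/\delta^2)$, which is the claimed smoothness.

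The computations above are all routine; the one point that genuinely carries the proof is the repeated appeal to property~\ref{enu:properties:Coefficients:3}. The formal sums run over $N^d$ coefficients, and without the guarantee that at each point only $d+1$ of them (and of their derivatives) survive, the Lipschitz and smoothness parameters would blow up with $N^d$ rather than staying $\poly(d,1/\delta)$ — this is exactly where the SEIC construction earns its keep. I expect no obstacle from differentiability across cubelet boundaries, since property~\ref{enu:properties:Coefficients:1} already asserts that the coefficients are $C^2$ on all of $[0,1]^d$, so no separate gluing argument is needed here; property~\ref{enu:properties:Coefficients:4} is not used in this lemma and will only be invoked later, when matching solutions of $\gdaFixed$ on the boundary of the polytope to boundary violations of $\HD$-$\BiSperner$.
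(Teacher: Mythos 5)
Your proposal is correct and follows essentially the same route as the paper's own proof: both differentiate the closed form of Definition~\ref{d:d-payoff}, use property~\ref{enu:properties:Coefficients:2} to get $\abs{\alpha_j}\le 1$ and properties~\ref{enu:properties:Coefficients:1} and~\ref{enu:properties:Coefficients:3} to get $\abs{\partial\alpha_j/\partial x_i}\le\Theta(d^{13}/\delta)$ and $\abs{\partial^2\alpha_j/\partial x_i\,\partial x_\ell}\le\Theta(d^{25}/\delta^2)$, and then bound the gradient coordinatewise and the Hessian via its Frobenius norm to obtain the stated $O(d^{15}/\delta)$ and $O(d^{27}/\delta^2)$ constants.
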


\begin{proof}
    If we take the derivative with respect to $x_i$ and $y_i$ and using property
  \ref{enu:properties:Coefficients:2} of the coefficients $\P_{\vecv}$ we get
  the following relations,
  \[ \frac{\partial f_{\calC_l}(\vecx, \vecy)}{\partial x_i} = \sum_{j = 1}^d (x_j - y_j) \cdot \frac{\partial \alpha_j(\vecx)}{\partial x_i} + \alpha_i(\vecx) ~~~ \text{ and } ~~~ \frac{\partial f_{\calC_l}(\vecx, \vecy)}{\partial y_i} = - \alpha_i(\vecx) \]
  where
  \[ \alpha_i(\vecx) = - \sum_{\vecv \in \p{\nm{N}}^d} \P_{\vecv}(\vecx) ~~~ \text{ and } ~~~ \frac{\partial \alpha_j(\vecx)}{\partial x_i} = - \sum_{\vecv \in \p{\nm{N}}^d} \frac{\partial \P_{\vecv}(\vecx)}{\partial x_i} \cdot \calC_l^j(\vecv). \]
  \noindent Now by the property \ref{enu:properties:Coefficients:3} of
  Definition \ref{def:seic} there are most $d + 1$ vertices $\vecv$ of
  $R_c(\vecx)$ with the property $\nabla \P_{\vecv}(\vecx) \neq 0$. Then if we
  also use property \ref{enu:properties:Coefficients:1} we get
  $\abs{\frac{\partial \alpha_j(\vecx)}{\partial x_i}} \leq \Theta(d^{13}/\delta)$
  and using the property \ref{enu:properties:Coefficients:2} we get
  $\abs{\alpha_i(\vecx)} \le 1$. Thus
  $\abs{\frac{\partial f_{\calC_l}(\vecx, \vecy)}{\partial x_i}} \leq \Theta(d^{14}/\delta)$
  and
  $\abs{\frac{\partial f_{\calC_l}(\vecx, \vecy)}{\partial y_i}} \leq \Theta(d)$.
  Therefore we can conclude that
  $\norm{\nabla f_{\calC_l}(\vecx, \vecy)}_2 \le \Theta(d^{15}/\delta)$ and
  hence this proves that the function $f_{\calC_l}$ is Lipschitz continuous with
  Lipschitz constant $\Theta(d^{15}/\delta)$.

    To prove the smoothness of $f_{\calC_l}$, we use the property
  \ref{enu:properties:Coefficients:2} of the Definition \ref{def:seic} and we
  have
  \begin{align*}
    & \frac{\partial^2 f_{\calC_l}(\vecx, \vecy)}{\partial x_i~\partial x_\ell} = \sum_{j = 1}^d (x_j - y_j)\cdot \frac{\partial^2 \alpha_j(\vecx)}{\partial x_i~\partial x_\ell} + \frac{\partial \alpha_{\ell}(\vecx)}{\partial x_i} + &\frac{\partial \alpha_i(\vecx)}{\partial x_\ell}, \\
    & \frac{\partial^2 f_{\calC_l}(\vecx, \vecy)}{\partial x_i~\partial y_\ell} = - \frac{\partial \alpha_{\ell}(\vecx)}{\partial x_i}, \quad \text{and} \quad
    \frac{\partial^2 f_{\calC_l}(\vecx, \vecy)}{\partial y_i~\partial y_\ell} = 0
  \end{align*}
  where
  \[ \frac{\partial^2 \alpha_j(\vecx)}{\partial x_i ~\partial x_\ell} = - \sum_{\vecv \in \p{\nm{N}}^d} \frac{\partial^2 \P_{\vecv}(\vecx)}{\partial x_i~\partial x_\ell} \cdot \calC_l^j(\vecv) \]

  \noindent Again using the property \ref{enu:properties:Coefficients:3} of
  Definition \ref{def:seic} we get that there are most $d + 1$ vertices $\vecv$
  of $R_c(\vecx)$ such that $\nabla^2 \P_{\vecv}(\vecx) \neq 0$. This together
  with the property \ref{enu:properties:Coefficients:1} of Definition
  \ref{def:seic} leads to the fact that
  $\abs{\frac{\partial^2 \alpha_j(\vecx)}{\partial x_i ~\partial x_\ell}} \leq \Theta(d^{25}/\delta^2)$.
  Using the later together with the bounds that we obtained for
  $\abs{\frac{\partial \alpha_j(\vecx)}{\partial x_i}}$ in the beginning of the
  proof we get that
  $\norm{\nabla^2 f_{\calC_l}(\vecx, \vecy)}_{F} \leq \Theta(d^{27}/\delta^2)$,
  where with $\norm{\cdot}_F$ we denote the Frobenious norm. Since the bound on
  the Frobenious norm is a bound to the spectral norm too, we get that the
  function $f_{\calC_l}$ is $\Theta(d^{27}/\delta^2)$-smooth.
\end{proof}

\subsubsection{Description and Correctness of the Reduction -- Proof of Theorem \ref{thm:localNashHardness}} \label{sec:hardness:Reduction}

  We start with a description of the reduction from $\HD$-$\BiSperner$ to
$\gdaFixed$. Suppose we have an instance of $\HD$-$\BiSperner$ given by boolean
circuit $\calC_l : \p{\nm{N}}^d \to \{-1, 1\}^d$, we construct an instance of
$\gdaFixed$ according to the following set of rules.
\smallskip

\noindent \textbf{$\boldsymbol(\star)$ Construction of Instance for Fixed Points of Gradient Descent/Ascent.} \label{lbl:constructionHighD}
\begin{enumerate}
  \item[$\blacktriangleright$] The payoff function is the real-valued function
  $f_{\calC_l}(\vecx, \vecy)$ from the Definition \ref{d:d-payoff}.
  \item[$\blacktriangleright$] The domain is the polytope $\calP(\matA, \vecb)$
  that we described in Section \ref{sec:computational}. The matrix $\matA$ and
  the vector $\vecb$ are computed so that the following inequalities hold
  \begin{equation} \label{eq:HDproof:domainDefinition}
    x_i - y_i \le \Delta, ~~ y_i - x_i \le \Delta ~~~~ \text{for all} ~~ i \in [d]
  \end{equation}
  where $\Delta = t \cdot \delta / d^{14}$, with $t \in \R_+$ be a constant such
  that
  $\abs{\frac{\partial \P_{\vecv}(\vecx)}{\partial x_i}} \cdot \frac{\delta}{d^{12}} t \le \frac{1}{2}$,
  for all $\vecv \in \p{\nm{N}}^d$ and $\vecx \in [0, 1]^d$. The fact that such
  a constant $t$ exists follows from the property
  \ref{enu:properties:Coefficients:1} of the smooth and efficient coefficients.
  \item[$\blacktriangleright$] The parameter $\alpha$ is set to be equal to
  $\Delta/3$.
  \item[$\blacktriangleright$] The parameters $G$ and $L$ are set to be equal to
  the upper bounds on the Lipschitzness and the smoothness of $f_{\calC_l}$
  respectively that we derived in Lemma \ref{l:smoothness_d}. Namely we have
  that $G = O(d^{15}/\delta)$ and $L = O(d^{27}/\delta^2)$.
\end{enumerate}

  The first thing to observe is that the afore-described reduction is
polynomial-time. For this observe that all of $\alpha$, $G$, $L$, $\matA$, and
$\vecb$ have representation that is polynomial in $d$ even if we use unary
instead of binary representation. So the only thing that remains is the
existence of a Turing machine $\calC_{f_{\calC_l}}$ that computes the function
and the gradient value of $f_{\calC_l}$ in time polynomial to the size of
$\calC_l$ and the requested accuracy. To prove this we need a detailed
description of the SEIC coefficients and for this reason we postpone the proof
of this to the Appendix \ref{sec:proof:TuringMachine}. Here we state the
formally the result that we prove in the Appendix \ref{sec:proof:TuringMachine}
which together with the discussion above proves that our reduction is indeed
polynomial-time.

\begin{theorem}\label{t:Turing_Machine}
    Given a binary circuit $\calC_l: \p{\nm{N}}^d \rightarrow \{-1, 1\}^d$ that
  is an input to the $\HD$-$\BiSperner$ problem. Then, there exists a
  polynomial-time Turing machine $\calC_{f_{\calC_l}}$, that can be constructed
  in polynomial-time from the circuit $\calC_l$ such that for all vector
  $\vecx, \vecy \in [0,1]^d$ and accuracy $\eps >0$, $\calC_{f_{\calC_l}}$
  computes both $z \in \R$ and $\vecw \in \R^d$ such that
  \[ \abs{z - f_{\calC_l}(\vecx, \vecy)} \le \eps, \quad  \norm{\vecw - \nabla f_{\calC_l}(\vecx, \vecy)}_2 \le \eps. \]
  Moreover the running time of $\calC_{f_{\calC_l}}$ is polynomial in the binary
  representation of $\vecx$, $\vecy$, and $\log(1/\eps)$.
\end{theorem}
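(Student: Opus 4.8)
The plan is to reduce the evaluation of $f_{\calC_l}$ and $\nabla f_{\calC_l}$ at a query point $(\vecx,\vecy)$ to a bounded number of evaluations of the SEIC coefficients $\P_{\vecv}$ of Definition~\ref{def:seic} and a bounded number of calls to the input circuit $\calC_l$, and then to argue that each of these sub-tasks is polynomial-time and controllable in precision. First I would locate the cubelet containing $\vecx$: from the binary representations of the coordinates, compute the integer part $\lfloor x_i(N-1)\rfloor$ for each $i$ (with the lexicographic tie-breaking rule used to define $R(\vecx)$), which identifies $R(\vecx)$ in time polynomial in $\abs{\vecx}$ and $\log N$. Then I would invoke property~\ref{enu:properties:Coefficients:3}, which guarantees that the set $R_+(\vecx)\subseteq R_c(\vecx)$ of the (at most) $d+1$ vertices with nonzero coefficient, gradient, or Hessian at $\vecx$ is computable in time $\poly(d)$. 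By property~\ref{enu:properties:Coefficients:2} this means $\alpha_j(\vecx) = -\sum_{\vecv\in R_+(\vecx)}\P_{\vecv}(\vecx)\,\calC_l^j(\vecv)$ and $\partial\alpha_j(\vecx)/\partial x_i = -\sum_{\vecv\in R_+(\vecx)}(\partial\P_{\vecv}(\vecx)/\partial x_i)\,\calC_l^j(\vecv)$ are each sums of only $d+1$ terms. Evaluating $\calC_l$ on the $d+1$ vertices of $R_+(\vecx)$ costs $\poly(\abs{\calC_l})$, and then assembling $f_{\calC_l}(\vecx,\vecy)=\sum_{j=1}^d(x_j-y_j)\alpha_j(\vecx)$, $\partial f_{\calC_l}/\partial x_i = \sum_j(x_j-y_j)\partial\alpha_j/\partial x_i + \alpha_i$, and $\partial f_{\calC_l}/\partial y_i = -\alpha_i$ requires only $O(d^2)$ arithmetic operations on the already-computed numbers.

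The heart of the argument is therefore to show that each $\P_{\vecv}(\vecx)$ and each $\partial\P_{\vecv}(\vecx)/\partial x_i$, for $\vecv\in R_+(\vecx)$, can be approximated to additive error $\eta$ in time $\poly(d,\log N,\log(1/\eta))$. This is where I would unpack the explicit construction of the SEIC coefficients from Section~\ref{sec:seic}: I expect each $\P_{\vecv}$ to be a closed-form expression obtained by a bounded-depth combination of elementary operations --- affinely rescaled local coordinates, products, sums, and compositions with the smooth step polynomial $S_1(z)=3z^2-2z^3$ together with its first two derivatives --- in which the number of operations is $\poly(d)$ and all intermediate quantities are bounded in magnitude by $\poly(d,N)$. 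Each such operation can be carried out in fixed-point arithmetic at precision $p$ bits in time $\poly(p)$, and the accumulated rounding error over the bounded computation graph is at most $2^{-p}\cdot\poly(d,N)$; taking $p = \log(1/\eta)+\poly(\log d,\log N)$ suffices. Since the partial derivatives $\partial\P_{\vecv}/\partial x_i$ are, by construction, differentiations of the same expression and hence have a closed form of the same shape, the identical precision analysis applies to them.

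Finally I would propagate the errors through the assembly. Using the a priori bounds $\abs{\P_{\vecv}(\vecx)}\le 1$ from property~\ref{enu:properties:Coefficients:2}, $\abs{\partial\P_{\vecv}(\vecx)/\partial x_i}\le\Theta(d^{12}/\delta)$ from property~\ref{enu:properties:Coefficients:1}, $\abs{x_j-y_j}\le 1$, and $\abs{\calC_l^j(\vecv)}=1$, a per-coefficient error of $\eta$ contributes error at most $\poly(d,1/\delta)\cdot\eta$ to $f_{\calC_l}(\vecx,\vecy)$ and to every entry of $\nabla f_{\calC_l}(\vecx,\vecy)$. Because for $\HD$-$\BiSperner$ the grid size $N$, and hence $1/\delta = N-1$, is polynomial in the input size, choosing $\eta = \eps/\poly(d,N)$ yields the required accuracy, while $\log(1/\eta) = \log(1/\eps) + \poly(\log d,\log N)$ keeps the running time polynomial in the binary representation of $\vecx$, $\vecy$, and $\log(1/\eps)$. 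The main obstacle is none of these routine steps but rather the prerequisite from Section~\ref{sec:seic}: verifying that the SEIC coefficients genuinely admit such a polynomial-size, polynomially-bounded closed form, and in particular that the ``only $d+1$ nonzero, efficiently identifiable'' guarantee~\ref{enu:properties:Coefficients:3} is realized by a selection rule that is itself polynomial-time computable; everything downstream of that construction is bookkeeping and error accounting.
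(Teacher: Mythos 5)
Your outer reduction matches the paper's: locate the cubelet, use property~\ref{enu:properties:Coefficients:3} to compute the at most $d+1$ vertices of $R_+(\vecx)$ in $\poly(d)$ time, query $\calC_l$ only on those vertices, and assemble $f_{\calC_l}$ and $\nabla f_{\calC_l}$ from the formulas in Definition~\ref{d:d-payoff} and Lemma~\ref{l:smoothness_d}. But the step you defer --- approximating each $\P_{\vecv}(\vecx)$ and $\partial\P_{\vecv}(\vecx)/\partial x_i$ --- is exactly where the paper's proof lives (Lemmas~\ref{l:computing_S_infty}, \ref{l:computing_Q}, \ref{l:computing_P}), and your sketch of it rests on a wrong premise. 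The SEIC coefficients are \emph{not} built from the cubic step $S_1(z)=3z^2-2z^3$ (that is only the four-dimensional warm-up of Section~\ref{sec:hardness:2D}); by Definitions~\ref{def:smoothStep}, \ref{d:cof} and \ref{d:_coeff} they are products of up to $d^2$ factors of the transcendental step $S_{\infty}(S(p_\ell)-S(p_j))$, where $S_{\infty}(x)=2^{-1/x}/(2^{-1/x}+2^{-1/(1-x)})$, normalized by the sum $\sum_{\vecv'\in R_c(\vecx)}Q^{\vecc}_{\vecv'}(\vecx)$. So one must first show that $S_{\infty}$ and $S'_{\infty}$ (exponentials, logarithms, and their Lipschitz compositions) are computable to $b$ bits in time $\poly(b)$, which the paper does via \cite{Brent1976}.

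More importantly, your precision budget $p=\log(1/\eta)+\poly(\log d,\log N)$ does not survive the normalization. The quantity $\P_{\vecv}=Q^{\vecc}_{\vecv}/\sum_{\vecv'}Q^{\vecc}_{\vecv'}$ is a ratio whose denominator can be exponentially small in $\poly(d)$: individual factors $S_{\infty}(S(p_\ell)-S(p_j))$ can be of order $2^{-\Theta(d^2)}$, and the paper must prove a lower bound of the form $\sum_{\vecv'}Q^{\vecc}_{\vecv'}(\vecx)\ge c^{d^2}2^{-\Theta(d^4)}$ (via the pigeonhole observation that some vertex $\vecv^{\ast}$ has all gaps $p_\ell-p_j\ge 1/(d+1)$, so Lemma~\ref{l:infty} applies to every factor of $Q_{\vecv^{\ast}}$) before the error in the quotient can be controlled. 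This forces working precision $b'=b+\Theta(d^2\log d)+\Theta(\log N)$ --- polynomial in $d$, so the theorem still holds, but polynomially many extra bits rather than the polylogarithmically many your accounting allows, and the lower bound on the denominator is a genuine lemma, not bookkeeping. The same issue recurs for the gradient, where $\partial\P_{\vecv}/\partial x_i$ is a quotient-rule expression with the square of that small denominator. You flag the dependence on Section~\ref{sec:seic} honestly, but as written your argument would not close without importing precisely these missing ingredients.
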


  We also observe that according to Lemma \ref{l:smoothness_d}, the function
$f_{\calC_l}$ is both $G$-Lipschitz and $L$-smooth and hence the output of our
reduction is a valid input for the constructed instance of the promise problem
$\gdaFixed$. The next step is to prove that the vector $\vecx^{\star}$ of every
solution $(\vecx^{\star}, \vecy^{\star})$ of $\gdaFixed$ with input as we
described above, lies in a cubelet that is either panchromatic according to
$\calC_l$ or is a violation of the rules for proper coloring of the
$\HD$-$\BiSperner$ problem.

\begin{lemma} \label{c:conditions_d}
   Let $\calC_l$ be an input to the $\HD$-$\BiSperner$ problem, let
  $f_{\calC_l}$ be the corresponding $G$-Lipschitz and $L$-smooth function
  defined in Definition \ref{d:d-payoff}, and let $\calP(\matA, \vecb)$ be the
  polytope defined by \eqref{eq:HDproof:domainDefinition}. If
  $(\vecx^{\star}, \vecy^{\star})$ is any solution to the $\gdaFixed$ problem
  with input $\alpha$, $G$, $L$, $\calC_{f_{\calC_l}}$, $\matA$, and $\vecb$,
  defined in \hyperref[lbl:constructionHighD]{$(\star)$} then the following
  statements hold, where we remind that $\Delta = t \cdot \delta / d^{14}$.
  \begin{enumerate}
    \item[$\diamond$] If $x_i^\star \in (\alpha, 1 - \alpha)$ and
    $x_i^{\star} \in (y_i^\star - \Delta + \alpha, y_i^\star + \Delta - \alpha)$
    then
    $\abs{\frac{\partial f_{\calC_l}(\vecx^\star,\vecy^\star)}{\partial x_i}}
    \leq \alpha$.
    \item[$\diamond$] If $x^\star_i \le \alpha$ or
    $x^\star_i \le y^\star_i - \Delta + \alpha$ then
    $\frac{\partial f_{\calC_l}(\vecx^\star,\vecy^\star)}{\partial x_i} \ge - \alpha$.
    \item[$\diamond$] If $x^\star_i \ge 1 - \alpha$ or
    $x^\star_i \ge y^\star_i + \Delta - \alpha$ then
    $\frac{\partial f_{\calC_l}(\vecx^\star, \vecy^\star)}{\partial x_i} \le \alpha$.
  \end{enumerate}
  The symmetric statements for $y_i^{\star}$ hold.
  \begin{enumerate}
    \item[$\diamond$] If $y_i^\star \in  (\alpha, 1 - \alpha)$ and
    $y_i^{\star} \in (x_i^\star - \Delta + \alpha, x_i^\star + \Delta - \alpha)$
    then
    $\abs{\frac{\partial f_{\calC_l}(\vecx^\star, \vecy^\star)}{\partial y_i}} \le \alpha$.
    \item[$\diamond$] If $y^\star_i \le \alpha$ or
    $y^\star_i \le x^\star_i - \Delta + \alpha$ then
    $\frac{\partial f_{\calC_l}(\vecx^\star, \vecy^\star)}{\partial y_i} \le \alpha$.
    \item[$\diamond$] If $y^\star_i \ge 1 - \alpha$ or
    $y^\star_i \ge x^\star_i + \Delta - \alpha$ then
    $\frac{\partial f_{\calC_l}(\vecx^\star,\vecy^\star)}{\partial y_i} \ge - \alpha$.
  \end{enumerate}
\end{lemma}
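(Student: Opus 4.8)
The plan is to mimic, coordinate by coordinate, the proof of Lemma~\ref{c:conditions}, exploiting the fact that the polytope $\calP(\matA,\vecb)$ defined by~\eqref{eq:HDproof:domainDefinition} couples the $\vecx$ and $\vecy$ variables only through the $d$ pairs $(x_i,y_i)$ and never across distinct coordinates. First I would recall from Definition~\ref{def:FGDA} that an $\alpha$-approximate fixed point $(\vecx^{\star},\vecy^{\star})$ of $F_{GDA}$ satisfies $\norm{\vecx^{\star} - \Pi_{K(\vecy^{\star})}\hat{\vecx}}_2 < \alpha$ and $\norm{\vecy^{\star} - \Pi_{K(\vecx^{\star})}\hat{\vecy}}_2 < \alpha$, where $\hat{\vecx} = \vecx^{\star} - \nabla_{\vecx} f_{\calC_l}(\vecx^{\star},\vecy^{\star})$ and $\hat{\vecy} = \vecy^{\star} + \nabla_{\vecy} f_{\calC_l}(\vecx^{\star},\vecy^{\star})$. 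In particular, writing $\vecz = \Pi_{K(\vecy^{\star})}\hat{\vecx}$ and $\vecw = \Pi_{K(\vecx^{\star})}\hat{\vecy}$, we get $\abs{x_i^{\star} - z_i} < \alpha$ and $\abs{y_i^{\star} - w_i} < \alpha$ for every $i \in [d]$.

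The key structural observation to establish next is that $K(\vecy^{\star}) = \set{\vecx \in [0,1]^d : \abs{x_i - y_i^{\star}} \le \Delta \text{ for all } i}$ is an axis-aligned box $\prod_{i=1}^d [\ell_i, u_i]$ with $\ell_i = \max(0, y_i^{\star} - \Delta)$ and $u_i = \min(1, y_i^{\star} + \Delta)$; this is immediate from~\eqref{eq:HDproof:domainDefinition} because those constraints decouple over $i$. Consequently the Euclidean projection onto $K(\vecy^{\star})$ acts independently on each coordinate, $z_i = \mathrm{median}(\ell_i, \hat{x}_i, u_i)$, and the same holds for $K(\vecx^{\star})$. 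With this in hand the statement reduces to $d$ independent one-dimensional claims, each of which is exactly the single-coordinate case already handled inside the proof of Lemma~\ref{c:conditions}.

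Then, for a fixed $i$, I would carry out the same three-case dichotomy as in Lemma~\ref{c:conditions}. For the first bullet, assuming $x_i^{\star} \in (\alpha, 1-\alpha)$, $x_i^{\star} \in (y_i^{\star} - \Delta + \alpha, y_i^{\star} + \Delta - \alpha)$ and $\abs{\partial f_{\calC_l}(\vecx^{\star},\vecy^{\star})/\partial x_i} > \alpha$: if $\hat{x}_i$ lies in the open interval $(\ell_i, u_i)$ then $z_i = \hat{x}_i$ and $\abs{x_i^{\star} - z_i} = \abs{\partial f_{\calC_l}/\partial x_i} > \alpha$, contradicting $\abs{x_i^{\star} - z_i} < \alpha$; otherwise $z_i \in \set{0, 1, y_i^{\star} - \Delta, y_i^{\star} + \Delta}$ and the two membership hypotheses on $x_i^{\star}$ force $\abs{x_i^{\star} - z_i} > \alpha$, again a contradiction. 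For the second bullet, assuming $x_i^{\star} \le \alpha$ or $x_i^{\star} \le y_i^{\star} - \Delta + \alpha$ together with $\partial f_{\calC_l}/\partial x_i < -\alpha$: then $\hat{x}_i > x_i^{\star} + \alpha$, so $z_i = \min(\hat{x}_i, u_i, 1) \ge \min(x_i^{\star} + \alpha, \Delta, 1)$, and plugging in $\alpha = \Delta/3$ (as set in construction \hyperref[lbl:constructionHighD]{$(\star)$}) yields $z_i - x_i^{\star} > \alpha$, a contradiction. The third bullet is symmetric to the second. Finally, the three statements about $y_i^{\star}$ follow verbatim after replacing $\nabla_{\vecx} f_{\calC_l}$ by $-\nabla_{\vecy} f_{\calC_l}$ (the sign flip coming from the ascent step in Definition~\ref{def:FGDA}) and $K(\vecy^{\star})$ by $K(\vecx^{\star})$, which accounts for the flipped direction of the inequalities in the $y$-part of the lemma.

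I do not expect a genuine obstacle here: the argument is a coordinate-wise lift of Lemma~\ref{c:conditions}, and the only points requiring care are confirming that the high-dimensional constraints still slice (at $\vecy = \vecy^{\star}$ or $\vecx = \vecx^{\star}$) to axis-aligned boxes, so that the projection truly decomposes, and that the arithmetic with the choice $\alpha = \Delta/3$ goes through exactly as in the two-dimensional case. Notably, nothing about the SEIC coefficients of Definition~\ref{def:seic} or the explicit value of $\delta$ enters this lemma; the dependence of $\Delta$ on $d$ is irrelevant to the argument, which is insensitive to it.
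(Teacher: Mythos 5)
Your proposal is correct and follows essentially the same route as the paper, which proves this lemma simply by noting the argument is identical to that of Lemma~\ref{c:conditions}: since the constraints in \eqref{eq:HDproof:domainDefinition} decouple across coordinates, $K(\vecy^{\star})$ and $K(\vecx^{\star})$ are axis-aligned boxes, the projection acts coordinate-wise, and the two-dimensional case analysis with $\alpha = \Delta/3$ transfers verbatim to each coordinate $i$ (with the sign flip for the ascent step handling the $y$-statements), exactly as you describe.
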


\begin{proof}
    The proof of this lemma is identical to the proof of Lemma
  \ref{c:conditions} and for this reason we skip the details of the proof here.
\end{proof}

\begin{lemma} \label{l:main_d}
   Let $\calC_l$ be an input to the $\HD$-$\BiSperner$ problem, let
  $f_{\calC_l}$ be the corresponding $G$-Lipschitz and $L$-smooth function
  defined in Definition \ref{d:d-payoff}, and let $\calP(\matA, \vecb)$ be the
  polytope defined by \eqref{eq:HDproof:domainDefinition}. If
  $(\vecx^{\star}, \vecy^{\star})$ is any solution to the $\gdaFixed$ problem
  with input $\alpha$, $G$, $L$, $\calC_{f_{\calC_l}}$, $\matA$, and $\vecb$,
  defined in \hyperref[lbl:constructionHighD]{$(\star)$}, then none of the
  following statements hold for the cubelet $R(\vecx^{\star})$.
  \begin{enumerate}
    \item $x_i^{\star} \ge 1/(N - 1)$ and for any $\vecv \in R_+(\vecx^{\star})$, it
    holds that $\calC_l^i(\vecv) = -1$.
    \item $x_i^{\star} \le 1 - 1/(N - 1)$ and for any $\vecv \in R_+(\vecx^{\star})$,
    it holds that $\calC_l^1(\vecv) = +1$.
  \end{enumerate}
\end{lemma}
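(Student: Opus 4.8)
The plan is to follow the structure of the proof of Lemma~\ref{l:main_2}, substituting the SEIC convex combination of Definition~\ref{d:d-payoff} for the explicit $S_1$-interpolation and Lemma~\ref{c:conditions_d} for Lemma~\ref{c:conditions}. It suffices to show that statement~1 cannot hold for any coordinate $i$, since statement~2 is the mirror image with all signs flipped. So fix $i$ and assume for contradiction that $x_i^\star \ge 1/(N-1)$ and $\calC_l^i(\vecv) = -1$ for every $\vecv \in R_+(\vecx^\star)$.

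First I would compute the two relevant partials of $f_{\calC_l}$ at $(\vecx^\star, \vecy^\star)$. By Definition~\ref{d:d-payoff}, $\partial f_{\calC_l}/\partial y_i = -\alpha_i(\vecx^\star)$ and $\partial f_{\calC_l}/\partial x_i = \alpha_i(\vecx^\star) + \sum_{j=1}^d (x_j^\star - y_j^\star)\,\partial\alpha_j(\vecx^\star)/\partial x_i$. Property~\ref{enu:properties:Coefficients:3} tells us that only the $d+1$ coefficients indexed by $R_+(\vecx^\star)$ are active at $\vecx^\star$ (nonzero value, gradient, or Hessian), while property~\ref{enu:properties:Coefficients:2} says these sum to $1$; since all of them have $i$-th color $-1$ by assumption, $\alpha_i(\vecx^\star) = \sum_{\vecv \in R_+(\vecx^\star)} \P_{\vecv}(\vecx^\star) = 1$, hence $\partial f_{\calC_l}/\partial y_i = -1$. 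For the $x_i$-derivative, each $\partial\alpha_j/\partial x_i$ is a sum over the same $d+1$ active vertices, so $\abs{\partial\alpha_j/\partial x_i} \le (d+1)\max_{\vecv}\abs{\partial\P_{\vecv}/\partial x_i} = \Theta(d^{13}/\delta)$ by property~\ref{enu:properties:Coefficients:1}, and $\abs{x_j^\star - y_j^\star} \le \Delta$ by the polytope constraints~\eqref{eq:HDproof:domainDefinition}; the choice $\Delta = t\delta/d^{14}$ in~\hyperref[lbl:constructionHighD]{$(\star)$} was engineered exactly so that $\sum_{j} \abs{x_j^\star - y_j^\star}\cdot\abs{\partial\alpha_j/\partial x_i} \le 1/2$, which gives $\partial f_{\calC_l}/\partial x_i \ge 1 - 1/2 = 1/2$.

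Next I would combine these estimates with the fixed-point property through Lemma~\ref{c:conditions_d}. Since $\partial f_{\calC_l}/\partial x_i \ge 1/2 > \alpha$, the contrapositive of its first bullet forbids $x_i^\star$ from lying simultaneously in $(\alpha, 1-\alpha)$ and in $(y_i^\star - \Delta + \alpha, y_i^\star + \Delta - \alpha)$, and the contrapositive of its third bullet forbids both $x_i^\star \ge 1-\alpha$ and $x_i^\star \ge y_i^\star + \Delta - \alpha$; together these force $x_i^\star \le \alpha$ or $x_i^\star \le y_i^\star - \Delta + \alpha$. The first is impossible because $x_i^\star \ge 1/(N-1) = \delta$ whereas $\alpha = \Delta/3 = t\delta/(3 d^{14}) < \delta$ (we may take the absolute constant $t$ to be at most $3$), so $x_i^\star \le y_i^\star - \Delta + \alpha$, i.e.\ $y_i^\star \ge x_i^\star + \Delta - \alpha$. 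But then the third bullet of the symmetric ($y$) part of Lemma~\ref{c:conditions_d} gives $\partial f_{\calC_l}/\partial y_i \ge -\alpha$, contradicting $\partial f_{\calC_l}/\partial y_i = -1$. This proves statement~1 cannot hold. For statement~2 the identical chain of inequalities, now with $\alpha_i(\vecx^\star) = -1$, $\partial f_{\calC_l}/\partial x_i \le -1/2$, $\partial f_{\calC_l}/\partial y_i = 1$, and the second bullets of Lemma~\ref{c:conditions_d}, yields the contradiction.

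I expect the only delicate point to be the derivative bound in the second paragraph: one must use property~\ref{enu:properties:Coefficients:3} to ensure that $\partial\alpha_j/\partial x_i$ collects only the $d+1$ active terms (otherwise the $\poly(d)/\delta$ bound of property~\ref{enu:properties:Coefficients:1} would be inflated by the exponentially large number $N^d$ of coefficients), and one must check that the constant $t$ fixed in~\hyperref[lbl:constructionHighD]{$(\star)$} truly drives the cross term below $1/2$ uniformly in $i$ and in $\vecx^\star$. Modulo this bookkeeping, the argument is a faithful transcription of the four-dimensional proof of Lemma~\ref{l:main_2}, with $R_c$ replaced by $R_+$ and the piecewise-polynomial interpolation replaced by the SEIC convex combination.
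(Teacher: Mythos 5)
Your proposal is correct and follows essentially the same route as the paper's proof: use properties \ref{enu:properties:Coefficients:1}--\ref{enu:properties:Coefficients:3} of the SEIC coefficients to get $\partial f_{\calC_l}/\partial y_i = -1$ exactly and $\partial f_{\calC_l}/\partial x_i \ge 1/2$ (the choice of $\Delta = t\delta/d^{14}$ killing the cross terms), then invoke Lemma~\ref{c:conditions_d} to force $x_i^\star \le y_i^\star - \Delta + \alpha$ and derive the contradiction via the third (resp.\ second) $y$-bullet. Your version is, if anything, slightly more careful than the paper's in retaining the $j=i$ term $(x_i^\star-y_i^\star)\,\partial\alpha_i/\partial x_i$ inside the bounded sum rather than substituting $\alpha_i \equiv 1$ before differentiating.
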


\begin{proof}
    We prove that there is no solution $(\vecx^\star, \vecy^\star)$ of
  $\gdaFixed$ that satisfies the statement 1. and the fact that
  $(\vecx^\star, \vecy^\star)$ cannot satisfy the statement 2. follows
  similarly. It is convenient for us to define
  $\hat{\vecx} = \vecx^\star - \nabla_x f_{\calC_l}(\vecx^{\star}, \vecy^{\star})$,
  $K(\vecy^{\star}) = \{\vecx \mid (\vecx,\vecy^\star) \in \calP (\matA,\vecb))\}$,
  $\vecz = \Pi_{K(\vecy^{\star})} \hat{\vecx}$, and
  $\hat{\vecy} = \vecy^\star - \nabla_y f_{\calC_l}(\vecx^{\star}, \vecy^{\star})$,
  $K(\vecx^{\star}) = \{\vecy \mid (\vecx^\star, \vecy) \in \calP (\matA,\vecb))\}$,
  $\vecw = \Pi_{K(\vecx^{\star})} \hat{\vecy}$.

    For the sake of contradiction we assume that there exists a solution of
  $(\vecx^\star,\vecy^\star)$ such that $x_1^{\star} \ge 1/(N - 1)$ and for any
  $\vecv \in R_+(\vecx^{\star})$ it holds that $\calC_l^i(\vecv) = -1$. Using
  this fact, we will prove that (1)
  $\frac{\partial f_{\calC_l}(\vecx^\star, \vecy^\star)}{\partial x_i} \geq 1/2$,
  and (2)
  $\frac{\partial f_{\calC_l}(\vecx^\star, \vecy^\star)}{\partial y_i} = -1$.

    Let
  $R(\vecx^{\star}) = \left[\frac{c_1}{N - 1}, \frac{c_1 + 1}{N - 1} \right] \times \cdots \times \left[\frac{c_d}{N - 1}, \frac{c_d + 1}{N - 1} \right]$,
  then since all the corners $\vecv \in R_+(\vecx^{\star})$ have
  $\calC_l^i(\vecv) = -1$, from the Definition \ref{d:d-payoff} we have that
  \begin{align*}
    f_{\calC_l}(\vecx^{\star}, \vecy^{\star})
      = & ~ (x_i^{\star} - y_i^{\star}) + \sum_{j = 1, j \neq i}^d (x_j^{\star} - y_j^{\star})\cdot \alpha_j(\vecx)
  \end{align*}
  If we differentiate this with respect to $y_i$ we immediately get that
  $\frac{\partial f_{\calC_l}(\vecx^{\star}, \vecy^{\star})}{\partial y_i} = -1$.
  On the other hand if we differentiate with respect to $x_i$ we get
  \begin{eqnarray*}
    \frac{\partial f_{\calC_l}(\vecx^{\star}, \vecy^{\star})}{\partial x_i}
    &  =  & 1 + \sum_{j = 1, j \neq i}^d (x_j - y_j) \cdot \frac{\partial \alpha_j(\vecx)}{\partial x_i} \\
    & \ge & 1 - \sum_{j \neq i} \abs{x_j - y_j} \cdot \abs{\frac{\partial \alpha_j(\vecx)}{\partial x_i}} \\
    & \ge & 1 - \Delta \cdot d \cdot \Theta\left(\frac{d^{13}}{\delta}\right) \\
    & \ge & 1/2
  \end{eqnarray*}
  where the above follows from the following facts: (1) that
  $\abs{\frac{\partial \alpha_j(\vecx)}{\partial x_l}} \leq \Theta(d^{13}/\delta)$,
  which is proved in the proof of Lemma~\ref{l:smoothness_d}, (2)
  $\abs{x_j - y_j} \leq \Delta$, and (3) the definition of $\Delta$. Now it is
  easy to see that the only way to satisfy both
  $\frac{\partial f_{\calC_l}(\vecx^\star, \vecy^\star)}{\partial x_i} \geq 1/2$
  and $\abs{z_i - x_i^{\star}} \le \alpha$ is that either $x_i^\star \le \alpha$
  or $x_i^\star \le y_i^\star - \Delta + \alpha$. The first case is excluded by
  the assumption of the first statement of our lemma and our choice of
  $\alpha = \Delta/3 < 1/(N - 1)$, thus it holds that
  $x_i^\star \le y_i^\star - \Delta + \alpha$. But then we can use the case 3.
  for the $y$ variables of Lemma \ref{c:conditions} and we get that
  $\frac{\partial f_{\calC_l}(\vecx^\star, \vecy^\star)}{\partial y_1} \ge - \alpha$,
  which cannot be true since we proved that
  $\frac{\partial f_{\calC_l}(\vecx^\star, \vecy^\star)}{\partial y_i} = -1$.
  Therefore we have a contradiction and the first statement of the lemma holds.
  Using the same reasoning we prove the second statement too.
\end{proof}

\noindent We are now ready to complete the proof that the our reduction from
$\HD$-$\BiSperner$ to $\gdaFixed$ is correct and hence we can prove Theorem
\ref{thm:localNashHardness}.

\begin{proof}[Proof of Theorem \ref{thm:localNashHardness}]
    Let $(\vecx^{\star}, \vecy^{\star})$ be a solution to the $\gdaFixed$
  problem with input a Turing machine that represents the function
  $f_{\calC_l}$, $\alpha = \Delta/3$, where $\Delta = t \cdot \delta / d^{14}$,
  $G = \Theta(d^{15}/\delta)$, $L = \Theta(d^{27}/\delta^2)$, and $\matA$,
  $\vecb$ as described in \hyperref[lbl:constructionHighD]{($\star$)}.

    For each coordinate $i$, there exist the following three mutually exclusive
  cases,
  \begin{enumerate}
    \item[$\triangleright$]
    $\boldsymbol{\frac{1}{N - 1} \le x_i^\star \le 1 - \frac{1}{N - 1}}$:
    Since $\abs{R_+(\vecx^\star)} \ge 1$, it follows directly from Lemma
    \ref{l:main_d} that there exists $\vecv \in R_+(\vecx^{\star})$ such that
    $\calC_l^i(\vecv) = -1$ and $\vecv' \in R_+(\vecx^{\star})$ such that
    $\calC_l^i(\vecv) = +1$.
    \item[$\triangleright$] $\boldsymbol{0 \le x_i^\star < \frac{1}{N - 1}}$:
    Let $\calC_l^i(\vecv) = -1$ for all $\vecv \in R_+(\vecx^\star)$. By the
    property \ref{enu:properties:Coefficients:4} of the SEIC coefficients, we
    have that there exists $\vecv \in R_+(x^\star)$ with $v_i = 0$. This node is
    hence a solution of type 2. for the $\HD$-$\BiSperner$ problem.
    \item[$\triangleright$]
    $\boldsymbol{1 - \frac{1}{N - 1} < x_i^\star \le 1}$: Let
    $\calC_l^i(\vecv) = +1$ for all $\vecv \in R_+(\vecx^\star)$. By the
    property \ref{enu:properties:Coefficients:4} of the SEIC coefficients, we
    have that there exists $\vecv \in R_+(x^\star)$ with $v_i = 1$. This node is
    hence a solution of type 3. for the $\HD$-$\BiSperner$ problem.
  \end{enumerate}

  \noindent Since $\R_+(\vecx^{\star})$ computable in polynomial time given
  $\vecx^{\star}$, we can easily check for every $i \in [d]$ whether any of the
  above cases hold. If at least for some $i \in [d]$ the $2$nd or the $3$rd case
  from above hold, then the corresponding vertex gives a solution to the
  $\HD$-$\BiSperner$ problem and therefore our reduction is correct. Hence we
  may assume that for every $i \in [d]$ the $1$st of the above cases holds. This
  implies that the cubelet $R(\vecx^{\star})$ is pachromatic and therefore it is
  a solution to the problem $\HD$-$\BiSperner$. Finally, we observe that the
  function that we define has range $[-d, d]$ and hence the Theorem
  \ref{thm:localNashHardness} follows using Theorem \ref{t:LocalMaxMin-GDA}.
\end{proof}

\section{Smooth and Efficient Interpolation Coefficients} \label{sec:seic}

  In this section we describe the construction of the smooth and efficient
interpolation coefficients (SEIC) that we introduced in Section
\ref{sec:hardness:seic}. After the description of the construction we present
the statements of the lemmas that prove the properties
\ref{enu:properties:Coefficients:1} - \ref{enu:properties:Coefficients:4} of
their Definition \ref{def:seic} and we refer to the Appendix \ref{s:Q_coef}. We
first remind the definition of the SEIC coefficients.

\seic*

\noindent Our main goal in this section is to prove the following theorem.
\begin{theorem} \label{thm:seic}
    For every $d \in \N$ and every $N = \poly(d)$ there exist a family of
  functions $\calI_{d, N}$ that satisfies the properties
  \ref{enu:properties:Coefficients:1} - \ref{enu:properties:Coefficients:4} of
  Definition \ref{def:seic}.
\end{theorem}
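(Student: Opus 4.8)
The plan is to build the family $\calI_{d,N}$ by smoothing the piecewise‑linear barycentric coordinates of the Kuhn--Freudenthal triangulation of the $\delta$‑grid subdivision of $[0,1]^d$, writing the smoothed coordinates in a closed form that is flat to second order on the cubelet walls (so that the cubelet‑local pieces glue into one globally $C^2$ function), and then checking properties \ref{enu:properties:Coefficients:1}--\ref{enu:properties:Coefficients:4} by direct differentiation of that form. The classical part is standard; essentially all the work is in the smoothing.

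First I would set up the starting point. Subdivide $[0,1]^d$ into cubelets of side $\delta=1/(N-1)$ with vertices indexed by $\p{\nm{N}}^d$, and inside each cubelet use the Kuhn triangulation: at $\vecx$ with local coordinates $\hat x_1,\dots,\hat x_d\in[0,1]$ sorted as $\hat x_{\pi(1)}\ge\cdots\ge\hat x_{\pi(d)}$, the active simplex has the $d+1$ nested "threshold" vertices $\vecw^0\subset\vecw^1\subset\cdots\subset\vecw^d$ (with $\vecw^k$ toggling on the $k$ largest coordinates), carrying weights $1-\hat x_{\pi(1)}$, $\hat x_{\pi(k)}-\hat x_{\pi(k+1)}$, and $\hat x_{\pi(d)}$. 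These weights are already a nonnegative partition of unity with exactly $d+1$ nonzero entries, locatable by a sort in time $\poly(d)$, and — since the triangulation restricted to each facet is again Kuhn — they satisfy the boundary incidence of property \ref{enu:properties:Coefficients:4}. Their only defect is that they are merely $C^0$, with gradient jumps exactly on the "diagonal" hyperplanes $\{x_i-x_j\in\delta\Z\}$ and the "wall" hyperplanes $\{x_i\in\delta\Z\}$.

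The core step is to round these kinks without losing the other four properties. The enabling fact is a threshold representation of the Kuhn weight of a threshold set $S$, namely $\int_0^1\prod_{i\in S}H(\hat x_i-\theta)\prod_{i\notin S}\bigl(1-H(\hat x_i-\theta)\bigr)\,d\theta$ with $H$ the Heaviside step, together with the identity $\sum_S\prod_{i\in S}a_i\prod_{i\notin S}(1-a_i)=1$, which makes the sum over threshold sets telescope to $1$ pointwise in $\theta$. I would replace $H$ by a fixed $C^2$ monotone smooth step of a small width, together with "order‑two" smooth steps applied to the local coordinates at the cubelet walls (of the same flavour as $S_1$ but with vanishing first \emph{and} second derivatives at $0$ and $1$, e.g. the degree‑$5$ polynomial $10x^3-15x^4+6x^5$); monotonicity keeps nonnegativity, and the telescoping survives verbatim, which gives property \ref{enu:properties:Coefficients:2}. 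The one delicate requirement is that the smoothing be arranged so that (i) all coefficients have vanishing first and second normal derivatives on every cubelet wall and (ii) the coefficient of any cubelet‑vertex lying \emph{off} a given face or wall vanishes to second order on it; in particular, near each grid vertex all coefficients but one vanish to second order and the survivor is $\equiv 1$ on a plateau, so transitions are confined to the interiors of the simplices. Condition (i)+(ii) makes the cubelet‑local formulas glue to a global $C^2$ function and, at the same time, pins the set $R_+(\vecx)$ of second‑order‑active coefficients to the $\le d+1$ vertices of the simplex containing $\vecx$ — read off from the (smoothed) sort of $\hat x_1,\dots,\hat x_d$ and the cubelet index in time $\poly(d)$ — which is property \ref{enu:properties:Coefficients:3}; and property \ref{enu:properties:Coefficients:4} persists because the extremal threshold sets $S=\emptyset$ and $S=[d]$ (the bottom and top corners of $R(\vecx)$) stay in $R_+(\vecx)$, and exactly those corners carry the coordinates pushed to $0$ or $1$.

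Finally, property \ref{enu:properties:Coefficients:1} is obtained by differentiating the resulting explicit piecewise‑polynomial/integral expression. Each differentiation lands on one of the nested smooth steps encoding the sorted order and the cubelet position; a $\partial_{x_i}$ contributes a factor $\norm{S'}_{\infty}/(\delta\cdot\mathrm{width})$ and a $\partial_{x_i}\partial_{x_\ell}$ a factor $\norm{S''}_{\infty}/(\delta^2\cdot\mathrm{width}^2)$, while the $\theta$‑integral (equivalently, the support of the differentiated bump) contributes a compensating factor of the width; taking the widths polynomially small in $d$ and accounting for the polynomially many such factors and the product‑rule cross terms yields the bounds $\Theta(d^{12}/\delta)$ and $\Theta(d^{24}/\delta^2)$, the Hessian bound being essentially the square of the gradient bound. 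I expect the main obstacle to be exactly the bookkeeping near the low‑dimensional faces of the triangulation, where up to $d$ coordinates are simultaneously near‑tied (and possibly also near a cubelet wall) so that many smooth‑step factors are in their transition regime at once: there one must verify that the plateau structure still collapses the active set to precisely $d+1$, that the partition‑of‑unity identity is unharmed by the interaction of the diagonal and wall smoothings, and that no product‑rule cross term exceeds $\poly(d)$. This is the content that the detailed construction of Section~\ref{sec:seic} and the lemmas deferred to Appendix~\ref{s:Q_coef} are designed to carry out.
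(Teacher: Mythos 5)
There is a genuine gap, and it sits exactly at the point you flag as "bookkeeping": property \ref{enu:properties:Coefficients:3} fails for the smoothed threshold representation. For a fixed smoothing width $w>0$, consider a point whose local coordinates have $m$ values within $w$ of each other (an open neighborhood of the diagonal hyperplanes, hence unavoidable). For such a point the integrand $\prod_{i\in S}\sigma(\hat x_i-\theta)\prod_{i\notin S}\bigl(1-\sigma(\hat x_i-\theta)\bigr)$ is strictly positive, for some range of $\theta$, for \emph{every} subset $S$ compatible with either ordering of the near-tied coordinates: already a two-way tie activates $d+2$ subsets, and an $m$-way tie activates a number of subsets exponential in $m$ (all $2^d$ of them when all coordinates coincide). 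These coefficients are strictly positive, not merely "active to second order," so no plateau condition at grid vertices or cubelet walls can rescue the claim that all but $d+1$ coefficients vanish together with their gradients and Hessians; your conditions (i)+(ii) govern the walls and vertices but not the interior simplex-to-simplex transitions, where the telescoping identity that gives you property \ref{enu:properties:Coefficients:2} is precisely what forces positive mass onto every chain compatible with the near-tie. Losing \ref{enu:properties:Coefficients:3} also loses the $\poly(d)$ identification of $R_+(\vecx)$ and the $(d+1)$-query evaluation of $f_{\calC_l}$ that the white-box and black-box reductions rely on.

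Repairing this forces you into the paper's actual mechanism: each coefficient must be \emph{hard-zeroed} outside its own order region (the Kuhn simplex), and $C^2$ continuity across that zero boundary must then be bought by composing with a step function that is flat (to at least second order; the paper uses the $C^\infty$-flat $S_\infty$) evaluated at the differences $S(p_\ell)-S(p_j)$, so that the coefficient, its gradient, and its Hessian all vanish as an order tie is approached. But the hard truncation destroys the exact partition of unity, so one must renormalize by the sum over the cubelet's corners, lower-bound that denominator (the paper's pigeonhole argument that some consecutive gap among $0,p_1,\dots,p_d,1$ is at least $1/(d+1)$, giving a $c^{d^2}2^{-\Theta(d^4)}$ bound), and control the derivatives of the resulting quotient via ratio bounds of the form $S'_\infty/S_\infty,\ |S''_\infty|/S_\infty\le\poly(d)$ rather than the naive $\|S'\|_\infty/\text{width}$ counting you propose (which, for a step flat enough to make the truncation $C^2$, would not by itself yield the $\Theta(d^{12}/\delta)$ and $\Theta(d^{24}/\delta^2)$ bounds). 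These missing steps — truncation plus flat steps on differences, renormalization, the denominator lower bound, and the ratio lemmas, together with the parity-based source/target orientation that makes adjacent cubelets' canonical coordinates agree on shared facets — are exactly the content of Definitions \ref{d:orientation}--\ref{d:_coeff} and Lemmas \ref{l:well_defined}--\ref{l:positive_corners_boundary}, so the proposal as written does not yet constitute an alternative proof.
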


  One important component of the construction of the SEIC coefficients is the
\textit{smooth-step functions} which we introduce in Section
\ref{sec:seic:smoothStep}. These functions also provide a toy example of smooth
and efficient interpolation coefficients in $1$ dimension. Then in Section
\ref{sec:seic:construction} we present the construction of the SEIC coefficients
in multiple dimensions and in Section \ref{sec:seic:lemmas} we state the main
lemmas that lead to the proof of Theorem \ref{thm:seic}.

\subsection{Smooth Step Functions -- Toy Single Dimensional Example} \label{sec:seic:smoothStep}

  Smooth step functions are real-valued function $g : \R \to \R$ of a single
real variable with the following properties
\begin{description}
  \item[Step Value.] For every $x \le 0$ it holds that $g(x) = 0$, for every
    $x \ge 1$ it holds that $g(x) = 1$ and for every $x \in [0, 1]$ it holds
    that $S(x) \in [0, 1]$.
  \item[Smoothness.] For some $k$ it holds that $g$ is $k$ times continuously
    differentiable and its $k$th derivative satisfies $g^{(k)}(0) = 0$ and
    $g^{(k)}(1) = 0$.
\end{description}
\noindent The largest number $k$ such that the smoothness property from above
holds is characterizes the \textit{order of smoothness} of the smooth step
function $g$.

  In Section \ref{sec:hardness:2D} we have already defined and used the smooth
step function of order $1$. For the construction of the SEIC coefficients we use
the smooth step function of order $2$ and the smooth step function of order
$\infty$ defined as follows.

\begin{definition} \label{def:smoothStep}
    We define the smooth step function $S : \R \to \R$ of order $2$ as the
  following function
  \[ S(x) = \begin{cases}
              6 x^5 - 15 x^4 + 10 x^3 & ~~~~ x \in (0, 1) \\
              0             & ~~~~ x \le 0 \\
              1             & ~~~~ x \ge 1
            \end{cases}. \]
  We also define the smooth step function $S_{\infty} : \R \to \R$ of order
  $\infty$ as the following function
  \[ S_{\infty}(x) = \begin{cases}
                       \frac{2^{- 1/x}}{2^{- 1/x} + 2^{- 1/(1 - x)}} & ~~~~ x \in (0, 1) \\
                       0             & ~~~~ x \le 0 \\
                       1             & ~~~~ x \ge 1
                     \end{cases}. \]
  We note that we use the notation $S$ instead of $S_2$ for the smooth step
  function of order $2$ for simplicitly of the exposition of the paper.
\end{definition}

  We present a plot of these step function in Figure \ref{fig:stepFunctions}, and we
summarize some of their properties in Lemma \ref{lem:stepFunctions}. A more detailed lemma
with additional properties of $S_{\infty}$ that are useful for the proof of Theorem
\ref{thm:seic} is presented in Lemma \ref{l:infty} in the Appendix \ref{s:Q_coef}.

\begin{figure}[t]
  \centering
  \begin{subfigure}{0.45\textwidth}
    \centering
    \includegraphics[scale=0.2]{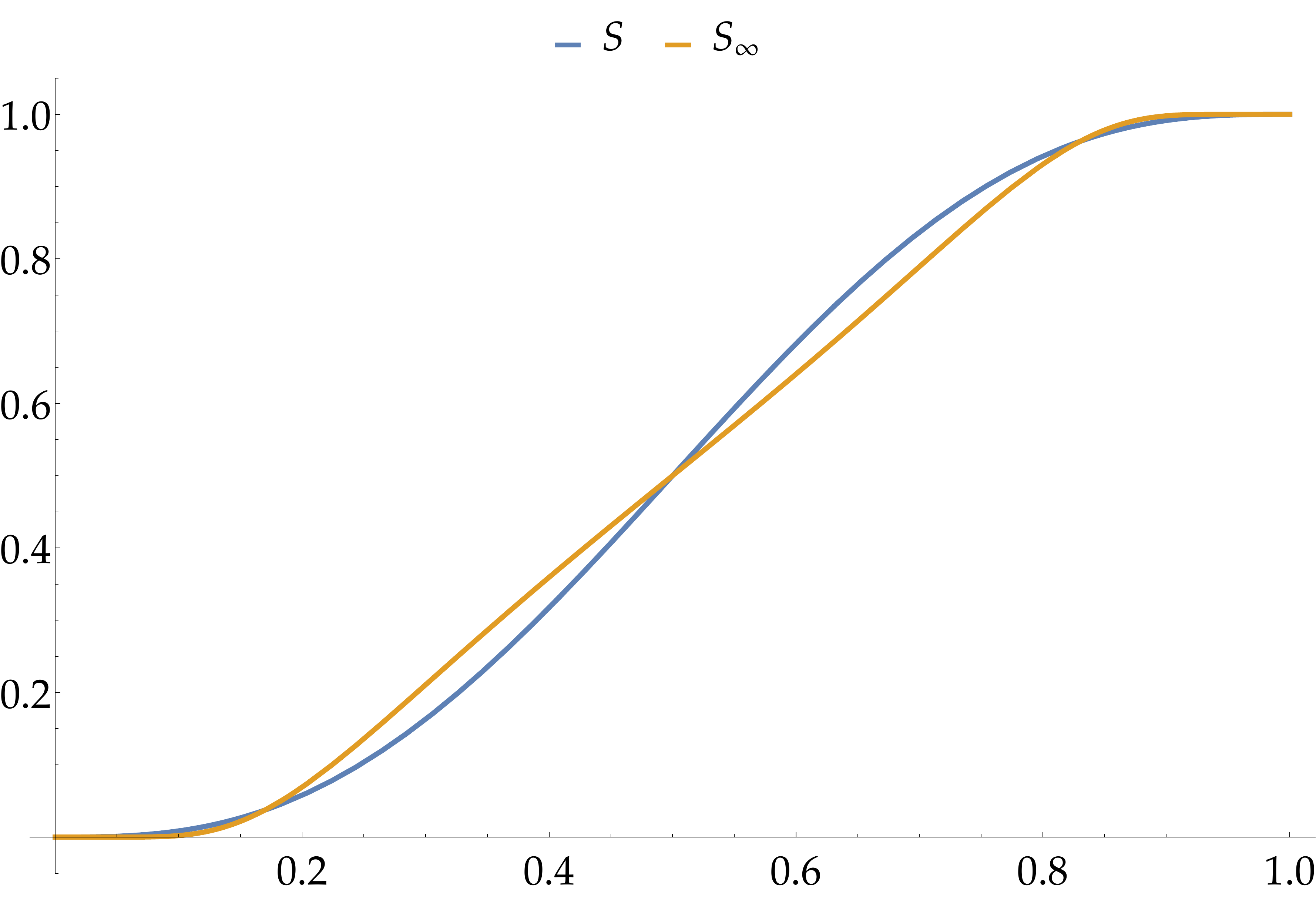}
    \caption{Functions $S$ and $S_{\infty}$.}
  \end{subfigure}
  ~
  \begin{subfigure}{0.45\textwidth}
    \centering
    \includegraphics[scale=0.2
    ]{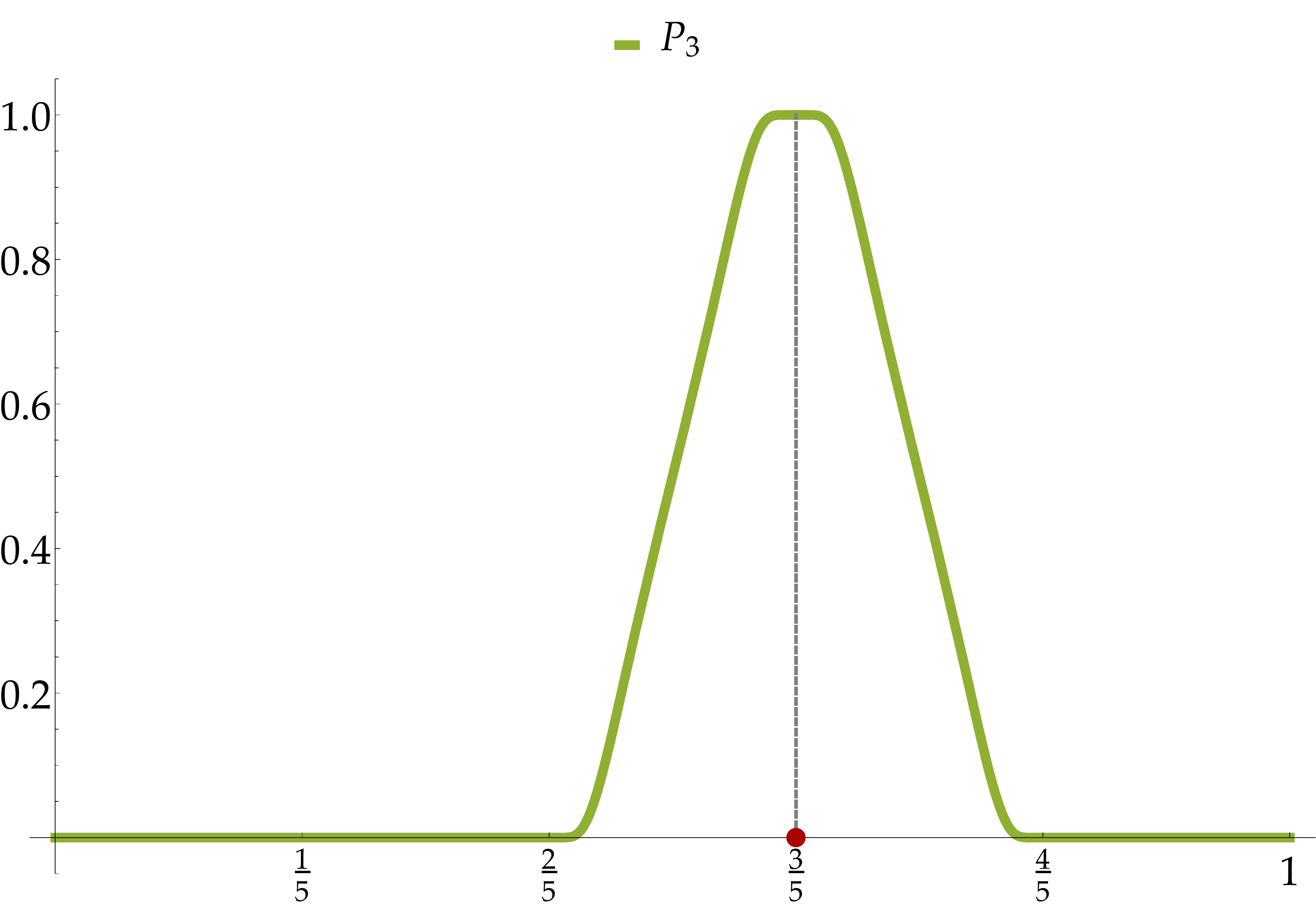}
    \caption{The function $P_3$ from Example \ref{exm:1DSeic}.}
  \end{subfigure}
  \caption{(a) The smooth step function $S$ of order $1$ and the smooth step function
  $S_{\infty}$ of order $\infty$. As we can see both $S$ and $S_{\infty}$ are continuous
  and continuously differentiable functions but $S_{\infty}$ is much more flat around $0$
  and $1$ since it has all its derivatives equal to $0$ both at the point $0$ and at the
  point $1$. This makes the $S_{\infty}$ function infinitely many times differentiable.
  (b) The constructed $P_3$ function of the family of SEIC coefficients for single
  dimensional case with $N = 5$. For details we refer to the Example \ref{exm:1DSeic}.}
  \label{fig:stepFunctions}
\end{figure}

\begin{lemma} \label{lem:stepFunctions}
    Let $S$ and $S_{\infty}$ be the smooth step functions defined in Definition
  \ref{def:smoothStep}. It holds both $S$ and $S_{\infty}$ are monotone increasing
  functions and that $S(0) = 0$, $S(1) = 1$ and also $S'(0) = S'(1) = S''(0) = S''(1) = 0$.
  It also holds that $S_{\infty}(0) = 0$, $S_{\infty}(1) = 1$ and also
  $S^{(k)}_{\infty}(0) = S^{(k)}_{\infty}(1) = 0$ for every $k \in \N$. Additionally it
  holds for every $x$ that $\abs{S'(x)} \le 2$, and $\abs{S''(x)} \le 6$ whereas
  $\abs{S'_{\infty}(x)} \le 16$, and $\abs{S''_{\infty}(x)} \le 32$.
\end{lemma}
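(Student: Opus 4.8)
The plan is to handle $S$ and $S_\infty$ separately, since for $S$ everything is a one-line polynomial computation while $S_\infty$ requires the standard ``flat bump'' analysis. For $S$: on $(0,1)$ we have $S'(x)=30x^4-60x^3+30x^2=30x^2(x-1)^2$ and $S''(x)=120x^3-180x^2+60x=60x(2x-1)(x-1)$, from which $S(0)=0$, $S(1)=1$, $S'(0)=S'(1)=S''(0)=S''(1)=0$, and $S'\ge 0$ on $[0,1]$ (so $S$ is monotone increasing and maps $[0,1]$ onto $[0,1]$) are immediate. For the size bounds, $S'(x)=30\,(x(1-x))^2\le 30/16<2$ since $x(1-x)\le 1/4$; and setting $t=x(1-x)\in[0,1/4]$, so that $(1-2x)^2=1-4t$, we get $\abs{S''(x)}=60\,t\sqrt{1-4t}$, which is maximized at $t=1/6$ with value $10/\sqrt 3<6$.

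For $S_\infty$, write $a(x)=2^{-1/x}$ and $b(x)=2^{-1/(1-x)}$, so $S_\infty=a/(a+b)$ on $(0,1)$. The boundary values $S_\infty(0)=0$, $S_\infty(1)=1$ follow from $a(x)\to 0$, $b(x)\to 1/2$ as $x\to 0^+$ (and symmetrically at $1$). Smoothness, and the vanishing of all derivatives at the endpoints, rests on the classical fact that $x\mapsto 2^{-1/x}$ extends to a $C^\infty$ function on $[0,\infty)$ all of whose derivatives vanish at $0$ (its $k$-th derivative is $p_k(1/x)\,2^{-1/x}$ for a polynomial $p_k$, which tends to $0$ as $x\to 0^+$). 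Near $x=0$, $b$ is real-analytic with $b(0)=1/2>0$, so $a+b$ is $C^\infty$ and bounded below by a positive constant; hence $h:=(a+b)^{-1}$ is $C^\infty$ near $0$, and writing $S_\infty=a\cdot h$ and expanding by the Leibniz rule, every $S_\infty^{(k)}(0)$ is a finite sum of terms each carrying a factor $a^{(j)}(0)=0$. The identity $S_\infty(1-x)=1-S_\infty(x)$ (immediate from $b/(a+b)=1-a/(a+b)$) transports this to $x=1$, and together with the endpoint limits it shows the piecewise definition is $C^\infty$ on $\R$. Monotonicity is $S_\infty'=(a'b-ab')/(a+b)^2>0$, since $a'=a\ln 2/x^2>0$ while $b'=-b\ln 2/(1-x)^2<0$, so $a'b-ab'=\ln 2\cdot ab\,(x^{-2}+(1-x)^{-2})>0$.

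For the derivative bounds on $S_\infty$, I would write $S_\infty=g\circ\psi$ with $g(t)=1/(1+2^t)$ and $\psi(x)=\tfrac1x-\tfrac1{1-x}$, and use the symmetry just observed ($g(-t)=1-g(t)$, $\psi(1-x)=-\psi(x)$) to reduce to $x\in(0,1/2]$, where $\psi(x)\ge 0$. Here $g'(t)=-\ln 2\cdot 2^t/(1+2^t)^2$ and $g''(t)=(\ln 2)^2\,2^t(2^t-1)/(1+2^t)^3$, so for $t\ge 0$ one has $\abs{g'(t)}\le \ln 2\cdot 2^{-t}$ and $\abs{g''(t)}\le (\ln 2)^2 2^{-t}$; also $\psi'(x)=-(x^{-2}+(1-x)^{-2})$, $\psi''(x)=2(x^{-3}-(1-x)^{-3})$, and for $x\le 1/2$, $2^{-\psi(x)}=2^{-1/x}2^{1/(1-x)}\le 4\cdot 2^{-1/x}$, $\abs{\psi'(x)}\le x^{-2}+4$, $\abs{\psi''(x)}\le 2x^{-3}$. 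Then with $A:=1/x\ge 2$, $\abs{S_\infty'}=\abs{g'(\psi)\psi'}\le 4\ln 2\cdot 2^{-A}(A^2+4)$, and since $\tfrac{d}{dA}\big(2^{-A}(A^2+4)\big)=2^{-A}\big(2A-\ln 2(A^2+4)\big)$ is negative for all $A$ (the quadratic $2A-\ln 2(A^2+4)$ has negative discriminant), $2^{-A}(A^2+4)$ is decreasing on $[2,\infty)$ and equals $2$ at $A=2$, giving $\abs{S_\infty'}\le 8\ln 2<16$. For $\abs{S_\infty''}$ we have $S_\infty''=g''(\psi)(\psi')^2+g'(\psi)\psi''$; here the estimate $\abs{g''}\le (\ln 2)^2 2^{-t}$ is too wasteful, so I would keep the exact rational forms of $g',g''$ and, using $\tfrac1{x(1-x)}=\tfrac1x+\tfrac1{1-x}$ to rewrite $(\psi')^2=r^2(r+2)^2$ and $\psi''=2s(r+1)(r+2)$ with $s:=\psi(x)\ge 0$ and $r:=\sqrt{s^2+4}$, reduce to a single-variable inequality in $s$ whose left-hand side has maximum $\approx 31<32$ (attained near $s\approx 4.5$), verified by noting the expression equals $2^{-s}$ times a polynomial in $s$ (hence is eventually decreasing) and bounding it on a bounded range of $s$.

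The only delicate point is the bound $\abs{S_\infty''}\le 32$: the true supremum is close enough to $32$ that one cannot afford the crudest polynomial-versus-exponential estimates, and one must either retain the exact rational expressions for $g'(\psi)$ and $g''(\psi)$ throughout, or handle the neighbourhood of $x=1/2$ — where $g''(\psi)$ and $\psi''$ both vanish, so that $S_\infty''$ is small by cancellation — separately from the region where $1/x$ is large. Everything else (the entire $S$ part, the smoothness and monotonicity of $S_\infty$, and the bound on $S_\infty'$) is routine.
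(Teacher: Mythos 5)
Your proposal is correct, and for the $S_\infty$ part it takes a genuinely different (and cleaner) route than the paper. The paper differentiates the quotient directly, lower-bounds the denominator function $h(x)=\bigl(\exp(\ln 2/x)+\exp(\ln 2/(1-x))\bigr)x^2(1-x)^2$ by $1/4$, and uses the comparison $\exp\bigl(\ln 2/(x(1-x))\bigr)\le 4\exp(\ln 2/x)$ for $x\le 1/2$ to get $|S_\infty'|\le 16$; for the endpoint flatness it sketches an induction $S^{(k)}_\infty=\sum_i h_i\cdot S^{(i)}_\infty$ with bounded $h_i$, and for $|S''_\infty|\le 32$ it only says ``using similar arguments.'' You instead write $S_\infty=g\circ\psi$ with $g(t)=1/(1+2^t)$ and $\psi(x)=1/x-1/(1-x)$, exploit the symmetry $S_\infty(1-x)=1-S_\infty(x)$, and reduce everything to one-variable estimates in $s=\psi(x)$ via the identity $u+v=uv$ for $u=1/x$, $v=1/(1-x)$ (giving $(\psi')^2=r^2(r+2)^2$ and $\psi''=2s(r+1)(r+2)$ with $r=\sqrt{s^2+4}$); your Leibniz argument for the vanishing of all derivatives at the endpoints (the flat factor $2^{-1/x}$ times a $C^\infty$ factor bounded away from zero) is tighter than the paper's induction sketch, and your bound $|S_\infty'|\le 8\ln 2<16$ is correct. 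The second-derivative bound is the one place where you, like the paper, leave a sketch, but your diagnosis is sound: the crude estimates $|g'(t)|,|g''(t)|\le (\ln 2)^{2}2^{-t}$-type bounds combined with $2^{-\psi}\le 4\cdot 2^{-1/x}$ do overshoot $32$ (they reach roughly $70$), whereas keeping the exact rational forms and bounding $|g''(\psi)|(\psi')^2+|g'(\psi)|\,|\psi''|$ as a function of $s$ peaks at about $31$ near $s\approx 4.5$, so your plan does close the argument. One small correction: it is this bounding expression, not $\sup_x|S''_\infty(x)|$ itself, that comes close to $32$; the actual supremum of $|S''_\infty|$ is only about $12$ (the two terms largely cancel), so there is more slack than your closing remark suggests. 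The entire $S$ part and the monotonicity claims are routine and correct, matching the paper.
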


\begin{proof}
    For the function $S$ we compute $S'(x) = 30 x^4 - 60 x^3 + 30 x^2$ for $x \in [0, 1]$
  and $S'(x) = 0$ for $x \not\in [0, 1]$. Therefore we can easily get that
  $\abs{S'(x)} \le 2$ for all $x \in \R$. We also have that
  $S''(x) = 120 x^3 - 180 x^2 + 60 x$ for $x \in (0, 1)$ and $S''(x) = 0$ for
  $x \not\in [0, 1]$ hence we can conclude that $\abs{S''(x)} \le 6$.

    The calculations for $S_{\infty}$ are more complicated. We have that
  \[ S'_{\infty}(x) = \ln(2) \frac{\exp\p{\frac{\ln(2)}{x (1 - x)}} (1 - 2 x (1 - x))}{\p{\exp\p{\frac{\ln(2)}{x}} + \exp\p{\frac{\ln(2)}{1 - x}}}^2 \p{1 - x}^2 x^2}. \]
  We set
  $h(x) \triangleq \p{\exp\p{\frac{\ln(2)}{x}} + \exp\p{\frac{\ln(2)}{1 - x}}} \p{1 - x}^2 x^2$
  for $x \in [0, 1]$ and doing simple calculations we get that for $x \le 1/2$ it holds
  that $h(x) \ge \frac{1}{4} \exp\p{\frac{\ln(2)}{x}} x^2$. But the later can be easily
  lower bounded by $1/4$. Applying the same argument for $x \ge 1/2$ we get that in general
  $h(x) \ge 1/4$. Also it is not hard to see that for $x \le 1/2$ it holds that
  $\exp\p{\frac{\ln(2)}{x (1 - x)}} \le 4 \exp\p{\frac{\ln(2)}{x}}$, whereas for
  $x \ge 1/2$ it holds that
  $\exp\p{\frac{\ln(2)}{x (1 - x)}} \le 4 \exp\p{\frac{\ln(2)}{1 - x}}$. Combining all
  these we can conclude that $\abs{S'_{\infty}(x)} \le 16$. Using similar argument we can
  prove that $\abs{S''_{\infty}(x)} \le 32$. For all the derivatives of $S_{\infty}$ we can
  inductively prove that
  \[ S^{(k)}_{\infty}(x) = \sum_{i = 0}^{k - 1} h_i(x) \cdot S^{(i)}_{\infty}(x),  \]
  where $h_0(1) = 0$ and all the functions $h_i(x)$ are bounded. Then the fact that all the
  derivatives of $S_{\infty}$ vanish at $0$ and at $1$ follows by a simple inductive
  argument.
\end{proof}

\begin{example}[Single Dimensional Smooth and Efficient Interpolation Coefficients] \label{exm:1DSeic}
    Using the smooth step functions that we described above we can get a construction of
  SEIC coefficients for the single dimensional case. Unfortunately the extension to
  multiple dimensions is substantially harder and invokes new ideas that we explore later
  in this section. For the single dimensional problem of this example we have the interval
  $[0, 1]$ divided with $N$ discrete points and our goal is to design $N$ functions
  $\P_1$ - $\P_N$ that satisfy the properties
  \ref{enu:properties:Coefficients:1} - \ref{enu:properties:Coefficients:4} of
  Definition \ref{def:seic}. A simple construction of such functions is the following
  \[ \P_i(x) = \begin{cases}
                 S_{\infty}\p{N \cdot x - (i - 1)}   & ~~~~ x \le \frac{i}{N - 1} \\
                 S_{\infty}\p{i + 1 - N \cdot x}     & ~~~~ x > \frac{i}{N - 1}
               \end{cases}. \]
  Based on Lemma \ref{lem:stepFunctions} it is not hard then to see that $\P_i$ is twice
  differentiable and it has bounded first and second derivatives, hence it satisfies
  property \ref{enu:properties:Coefficients:1} of Definition \ref{sec:seic}. Using the fact
  that $1 - S_{\infty}(x) = S_{\infty}(1 - x)$ we can also prove property
  \ref{enu:properties:Coefficients:2}. Finally properties
  \ref{enu:properties:Coefficients:3} and \ref{enu:properties:Coefficients:4} can be proved
  via the definition of the coefficient $\P_i$ from above. In Figure
  \ref{fig:stepFunctions} we can see the plot of $\P_3$ for $N = 5$. We leave the exact
  proofs of this example as an exercise for the reader.
\end{example}

\subsection{Construction of SEIC Coefficients in High-Dimensions} \label{sec:seic:construction}

  The goal of this section is to present the construction of the family
$\calI_{d, N}$ of smooth and efficient interpolation coefficients for every number
of dimensions $d$ and any discretization parameter $N$. Before diving into the
details of our construction observe that even the 2-dimensional case with $N = 2$
is not trivial. In particular, the first attempt would be to define the SEIC
coefficients based on the simple split of the square $[0, 1]^2$ to two triangles
divided by the diagonal of $[0, 1]^2$. Then using any soft-max function that is
twice continuously differentiable we define a convex combination at every triangle.
Unfortunately this approach cannot work since the resulting coefficients have
discontinuous gradients along the diagonal of $[0, 1]^2$. We leave the presice
calculations of this example as an exercise to the reader.

  We start with some definitions about the orientation and the representation of
the cubelets of the grid $\p{\nm{N}}^d$. Then we proceed with the definition of
the $Q_{\vecv}$ functions in Definition \ref{d:cof}. Finally using $Q_{\vecv}$
we can proceed with the construction of the SEIC coefficients.

\begin{definition}[Source and Target of Cubelets] \label{d:orientation}
    Each cubelet
  $\left[\frac{c_1}{N - 1}, \frac{c_1 + 1}{N - 1}\right] \times \cdots \times \left[\frac{c_d}{N - 1}, \frac{c_d + 1}{N - 1}\right]$,
  where $\vecc \in \p{\nm{N - 1}}^d$ admits a \textbf{source vertex}
  $\vecs^{\vecc} = (s_1, \ldots, s_d) \in \p{\nm{N}}^d$ and a
  \textbf{target vertex}
  $\vect^{\vecc} = (t_1, \ldots, t_d) \in \p{\nm{N}}^d$ defined as follows,
  \[ s_j = \left\{ \begin{array}{ll}
                     c_j + 1 & c_j \text{ is odd} \\
                     c_j     & c_j \text{ is even} \\
                   \end{array}
           \right.
     ~~~~~\text{and}~~~~~
     t_j = \left\{ \begin{array}{ll}
                     c_j     & c_j \text{ is odd} \\
                     c_j + 1 & c_j \text{ is even} \\
                   \end{array}
           \right. \]
\end{definition}

\noindent Notice that the source $\vecs^{\vecc}$ and the target $\vect^{\vecc}$ are
vertices of the cubelet whose down-left corner is $\vecc$.

\begin{definition}(Canonical Representation) \label{d:canonical_representation}
    Let $\vecx \in [0,1]^d$ and
  $R(\vecx) = \left[\frac{c_1}{N - 1}, \frac{c_1 + 1}{N - 1}\right] \times \cdots \times \left[\frac{c_d}{N - 1}, \frac{c_d + 1}{N - 1}\right]$ where
  $\vecc \in \p{\nm{N - 1}}^d$. The \textbf{\textit{canonical representation} of
  $\vecx$ under cubelet with down-left corner $\vecc$}, denoted by
  $\vecp_{\vecx}^{\vecc} = (p_1,\ldots,p_d)$ is defined as follows,
  \[ p_j = \frac{x_j - s_j}{t_j - s_j} \]
  where $\vect^{\vecc} = (t_1, \ldots, t_d)$ and
  $\vecs^{\vecc} = (s_1,\ldots,s_d)$ are respectively the \textit{target} and
  the \textit{source} of $R(\vecx)$.
\end{definition}

\begin{definition}[Defining the functions $Q_{\vecv}(\vecx)$] \label{d:cof}
    Let $\vecx \in [0,1]^d$ lying in the cublet
  \[ R(\vecx) = \left[\frac{c_1}{N - 1}, \frac{c_1 + 1}{N - 1}\right] \times \cdots \times \left[\frac{c_d}{N - 1}, \frac{c_d + 1}{N - 1}\right],\]
  with corners
  $R_c(\vecx) = \{c_1, c_1 + 1\} \times \cdots \times \{c_d, c_d + 1\}$, where
  $\vecc \in \p{\nm{N - 1}}^d$. Let also $\vecs^{\vecc} = (s_1,\ldots,s_d)$ be
  the source vertex of $R(\vecx)$ and $\vecp_{\vecx}^{\vecc} = (p_1,\ldots,p_d)$
  be the canonical representation of $\vecx$. Then for each vertex
  $\vecv \in R_c(\vecx)$ we define the following partition of the set of
  coordinates $[d]$,
  \[ A_{\vecv}^{\vecc}=\{j:~~|v_j - s_j|=0\} \text{  and  } B_{\vecv}^{\vecc}=\{j:~~|v_j - s_j|=1\} \]
  If there exist $j \in A_{\vecv}^{\vecc}$ and $\ell \in B_{\vecv}^{\vecc}$ such
  that $p_{j} \geq p_{\ell}$ then $Q_{\vecv}^{\vecc}(\vecx) = 0$. Otherwise we
  define\footnote{We note that in the following expression $\prod$ denotes the
  product symbol and should not be confused with the projection operator used in
  the previous sections.}
  \begin{equation*}
    Q_{\vecv}^{\vecc}(\vecx) =
      \begin{cases}
        \prod_{j \in A_{\vecv}^{\vecc}} \prod_{\ell \in B_{\vecv}^{\vecc}}S_{\infty}(S(p_\ell) - S(p_j)) & A_{\vecv}^{\vecc},B_{\vecv}^{\vecc} \neq \varnothing \\
        \prod_{\ell=1}^d S_{\infty}(1- S(p_\ell))& B_{\vecv}^{\vecc} = \varnothing \\
        \prod_{j=1}^d S_{\infty}(S(p_j)) & A_{\vecv}^{\vecc} = \varnothing
      \end{cases}
  \end{equation*}
  where $S_{\infty}(x)$ and $S(x)$ are the smooth step function defined in Definition
  \ref{def:smoothStep}.
\end{definition}

\noindent To provide a better understanding of the Definitions
\ref{d:orientation}, \ref{d:canonical_representation}, and \ref{d:cof} we
present the following $3$-dimensional example.

\begin{example}
    We consider a case where $d = 3$ and $N = 3$. Let
  $\vecx = (1.3/3, 2.5/3, 0.3/3)$ lying in the cubelet
  $R(\vecx) = \left[\frac{1}{3}, \frac{2}{3}\right] \times \left[\frac{2}{3}, 1\right] \times \left[0, \frac{1}{3}\right]$,
  and let $\vecc = (1, 2, 0)$. Then the source of $R(\vecx)$ is
  $\vecs^{\vecc} = (2, 2, 0)$ and the target $\vect^{\vecc} = (1, 3, 1)$
  (Definition~\ref{d:orientation}). The canonical representation of $\vecx$ is
  $\vecp_{\vecx}^{\vecc} = (0.7 , 0.5, 0.3)$
  (Definition~\ref{d:canonical_representation}). The only vertices with no-zero
  coefficients $Q_{\vecv}^{\vecc}(\vecx)$ are those belonging in the set
  $R_+(\vecx) = \{(1, 3, 1), (1, 3, 0), (1, 2, 0), (2, 2, 0)\}$ and again by
  Definition~\ref{d:cof} we have that
  \begin{enumerate}
    \item[$\triangleright$] $Q_{(1, 3, 1)}(\vecx) = S_{\infty}(S(0.3)) \cdot S_{\infty}(S(0.5)) \cdot S_{\infty}(S(0.7))$,
    \item[$\triangleright$] $Q_{(1, 3, 0)}(\vecx) = S_{\infty}(S(0.5) - S(0.3)) \cdot S_{\infty}(S(0.7) - S(0.3))$,
    \item[$\triangleright$] $Q_{(1, 2, 0)}(\vecx) = S_{\infty}(S(0.7) - S(0.3)) \cdot S_{\infty}(S(0.7) - S(0.5))$,
    \item[$\triangleright$] $Q_{(2, 2, 0)}(\vecx) = S_{\infty}(1-S(0.3)) \cdot S_{\infty}(1-S(0.5)) \cdot S_{\infty}(1-S(0.7))$.
  \end{enumerate}
\end{example}

\noindent Now based on the Definitions \ref{d:orientation},
\ref{d:canonical_representation}, and \ref{d:cof} we are ready to present the
construction of the smooth and efficient interpolation coefficients.

\begin{definition}[Construction of SEIC Coefficients] \label{d:_coeff}
    Let $\vecx \in [0, 1]^d$ lying in the cubelet
  $R(\vecx) = \left[\frac{c_1}{N - 1}, \frac{c_1 + 1}{N - 1}\right] \times \cdots \times \left[\frac{c_d}{N - 1}, \frac{c_d + 1}{N - 1}\right]$.
  Then for each vertex $\vecv \in \p{\nm{N}}^d$ the coefficient
  $\P_{\vecv}(\vecx)$ is defined as follows,
  \[ \P_{\vecv}(\vecx) = \left\{ \begin{array}{ll}
                                 Q^{\vecc}_{\vecv}(\vecx)/(\sum_{\vecv \in R_c(\vecx)}Q^{\vecc}_{\vecv}(\vecx)) & \text{ if }  \vecv \in R_c(\vecx) \\
                                 0 & \text{ if } \vecv \notin R_c(\vecx)
                               \end{array} \right. \]
  where the functions $Q_{\vecv}^{\vecc}(\vecx) \ge 0$ are defined in Definition
  \ref{d:cof} for any $\vecv \in R_c(\vecx)$.
\end{definition}

\subsection{Sketch of the Proof of Theorem \ref{thm:seic}} \label{sec:seic:lemmas}

  First it is necessary to argue that $\P_{\vecv}(\vecx)$ is a continuous
function since it could be the case that
$Q^{\vecc}_{\vecv}(\vecx)/(\sum_{\vecv \in R_{\vecc}(\vecx)}Q^{\vecc}_{\vecv}(\vecx)) \neq Q^{\vecc'}_{\vecv}(\vecx)/(\sum_{\vecv \in V_{\vecc'}}Q^{\vecc'}_{\vecv}(\vecx))$
for some point $\vecx$ that lies in the boundary of two adjacent cubelets with
down-left corners $\vecc$ and $\vecc'$ respectively. We specifically design the
coefficients $Q_v^{\vecc}(\vecx)$ such as the latter does not occur and this is
the main reason that the definition of the function $Q_{\vecv}^{\vecc}(\vecx)$
is slightly complicated. For this reason we prove the following lemma.

\begin{lemma} \label{l:well_defined}
    For any vertex $\vecv \in \p{\nm{N}}^d$, $\P_{\vecv}(\vecx)$ is a continuous
  and twice differentiable function and for any $\vecv \notin R_c(\vecx)$ it
  holds that
  $\P_{\vecv}(\vecx) = \nabla \P_{\vecv}(\vecx) = \nabla^2 \P_{\vecv}(\vecx) = 0$.
  Moreover, for every $\vecx \in [0, 1]^d$ the set $R_+(\vecx)$ of vertices
  $\vecv \in \p{\nm{N}}^d$ such that $\P_{\vecv}(\vecx) > 0$ satisfies
  $\abs{R_+(\vecx)} = d + 1$.
\end{lemma}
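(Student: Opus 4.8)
The plan is to reduce the statement to a cubelet-local analysis followed by a gluing argument across cubelet boundaries. Fix a cubelet $R=R(\vecx)$ with down-left corner $\vecc$, source $\vecs^\vecc$ and target $\vect^\vecc$, so that the canonical representation $\vecx\mapsto\vecp^\vecc_\vecx$ of Definition~\ref{d:canonical_representation} is an affine map of $R$ onto $[0,1]^d$. First I would note that the ``$Q^\vecc_\vecv=0$ otherwise'' branch in Definition~\ref{d:cof} is redundant: since $S$ is monotone, the existence of $j\in A^\vecc_\vecv$, $\ell\in B^\vecc_\vecv$ with $p_j\ge p_\ell$ already forces the factor $S_\infty(S(p_\ell)-S(p_j))$ to vanish. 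Hence on $R$ each $Q^\vecc_\vecv$ ($\vecv\in R_c(\vecx)$) equals one fixed product of terms of the form $S_\infty(S(p_\ell)-S(p_j))$, $S_\infty(1-S(p_\ell))$ or $S_\infty(S(p_j))$ — a composition of the $C^\infty$ maps $S,S_\infty$ (Definition~\ref{def:smoothStep}, Lemma~\ref{lem:stepFunctions}) with affine maps, hence $C^\infty$ on $R$ — and so is $D^\vecc\triangleq\sum_{\vecv\in R_c(\vecx)}Q^\vecc_\vecv$. Moreover $D^\vecc>0$: ordering the canonical coordinates $p_{\sigma(1)}\le\cdots\le p_{\sigma(d)}$, at least one ``threshold cut'' is admissible (e.g. the one separating the zero coordinates, if any, from the rest), and its vertex has $Q^\vecc_\vecv>0$. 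Thus $\P_\vecv|_R=Q^\vecc_\vecv/D^\vecc$ is $C^\infty$ on $R$ when $\vecv\in R_c(\vecx)$ and is identically $0$ otherwise.

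The core of the argument, and the step I expect to be most delicate, is showing that these cubelet pieces glue to a global $C^2$ function. Let $R_\vecc$ and $R_{\vecc'}$ be adjacent cubelets with $\vecc'=\vecc+\vece_i$, sharing a facet $F\subseteq\{x_i=\text{const}\}$. Three facts follow from Definitions~\ref{d:orientation}--\ref{d:canonical_representation}: the canonical coordinates $p_j$ for $j\ne i$ agree in $R_\vecc$ and $R_{\vecc'}$; on $F$ the $i$-th canonical coordinate computed inside either cubelet equals the same value in $\{0,1\}$; and the two $i$-th parametrizations have opposite orientation, i.e. $\partial p_i/\partial x_i$ and $\partial p'_i/\partial x_i$ are negatives of one another. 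If $\vecv\in R_c\setminus R_{c'}$, then $i\in A^\vecc_\vecv$ with $p_i\equiv1$ on $F$, or $i\in B^\vecc_\vecv$ with $p_i\equiv0$ on $F$; either way the product for $Q^\vecc_\vecv$ carries a factor $S_\infty(h)$ with $h\le0$ on $F$, so by the infinite flatness of $S_\infty$ at $0$ this factor — hence $Q^\vecc_\vecv$, hence $\P_\vecv|_{R_\vecc}$ — vanishes to all orders along $F$, matching $\P_\vecv|_{R_{\vecc'}}\equiv0$. If $\vecv\in R_c\cap R_{c'}$, then $A^\vecc_\vecv=A^{\vecc'}_\vecv$, $B^\vecc_\vecv=B^{\vecc'}_\vecv$, so $Q^\vecc_\vecv$ and $Q^{\vecc'}_\vecv$ are the \emph{same} function $\Phi$ of the canonical coordinates, evaluated under the two parametrizations; combined with the three facts, the $2$-jets of $Q^\vecc_\vecv$ and $Q^{\vecc'}_\vecv$ coincide along $F$ provided $\partial\Phi/\partial p_i$ and $\partial^2\Phi/\partial p_i\partial p_j$ vanish on $\{p_i\in\{0,1\}\}$, which holds because $p_i$ enters $\Phi$ only through $S(p_i)$ and $S'(0)=S'(1)=S''(0)=S''(1)=0$. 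The same computation gives matching $2$-jets for $D^\vecc$ and $D^{\vecc'}$, and since both denominators are positive the quotients $\P_\vecv$ agree to second order along $F$. Having matched $2$-jets across every facet, the cubelet pieces also agree to second order at every lower-dimensional face (propagate the $2$-jet through a chain of facet-adjacent cubelets), so a standard gluing argument gives $\P_\vecv\in C^2([0,1]^d)$. The gluing is only $C^2$, not $C^3$, because $S'''(1)\ne0$ — this is exactly why the order-$2$ smooth step $S$ appears in Definition~\ref{d:cof}.

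With global $C^2$-ness established, the remaining assertions are short. If $\vecv\notin R_c(\vecx)$ then $\P_\vecv\equiv0$ on the full-dimensional cubelet $R(\vecx)$, so $\P_\vecv(\vecx)=\nabla\P_\vecv(\vecx)=\nabla^2\P_\vecv(\vecx)=0$. For the support, $\P_\vecv(\vecx)>0$ iff $\vecv\in R_c(\vecx)$ and $Q^\vecc_\vecv(\vecx)>0$, and by Definition~\ref{d:cof} the latter says exactly that $(A^\vecc_\vecv,B^\vecc_\vecv)$ is a \emph{threshold cut} of the canonical coordinates, i.e. $\max_{j\in A}p_j<\min_{\ell\in B}p_\ell$, with $A=\varnothing$ (resp.\ $B=\varnothing$) additionally requiring all $p_\ell>0$ (resp.\ all $p_j<1$). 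Sorting $p_{\sigma(1)}\le\cdots\le p_{\sigma(d)}$, such cuts are in bijection with the strict inequalities in the chain $0\le p_{\sigma(1)}\le\cdots\le p_{\sigma(d)}\le1$; these number $d+1$ (all strict) for $\vecx$ with distinct interior canonical coordinates, which is the case relevant to the reduction, giving $|R_+(\vecx)|=d+1$, and in all cases $R_+(\vecx)\subseteq R_c(\vecx)$ and is read off from a single sort of $\vecp^\vecc_\vecx$ in $\poly(d)$ time, which also yields properties~\ref{enu:properties:Coefficients:3}--\ref{enu:properties:Coefficients:4}. The work that requires genuine care is the facet-jet computation of the second paragraph for $\vecv\in R_c\cap R_{c'}$: expanding the mixed second partials of a product of $S_\infty$-terms, carrying the two oppositely-oriented canonical parametrizations across $F$, and isolating precisely where $S'(0)=S'(1)=0$ and the flatness of $S_\infty$ are used — packaging this into clean statements about the jets of $Q^\vecc_\vecv$ on the faces $\{p_i\in\{0,1\}\}$ is essentially the content of Appendix~\ref{s:Q_coef}.
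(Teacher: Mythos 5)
Your argument is correct and follows essentially the same route as the paper's proof in Appendix~\ref{s:Q_coef}: per-cubelet smoothness of the $Q^{\vecc}_{\vecv}$, then matching of values, gradients and Hessians across shared facets using exactly the facts of Lemmas~\ref{l:1}--\ref{l:continuity1} (the canonical coordinates agree, the crossing coordinate is reflected and equals $0$ or $1$ on the facet, $S'(0)=S'(1)=S''(0)=S''(1)=0$ for shared vertices, and the flatness of $S_{\infty}$ for vertices not shared by both cubelets), followed by the sorted threshold-cut description of $R_+(\vecx)$. Note that, exactly as in the paper's own proof, your argument really establishes $\abs{R_+(\vecx)}\le d+1$ (with equality only for generic $\vecx$, e.g.\ failing at grid vertices), which is all that is needed for property~\ref{enu:properties:Coefficients:3} of Definition~\ref{def:seic}.
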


\noindent Based on Lemma \ref{l:well_defined} and the expression of $\P_{\vecv}$
we can prove that the $\P_{\vecv}$ coefficients defined in Definition
\ref{d:_coeff} satisfy the properties \ref{enu:properties:Coefficients:2} and
\ref{enu:properties:Coefficients:3} of the definition \ref{def:seic}. To prove
the properties \ref{enu:properties:Coefficients:1} and
\ref{enu:properties:Coefficients:4} we also need the following two lemmas.

\begin{restatable}{lemma}{bgradients} \label{l:bounding_gradients}
    For any vertex $\vecv \in \p{\nm{N}}^d$, it holds that
  \begin{enumerate}
    \item $\abs{\frac{\partial \mathrm{P}_{\vecv}(\vecx)}{\partial x_i}} \leq \Theta(d^{12}/\delta)$,
    \item $\abs{\frac{\partial^2 \mathrm{P}_{\vecv}(\vecx)}{\partial x_i~\partial x_j}} \leq \Theta(d^{24}/\delta^2)$.
  \end{enumerate}
\end{restatable}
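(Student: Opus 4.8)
The plan is to reduce the desired bounds on the normalized coefficients $\P_{\vecv}$ to bounds on the unnormalized functions $Q^{\vecc}_{\vecv}$ of Definition~\ref{d:cof}, and ultimately to the smooth--step factors out of which these are built. Fix $\vecx$ in the interior of its cubelet $R(\vecx)$ with down-left corner $\vecc$; points on cubelet boundaries are handled by Lemma~\ref{l:well_defined}, which guarantees that $\P_{\vecv}$ and its gradient and Hessian agree across adjacent cubelets, so it suffices to differentiate within one cubelet, and the case $\vecv\notin R_c(\vecx)$ is trivial (again Lemma~\ref{l:well_defined}). Write $\Sigma(\vecx)=\sum_{\vecu\in R_c(\vecx)}Q^{\vecc}_{\vecu}(\vecx)$, so that $\P_{\vecv}=Q^{\vecc}_{\vecv}/\Sigma$ on $R_c(\vecx)$. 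By the quotient rule $\partial_i\P_{\vecv}=\partial_i Q^{\vecc}_{\vecv}/\Sigma-\P_{\vecv}\,\partial_i\Sigma/\Sigma$, where $\partial_i\Sigma=\sum_{\vecu\in R_+(\vecx)}\partial_i Q^{\vecc}_{\vecu}$ and, by property~\ref{enu:properties:Coefficients:3} / Lemma~\ref{l:well_defined}, only $d+1$ vertices $\vecu$ are active. Since $\P_{\vecv}\le 1$, the whole statement follows once we prove the key \emph{relative} estimate
\begin{equation}\label{eq:key-relative-bound}
  \abs{\partial_i Q^{\vecc}_{\vecu}(\vecx)}\;\le\;\frac{\poly(d)}{\delta}\,\Sigma(\vecx)\qquad\text{for every active }\vecu\text{ and every }i,
\end{equation}
because then $|\partial_i\P_{\vecv}|\le\poly(d)/\delta+(d+1)\poly(d)/\delta=O(\poly(d)/\delta)$, and tracking the precise polynomial loss gives the claimed $\Theta(d^{12}/\delta)$.

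To attack~\eqref{eq:key-relative-bound}, recall that $Q^{\vecc}_{\vecu}$ is a product of at most $d^2$ factors $S_\infty(u_f)$ with $u_f$ equal to $S(p_\ell)-S(p_j)$, $1-S(p_\ell)$ or $S(p_j)$, where $\vecp=\vecp^{\vecc}_{\vecx}$ and each $p_j$ is affine in the single coordinate $x_j$ with $|\partial p_j/\partial x_j|=1/\delta$; hence at most $d$ factors of $Q^{\vecc}_{\vecu}$ depend on $x_i$. Differentiating the product, $\partial_i Q^{\vecc}_{\vecu}=\sum_{f\ni x_i}S'_\infty(u_f)\,(\partial_i u_f)\,\bigl(Q^{\vecc}_{\vecu}/S_\infty(u_f)\bigr)$; using $|S'_\infty(u_f)|\le 16$ and $|\partial_i u_f|\le 2/\delta$ from Lemma~\ref{lem:stepFunctions} (and the sharper $S_\infty$ estimates of Lemma~\ref{l:infty}), \eqref{eq:key-relative-bound} reduces to the combinatorial \emph{single-factor removal} estimate
\begin{equation}\label{eq:removal}
  \frac{Q^{\vecc}_{\vecu}(\vecx)}{S_\infty(u_f)}\;=\;\prod_{f'\ne f}S_\infty(u_{f'})\;\le\;\poly(d)\,\Sigma(\vecx)\;=\;\poly(d)\,\max_{\vecw\in R_c(\vecx)}Q^{\vecc}_{\vecw}(\vecx),
\end{equation}
valid for every active $\vecu$ and every factor $f$. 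This is exactly where the particular shape of Definition~\ref{d:cof} is needed: the clamping rule zeroing out $Q^{\vecc}_{\vecv}$ whenever $p_j\ge p_\ell$ for some $j\in A^{\vecc}_{\vecv},\ell\in B^{\vecc}_{\vecv}$, together with the canonical representation, forces the active vertices to be precisely the ``threshold cuts'' of the sorted list of the $p_j$'s; deleting the factor $f$ from an active $Q^{\vecc}_{\vecu}$ then corresponds, up to terms that are $\Theta(1)$-close by continuity of $S_\infty$, to the $Q$-value of a neighbouring active vertex obtained by sliding the threshold past the value responsible for $f$, so the left side of~\eqref{eq:removal} can never exceed a $\poly(d)$-multiple of $\max_{\vecw}Q^{\vecc}_{\vecw}$. (It is convenient to organize this via an inequality of the form $|S'_\infty(u)|\le C\cdot S_\infty(u)\,(\log(1/S_\infty(u)))^2$, which makes transparent that a small factor of $Q^{\vecc}_{\vecu}$ forces $\partial_i Q^{\vecc}_{\vecu}$ to be correspondingly small; the crude absolute bound $|\partial_i Q^{\vecc}_{\vecu}|=O(d/\delta)$ is useless by itself, since $\Sigma$ can be as tiny as $2^{-d}$, e.g.\ at the centre of a cubelet, which is precisely why a \emph{relative} bound is required.)

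The second-derivative bound is handled the same way: differentiating the quotient-rule identity once more writes $\partial_i\partial_\ell\P_{\vecv}$ in terms of $\partial_i\partial_\ell Q^{\vecc}_{\vecv}$, $(\partial_i Q^{\vecc}_{\vecv})(\partial_\ell\Sigma)$, $Q^{\vecc}_{\vecv}\,\partial_i\partial_\ell\Sigma$ and $Q^{\vecc}_{\vecv}(\partial_i\Sigma)(\partial_\ell\Sigma)$, all over suitable powers of $\Sigma$; differentiating the product defining $Q^{\vecc}_{\vecu}$ twice and invoking~\eqref{eq:removal} once per deleted factor (now up to two factors) yields $|\partial_i\partial_\ell Q^{\vecc}_{\vecu}|\le\poly(d)/\delta^2\cdot\Sigma$, and, combined with $|R_+(\vecx)|=d+1$, gives $|\partial_i\partial_\ell\P_{\vecv}|=O(\poly(d)/\delta^2)$, i.e.\ $\Theta(d^{24}/\delta^2)$ after collecting the losses. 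I expect the genuine difficulty to be a clean proof of the removal estimate~\eqref{eq:removal} — in particular handling configurations with several coincident or nearly coincident coordinates $p_j$, where many factors are simultaneously near-degenerate — together with the bookkeeping of the polynomial-in-$d$ factors; these are the steps deferred to Appendix~\ref{s:Q_coef}.
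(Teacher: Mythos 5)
Your outer skeleton matches the paper's: quotient rule inside a cubelet, $\abs{R_+(\vecx)}\le d+1$, boundary matching via Lemma~\ref{l:well_defined}, and the reduction of both claims to a \emph{relative} bound $\abs{\partial Q^{\vecc}_{\vecu}/\partial p_i}\le \poly(d)\sum_{\vecw}Q^{\vecc}_{\vecw}$ (this is exactly Lemmas~\ref{l:bounding_one_gradient}, \ref{l:2nd-derivative}, \ref{l:derivation}, \ref{l:derivation2}). The gap is in how you prove that relative bound. You first discard the derivative factor via $\abs{S'_\infty(u_f)}\le 16$ and then rely on the unconditional single-factor removal estimate $\prod_{f'\ne f}S_\infty(u_{f'})\le\poly(d)\sum_{\vecw}Q^{\vecc}_{\vecw}$, justified by ``sliding the threshold to a neighbouring cut.'' That estimate is false. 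Take canonical coordinates with $S(p_{(k)})=k/(d+1)$ for $k=1,\dots,d$ (equally spaced in $S$-value). The active vertex $\vecu$ with $A^{\vecc}_{\vecu}=\{(1)\}$ has $Q^{\vecc}_{\vecu}=\prod_{k=1}^{d-1}S_\infty\!\bigl(k/(d+1)\bigr)$; removing its smallest factor $S_\infty\!\bigl(1/(d+1)\bigr)=\Theta(2^{-(d+1)})$ leaves a product $P\ge 2^{d-1}\prod_{k=1}^{d}S_\infty\!\bigl(k/(d+1)\bigr)$. But the largest $Q$'s in this configuration are the two extreme cuts and the cuts adjacent to them, all within an $O(1)$ factor of $\prod_{k=1}^{d}S_\infty\!\bigl(k/(d+1)\bigr)$, while every other cut is smaller by a further factor $2^{\Omega(d\log d)}$ (it contains at least two full ``harmonic'' strings of small factors). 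Hence $P\ge 2^{d-2}\max_{\vecw}Q^{\vecc}_{\vecw}$: the violation is exponential, not polynomial, and sliding the cut does not help because the neighbouring cut re-introduces the equally small factor $S_\infty(q_{(1)})=S_\infty\!\bigl(1/(d+1)\bigr)$.

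The lemma survives precisely because the factor you threw away is tiny in this situation: $S'_\infty\!\bigl(1/(d+1)\bigr)=\Theta\bigl(d^2 2^{-(d+1)}\bigr)$ cancels the $2^{d+1}$ loss, so $S'_\infty(u_f)$ must stay coupled to the removed product. Your parenthetical inequality $\abs{S'_\infty(u)}\le C\,S_\infty(u)\bigl(1+\log(1/S_\infty(u))\bigr)^2$ is the right tool but is not sufficient on its own, since $\log(1/S_\infty(u_f))$ is unbounded; the missing ingredient is a \emph{lower} bound on the normalizer, which you never state. By pigeonhole, among $0,p_1,\dots,p_d,1$ two consecutive values differ by at least $1/(d+1)$, so some cut has all crossing differences at least $1/(d+1)$ (at least $1/(d+1)^2$ after applying $S$), whence $\sum_{\vecw}Q^{\vecc}_{\vecw}\ge c^{d^2}2^{-(d+1)^2d^2}$. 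The paper's Lemma~\ref{l:bounding_one_gradient} then splits on a threshold: if $S(p_i)-S(p_j)\ge 1/d^5$ it uses $S'_\infty/S_\infty\le O(d^{10})$ (Lemma~\ref{l:infty}) to bound the term by $O(d^{10})\,Q^{\vecc}_{\vecu}$; if $S(p_i)-S(p_j)<1/d^5$ it uses the absolute bound $S'_\infty\le O(d^{10}2^{-d^5})$, which already beats $\poly(d)$ times the pigeonhole lower bound. Without this case split and the lower bound on $\sum_{\vecw}Q^{\vecc}_{\vecw}$ — neither of which your main route through the removal estimate can supply — the hard half of the first-derivative bound, and the analogous two-factor statement needed for the Hessian (Lemma~\ref{l:2nd-derivative}), are not established.
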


\begin{restatable}{lemma}{pcorners} \label{l:positive_corners_boundary}
    Let a point $\vecx \in [0,1]^d$ and $R_+(\vecx)$ the set of vertices with
  $\P_{\vecv}(\vecx) > 0$, then we have that
  \begin{enumerate}
    \item If $0 \leq x_i < 1/(N - 1)$ then there always exists a vertex
      $\vecv \in R_+(\vecx)$ such that $v_i = 0$.
    \item If $1 - 1/(N - 1) < x_i \leq 1$ then there always exists a vertex
      $\vecv \in R_+(\vecx)$ such that $v_i = 1$.
  \end{enumerate}
\end{restatable}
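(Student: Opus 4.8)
The plan is to deduce the lemma (hence property~\ref{enu:properties:Coefficients:4} of Definition~\ref{def:seic}) from an explicit combinatorial description of $R_+(\vecx)$ together with a one-line construction of a witness vertex. Fix $\vecx$, let $R(\vecx)$ be its cubelet, with lexicographically-first corner $\vecc$, source $\vecs^{\vecc}$ and target $\vect^{\vecc}$ as in Definition~\ref{d:orientation}, and let $\vecp = \vecp_{\vecx}^{\vecc} \in [0,1]^d$ be the canonical representation of Definition~\ref{d:canonical_representation}; recall that for $\vecv \in R_c(\vecx)$ the set $A_{\vecv}^{\vecc}$ (resp. $B_{\vecv}^{\vecc}$) collects the coordinates on which $\vecv$ agrees with the source (resp. with the target). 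It will be convenient to adjoin two ``phantom'' values $p_0 := 0$ and $p_{d + 1} := 1$ and to set $\widetilde A_{\vecv} := \{0\} \cup A_{\vecv}^{\vecc}$, $\widetilde B_{\vecv} := \{d + 1\} \cup B_{\vecv}^{\vecc}$. The first step is to establish the characterization
\[ Q_{\vecv}^{\vecc}(\vecx) > 0 \iff \max_{j \in \widetilde A_{\vecv}} p_j \ < \ \min_{\ell \in \widetilde B_{\vecv}} p_{\ell}. \]
This is immediate from Definition~\ref{d:cof} once one records two elementary facts from Lemma~\ref{lem:stepFunctions}: $S$ is strictly increasing on $[0,1]$ (indeed $S'(x) = 30 x^2 (1 - x)^2$), so $S(p_{\ell}) > S(p_j) \iff p_{\ell} > p_j$ for $p_j, p_{\ell} \in [0,1]$; and $S_{\infty}(z) > 0 \iff z > 0$. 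The three branches in the definition of $Q_{\vecv}^{\vecc}$ (both of $A_{\vecv}^{\vecc}, B_{\vecv}^{\vecc}$ nonempty; $B_{\vecv}^{\vecc} = \varnothing$, i.e.\ $\vecv = \vecs^{\vecc}$; $A_{\vecv}^{\vecc} = \varnothing$, i.e.\ $\vecv = \vect^{\vecc}$) then unwind to exactly the displayed separation, the factors $S_{\infty}(1 - S(p_{\ell}))$ and $S_{\infty}(S(p_j))$ playing the role of the comparisons against the phantom endpoints $p_{d + 1} = 1$ and $p_0 = 0$. As a byproduct, the ``top-gap'' partition $\widetilde A = \{0\} \cup \{j : p_j < 1\}$, $\widetilde B = \{d + 1\} \cup \{j : p_j = 1\}$ always satisfies the separation ($\max_{\widetilde A}p_j < 1 = \min_{\widetilde B}p_\ell$), so $\sum_{\vecv \in R_c(\vecx)} Q_{\vecv}^{\vecc}(\vecx) > 0$; hence the normalization in Definition~\ref{d:_coeff} is well defined and $R_+(\vecx) = \{\vecv \in R_c(\vecx) : Q_{\vecv}^{\vecc}(\vecx) > 0\}$ (this last point is also part of Lemma~\ref{l:well_defined}).

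With the characterization in hand, the two items reduce to locating coordinate $i$ on the correct side of a gap. For item~1, the hypothesis $0 \le x_i < 1/(N - 1)$ forces $\vecx$ into the first cubelet in direction $i$, so $c_i = 0$; since $0$ is even, Definition~\ref{d:orientation} gives $s^{\vecc}_i = 0$, $t^{\vecc}_i = 1$, and Definition~\ref{d:canonical_representation} gives $p_i = (N - 1) x_i \in [0, 1)$, in particular $p_i < 1$. I would take the witness $\vecv^{\star}$ with $v^{\star}_j = s^{\vecc}_j$ whenever $p_j < 1$ and $v^{\star}_j = t^{\vecc}_j$ whenever $p_j = 1$ (note $v^{\star}_j \in \{s^{\vecc}_j, t^{\vecc}_j\} = \{c_j, c_j+1\}$, so $\vecv^{\star} \in R_c(\vecx)$). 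This $\vecv^{\star}$ realizes the top-gap partition above, hence $\vecv^{\star} \in R_+(\vecx)$; and $p_i < 1$ gives $i \in \widetilde A_{\vecv^{\star}}$, i.e.\ $v^{\star}_i = s^{\vecc}_i = 0$. Item~2 is symmetric but needs a parity split. The hypothesis $1 - 1/(N - 1) < x_i \le 1$ forces $c_i = N - 2$. If $N - 2$ is even, then $s^{\vecc}_i = N - 2$, $t^{\vecc}_i = N - 1$ and $p_i = (N - 1) x_i - (N - 2) \in (0, 1]$, so $p_i > 0$, and the ``bottom-gap'' witness ($v^{\star}_j = t^{\vecc}_j$ if $p_j > 0$, else $v^{\star}_j = s^{\vecc}_j$) lies in $R_+(\vecx)$ with $i \in \widetilde B_{\vecv^{\star}}$, so $v^{\star}_i = t^{\vecc}_i = N - 1$. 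If $N - 2$ is odd, then $s^{\vecc}_i = N - 1$, $t^{\vecc}_i = N - 2$ and $p_i = (N - 1)(1 - x_i) \in [0, 1)$, so $p_i < 1$, and the top-gap witness of item~1 has $v^{\star}_i = s^{\vecc}_i = N - 1$. Either way one obtains a vertex of $R_+(\vecx)$ whose $i$-th coordinate is the maximal grid value $N - 1$, which is what ``$v_i = 1$'' abbreviates, consistently with the boundary conventions of $\HD$-$\BiSperner$.

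There is no genuinely deep step here: the argument is bookkeeping about the source/target orientation of the cubelet and its canonical coordinates. The one place that needs care — and the natural ``hard part'' — is the characterization of $\{\,\vecv : Q_{\vecv}^{\vecc}(\vecx) > 0\,\}$, in particular the uniform treatment of the source- and target-vertex branches of Definition~\ref{d:cof} via the phantom endpoints $p_0 = 0$, $p_{d+1} = 1$; this is also exactly what guarantees $R_+(\vecx) \neq \varnothing$ and hence the well-definedness of the SEIC coefficients. Once that characterization is established, the witnesses fall out immediately and the only remaining subtlety is the elementary parity case split on $c_i = N-2$ in item~2.
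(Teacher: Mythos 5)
Your proof is correct and follows essentially the same route as the paper's: you identify $c_i$ from the boundary hypothesis, use the source/target orientation and the canonical representation, and exhibit an explicit witness corner by splitting coordinates into a source side and a target side (with the same parity case split on $c_i = N-2$ for the upper boundary, which the paper phrases as a split on $N$). The only differences are cosmetic and, if anything, tidier: you first isolate the clean characterization of $\{\vecv : Q^{\vecc}_{\vecv}(\vecx) > 0\}$ via strict monotonicity of $S$ and positivity of $S_{\infty}$ (with the phantom endpoints unifying the source/target branches), and you place the splitting threshold at the extreme values $0/1$ rather than at $p_i$ as the paper does, both of which yield valid witnesses.
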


\noindent The proofs of Lemmas \ref{l:well_defined},
\ref{l:bounding_gradients}, and \ref{l:positive_corners_boundary} can be found
in Appendix \ref{s:Q_coef}. Based on Lemmas \ref{l:well_defined},
\ref{l:bounding_gradients}, and \ref{l:positive_corners_boundary} we are now
ready to prove Theorem \ref{thm:seic}.

\begin{proof}[Proof of Theorem \ref{thm:seic}]
    The fact that the coefficients $\P_{\vecv}$ satisfy the property
  \ref{enu:properties:Coefficients:1} follows directly from Lemma
  \ref{l:bounding_gradients}. Property \ref{enu:properties:Coefficients:2}
  follows directly from the definition of $\P_{\vecv}$ in Definition
  \ref{d:_coeff} and the simple fact that $Q^{\vecc}_{\vecv}(\vecx) \ge 0$.
  Property \ref{enu:properties:Coefficients:3} follows from the second part of
  Lemma \ref{l:well_defined}. Finally Property
  \ref{enu:properties:Coefficients:4} follows directly from Lemma
  \ref{l:positive_corners_boundary}.
\end{proof}

\section{Unconditional Black-Box Lower Bounds} \label{sec:blackBox}

  In this section our goal is to prove Theorem
\ref{thm:localNashBlackBoxLowerBound} based on the Theorem
\ref{thm:localNashHardness} that we proved in Section \ref{sec:hardness} and
the known black box lower bounds that we know for $\PPAD$ by \cite{HPV88}. In
this section we assume that all the real number operation are performed with
infinite precision.

\begin{theorem}[\cite{HPV88}] \label{thm:HPVLowerBound}
    Assume that there exists an algorithm $A$ that has black-box oracle access to
  the value of a  function $M : [0, 1]^d \to [0, 1]^d$ and outputs
  $\vecw^{\star} \in [0, 1]^d$. There exists a universal constant $c > 0$ such
  that if $M$ is $2$-Lipschitz and
  $\norm{M(\vecw^{\star}) - \vecw^{\star}}_2 \le 1/(2 c)$, then $A$ has to make
  at least $2^d$ different oracle calls to the function value of $M$.
\end{theorem}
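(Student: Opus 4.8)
The plan is to prove Theorem~\ref{thm:HPVLowerBound} by the classical adversary argument of Hirsch--Papadimitriou--Vavasis: we hide the (near-)fixed point of $M$ at the far end of a long, combinatorially defined path through a grid, and we let the oracle reveal that path only one cell at a time. First I would fix $d$ and a constant resolution $m$ (e.g.\ $m=4$), partition $[0,1]^d$ into the $m^d$ axis-aligned cells of side $1/m$, and look at the cell-adjacency graph (two cells adjacent iff they share a facet). The hard instance is parameterized by a self-avoiding path $\pi = (R_0, R_1, \dots, R_\ell)$ in this graph with $\ell \ge 2^d$; since $m^d \gg 2^d$ there is ample room for such paths and for rerouting them.

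Next I would carry out the continuous "snake/tube" construction. Given $\pi$, define the displacement field $v_\pi(\vecx) = M_\pi(\vecx) - \vecx$ so that, in the core of a cell $R_k$ on the path, $v_\pi$ is a constant vector of fixed length $\rho$ pointing from the entry facet of $R_k$ toward its exit facet; near facets and cell corners it turns smoothly over buffers of width $\Theta(1/m)$ so that (i) the definitions agree across every shared facet, (ii) $M_\pi$ keeps $[0,1]^d$ invariant, and (iii) $\norm{v_\pi(\vecx)}_2 \ge \rho/4$ at every point outside the interior of the terminal cell $R_\ell$, where instead a single smooth "sink" gives $v_\pi$ exactly one zero. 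Off the tube, $v_\pi$ points back toward the path, again with length $\ge \rho/4$. With $\rho,m$ absolute constants, the field has bounded magnitude and bounded turning rate $O(m\rho)$, so $M_\pi$ is $2$-Lipschitz once $m\rho$ is small; and since the "safe" threshold $\rho/4$ is an absolute constant, declaring $c$ to be the constant with $1/(2c) < \rho/4$ forces the only $\vecw$ with $\norm{M_\pi(\vecw)-\vecw}_2 \le 1/(2c)$ to lie in the interior of $R_\ell$.

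Then comes the adversary and the query count. When the algorithm queries $M_\pi(\vecx)$, the adversary locates the cell $R$ containing $\vecx$ and reveals only the local data of $\pi$ at $R$ (whether $R$ is on the path and, if so, its entry/exit facets), maintaining consistency with earlier answers but never committing to the portion of $\pi$ beyond the queried cells. The combinatorial heart is a rerouting lemma: as long as fewer than $2^d$ cells have been queried, the revealed partial path extends to a full self-avoiding path of length $\ge 2^d$ whose terminal cell can be made to be any of at least two distinct \emph{unqueried} cells. Hence an algorithm making fewer than $2^d$ queries cannot locate $R_\ell$: whichever cell $w$ contains its output $\vecw^\star$, the adversary completes $\pi$ with $R_\ell \ne w$, and then $\norm{M_\pi(\vecw^\star)-\vecw^\star}_2 > 1/(2c)$, contradicting the promised output. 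This gives the bound for deterministic algorithms, and Yao's principle against the uniform distribution over a suitably rich family of admissible paths $\pi$ extends it to randomized algorithms.

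The step I expect to be the main obstacle is the continuous construction: making $v_\pi$ simultaneously globally continuous, $2$-Lipschitz across the facets and corners where $\pi$ turns between coordinate directions, interior-preserving on $\partial[0,1]^d$, and uniformly bounded below in magnitude by an absolute constant everywhere except at the unique terminal sink, all within the tight budget that the Lipschitz constant be exactly $2$. A secondary, purely combinatorial obstacle is the rerouting lemma, i.e.\ showing that partial self-avoiding paths in the $m^d$ grid have many length-$\ge 2^d$ completions with controllable endpoint; this is routine when $m$ is a large enough constant (so the grid is very high-dimensional relative to the required path length), but it is what pins down the quantitative $2^d$ rather than a generic $2^{\Omega(d)}$, and it has to be reconciled with the choice of $\rho$ and $m$ made for the Lipschitz bound.
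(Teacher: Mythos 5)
The paper does not actually prove this statement: Theorem~\ref{thm:HPVLowerBound} is imported as a black box from Hirsch--Papadimitriou--Vavasis~\cite{HPV88} and is only used in Section~\ref{sec:blackBox}, so there is no in-paper proof to compare against. Your sketch is a reconstruction of the hidden-path adversary argument of~\cite{HPV88}, which is indeed the argument behind the cited result, so the approach is the right one; but as written it has two genuine gaps, both located exactly at the places you yourself flag as ``obstacles.''

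First, your off-tube field ``points back toward the path.'' If the value of $v_\pi$ at an off-path query point depends on where the hidden portion of $\pi$ lies, the adversary cannot answer that query while ``never committing to the portion of $\pi$ beyond the queried cells'': the numerical answer itself would reveal, and hence fix, part of the continuation. The construction must make the off-path displacement canonical (a default field depending only on $\vecx$, not on $\pi$, as in~\cite{HPV88}), and this same default field is what has to reconcile invariance of $[0,1]^d$ near the boundary with the uniform lower bound $\rho/4$ away from the sink; this is not a finesse-able detail but the core of the continuous construction. Second, the rerouting lemma is asserted rather than proved, and it is where the quantitative bound $2^d$ lives. The obvious worry is real: blocking all $2d$ facet-neighbors of a cell costs only $2d$ queries, so ``fewer than $2^d$ queried cells'' does not by itself guarantee that the committed prefix can still be extended to two distinct unqueried terminal cells. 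You need an actual connectivity/isoperimetry argument in the $m^d$ grid showing the adversary can steer the path away from queried cells and cannot be cornered cheaply, taking into account that every on-path query forces the adversary to commit an exit facet, and this interacts with your constant choices of $m$ and $\rho$. Until these two components are carried out, the proposal is a correct plan in the spirit of~\cite{HPV88}, but not yet a proof.
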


  It is easy to observe in the reduction in the proof of Theorem
\ref{t:Bi_Sperner_PPAD} is a black-box reduction and in every evaluation of the
constructed circuit $\calC_l$ only requires one evaluation of the input function
$M$. Therefore the proof of Theorem \ref{t:Bi_Sperner_PPAD} together with the
Theorem \ref{thm:HPVLowerBound} imply the following corollary.

\begin{figure}[t]
  \centering
  \includegraphics[scale=1.2]{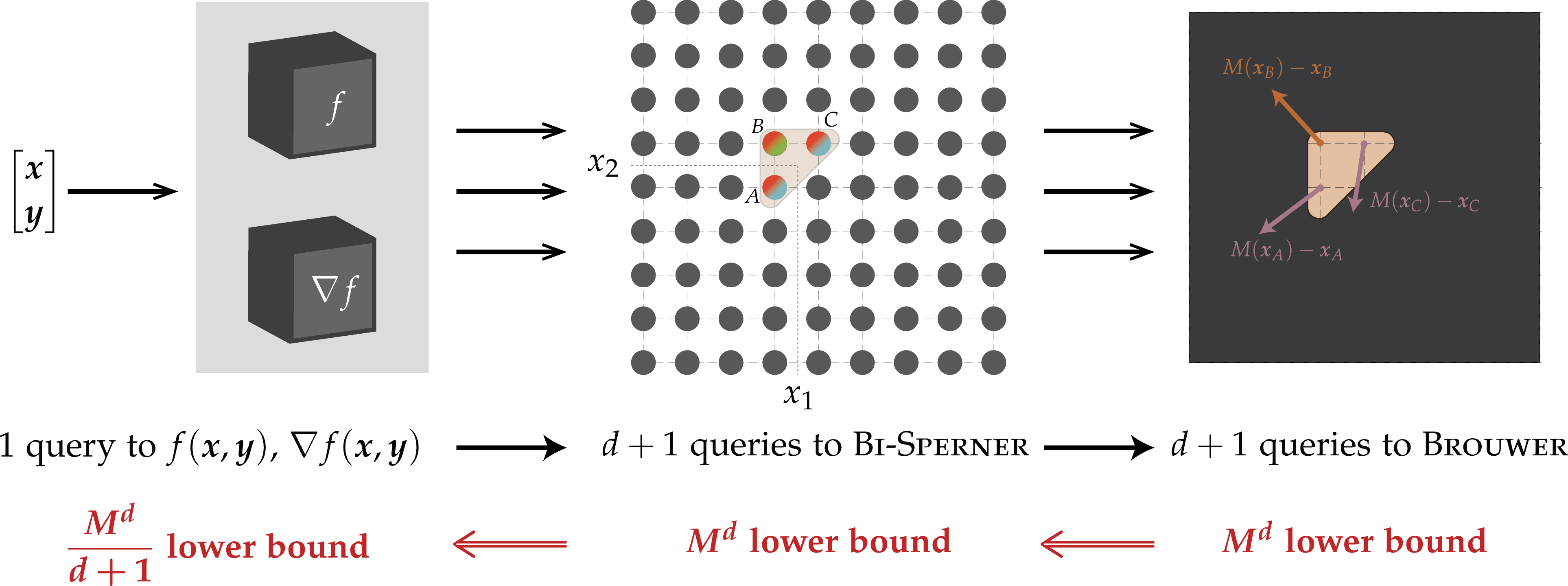}
  \caption{Pictorial representation on the way the black box lower bound
  follows from the white box $\PPAD$-completeness that presented in Section
  \ref{sec:hardness} and the known black box lower bounds for the
  $\textsc{Brouwer}$ problem by \cite{HPV88}. In the figure we can see the four
  dimensional case of Section \ref{sec:hardness:2D} that corresponds to the
  $\TD$-$\BiSperner$ and the 2-dimensional Brouwer. As we can see, in that case
  $1$ query to $\calO_f$ can be implemented with $3$ queries to
  $\TD$-$\BiSperner$ and each of these can be implemented with $1$ query to
  2-dimensional $\textsc{Brouwer}$. In the high dimensional setting of Section
  \ref{sec:hardness}, every query $(\vecx, \vecy)$ to the oracle $\calO_f$ to
  return the values $f(\vecx, \vecy)$ and $\nabla f(\vecx, \vecy)$ can be
  implemented via $d + 1$ oracles to an $\HD$-$\BiSperner$ instance. Each of
  these oracles to $\HD$-$\BiSperner$ can be implemented via $1$ oracle to a
  \textsc{Brouwer} instance. Therefore an $M^d$ query lower bound for
  \textsc{Brouwer} implies an $M^d$ query lower bound for $\HD$-$\BiSperner$
  which in turn implies an $M^d/(d + 1)$ query lower bound for our $\gdaFixed$
  and $\lrlNash$ problems.}
  \label{fig:blackBox}
\end{figure}

\begin{corollary}[Black-Box Lower Bound for Bi-Sperner] \label{cor:blackBoxBiSperner}
    Let $\calC_l : \p{\nm{N}}^d \to \set{-1, 1}^d$ be an instance of the
  $\HD$-$\BiSperner$ problem with $N = O(d)$. Then any algorithm that has
  black-box oracle access to $\calC_l$ and outputs a solution to the
  corresponding $\HD$-$\BiSperner$ problem, needs $2^d$ different oracle calls
  to the value of $\calC_l$.
\end{corollary}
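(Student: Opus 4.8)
The plan is to transfer the Hirsch--Papadimitriou--Vavasis query lower bound of Theorem~\ref{thm:HPVLowerBound} to $\HD$-$\BiSperner$ through the reduction of Theorem~\ref{t:Bi_Sperner_PPAD}, using crucially that this reduction is black-box and incurs only a one-to-one overhead in oracle queries. First I would fix the constant $\gamma = 1/2$ in Theorem~\ref{t:Bi_Sperner_PPAD}, so that the Brouwer maps $M : [0,1]^d \to [0,1]^d$ under consideration are $1/\gamma = 2$-Lipschitz, matching the hypothesis of Theorem~\ref{thm:HPVLowerBound}. For such an $M$, the reduction produces a coloring circuit $\calC_l : \p{\nm{N}}^d \to \set{-1,1}^d$; inspecting its construction, $N$ may be taken to be any value that is $\Omega(d)$, and in particular we can choose $N = \Theta(cd) = O(d)$ large enough that the error guarantee $\norm{M(\vecv/(N-1)) - \vecv/(N-1)}_2 \le 2d/(\gamma N)$ proved there becomes at most $1/(2c)$, where $c$ is the universal constant from Theorem~\ref{thm:HPVLowerBound}. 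Since in this section all arithmetic is exact, the circuit $\calC_l$ is defined directly from the signs of $g = M - \mathrm{id}$ (no accuracy parameter $\eta$ is needed), and a single evaluation of $\calC_l$ at a vertex $\vecv$ costs exactly one oracle call to $M$ at $\vecv/(N-1)$.

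Next I would record the two structural features of the reduction that make the transfer work. Because $M$ maps $[0,1]^d$ into itself, the boundary rules of a proper coloring hold automatically, so the only admissible output of the $\HD$-$\BiSperner$ instance is a panchromatic certificate: $2d$ vertices $\vecv^{(1)},\ldots,\vecv^{(d)},\vecu^{(1)},\ldots,\vecu^{(d)}$ on a common cubelet. As argued in the proof of Theorem~\ref{t:Bi_Sperner_PPAD}, for any such vertex $\vecv$ the point $\vecw^\star = \vecv/(N-1)$ already satisfies $\norm{M(\vecw^\star) - \vecw^\star}_2 \le 2d/(\gamma N) \le 1/(2c)$, and this point is extracted from the certificate without any further oracle calls.

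Finally, suppose for contradiction that an algorithm $A$ solves $\HD$-$\BiSperner$ on all instances with $N = O(d)$ using fewer than $2^d$ distinct oracle calls to $\calC_l$. Composing $A$ with the reduction above yields an algorithm that, on black-box access to an arbitrary $2$-Lipschitz $M : [0,1]^d \to [0,1]^d$, makes fewer than $2^d$ distinct calls to $M$ and returns $\vecw^\star$ with $\norm{M(\vecw^\star) - \vecw^\star}_2 \le 1/(2c)$ --- contradicting Theorem~\ref{thm:HPVLowerBound}. Hence $A$ must make at least $2^d$ distinct oracle calls, proving the corollary. The only delicate point, and it is bookkeeping rather than anything deep, is the simultaneous calibration in the first paragraph: one must verify that the Lipschitz bound ($2$) and the inapproximability threshold ($1/(2c)$) demanded by Theorem~\ref{thm:HPVLowerBound} are both met by Theorem~\ref{t:Bi_Sperner_PPAD} while staying in the regime $N = O(d)$ --- which works precisely because $1/\gamma$ is a constant and the reduction's error decays like $1/N$, so $N = \Theta(cd)$ suffices.
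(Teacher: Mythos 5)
Your proof is correct and follows essentially the same route as the paper: the paper also observes that the reduction of Theorem~\ref{t:Bi_Sperner_PPAD} is black-box with each query to $\calC_l$ costing exactly one query to $M$, and then invokes Theorem~\ref{thm:HPVLowerBound}. Your extra care in calibrating $\gamma$ (so $M$ is $2$-Lipschitz) and $N=\Theta(cd)$ (so the extracted point meets the $1/(2c)$ threshold) just makes explicit the bookkeeping the paper leaves implicit.
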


  Based on Corollary \ref{cor:blackBoxBiSperner} and the reduction that we
presented in Section \ref{sec:hardness}, we are now ready to prove Theorem
\ref{thm:localNashBlackBoxLowerBound}.

\begin{proof}[Proof of Theorem \ref{thm:localNashBlackBoxLowerBound}]
    This proof follows the steps of Figure \ref{fig:blackBox}. The last part of
  that figure is established in Corollary \ref{cor:blackBoxBiSperner}. So what
  is left to prove Theorem \ref{thm:localNashBlackBoxLowerBound} is that for
  every instance of $\HD$-$\BiSperner$ we can construct a function $f$ such that
  the oracle $\calO_f$ can be implemented via $d + 1$ queries to the instance of
  $\HD$-$\BiSperner$ and also every solution of $\gdaFixed$ with oracle access
  $\calO_f$ to $f$ and $\nabla f$ reveals one solution of the starting
  $\HD$-$\BiSperner$ instance.

    To construct this oracle $\calO_f$ we follow exactly the reduction that we
  described in Section \ref{sec:hardness}. The correctness of the reduction that
  we provide in Section \ref{sec:hardness} suffices to prove that every solution
  of the $\gdaFixed$ with oracle access $\calO_f$ to $f$ and $\nabla f$ gives a
  solution to the initial $\HD$-$\BiSperner$ instance. So the only thing that
  remains is to bound the number of queries to the $\HD$-$\BiSperner$ instance
  that we need in order to implement the oracle $\calO_f$. To do this consider
  the following definition of $f$ based on an instance $\calC_l$ of
  $\HD$-$\BiSperner$ from Definition \ref{d:d-payoff} with a scaling factor to make sure that the range of the function is $[-1, 1]$
  \[ f_{\calC_l}(\vecx, \vecy) = \frac{1}{d} \cdot \sum_{j = 1}^d (x_j - y_j)\cdot \alpha_j(\vecx) \]
  \noindent where
  $\alpha_j(\vecx) = - \sum_{\vecv \in \p{\nm{N}}^d} \P_{\vecv}(\vecx) \cdot \calC^j_l(\vecv)$,
  and $\P_{\vecv}$ are the coefficients defined in Definition \ref{def:seic}.
  From the property \ref{enu:properties:Coefficients:3} of the coefficients
  $\P_{\vecv}$ we have that to evaluate $a_j(\vecx)$ we only need the values
  $\calC_l^j(\vecv)$ for $d + 1$ coefficients $\vecv$ and the same coefficients
  are needed to evaluate $a_j(\vecx)$ for every $j$. This implies that for every
  $(\vecx, \vecy)$ we need $d + 1$ oracle calls to the instance $\calC_l$ of
  $\HD$-$\BiSperner$ so that $\calO_f$ returns the value of
  $f_{\calC_l}(\vecx, \vecy)$. If we take the gradient of $f_{\calC_l}$ with
  respect to $(\vecx, \vecy)$ then an identical argument implies that the same
  set of $d + 1$ queries to $\HD$-$\BiSperner$ are needed so that $\calO_f$
  returns the value of $\nabla f_{\calC_l}(\vecx, \vecy)$ too. Therefore every
  query to the oracle $\calO_f$ can be implemented via $d + 1$ queries to
  $\calC_l$. Now we can use Corollary \ref{cor:blackBoxBiSperner} to get that
  the number of queries that we need in order to solve the $\gdaFixed$ with
  oracle access $\calO_f$ to $f$ and $\nabla f$ is at least $2^d/(d + 1)$.
  Finally observe that the proof of the Theorem \ref{t:LocalMaxMin-GDA} applies
  in th the black box model too. Hence finding solution of $\gdaFixed$ in when
  we have black box access $\calO_f$ to $f$ and $\nabla f$ is equivalent to
  finding solutions of $\lrlNash$ when we have exactly the same black box access
  $\calO_f$ to $f$ and $\nabla f$. Therefore to find solutions of $\lrlNash$
  with black box access $\calO_f$ to $f$ and $\nabla f$ we need at least
  $2^d/(d + 1)$ queries to $\calO_f$ and the theorem follows by observing that
  in our proof the only parameters that depend on $d$ are $L$, $G$, $\eps$, and
  possibly $\delta$ but $1/\delta = O(\sqrt{L/\eps})$ and hence the dependence
  of $\delta$ can be replaced by dependence on $L$ and $\eps$.
\end{proof}

\section{Hardness in the Global Regime} \label{sec:localNash}

  In this section our goal is to prove that the complexity of the problems
$\lNash$ and $\lmin$ is significantly increased when $\eps$, $\delta$ lie
outside the local regime, in the global regime. We start with the following
theorem where we show that $\FNP$-hardness of $\lNash$.

\begin{theorem} \label{t:local_nash}
    $\lNash$ is $\FNP$-hard even when $\eps$ is
  set to any value $\le 1/384$, $\delta$ is set to any value $\ge 1$, and even
  when $\calP(\matA, \vecb) = [0, 1]^d$, $G = \sqrt{d}$, $L = d$, and $B = d$.
\end{theorem}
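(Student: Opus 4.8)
The plan is to reduce in polynomial time the search version of a bounded-occurrence variant of \textsc{3-Sat} --- given a $3$-CNF formula in which each variable occurs in at most a constant number of clauses, output a satisfying assignment or report that none exists, which is $\FNP$-hard --- to the search problem $\lNash$. From a formula $\phi$ with $n$ variables and $m$ clauses we will construct an instance $(f,\matA,\vecb,\eps,\delta,G,L,B)$ of $\lNash$ with $\calP(\matA,\vecb)=[0,1]^d$ so that: (i) if $\phi$ is satisfiable the instance has a solution, and from \emph{any} solution one recovers in polynomial time a satisfying assignment of $\phi$; and (ii) if $\phi$ is unsatisfiable the instance has no solution, so the oracle must return $\bot$. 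Composing with the $\FNP$-hardness of bounded-occurrence \textsc{3-Sat} yields the theorem.

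For the construction I would set $d_1=n$, $d_2=m$, hence $d=n+m$, and possibly pad $d$ with inert coordinates to slacken the Lipschitz and smoothness budgets; the coordinates $x_1,\dots,x_n\in[0,1]$ carry a fractional truth assignment and the coordinates $y_1,\dots,y_m\in[0,1]$ are ``adversary'' variables, one per clause. The objective is a sum of $O(m)$ gadgets depending on $O(1)$ coordinates each: a \emph{clause gadget} $y_j\cdot v_j(\vecx)$ for each clause $C_j$, where $v_j(\vecx)\in[0,1]$ is the degree-$3$ polynomial equal to the product of the ``dissatisfactions'' ($1-x_i$ or $x_i$) of the literals of $C_j$, so that $v_j(\vecx)$ is exactly the probability that independent randomized rounding of $\vecx$ falsifies $C_j$ and $\max_{y_j\in[0,1]}y_j\,v_j(\vecx)=\max(0,v_j(\vecx))$ lets the max player enact the nonsmooth clause penalty; together with \emph{variable gadgets} $w_i(x_i)$ (double-well shaped, globally minimized at $x_i\in\{0,1\}$) and global normalizing constants chosen so that a boolean satisfying assignment is a genuine equilibrium and the interference between overlapping clauses is dominated. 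Since each variable occurs in $O(1)$ clauses, a direct computation bounds the gradient by $O(\sqrt d)$, the Hessian (in Frobenius, hence spectral, norm) by $O(\sqrt d)$, and $\abs{f}$ by $O(m)=O(d)$; rescaling by absolute constants then brings the parameters to exactly $G=\sqrt d$, $L=d$, $B=d$, and by Remark~\ref{rem:boundOnValues} these are admissible.

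Completeness is the easy direction: if $a\in\{0,1\}^n$ satisfies $\phi$ then $v_j(a)=0$ for every $j$, so $\veczero$ is a best response of the max player and $f(a,\cdot)$ is constant in $\vecy$, while moving $\vecx$ away from $a$ cannot decrease $f(\cdot,\veczero)$ because the variable gadgets are globally minimized at boolean coordinates and the clause terms vanish at $\vecy=\veczero$; thus $(a,\veczero)$ is an $(\eps,\delta)$-local min-max equilibrium for every $\eps\ge 0$ and $\delta>0$. Soundness rests on the Lov\'asz Local Lemma. Using that $\delta\ge 1$ lets either player move a single coordinate across all of $[0,1]$ within the $\delta$-ball, one analyzes the local deviations of both players at an arbitrary $(\eps,\delta)$-equilibrium $(\vecx^\star,\vecy^\star)$ and, exploiting that each clause shares a variable with only $O(1)$ others, derives that $v_j(\vecx^\star)\le p$ for every clause $C_j$, where $p$ is a small absolute constant governed by the construction and by $\eps$. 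Since $v_j(\vecx^\star)$ is the probability that the product-rounding of $\vecx^\star$ falsifies $C_j$ and the bad events have dependency degree $O(1)$, for $\eps$ below an absolute threshold --- in particular $\eps\le 1/384$ --- the algorithmic (Moser--Tardos) Lov\'asz Local Lemma produces in polynomial time a boolean assignment satisfying all clauses. Hence a satisfying assignment is recoverable from any solution when $\phi$ is satisfiable, and no equilibrium exists (so $\bot$ is returned) when $\phi$ is unsatisfiable.

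The main obstacle is the quantitative calibration inside the soundness step: one must tune the clause gadgets, the double-well variable gadgets, and the global scaling so that (a) a clause that is more than slightly violated always creates, for one of the two players, a feasible local improvement of magnitude bounded below by a universal constant comfortably exceeding $\eps$; (b) this improvement survives the bounded but nonzero interference from the $O(1)$ other clauses sharing a variable with it and from the variable gadgets; and (c) it remains compatible with the global bounds $G=\sqrt d$, $L=d$, which cap each gadget's magnitude and make the surviving constant delicate. Extracting a clean absolute constant such as $1/384$ from this trade-off, uniformly over all $\delta\ge 1$, is the technical heart of the proof. A secondary difficulty is making rigorous the passage from the (possibly boundary-constrained) equilibrium inequalities to the per-clause bound $v_j(\vecx^\star)\le p$, which requires reasoning about best responses to a \emph{fractional} $\vecx^\star$ rather than a boolean one and about KKT-type stationarity on the faces of $[0,1]^d$.
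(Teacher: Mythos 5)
There is a genuine gap, and it is not just a calibration issue: your soundness step fails because your minimizer has no interference-free way to punish a violated clause. In your construction the clause term is $y_j\cdot v_j(\vecx)$, so once $y_j^{\star}$ is close to $1$ the maximizer's deviation on $y_j$ gives no information about $v_j(\vecx^{\star})$, and the only lever the minimizer has on clause $j$ is the shared block $\vecx$. Moving $x_i$ to satisfy clause $j$ also perturbs the other clauses containing $x_i$ and the double-well gadget $w_i$, and these side effects can exactly cancel the gain: take an unsatisfiable \textsc{3-Sat(3)} formula and a boolean $\vecx^{\star}$ minimizing the number of violated clauses, with $\vecy^{\star}=\vec{1}$. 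The maximizer cannot improve (each $y_j$ is already at its maximum and $v_j\ge 0$), and since $\sum_j v_j(\vecx)$ is multilinear and the $w_i$ are minimized at the endpoints, single-coordinate moves reduce to flips, which by optimality of $\vecx^{\star}$ cannot lower the violation count; nothing in your argument rules out that such MAX-SAT-style local optima are $(\eps,\delta)$-equilibria even though no satisfying assignment exists. Hence the claimed per-clause bound $v_j(\vecx^{\star})\le p$ does not follow, and no tuning of the gadget amplitudes within the budgets $G=\sqrt d$, $L=d$ repairs this, because the obstruction is structural (it is the same reason local search does not solve SAT).

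The paper's construction avoids exactly this by giving the minimizer a \emph{private} variable per clause: the objective is $\sum_j P_j(\vecx)\,(w_j-z_j)^2$, with the minimizer controlling $(\vecx,\vecw)$ and the maximizer controlling $\vecz$. For each clause, either $\abs{w_j^{\star}-z_j^{\star}}\ge 1/4$, in which case the minimizer sets $w_j:=z_j^{\star}$ and gains $P_j(\vecx^{\star})/16$ touching no other term, or $\abs{w_j^{\star}-z_j^{\star}}\le 1/4$, in which case the maximizer moves $z_j$ to whichever of $\{0,1\}$ is far from $w_j^{\star}$ and gains at least $P_j(\vecx^{\star})/16$, again with zero interference. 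This yields $P_j(\vecx^{\star})\le 16\eps$ uniformly, after which your LLL/Moser--Tardos endgame (which matches the paper) goes through. Your completeness direction and the Lipschitz/smoothness accounting are fine; the missing idea is this per-clause $(w_j,z_j)$ tug-of-war multiplying the violation polynomial, which decouples the equilibrium analysis from the shared assignment variables.
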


\begin{proof}
    We now present a reduction from \textsc{3-SAT(3)} to $\lNash$ that
  proves Theorem~\ref{t:local_nash}. First we remind the definition of the
  problem \textsc{3-SAT(3)}.

  \smallskip
  \begin{nproblem}[\textsc{3-SAT(3)}]
    \textsc{Input:} A boolean CNF-formula $\phi$ with boolean variables
    $x_1, \dots, x_n$ such that every clause of $\phi$ has at most $3$ boolean
    variables and every boolean variable appears to at most $3$ clauses.
    \smallskip

    \noindent \textsc{Output:} An assignment $\vecx \in \{0, 1\}^n$ that
    satisfies $\phi$, or $\bot$ if no such assignment exists.
  \end{nproblem}
  \medskip

    Given an instance of \textsc{3-SAT(3)} we first construct a polynomial
  $P_j(\vecx)$ for each clause $\phi_j$ as follows: for each boolean variable
  $x_i$ (there are $n$ boolean variables $x_i$) we correspond a respective real-valued variable $x_i$. Then for each
  clause $\phi_j$ (there are $m$ such clauses), let $\ell_i,\ell_k,\ell_m$ denote the literals participating
  in $\phi_j$,
  $P_j(\vecx) = P_{ji}(\vecx)\cdot P_{jk}(\vecx) \cdot P_{jm}(\vecx)$ where
  \[ P_{ji}(\vecx) = \left\{
     \begin{array}{ll}
       1-x_i & \text{if } \ell_i = x_i \\
       x_i   & \text{if } \ell_i = \overline{x_i}\\
     \end{array}
     \right. \]
  \noindent Then the overall constructed function is
  \[ f(\vecx, \vecw, \vecz) = \sum_{j=1}^m P_j(\vecx) \cdot (w_j - z_j)^2 \]
  \noindent where each $w_j, z_j$ are additional variables associated with
  clause $\phi_j$. The player that wants to minimize $f$ controls $\vecx, \vecw$
  vectors while the maximizing player controls the $\vecz$ variables.

  \begin{lemma} \label{lem:proof:t:local_nash}
      The formula $\phi$ admits a satisfying assignment if and only if there
    exist an $(\eps, \delta)$-local min-max equilibrium of $f(\vecx,\vecw)$ with
    $\eps \leq 1/384$, $\delta = 1$ and $(\vecx,\vecw) \in [0,1]^{n + 2m}$.
  \end{lemma}
  \begin{proof}
    Let us assume that there
    exists a satisfying assignment. Given such a satisfying assignment we will construct $\left((\vecx^\star,\vecw^\star), \vecz^\star\right)$ that is a $(0, 1)$-local min-max equilibrium of $f$.
    We set each variable
    $x_i^\star \triangleq 1$ if and only if the respective boolean variable is
    true. Observe that this implies that $P_j(\vecx^\star) = 0$ for all $j$,
    meaning that the strategy profile
    $\left((\vecx^\star,\vecw^\star),\vecz^\star\right)$ is a global Nash
    equilibrium no matter the values of $\vecw^\star, \vecz^\star$.

    On the opposite direction, let us assume that there exists an
    $(\eps, \delta)$-local min-max equilibrium of $f$ with $\eps = 1/384$ and $\delta = 1$. In
    this case we first prove that for each $j = 1, \ldots, m$
    \[ P_j(\vecx^\star) \leq 16 \cdot \eps. \]
    \noindent Fix any clause $j$. In case $\abs{w_j^\star - z_j^\star} \geq 1/4$
    then the minimizing player can further decrease $f$ by at least $P_j(x)/16$
    by setting $w_j^\star \triangleq z_j^\star$. On the other hand in case
    $\abs{w_j^\star - z_j^\star} \leq 1/4$ then the maximizing player can
    increase $f$ by at least $P_j(x^\star)/16$ by moving $z_j^\star$ either to
    $0$ or to $1$. We remark that both of the options are feasible since
    $\delta = 1$.

      Now consider the probability distribution over the boolean assignments
    where each boolean variable $x_i$ is independently selected to be true with
    probability $x_i^\star$. Then,
    \[ \Prob\left( \text{clause } \phi_j \text{ is not satisfied} \right) = P_j(\vecx^\star) \leq 16 \cdot \eps = 1/24 \]
    \noindent Since each $\phi_j$ shares variables with at most $6$ other
    clauses, the event of $\phi_j$ not being satisfied is dependent with at most
    $6$ other events. By the Lov\'asz Local Lemma \cite{ErdosL1973}, we get that
    the probability none of these events occur is positive. As a result, there
    exists a satisfying assignment.
  \end{proof}

    Hence the formula $\phi$ is satisfiable if and only if $f$ has a
  $(1/384, 1)$-local min-max equilibrium point. What is left to prove the
  $\FNP$-hardness is to show how we can find a satisfying assignment of $\phi$
  given an approximate stationary point of $f$. This can be done using the
  celebrated results that provide constructive proofs of the Lov\'asz Local
  Lemma \cite{Moser'09, MoserT'10}. Finally to conclude the proof observe that
  since the $f$ that we construct is a polynomial of degree $6$ which can
  efficiently be described as a sum of monomials, we can trivially construct a
  Turing machine that computes the values of both $f$ and $\nabla f$ in the
  polynomial time in the requested number of bits accuracy. The constructed function $f$ is $\sqrt{d}$-Lipschitz and $d$-smooth,
  where $d$ is the number of variables that is equal to $n + 2 m$. More precisely since each variable $x_i$ participates in at most $3$ clauses, the real-valued variable $x_i$ appears in at most $3$ monomials $P_j$. Thus
  $-3 \leq \frac{\partial  f(\vecx,\vecw,\vecx)}{\partial x_i} \leq 3$. Similarly it is not hard to see that
  $-2 \leq \frac{\partial f(\vecx,\vecw,\vecx)}{\partial w_j} ,
  \frac{\partial f(\vecx,\vecw,\vecx)}{\partial z_j}
  \leq 2$. All the latter imply that $\norm{\nabla f(\vecx,\vecw,\vecz)}_2 \leq \Theta(\sqrt{n+m})$, meaning that $f(\vecx,\vecw,\vecz)$ is $\Theta(n+m)$-Lipschitz. Using again the fact that each $x_i$ participates in at most $3$ monomials $P_j(\vecx)$, we get that all the terms
  $\frac{\partial ^2 f(\vecx,\vecw,\vecz)}{\partial^ 2 x_i},\frac{\partial ^2 f(\vecx,\vecw,\vecz)}{\partial^ 2 w_j},\frac{\partial ^2 f(\vecx,\vecw,\vecz)}{\partial^ 2 z_j},
  \frac{\partial ^2 f(\vecx,\vecw,\vecz)}{\partial x_i~\partial w_j},
    \frac{\partial ^2 f(\vecx,\vecw,\vecz)}{\partial x_i~\partial z_j},
      \frac{\partial ^2 f(\vecx,\vecw,\vecz)}{\partial w_j~\partial z_j} \in [-6,6]$. Thus the absolute value of each entry of $\nabla^2 f(\vecx,\vecw,\vecz)$ is bounded by $6$ and thus $\norm{\nabla^2 f(\vecx,\vecw,\vecz)}_2 \leq \Theta(n + m)$, which implies the $\Theta(n+m)$-smoothness.
  Therefore our
  reduction produces a valid instance of $\lNash$ and hence the theorem follows.
\end{proof}

  Next we show the $\FNP$-hardness of $\lmin$. As we can see there is a gap
between Theorem \ref{t:local_nash} and Theorem \ref{t:local_min}. In particular,
the $\FNP$-hardness result of $\lNash$ is stronger since it holds for any
$\delta \ge 1$ whereas for the $\FNP$-hardness of $\lmin$ our proof needs
$\delta \ge \sqrt{d}$ when the rest of the parameters remain the same.

\begin{theorem} \label{t:local_min}
    $\lmin$ is $\FNP$-hard even when $\eps$ is set to any value $\le 1/24$,
  $\delta$ is set to any value $\ge \sqrt{d}$, and even
  when $\calP(\matA, \vecb) = [0, 1]^d$, $G = \sqrt{d}$, $L = d$, and $B = d$.
\end{theorem}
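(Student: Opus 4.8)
The plan is to adapt the reduction behind Theorem~\ref{t:local_nash} to a single-player objective, reducing from \textsc{3-SAT(3)}. Given a \textsc{3-SAT(3)} formula $\phi$ with $n$ variables and $m$ clauses, associate to each clause $\phi_j$ the polynomial $P_j(\vecx)=\prod_{\ell}P_{j\ell}(\vecx)$ exactly as in the proof of Theorem~\ref{t:local_nash} (so each factor is $1-x_i$ or $x_i$ depending on the sign of the literal, each $P_j$ is nonnegative on $[0,1]^n$, and $P_j(\vecx)=0$ whenever some literal of $\phi_j$ is witnessed at a Boolean extreme of its variable). Set $f(\vecx)=\sum_{j=1}^m P_j(\vecx)$ and then pad the domain with $\Theta(n)$ dummy coordinates on which $f$ does not depend, so that the final dimension $d$ is $\Theta(n)$ and the promised parameters $G=\sqrt d$, $L=d$, $B=d$ all hold: using that each variable occurs in at most $3$ clauses, every coordinate of $\nabla f$ and every entry of $\nabla^2 f$ is bounded by an absolute constant, only $O(n)$ entries of $\nabla^2 f$ are nonzero, and $f\le m\le 3n$, so $f$ is $O(\sqrt n)$-Lipschitz, $O(\sqrt n)$-smooth and $O(n)$-bounded, and enough padding makes $\sqrt d$, $d$, $d$ respectively upper-bound these quantities. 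The feasible set is $\calP(\matA,\vecb)=[0,1]^d$, and $\delta$ is set to any value $\ge\sqrt d$.

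Next I would use that $\delta\ge\sqrt d=\diam_2([0,1]^d)$, so $\ball_d(\delta;\vecx^\star)\supseteq[0,1]^d$ and hence any $(\eps,\delta)$-local minimum $\vecx^\star$ of the constructed instance is an $\eps$-approximate \emph{global} minimum: $f(\vecx^\star)<f(\vecx)+\eps$ for all $\vecx\in[0,1]^d$. From this I would derive the key equivalence: $\phi$ is satisfiable if and only if $f(\vecx^\star)<\eps$. For the forward direction, a Boolean satisfying assignment $\vecx^{\mathrm{sat}}$ has $P_j(\vecx^{\mathrm{sat}})=0$ for every $j$, so $f(\vecx^{\mathrm{sat}})=0$ and therefore $f(\vecx^\star)<\eps$. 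For the converse, $f(\vecx^\star)<\eps$ and nonnegativity of the $P_j$ force $P_j(\vecx^\star)<\eps$ for every $j$; then, exactly as in the proof of Lemma~\ref{lem:proof:t:local_nash}, drawing each $x_i$ true independently with probability $x_i^\star$ makes clause $\phi_j$ violated with probability $P_j(\vecx^\star)<\eps\le 1/24$, and since in \textsc{3-SAT(3)} each clause shares variables with at most $6$ others, the Lov\'asz Local Lemma~\cite{ErdosL1973} gives a satisfying assignment, which the constructive versions~\cite{Moser'09, MoserT'10} (run with this product distribution) produce in polynomial time.

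The reduction is completed by the back-map $g$ that, on a solution $\vecx^\star$ of the \textsc{LocalMin} instance, evaluates $f(\vecx^\star)$: if $f(\vecx^\star)<\eps$ it runs the constructive Local Lemma algorithm to output a satisfying assignment, and otherwise it outputs $\bot$, which is correct because $f(\vecx^\star)\ge\eps$ forces (by the equivalence) $\phi$ to be unsatisfiable. Since $f$ is a degree-$3$ polynomial presented as an explicit sum of monomials, a polynomial-time Turing machine evaluating $f$ and $\nabla f$ to any requested accuracy is immediate, and $\matA,\vecb$ cutting out $[0,1]^d$ together with the scalars $\eps=1/24$, $\delta=\sqrt d$, $G=\sqrt d$, $L=d$, $B=d$ are all polynomial-size, so the output is a legal instance of the promise problem $\lmin$.

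I expect the only delicate point to be bookkeeping rather than conceptual depth: one must choose the amount of padding and check that the promised $G,L,B$ dominate the true Lipschitz/smoothness/boundedness constants while keeping $\eps$ below the Local-Lemma threshold $1/(7e)$ (hence the value $1/24$), and, more importantly, observe why the statement asks for $\delta\ge\sqrt d$ rather than the $\delta\ge1$ available for $\lNash$: with no maximizing player there is no $(w_j-z_j)^2$ gadget to localize the SAT instance, so the argument genuinely needs $\delta$ at the scale of the diameter of the cube to turn a local minimum into a (near-)global one — this is the precise counterpart of the obstruction discussed in Section~\ref{sec:intro:minimization} and Remark~\ref{rem:easyMinimum:2D}.
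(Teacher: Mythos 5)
Your proposal is correct and matches the paper's own proof: the paper likewise reduces from \textsc{3-SAT(3)} by setting $f(\vecx)=\sum_{j=1}^m P_j(\vecx)$ on $[0,1]^d$, uses $\delta\ge\sqrt{d}$ (the diameter) so that any $(\eps,\delta)$-local minimum is an $\eps$-global minimum, and then applies the Lov\'asz Local Lemma together with its constructive versions~\cite{Moser'09, MoserT'10} exactly as you do. Your extra care about padding so that $G=\sqrt d$, $L=d$, $B=d$ dominate the true constants is sensible bookkeeping that the paper leaves implicit, but it is not a different argument.
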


\begin{proof}
    We follow the same proof as in the proof of Theorem \ref{t:local_nash} but
  we instead set $f(\vecx) = \sum_{j = 1}^m P_j(\vecx)$ where $\vecx \in [0,1]^n$ (the number of variables is $d:= n$). We then get that if the initial formula is satisfiable then there exist
  $\vecx \in \calP(\matA, \vecb)$, such that $f(\vecx) = 0$. On
  the other hand if there exist $\vecx \in \calP(\matA, \vecb)$ such
  that $f(\vecx) \le 1/24$ then the formula is satisfiable due to the
  Lov\'asz Local Lemma \cite{ErdosL1973}. Therefore the $\FNP$-hardness follows
  again from the constructive proof of the Lov\'asz Local Lemma
  \cite{Moser'09, MoserT'10}. Setting $\delta \geq \sqrt{n}$ which equals the
  diameter of the feasibility set implies that in case there exists
  $\hat{\vecx}$ with $f(\hat{\vecx}) =0$ then all
  $(\eps,\delta)$-\textsc{LocalMin} $\vecx^\ast$ must admit value
  $f(\vecx^\ast) \le 1/24$ and thus a satisfying assignment is implied.
\end{proof}

  Next we prove a black box lower bound for minimization in the global regime.
The proof of following lower bound illustrates the strength of the SEIC
coefficients presented in Section \ref{sec:seic}. The next Theorem can also be
used to prove the $\FNP$-hardness of $\lmin$ in the global regime but with worse
Lipschitzness and smoothness parameters than the once at Theorem
\ref{t:local_min} and for this reason we present both of them.

\begin{theorem} \label{thm:globalMinOracle}
    In the worst case, $\Omega\left(2^d/d\right)$ value/gradient black-box
  queries are needed to determine a $(\eps,\delta)$-\textsc{LocalMin} for
  functions $f(\vecx):[0,1]^d \to [0, 1]$ with $G = \Theta(d^{15})$,
  $L = \Theta(d^{22})$, $\eps < 1$, $\delta = \sqrt{d}$.
\end{theorem}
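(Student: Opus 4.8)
The plan is to exploit the \emph{locality} of the smooth and efficient interpolation coefficients guaranteed by Theorem~\ref{thm:seic}: by property~\ref{enu:properties:Coefficients:3}, at every point $\vecx$ all but $d+1$ of the coefficients $\P_{\vecv}$ vanish \emph{together with their gradient and Hessian}, so a single oracle call returning $f(\vecx)$ and $\nabla f(\vecx)$ can reveal information about only the $d+1$ vertices in $R_{+}(\vecx)$. Concretely, I would instantiate $\calI_{d,N}$ with the constant choice $N=2$, so that the grid $\p{\nm{N}}^{d}$ consists of the $2^{d}$ corners of $[0,1]^{d}$ and there is a single cubelet. For each corner $\vecw\in\set{0,1}^{d}$ I would then define the candidate hard instance
\[
  f_{\vecw}(\vecx)\;=\;1-\P_{\vecw}(\vecx)\;=\;\sum_{\vecv\in\set{0,1}^{d}\setminus\set{\vecw}}\P_{\vecv}(\vecx),
\]
on the domain $\calP(\matA,\vecb)=[0,1]^{d}$ (obtained e.g.\ by $\matA=\matzero$, $\vecb=\veczero$), and set the \lmin{} radius to $\delta=\sqrt{d}$, so that $\ball_{d}(\delta;\vecx^{\star})$ contains all of $[0,1]^{d}$ and $(\eps,\delta)$-\lmin{} is exactly the problem of reporting an $\eps$-approximate global minimizer. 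Property~\ref{enu:properties:Coefficients:2} gives $\P_{\vecv}\ge 0$ and $\sum_{\vecv}\P_{\vecv}(\vecx)=1$, hence $f_{\vecw}\colon[0,1]^{d}\to[0,1]$; a direct computation from Definition~\ref{d:_coeff} at a grid vertex gives $\P_{\vecw}(\vecw)=1$, so $\min_{\vecx}f_{\vecw}(\vecx)=0$, attained at $\vecx=\vecw$; and a routine bound from properties~\ref{enu:properties:Coefficients:1} and~\ref{enu:properties:Coefficients:3} shows $f_{\vecw}$ is $G$-Lipschitz and $L$-smooth with $G,L$ polynomial in $d$, matching the values $G=\Theta(d^{15})$, $L=\Theta(d^{22})$ of the statement.

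Next I would run the standard adversary argument in the black-box model. Fix any deterministic algorithm $A$ with oracle access to $(f,\nabla f)$. The crucial point is that, by property~\ref{enu:properties:Coefficients:3}, whenever $\vecw\notin R_{+}(\vecx)$ we have $\P_{\vecw}(\vecx)=0$ and $\nabla\P_{\vecw}(\vecx)=\veczero$, so the oracle's answer for $f_{\vecw}$ at $\vecx$ equals $(1,\veczero)$; hence, if the adversary answers $(1,\veczero)$ to every query, the transcript stays consistent with $f_{\vecw}$ for every corner $\vecw$ outside the union of the $R_{+}$-sets queried so far. Let $\vecx_{1},\dots,\vecx_{T}$ be $A$'s queries under these answers and $\vecx^{\star}$ its output, and set $V=\bigcup_{i\le T}R_{+}(\vecx_{i})\cup R_{+}(\vecx^{\star})$; since $\abs{R_{+}(\cdot)}\le d+1$ we get $\abs{V}\le(T+1)(d+1)$. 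If $(T+1)(d+1)<2^{d}$ there is a corner $\vecw^{\star}\in\set{0,1}^{d}\setminus V$, and for the instance $f_{\vecw^{\star}}$ every query answer is still $(1,\veczero)$ (since $\vecw^{\star}\notin R_{+}(\vecx_{i})$ for all $i$), so $A$ again outputs $\vecx^{\star}$. But $\vecw^{\star}\notin R_{+}(\vecx^{\star})$ forces $f_{\vecw^{\star}}(\vecx^{\star})=1$, whereas $f_{\vecw^{\star}}(\vecw^{\star})=0$ and $\vecw^{\star}\in\ball_{d}(\sqrt{d};\vecx^{\star})\cap[0,1]^{d}$, so the required inequality $f_{\vecw^{\star}}(\vecx^{\star})<f_{\vecw^{\star}}(\vecw^{\star})+\eps=\eps$ fails because $\eps<1$; thus $\vecx^{\star}$ is not a valid $(\eps,\sqrt{d})$-\lmin{} solution for $f_{\vecw^{\star}}$. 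Therefore any correct $A$ must make $T\ge 2^{d}/(d+1)-1=\Omega(2^{d}/d)$ queries, and the same bound follows for randomized algorithms via Yao's principle by planting $\vecw$ uniformly at random.

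I expect essentially all the difficulty to live upstream, in the construction and analysis of the SEIC family (Theorem~\ref{thm:seic}); granted that result, this argument is short and I do not anticipate a genuine obstacle. The two points that need care are: (i) it is the simultaneous vanishing of $\P_{\vecv}$ \emph{and} $\nabla\P_{\vecv}$ off $R_{+}(\vecx)$---not merely of the value---that makes each oracle call reveal information about only $d+1$ corners, so property~\ref{enu:properties:Coefficients:3} must be used in full, and the same remark will be what transfers the argument to the gradient component $\nabla f_{\vecw}$; and (ii) pinning down the exact polynomial exponents $G=\Theta(d^{15})$, $L=\Theta(d^{22})$ stated in the theorem, which amounts to a mechanical but slightly fiddly count over the $d+1$ active coefficients and their first and second derivatives using the precise constants in property~\ref{enu:properties:Coefficients:1} (and may call for a mildly different normalization of $f_{\vecw}$ than the bare $1-\P_{\vecw}$).
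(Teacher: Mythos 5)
Your proposal is correct and is essentially the paper's own proof: the paper builds the same SEIC-interpolated hard instances $f_{\phi}(\vecx)=\sum_{\vecv}\P_{\vecv}(\vecx)\,(1-\phi(\vecv))$ — your $f_{\vecw}=1-\P_{\vecw}$ is exactly the case where $\phi$ marks a single corner — and relies on the same locality property \ref{enu:properties:Coefficients:3}, namely that each value/gradient query is determined by the at most $d+1$ corners in $R_{+}(\vecx)$, together with $\delta=\sqrt{d}$ forcing any valid output to expose a corner of value $0$. The only differences are cosmetic: you finish with an explicit planted-corner adversary/counting argument whereas the paper invokes the folklore $\Omega(2^{d})$ black-box query bound for detecting a satisfying assignment and reduces to it, and the exponent mismatch you flag in point (ii) is present in the paper itself (its proof derives $\Theta(d^{12})$-Lipschitzness and $\Theta(d^{25})$-smoothness, which still satisfy the prescribed $G$, $L$ since these are upper-bound promises).
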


\begin{proof}

  The proof is based on the fact that given just \textit{black-box access} to a
boolean formula $\phi:\{0,1\}^d \mapsto \{0,1\}$, at least $\Omega(2^d)$ queries
are needed in order to determine whether $\phi$ admits a satisfying assignment.
The term \textit{black-box access} refers to the fact that the clauses of the
formula are not given and the only way to determine whether a specific boolean
assignment is satisfying is by quering the specific binary string.

  Given such a black-box oracle for a satisfying assignment $d$, we construct
the function $f_{\phi}(\vecx):[0,1]^d \mapsto [0,1]$ as follows:

\begin{enumerate}
    \item for each corner $\vecv \in V$ of the $[0,1]^d$ hypercube, i.e. $\vecv \in \{0,1\}^d$, we set $f_{\phi}(\vecv) := 1 - \phi(\vecv)$.

    \item for the rest of the points $\vecx \in [0,1]^d / V$,
    $f_{\phi}(\vecx) := \sum_{\vecv \in V} P_{\vecv}(\vecx) \cdot f_{\phi}(\vecv)$
    where $P_{\vecv}$ are the coefficients of Definition~\ref{d:_coeff}.
\end{enumerate}
  We remind that by Lemma~\ref{l:bounding_gradients}, we get that
$\norm{\nabla f_{\phi}(\vecx)}_2 \leq \Theta(d^{12})$ and
$\norm{\nabla^2 f_{\phi}(\vecx)}_2 \leq \Theta(d^{25})$, meaning that
$f_{\phi}(\cdot)$ is $\Theta(d^{12})$-Lipschitz and $\Theta(d^{25})$-smooth.
Moreover by Lemma~\ref{d:cof} , for any $\vecx \in [0,1]^n$ the set
$V(x) = \{\vecv \in V: P_{\vecv}(\vecx) \neq 0\}$ has cardinality at most
$d + 1$, while at the same time $\sum_{\vecv \in V}P_{\vecv}(\vecx) = 1$.

  In case $\phi$ is not satisfiable then $f_{\phi}(\vecx) = 1$ for all
$\vecx \in [0,1]^d$ since $f_{\phi}(\vecv) = 1$ for all $\vecv \in V$. In case
there exists a satisfying assignment $\vecv^\ast$ then
$f_{\phi}(\vecv^\ast) = 0$. Since $\delta \geq \sqrt{d}$ that is the diameter of
$[0, 1]^d$, any $(\eps,\delta)$-\textsc{LocalMin} $\vecx^\ast$ must have
$f_{\phi}(\vecx) \leq \eps < 1$. Since
$f_{\phi}(\vecx^\ast) \triangleq \sum_{\vecv \in V(\vecx^\ast)}P_{\vecv}(\vecx^\ast) \cdot f_{\phi}(\vecv^\ast) < 1$,
there exists at least one vertex $\hat{\vecv} \in V(\vecx)$ with
$f_{\phi}(\hat{\vecv}) = 0$, meaning that $\phi(\vecv^\ast) = 1$. As a result,
given an $(\eps,\delta)$-$\textsc{LocalMin}$ $\vecx^\ast$ with
$f_{\phi}(\vecx^\ast) < 1$, we can find a satisfying $\hat{\vecv}$ by querying
$\phi(\vecv)$ for each vertex $\vecv \in V(\vecx^\ast)$. Since
$\abs{V(\vecx^\ast)} \le d + 1$, this will take at most $d+1$ additional
queries.

  Up next, we argue that in case an $(\eps,\delta)$-$\textsc{LocalMin}$ could be
determined with less than $O(2^d/d)$ value/gradient queries, then determining
whether $\phi$ admits a satisfying assignment could be done with less that
$O(2^d)$ queries on $\phi$ (the latter is obviously impossible). Notice that any
value/gradient query both $f_{\phi}(\vecx)$ and $\nabla f_{\phi}(\vecx)$ can be
computed by querying the value $f_{\phi}(\vecv)$ of the vertices
$\vecv \in V(\vecx)$. Since $\abs{V(\vecx)} \le d + 1$, any value/gradient query
of $f_{\phi}$ can be simulated by $d+1$ queries on $\phi$.
\end{proof}

\section*{Acknowledgements}
This work was supported by NSF Awards IIS-1741137, CCF-1617730 and CCF-1901292,
by a Simons Investigator Award, by the DOE PhILMs project (No.
DE-AC05-76RL01830), and by the DARPA award HR00111990021. M.Z. was also
supported by Google Ph.D. Fellowship. S.S. was supported by NRF 2018 Fellowship
NRF-NRFF2018-07.

\bibliographystyle{alpha}
\bibliography{refs}

\appendix

\newpage

\section{Proof of Theorem \ref{thm:approximateStationaryPointsHardness}} \label{sec:proof:approximateStationaryPointsHardness}

We first remind the definition of the \textsc{3-SAT(3)} problem that we will use for our
reduction.

\begin{nproblem}[\textsc{3-SAT(3)}]
  \textsc{Input:} A boolean CNF-formula $\phi$ with boolean variables
  $x_1, \dots, x_n$ such that every clause of $\phi$ has at most $3$ boolean
  variables and every boolean variable appears to at most $3$ clauses.
  \smallskip

  \noindent \textsc{Output:} An assignment $\vecx \in \{0, 1\}^n$ that satisfies
  $\phi$, or $\bot$ if no such assignment exists.
\end{nproblem}

  It is well known that \textsc{3-SAT(3)} is $\FNP$-complete, for details see $\S 9.2$
of \cite{Papadimitriou1994CC}. To prove Theorem
\ref{thm:approximateStationaryPointsHardness}, we reduce \textsc{3-SAT(3)} to
$\eps$-$\stat$.
\medskip

  Given an instance of \textsc{3-SAT(3)} we construct the function
$f : [0, 1]^{n + m} \to [0, 1]$, where $m$ is the number of clauses of $\phi$. For each
literal $x_i$ we assign a real-valued variable which by abuse of notation we also denote
$x_i$ and it would be clear from the context whether we refer to the literal or the
real-valued variable. Then for each clause $\phi_j$ of $\phi$, we construct a polynomial
$P_j(x)$ as follows: if $\ell_i, \ell_k, \ell_m$ are the literals participating in
$\phi_j$, then $P_j(x) = P_{ji}(x)\cdot P_{jk}(x) \cdot P_{jm}(x)$ where
\[ P_{ji}(x) = \left\{
     \begin{array}{ll}
       1-x_i & \text{if } \ell_i = x_i \\
       x_i&  \text{if } \ell_i = \overline{x_i}\\
   \end{array}
  \right. \]
\noindent The overall constructed function is
$f(\vecx, \vecw) = \sum_{j=1}^m w_j \cdot P_j(\vecx)$, where each $w_j$ is an additional
variable associated with clause $\phi_j$.
Notice that $0 \leq \frac{\partial f(\vecx,\vecw)}{\partial w_j} \leq 1$ and
$ -3 \leq \frac{\partial f(\vecx,\vecw)}{\partial x_i} \leq 3$ since the boolean variable $x_i$ participates in at most $3$ clauses. As a result, $\norm{\nabla f(\vecx,\vecw)}_{2} \leq \Theta(\sqrt{n+m})$, meaning that $f(\vecx,\vecw)$ is $G$-Lipschitz with $G = \Theta(\sqrt{n+m})$. Also notice that all the entries of $\nabla^2 f(\vecx,\vecw)$, i.e.
$\frac{\partial^2 f(\vecx,\vecw)}{\partial^2 x_i} = \frac{\partial^2 f(\vecx,\vecw)}{\partial^2 w_j}, \frac{\partial^2 f(\vecx,\vecw)}{\partial x_i ~ \partial w_j} , \frac{\partial^2 f(\vecx,\vecw)}{\partial x_i ~ \partial x_m} , \frac{\partial^2 f(\vecx,\vecw)}{\partial w_k ~ \partial w_j} \in [-3,3]$. As a result, $\norm{\nabla^2 f(\vecx,\vecw)}_{2} \leq \Theta(n+m)$, meaning that $f(\vecx,\vecw)$ is $L$-smooth with $L = \Theta(n+m)$.

\begin{lemma} \label{lem:proof:thm:approximateStationaryPointsHardness}
There exists a satisfying assignment for the clauses $\phi_1,\ldots,\phi_m$
if and only if there solution of the constructed \textsc{StationaryPoint}
with $\eps = 1/24$ a admits solution  $(\vecx^\star,\vecw^\star) \in [0,1]^{n+m}$ such that $\norm{\nabla f(\vecx^\star,\vecw^\star)}_2 < 1/24$.
\end{lemma}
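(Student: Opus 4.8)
The plan is to prove the two directions of the biconditional separately, exploiting the fact that $f(\vecx,\vecw) = \sum_{j=1}^m w_j\, P_j(\vecx)$ is linear in $\vecw$ and that each $P_j$ is a nonnegative polynomial on $[0,1]^n$ whose zero set is exactly the set of (fractional) assignments satisfying clause $\phi_j$. The key observation to record first is that $\partial f / \partial w_j = P_j(\vecx)$ and $P_j(\vecx) \ge 0$ on the cube, so that $P_j(\vecx^\star) = \abs{\partial f(\vecx^\star,\vecw^\star)/\partial w_j} \le \norm{\nabla f(\vecx^\star,\vecw^\star)}_2$ at any point.

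For the ``only if'' direction (a satisfying assignment yields a solution of $\stat$), I would take a boolean satisfying assignment $\vecx^\star \in \{0,1\}^n$ of $\phi$ and consider the point $(\vecx^\star, \vec{0}) \in [0,1]^{n+m}$. Since $\phi_j$ is satisfied, one of its literals is true, so the corresponding factor $P_{ji}(\vecx^\star)$ vanishes and hence $P_j(\vecx^\star) = 0$; thus $\partial f/\partial w_j = P_j(\vecx^\star) = 0$ for every $j$. Moreover $\partial f/\partial x_i = \sum_{j} w_j\, \partial P_j/\partial x_i$ evaluates to $0$ once $\vecw = \vec{0}$. Therefore $\nabla f(\vecx^\star, \vec{0}) = \vec{0}$, so $\norm{\nabla f(\vecx^\star, \vec{0})}_2 = 0 < 1/24$, which is the desired solution.

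For the ``if'' direction, suppose $(\vecx^\star, \vecw^\star) \in [0,1]^{n+m}$ has $\norm{\nabla f(\vecx^\star,\vecw^\star)}_2 < 1/24$. By the opening observation, $P_j(\vecx^\star) < 1/24$ for all $j$. I would then invoke the standard probabilistic interpretation: draw a random boolean assignment in which each variable $x_i$ is independently set to true with probability $x_i^\star$; then $\Prob[\phi_j \text{ unsatisfied}] = P_j(\vecx^\star) < 1/24$. Since every variable occurs in at most $3$ clauses and every clause has at most $3$ variables, the bad event for $\phi_j$ is mutually independent of all but at most $6$ other bad events, so the symmetric Lovász Local Lemma applies with $p = 1/24$ and dependency degree $D = 6$: the numerical check $e\, p\, (D+1) = 7e/24 < 1$ holds, so with positive probability no clause is violated, and hence a satisfying assignment exists.

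There is no real obstacle here; the argument is routine once one chooses the point $(\vecx^\star, \vec{0})$ in the first direction. The only delicate point is the parameter calibration: the threshold $\eps = 1/24$ is chosen precisely so that $7e/24 < 1$, and one should double-check the dependency-degree count (that each clause genuinely conflicts with at most $6$ others under the \textsc{3-SAT(3)} restriction). For any smaller $\eps \le 1/24$ the same argument only gets easier, and to upgrade this lemma to the full $\FNP$-hardness of Theorem~\ref{thm:approximateStationaryPointsHardness} one would additionally pass $\vecx^\star$ through the constructive Lovász Local Lemma of Moser and Moser--Tardos to extract an actual satisfying assignment in polynomial time, but that step is external to the lemma itself.
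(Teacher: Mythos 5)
Your proposal is correct and follows essentially the same route as the paper's proof: the forward direction uses the point $(\vecx^\star,\vec{0})$ at which the gradient vanishes identically, and the backward direction uses the identity $\partial f/\partial w_j = P_j(\vecx^\star)$ together with independent randomized rounding and the Lov\'asz Local Lemma with dependency degree $6$. Your explicit check of the symmetric LLL condition $7e/24 < 1$ is a welcome detail the paper leaves implicit, but the argument is otherwise the same.
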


\begin{proof}
By the definition of \textsc{StationaryPoint}, in case there exists a pair of points $(\hat{\vecx},\hat{\vecw}) \in [0,1]^{n+m}$ with $\norm{\nabla f(\hat{\vecx},\hat{\vecw})}_2 < \eps / 2 = 1 /48$, then a pair of points $(\vecx^\star,\vecw^\star)$ with $\norm{\nabla f(\vecx^\star,\vecw^\star)}_2 < \eps = 1 /24$ must be returned. In case $\norm{\nabla f(\vecx,\vecw)}_2 > \eps = 1 /24$ for all $(\vecx,\vecw) \in [0,1]^{n+m}$, the null symbol $\bot$ is returned.

Let us assume that there exists a satisfying assignment of $\phi$. Consider the solution $(\hat{\vecx},\hat{\vecw})$ constructed as follows: each
variable $\hat{x}_i$ is set to $1$ iff the respective boolean variable is true and $\hat{w}_j = 0$ for all $j=1, \ldots, m$. Since the assignment satisfies the CNF-formula $\phi$,
there exists at least one true literal in each clause $\phi_j$ which means that
$P_j(x) = 0$ for all $j = 1, \ldots, m$. As a result $\frac{\partial f(\hat{\vecx},\hat{\vecw})}{\partial w_j} = P_j(\hat{\vecx}) = 0$ for all $j=1,\ldots, m$. At the same time, $\frac{\partial f(\hat{\vecx},\hat{\vecw})}{\partial x_i} = 0$ since $\hat{w}_j =0$ for all $j=1,\ldots,m$.
Overall we have that $\nabla f(\hat{\vecx},\hat{\vecw}) = 0 < 1/48 = \eps / 2 $. As a result, the constructed \textsc{StationaryPoint} instance must return a solution $(\vecx^\star,\vecw^\star)$ with $\norm{\nabla f(\vecx^\star,\vecw^\star)}_{2} < \frac{1}{24} = \eps$.

On the opposite direction, the existence of a pair of points $(\vecx^\star,\vecw^\star)$ with
$\norm{\nabla f(\vecx^\star,\vecw^\star) }_2 < 1/24$ implies $P_j(\vecx^\ast) < 1/24$ for all $j = 1\ldots m$. Consider the probability distribution over the boolean assignments
in which \textit{each boolean variable $x_i$ is independently selected to be true with
probability $x_i^\star$.} Then,
\[\Prob \left( \text{clause } \phi_j
\text{ is not satisfied}\right) = P_j(\vecx^\star) < 1/24\]
\noindent Since $\phi_j$ shares variables with at most $6$ other clauses, the bad event of
$\phi_j$ not being satisfied is dependent with at most $6$ other bad events. By Lov\'asz
Local Lemma \cite{ErdosL1973}, we get that the probability none of the events occurs is positive. As a
result, there exists a satisfying assignment.
\end{proof}

\noindent Using Lemma \ref{lem:proof:thm:approximateStationaryPointsHardness} we
can conclude that $\phi$ is satisfiable if and only if $f$ has a
$1/24$-approximate stationary point. What is left to prove the $\FNP$-hardness
is to show how we can find a satisfying assignment of $\phi$ given an
approximate stationary point of $f$. This can be done using the celebrated
results that provide constructive proofs of the Lov\'asz Local Lemma
\cite{Moser'09, MoserT'10}. Finally, we remind that the
constructed function $f$ is $\Theta\left(\sqrt{d} \right)$-Lipschitz and $\Theta\left(d \right)$-smooth, where $d$ is the number
of variables that is equal to $n + m$.

\section{Missing Proofs from Section \ref{sec:existence}} \label{sec:app:existence}

  In this section we give  proofs for the statements presented in Section~\ref{sec:existence}. These statements establish the totality and  inclusion to $\PPAD$ of  $\lrlNash$ and $\gdaFixed$.

\subsection{Proof of Theorem \ref{t:LocalMaxMin-GDA}} \label{sec:proof:LocalMaxMin-GDA}

  We start with establishing claim ``1.'' in the statement of the theorem. It
will be clear that our proof will provide a polynomial-time reduction from
\lrlNash\ to \gdaFixed. Suppose that $(\vecx^\star,\vecy^\star)$ is an
$\alpha$-approximate fixed point of $F_{GDA}$, where $\alpha$ is the specified
in the theorem statement function of $\delta$, $G$ and $L$. To simplify our
proof, we abuse notation and define $f(\vecx) \triangleq f(\vecx, \vecy^\star)$,
$\nabla f(\vecx) \triangleq \nabla_x f(\vecx,\vecy^\star)$,
$K \triangleq \{\vecx \mid (\vecx, \vecy^{\star}) \in \calP(\matA,\vecb)\}$
and $\hat{\vecx} \triangleq \Pi_{K}(\vecx^\star - \nabla f(\vecx^\star))$.
Because $(\vecx^{\star}, \vecy^{\star})$ is an $\alpha$-approximate fixed point
of $F_{FDA}$, it follows that $\norm{\hat{\vecx} - \vecx^\star}_2 < \alpha$.

\begin{claim} \label{clm:proofOfExistence:1}
  $\langle \nabla f(\vecx^\star), \vecx^\star - \vecx \rangle < (G + \delta + \alpha) \cdot  \alpha, \text{ for all } \vecx \in K \cap B_{d_1}(\delta; \vecx^{\star})$.
\end{claim}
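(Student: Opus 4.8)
\textbf{Proof plan for Claim~\ref{clm:proofOfExistence:1}.}
The plan is to use the standard variational inequality characterization of the Euclidean projection onto the convex set $K$, applied to the point $\hat{\vecx} = \Pi_K(\vecx^\star - \nabla f(\vecx^\star))$. Recall that for any closed convex $K$ and any $\vecu \in \R^{d_1}$, the projection $\vecp = \Pi_K(\vecu)$ satisfies $\langle \vecu - \vecp, \vecx - \vecp\rangle \le 0$ for all $\vecx \in K$. First I would instantiate this with $\vecu = \vecx^\star - \nabla f(\vecx^\star)$ and $\vecp = \hat{\vecx}$, obtaining $\langle \vecx^\star - \nabla f(\vecx^\star) - \hat{\vecx},\ \vecx - \hat{\vecx}\rangle \le 0$ for every $\vecx \in K$, and in particular for every $\vecx \in K \cap B_{d_1}(\delta;\vecx^\star)$.

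Next I would rearrange this inequality to isolate $\langle \nabla f(\vecx^\star), \vecx^\star - \vecx\rangle$. Writing $\vecx^\star - \vecx = (\vecx^\star - \hat{\vecx}) + (\hat{\vecx} - \vecx)$, we get
\[
\langle \nabla f(\vecx^\star), \vecx^\star - \vecx\rangle
= \langle \nabla f(\vecx^\star), \vecx^\star - \hat{\vecx}\rangle
+ \langle \nabla f(\vecx^\star), \hat{\vecx} - \vecx\rangle.
\]
The first term is bounded by $\norm{\nabla f(\vecx^\star)}_2 \cdot \norm{\vecx^\star - \hat{\vecx}}_2 \le G\alpha$ using $G$-Lipschitzness of $f$ (so $\norm{\nabla f(\vecx^\star)}_2 \le G$) and the approximate fixed point bound $\norm{\vecx^\star - \hat{\vecx}}_2 < \alpha$. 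For the second term, I would use the projection inequality in the form $\langle \nabla f(\vecx^\star), \hat{\vecx} - \vecx\rangle \le \langle \vecx^\star - \hat{\vecx},\ \vecx - \hat{\vecx}\rangle \le \norm{\vecx^\star - \hat{\vecx}}_2 \cdot \norm{\vecx - \hat{\vecx}}_2$. Then $\norm{\vecx - \hat{\vecx}}_2 \le \norm{\vecx - \vecx^\star}_2 + \norm{\vecx^\star - \hat{\vecx}}_2 \le \delta + \alpha$ since $\vecx \in B_{d_1}(\delta;\vecx^\star)$. Combining, $\langle \nabla f(\vecx^\star), \hat{\vecx} - \vecx\rangle \le \alpha(\delta + \alpha)$, and adding the two bounds gives $\langle \nabla f(\vecx^\star), \vecx^\star - \vecx\rangle < G\alpha + \alpha(\delta + \alpha) = (G + \delta + \alpha)\alpha$, as desired.

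The argument is essentially routine once the projection inequality is set up correctly; the only mild subtlety is making sure the decomposition of $\vecx^\star - \vecx$ is handled so that every cross term is controlled by either the Lipschitz bound on $\nabla f$ or the diameter bound $\delta$ together with the fixed-point slack $\alpha$. I do not expect a real obstacle here — the main work in the overall proof of Theorem~\ref{t:LocalMaxMin-GDA} lies in the subsequent step, where this first-order (variational) inequality is upgraded, via the $L$-smoothness of $f$ and a careful choice of the quadratic threshold on $\alpha$, into the genuine function-value inequality $f(\vecx^\star,\vecy^\star) < f(\vecx,\vecy^\star) + \eps$ of the local min-max definition; Claim~\ref{clm:proofOfExistence:1} is just the clean linear-algebraic warm-up for that.
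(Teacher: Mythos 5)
Your proposal is correct and follows essentially the same route as the paper: both invoke the variational inequality for the projection $\hat{\vecx}=\Pi_K(\vecx^\star-\nabla f(\vecx^\star))$ and then combine Cauchy--Schwarz with the triangle inequality and the bounds $\norm{\hat{\vecx}-\vecx^\star}_2<\alpha$, $\norm{\nabla f(\vecx^\star)}_2\le G$, $\norm{\vecx-\vecx^\star}_2\le\delta$, differing only in how the inner product is split into pieces. One small sign slip: the projection inequality gives $\langle \nabla f(\vecx^\star), \hat{\vecx}-\vecx\rangle \le \langle \vecx^\star-\hat{\vecx},\ \hat{\vecx}-\vecx\rangle$ (not with $\vecx-\hat{\vecx}$ in the second slot), but since you immediately pass to $\norm{\vecx^\star-\hat{\vecx}}_2\,\norm{\vecx-\hat{\vecx}}_2$ via Cauchy--Schwarz, the final bound $(G+\delta+\alpha)\cdot\alpha$ is unaffected.
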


\begin{proof}
    Using the fact that
  $\hat{\vecx} = \Pi_{K}(\vecx^\star - \nabla f(\vecx^\star))$ and that $K$ is a
  convex set we can apply Theorem 1.5.5 (b) of \cite{facchinei2007finite} to get
  that
  \begin{equation} \label{eq:proof:clm:proofOfExistence:1:1}
    \langle \vecx^\star - \nabla f(\vecx^\star)- \hat{\vecx} ,\vecx - \hat{\vecx}\rangle\leq 0, \forall \vecx\in K.
  \end{equation}
  Next, we do some simple algebra to get that, for all
  $\vecx\in K \cap B_{d_1}(\delta; \vecx^{\star})$,
  \begin{align*}
    \langle \nabla f(\vecx^\star), \vecx^\star - \vecx \rangle & = \langle \vecx^\star - \nabla f(\vecx^\star) - \hat{\vecx}, \vecx - \hat{\vecx}\rangle + \langle \vecx - \hat{\vecx} - \nabla f(\vecx^\star), \hat{\vecx} - \vecx^\star \rangle \\
    & \stackrel{\eqref{eq:proof:clm:proofOfExistence:1:1}}{\le} \langle \vecx - \hat{\vecx} - \nabla f(\vecx^\star), \hat{\vecx} - \vecx^\star \rangle \\
    & \le \p{\norm{\vecx - \hat{\vecx}}_2 + \norm{\nabla f(\vecx^{\star})}_2} \norm{\hat{\vecx} - \vecx^{\star}}_2 < (G + \delta+\alpha) \cdot \alpha,
  \end{align*}
  \noindent where the second to last inequality follows from Cauchy–Schwarz
  inequality and the triangle inequality, and the last inequality follows from
  the triangle inequality and the following facts: (1)
  $\norm{\vecx^\star - \hat{\vecx}}_2 < \alpha$, (2)
  $\vecx \in B_{d_1}(\delta; \vecx^{\star})$, and (3)
  $\norm{\nabla f(\vecx, \vecy)}_2 \le G$ for all
  $(\vecx, \vecy) \in \calP(\matA, \vecb)$.
\end{proof}

\noindent For all $\vecx \in K \cap B_{d_1}(\delta; \vecx^{\star})$, from the
$L$-smoothness of $f$ we have that
\begin{align}
  \abs{f(\vecx)- (f(\vecx^\star) + \langle \nabla f(\vecx^*), \vecx - \vecx^\star \rangle)} \le   \frac{L}{2} \norm{\vecx - \vecx^\star}_2^2. \label{eq:costas}
\end{align}

\noindent We distinguish two cases:
\begin{enumerate}
  \item $f(\vecx^\star) \le f(\vecx)$: In this case we stop, remembering that
    \begin{align}
      f(\vecx^\star) \le f(\vecx).  \label{eq:costas0}
    \end{align}
  \item $f(\vecx^\star) > f(\vecx)$: In this case, we consider two further
    sub-cases:
    \begin{enumerate}
      \item $\langle \nabla f(\vecx^*), \vecx - \vecx^\star \rangle \ge 0$: in
        this sub-case, Eq~\eqref{eq:costas} gives
        \begin{align*}
            f(\vecx^\star)- f(\vecx) + \langle \nabla f(\vecx^*), \vecx - \vecx^\star \rangle
            \le   \frac{L}{2} \norm{\vecx - \vecx^\star}_2^2
        \end{align*}
        Thus
        \begin{align}
            f(\vecx^\star) \le f(\vecx) + \frac{L}{2} \norm{\vecx - \vecx^\star}_2^2 \le f(\vecx) + \frac{L}{2} \delta^2 < f(\vecx)+\eps,  \label{eq:costas1}
        \end{align}
        where for the last inequality we used that $\vecx \in B_{d_1}(\delta; \vecx^{\star})$, and that $\delta < \sqrt{2\eps/L}$.
      \item $\langle \nabla f(\vecx^*), \vecx - \vecx^\star \rangle < 0$: in
        this sub-case, Eq~\eqref{eq:costas} gives
        \begin{align*}
            f(\vecx^\star)- f(\vecx) - \langle \nabla f(\vecx^*), \vecx^\star - \vecx \rangle
            \le   \frac{L}{2} \norm{\vecx - \vecx^\star}_2^2.
        \end{align*}
        Thus
        \begin{align}
            f(\vecx^\star) &\le f(\vecx) + \langle \nabla f(\vecx^*), \vecx^\star - \vecx \rangle + \frac{L}{2} \norm{\vecx - \vecx^\star}_2^2 \notag\\
            & \le f(\vecx) + \langle \nabla f(\vecx^*), \vecx^\star - \vecx \rangle + \frac{L}{2} \cdot \delta^2 \notag\\
             & < f(\vecx) + (G + \delta + \alpha) \cdot \alpha + \frac{L}{2} \cdot \delta^2 \notag\\
             & \le  f(\vecx) + \eps,  \label{eq:costas2}
        \end{align}
        where the second inequality follows from the fact that
        $\vecx \in B_{d_1}(\delta; \vecx^{\star})$, the third inequality follows
        from Claim \ref{clm:proofOfExistence:1}, and the last inequality follows
        from the constraints $\delta < \sqrt{2\eps/L}$ and
        $\alpha \le \frac{\sqrt{(G+\delta)^2+4(\eps - \frac{L}{2}\delta^2)} - (G+\delta)}{2}$.
    \end{enumerate}
\end{enumerate}
In all cases, we get from~\eqref{eq:costas0},~\eqref{eq:costas1}
and~\eqref{eq:costas2} that $f(\vecx^\star) < f(\vecx) +\eps$, for all
$x \in K \cap B_{d_1}(\delta; \vecx^{\star})$. Thus, lifting our abuse of
notation, we get that
$f(\vecx^{\star}, \vecy^{\star}) < f(\vecx, \vecy^\star) + \eps$, for all
$\vecx \in \{\vecx \mid \vecx \in B_{d_1}(\delta; \vecx^{\star}) \text{ and } (\vecx, \vecy^\star) \in \calP(\matA,\vecb)\}$.
Using an identical argument we can also show that
$f(\vecx^{\star}, \vecy^{\star}) > f(\vecx^{\star}, \vecy) - \eps$ for all
$\vecy \in \{\vecy \mid \vecy \in B_{d_2}(\delta; \vecy^{\star}) \text{ and } (\vecx^{\star}, \vecy) \in \calP(\matA,\vecb)\}$.
The first part of the theorem follows.
\bigskip

\noindent Now let us establish claim ``2.'' in the theorem statement. It will be
clear that our proof will provide a polynomial-time reduction from \gdaFixed\ to
\lrlNash. For the choice of parameters $\eps$ and $\delta$ described in the
theorem statement, we will show that, if $(\vecx^\star,\vecy^\star)$ is an
$(\eps,\delta)$-local min-max equilibrium of $f$, then
$\norm{F_{GDAx} (\vecx^\star, \vecy^\star) - \vecx^{\star}}_2 < \alpha/2$ and
$\norm{F_{GDAy} (\vecx^\star, \vecy^\star) - \vecy^{\star}}_2 < \alpha/2$. The
second part of the theorem will then follow. We only prove that
$\norm{F_{GDAx} (\vecx^\star, \vecy^\star) - \vecx^{\star}}_2 < \alpha/2$, as
the argument for $\vecy^\star$ is identical. In the argument below we abuse
notation in the same way we described earlier. With that notation we will show
that $\norm{\hat{\vecx} - \vecx^{\star}}_2 < \alpha/2$.
\smallskip

\noindent \textbf{Proof that
$\boldsymbol{\norm{\hat{\vecx} - \vecx^{\star}} < \alpha/2}$.}
From our choice of $\eps$ and $\delta$, it is easy to see that
$\delta = \alpha / (5 L + 2) < \alpha / 2$. Thus, if
$\norm{\hat{\vecx} - \vecx^{\star}} < \delta$, then we automatically get
$\norm{\hat{\vecx} - \vecx^{\star}} < \alpha/2$. So it remains to handle the
case $\norm{\hat{\vecx} - \vecx^{\star}} \ge \delta$. We choose
$\vecx_c \triangleq \vecx^\star + \delta \frac{\hat{\vecx} - \vecx^\star}{\norm{\hat{\vecx} - \vecx^\star}_2}$.
It is easy to see that $\vecx_c \in B_{d_1}(\delta; \vecx^{\star})$ and hence we
get that
\begin{align*}
  f(\vecx^\star) - \eps < f(\vecx_c) &\leq f(\vecx^\star) + \left \langle \nabla f(\vecx^\star),\vecx_c - \vecx^\star \right \rangle +\frac{L}{2}\norm{\vecx_c-\vecx^\star}^2 \\
  & \le  f(\vecx^\star) + \left \langle \nabla f(\vecx^\star),\vecx_c - \vecx^\star \right \rangle +\frac{\eps}{2},
\end{align*}
\noindent where the first inequality follows from the fact that
$(\vecx^\star,\vecy^\star)$ is an $(\eps,\delta)$-local min-max equilibrium, the
second inequality follows from the $L$-smoothness of $f$, and the third
inequality follows from $\norm{\vecx_c - \vecx^{\star}} \le \delta$ and our
choice of $\delta=\sqrt{\eps/L}$. The above implies:
\[ \left \langle \nabla f(\vecx^\star), \vecx^\star - \vecx_c \right \rangle < 3 \eps/2. \]
Since
$\hat{\vecx} - \vecx^{\star} = \p{\vecx_c - \vecx^{\star}} \cdot \norm{\hat{\vecx} - \vecx^{\star}}_2 / \delta$
we get that
$\left \langle \nabla f(\vecx^\star),\vecx^\star - \hat{\vecx} \right \rangle < \frac{3 \eps}{2 \delta} \norm{\vecx^\star - \hat{\vecx}}_2$. Therefore
\begin{eqnarray*}
  \norm{\vecx^\star - \hat{\vecx}}_2^2
  &  =  & \left \langle \vecx^\star - \nabla f(\vecx^\star) - \hat{\vecx}, \vecx^\star - \hat{\vecx}\right \rangle + \left \langle\nabla f(\vecx^\star),\vecx^\star - \hat{\vecx} \right \rangle \\
  & < & \frac{3 \eps}{2 \delta}\norm{\vecx^\star - \hat{\vecx}}_2
\end{eqnarray*}
\noindent where in the above inequality we have also used
\eqref{eq:proof:clm:proofOfExistence:1:1}. As a result,
$\norm{\vecx^\star- \hat{\vecx}}_2 < \frac{3\eps}{2\delta} < \alpha / 2$.

\subsection{Proof of Theorem \ref{t:PPAD_inclusion}} \label{sec:proof:t:PPAD_inclusion}

We provide a polynomial-time reduction from \gdaFixed\ to {\sc Brouwer}. This establishes both the totality of \gdaFixed\ and its inclusion to \PPAD, since {\sc Brouwer} is both total and lies in \PPAD, as per Lemma \ref{lem:BrouwerPPAD}. It also establishes the totality and inclusion to \PPAD of \lrlNash, since \lrlNash\ is polynomial-time reducible to \gdaFixed, as shown in Theorem~\ref{t:LocalMaxMin-GDA}.

We proceed to describe our reduction. Suppose that $f$ is the $G$-Lipschitz and $L$-smooth function provided as  input to \gdaFixed. Suppose also that $\alpha$ is the approximation parameter provided as input to \gdaFixed. Given $f$ and $\alpha$, we define function $M: \calP(\matA, \vecb) \rightarrow \calP(\matA, \vecb)$, which serves as input to {\sc Brouwer}, as follows:
\[ M(\vecx,\vecy) = \Pi_{\calP(\matA, \vecb)} \left[
(\vecx - \nabla_x f(\vecx,\vecy) , \vecy + \nabla_y f(\vecx,\vecy))
\right].
\]
\noindent Given that $f$ is $L$-smooth, it follows that $M$ is $(L + 1)$-Lipschitz. We set the approximation parameter provided as input to {\sc Brouwer} be $\gamma = \alpha^2/4(G + 2 \sqrt{d})$.

\smallskip To show the validity of the afore-described reduction, we prove that
every feasible point $(\vecx^\star,\vecy^\star) \in \calP(\matA, \vecb)$ that is
a $\gamma$-approximate fixed point of $M$, i.e.
$\norm{M(\vecx^\star,\vecy^\star) - (\vecx^\star,\vecy^\star)}_2 < \gamma$ is also an $\alpha$-approximate fixed point of $F_{GDA}$. Observe that since $\calP(\matA, \vecb) \subseteq [0, 1]^d$ it holds that
  $\norm{(\vecx, \vecy) - (\vecx', \vecy')}_2 \le \sqrt{d}$ for all
  $(\vecx, \vecy), (\vecx', \vecy') \in \calP(\matA, \vecb)$. Hence, if
  $\gamma > \sqrt{d}$, then finding $\gamma$-approximate fixed points of $M$ is trivial and the same is true for fiding $\alpha$-approximate fixed points of $F_{GDA}$, since $\gamma = \alpha^2/4(G + 2 \sqrt{d})$ which implies that, if $\gamma >\sqrt{d}$, then $\alpha > \sqrt{d}$. Thus, we
  may assume that $\gamma \le \sqrt{d}$.
\smallskip

Next, to simplify notation we define
$(\vecx_\Delta,\vecy_\Delta) = (x^\star - \nabla_x f(\vecx^\star,\vecy^\star) , \vecy^\star + \nabla_y f(\vecx^\star,\vecy^\star))$ and
$(\hat{\vecx},\hat{\vecy})=\argmin_{(\vecx,\vecy) \in \calP(\matA, \vecb)} \norm{(\vecx_\Delta,\vecy_\Delta) - (\vecx,\vecy)}_2$.
Given that $(\vecx^\star,\vecy^\star)$ is a $\gamma$-approximate fixed point of $M$, we have that
\begin{align}
    \norm{(\vecx^\star,\vecy^\star) - (\hat{\vecx},\hat{\vecy})}_2 < \gamma. \label{eq:observation 1}
\end{align}
Using Theorem 1.5.5 (b) of \cite{facchinei2007finite}, we get that
\begin{align}
    \left \langle (\vecx_\Delta,\vecy_\Delta) - (\hat{\vecx},\hat{\vecy}),
    (\vecx,\vecy) - (\hat{\vecx},\hat{\vecy}) \right \rangle \leq 0 \text{ for all } (\vecx,\vecy) \in \calP(\matA, \vecb). \label{eq:observation 2}
\end{align}
Next we show the following:
\begin{claim} \label{c:2}
    For all $(\vecx,\vecy) \in \calP(\matA,\vecb)$,
  $\left \langle (\vecx_\Delta,\vecy_\Delta) - (\vecx^\star,\vecy^\star), (\vecx,\vecy) - (\vecx^\star,\vecy^\star)\right \rangle < (G + 2 \sqrt{d}) \cdot \gamma$.
\end{claim}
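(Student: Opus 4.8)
The plan is to bound the inner product by inserting the projected point $(\hat{\vecx},\hat{\vecy})$ and invoking the variational inequality \eqref{eq:observation 2}. First I would record two elementary facts. Since $(\vecx_\Delta,\vecy_\Delta) = (\vecx^\star - \nabla_x f(\vecx^\star,\vecy^\star),\, \vecy^\star + \nabla_y f(\vecx^\star,\vecy^\star))$, we have $(\vecx_\Delta,\vecy_\Delta) - (\vecx^\star,\vecy^\star) = (-\nabla_x f(\vecx^\star,\vecy^\star),\, \nabla_y f(\vecx^\star,\vecy^\star))$, so $\norm{(\vecx_\Delta,\vecy_\Delta) - (\vecx^\star,\vecy^\star)}_2 = \norm{\nabla f(\vecx^\star,\vecy^\star)}_2 \le G$ by the $G$-Lipschitzness of $f$; and since $\calP(\matA,\vecb) \subseteq [0,1]^d$, any two of its points lie within Euclidean distance $\sqrt{d}$. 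Recall also that we have already reduced to the case $\gamma \le \sqrt{d}$.

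Next I would use the decomposition
\begin{align*}
  \langle (\vecx_\Delta,\vecy_\Delta) - (\vecx^\star,\vecy^\star),\, (\vecx,\vecy) - (\vecx^\star,\vecy^\star) \rangle
  &= \langle (\vecx_\Delta,\vecy_\Delta) - (\hat{\vecx},\hat{\vecy}),\, (\vecx,\vecy) - (\hat{\vecx},\hat{\vecy}) \rangle \\
  &\quad + \langle (\vecx_\Delta,\vecy_\Delta) - (\hat{\vecx},\hat{\vecy}),\, (\hat{\vecx},\hat{\vecy}) - (\vecx^\star,\vecy^\star) \rangle \\
  &\quad + \langle (\hat{\vecx},\hat{\vecy}) - (\vecx^\star,\vecy^\star),\, (\vecx,\vecy) - (\vecx^\star,\vecy^\star) \rangle .
\end{align*}
The first summand is $\le 0$ by \eqref{eq:observation 2}, since $(\vecx,\vecy) \in \calP(\matA,\vecb)$. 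For the second summand, Cauchy–Schwarz together with the triangle inequality and \eqref{eq:observation 1} gives $\norm{(\vecx_\Delta,\vecy_\Delta) - (\hat{\vecx},\hat{\vecy})}_2 \le \norm{(\vecx_\Delta,\vecy_\Delta) - (\vecx^\star,\vecy^\star)}_2 + \norm{(\vecx^\star,\vecy^\star) - (\hat{\vecx},\hat{\vecy})}_2 < G + \gamma$ and $\norm{(\hat{\vecx},\hat{\vecy}) - (\vecx^\star,\vecy^\star)}_2 < \gamma$, so this summand is $< (G+\gamma)\gamma$. For the third summand, Cauchy–Schwarz with \eqref{eq:observation 1} and the diameter bound gives $< \sqrt{d}\,\gamma$. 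Adding the three estimates and using $\gamma \le \sqrt{d}$ yields $\langle (\vecx_\Delta,\vecy_\Delta) - (\vecx^\star,\vecy^\star),\, (\vecx,\vecy) - (\vecx^\star,\vecy^\star) \rangle < (G + 2\sqrt{d})\,\gamma$, which is exactly the claim.

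There is no real obstacle in this argument: the only point that requires care is choosing the decomposition so that precisely one term is controlled by the variational characterization of the Euclidean projection onto $\calP(\matA,\vecb)$, while the other two each reduce to a product of a norm bounded by $G$ (respectively $G+\gamma$) with the small quantity $\norm{(\vecx^\star,\vecy^\star) - (\hat{\vecx},\hat{\vecy})}_2 < \gamma$ guaranteed by the $\gamma$-approximate fixed point hypothesis on $M$. Once Claim~\ref{c:2} is in hand, applying it with $(\vecx,\vecy) = (\hat{\vecx},\hat{\vecy})$ and combining with \eqref{eq:observation 2} (evaluated at $(\vecx,\vecy) = (\vecx^\star,\vecy^\star)$) will bound $\norm{(\vecx^\star,\vecy^\star) - (\hat{\vecx},\hat{\vecy})}_2^2$ by $(G+2\sqrt{d})\gamma = \alpha^2/4$, hence $\norm{(\vecx^\star,\vecy^\star) - F_{sGDA}(\vecx^\star,\vecy^\star)}_2 < \alpha/2$, and the equivalence between fixed points of $F_{GDA}$ and $F_{sGDA}$ will complete the reduction and the proof of Theorem~\ref{t:PPAD_inclusion}.
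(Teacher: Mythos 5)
Your proof of the claim is correct and follows essentially the same route as the paper: the identical three-term decomposition through $(\hat{\vecx},\hat{\vecy})$, with the first term killed by the variational inequality \eqref{eq:observation 2}, the second bounded via Cauchy--Schwarz, the triangle inequality, \eqref{eq:observation 1} and $\norm{\nabla f(\vecx^\star,\vecy^\star)}_2\le G$, the third by the diameter bound, and the final simplification using $\gamma\le\sqrt{d}$. (Your closing remark about the subsequent use of the claim differs slightly from the paper, which applies it to $(\vecx',\vecy^\star)$ with coordinate-wise projections of the unsafe $F_{GDA}$, but that lies outside the statement at hand.)
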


\begin{proof}
 We have that:
  \begin{eqnarray*}
    \left \langle (\vecx_\Delta,y_\Delta) - (\vecx^\star,y^\star), (\vecx,y) - (\vecx^\star,y^\star) \right \rangle
    &  =            & \left \langle (\vecx_\Delta,y_\Delta) - (\hat{\vecx},\hat{y}),(\vecx,y) - (\vecx^\star,y^\star)\right \rangle\\
    &~~~~~~~~~~~~ + & \left \langle (\hat{\vecx},\hat{y}) - (\vecx^\star,y^\star), (\vecx,y) - (\vecx^\star,y^\star)\right \rangle\\
    & =             & \left \langle (\vecx_\Delta,y_\Delta) - (\hat{\vecx},\hat{y}), (\vecx,y) - (\hat{\vecx},\hat{y})\right \rangle\\
    &~~~~~~~~~~~~ + & \left \langle (\vecx_\Delta,y_\Delta) - (\hat{\vecx},\hat{y}),(\hat{\vecx},\hat{y}) - (\vecx^\star,y^\star) \right \rangle\\
    &~~~~~~~~~~~~ + & \left \langle (\hat{\vecx},\hat{y}) - (\vecx^\star,y^\star), (\vecx,y) - (\vecx^\star,y^\star)\right \rangle\\
    & <             & \norm{(\vecx_\Delta,y_\Delta) - (\hat{\vecx},\hat{y})}_2 \gamma + \gamma \cdot \sqrt{d} \\
    & \le           & \norm{(\vecx_\Delta,y_\Delta) - (\vecx^\star,y^\star)}_2 \gamma + \gamma^2 + \gamma \cdot \sqrt{d}\\
    & =            & \norm{\nabla f(\vecx^\star,y^\star)}_2 \gamma + \gamma^2 + \gamma \cdot \sqrt{d} \\
    & \leq         & (G + 2 \sqrt{d}) \cdot \gamma,
  \end{eqnarray*}
  where (1) for the first inequality we use \eqref{eq:observation 1}, \eqref{eq:observation 2}, the Cauchy-Schwarz inequality, and the fact that the $\ell_2$ diameter of $\calP(\matA,\vecb)$ is at most $\sqrt{d}$; (2) for the second inquality we use the triangle inequality and~\eqref{eq:observation 1}; (3) for the equality that follows we use the definition of $(\vecx_\Delta,y_\Delta)$; and (4) for the last inequality we use that $G$, the Lipschitzness of $f$, bounds the magnitude of its gradient, and that $\gamma \le \sqrt{d}$.
\end{proof}

\noindent Now let $\vecx' = \argmin_{\vecx \in K(y^\star)}\norm{\vecx-\vecx_{\Delta}}_2$
where $K(\vecy^{\star}) =\{\vecx \mid (\vecx,\vecy^\star)\in \calP (\matA,\vecb))\}$.
Using Theorem 1.5.5 (b) of \cite{facchinei2007finite} for $\vecx'$ we get that
$\left \langle \vecx_\Delta -\vecx', \vecx^\star - \vecx' \right \rangle \leq 0$. Using
Claim \ref{c:2} for  vector $(\vecx',y^\star) \in \calP(\matA,\vecb)$ we get that
$\left \langle \vecx^\star - \vecx_\Delta, \vecx^\star - \vecx'  \right \rangle < (G + 2 \sqrt{d}) \gamma$.
Adding the last two inequalities and using the fact that
$\gamma = \alpha^2/4(G + 2 \sqrt{d})$ we  get the following
\[ \norm{\vecx^\star - \Pi_{K(y^\star)}(\vecx^\star - \nabla_x f(\vecx^\star,y^\star))}_{2} < \sqrt{(G + 2 \sqrt{d}) \cdot \gamma} = \alpha/2. \]
\noindent Using the exact same reasoning we can also prove that
\[ \norm{\vecy^\star - \Pi_{K(x^\star)}(\vecy^\star - \nabla_y f(\vecx^\star, y^\star))}_{2} < \alpha/2 \]
where $K(\vecx^{\star}) =\{\vecy \mid (\vecx^{\star}, \vecy)\in \calP (\matA, \vecb))\}$.
Combining the last two inequalities we get that $(\vecx^{\star}, \vecy^{\star})$ is an $\alpha$-approximate fixed point of $F_{GDA}$.

\section{Missing Proofs from Section \ref{sec:seic}} \label{s:Q_coef}

  In this section we present the missing proofs from Section \ref{sec:seic} and
more precisely in the following sections we prove the Lemmas
\ref{l:well_defined}, \ref{l:bounding_gradients}, and
\ref{l:positive_corners_boundary}. For the rest of the proofs in this section we
define $L(\vecc)$ to be the cubelet which has the down-left corner equal to
$\vecc$, formaly
\[ L(\vecc) = \left[\frac{c_1}{N - 1}, \frac{c_1 + 1}{N - 1}\right] \times \cdots \times \left[\frac{c_d}{N - 1}, \frac{c_d + 1}{N - 1}\right] \]
\noindent and we also define $L_c(\vecc)$ to be the set of corners of the
cubelet $L(\vecc)$, or more formally
\[ L_c(\vecc) = \{c_1, c_1 + 1\} \times \cdots \times \{c_d, c_d + 1\}. \]

\addtocontents{toc}{\protect\setcounter{tocdepth}{1}}
\subsection{Proof of Lemma \ref{l:well_defined}} \label{sec:proof:l:well_defined}
\addtocontents{toc}{\protect\setcounter{tocdepth}{2}}

  We start with a lemma about the differentiability properties of the functions
$Q_{\vecv}^{\vecc}$ which we defined in Definition \ref{d:cof}.

\begin{lemma} \label{l:derivation1}
    Let $\vecx \in [0,1]^d$ lying in cublet
  $R(\vecx) = \left[\frac{c_1}{N - 1}, \frac{c_1 + 1}{N - 1}\right] \times \cdots \times \left[\frac{c_d}{N - 1}, \frac{c_d + 1}{N - 1}\right]$,
  where $\vecc \in \p{\nm{N}}^d$. Then for any vertex $\vecv \in R_c(\vecx)$,
  the function $Q_{\vecv}^{\vecc}(\vecx)$ is continuous and twice
  differentiable. Moreover if $Q_{\vecv}^{\vecc}(\vecx) = 0$ then also
  $\frac{d Q_{\vecv}^{\vecc}(\vecx)}{d x_i} = 0$ and
  $\frac{d^2 Q_{\vecv}^{\vecc}(\vecx)}{d x_i~d x_j} = 0$.
\end{lemma}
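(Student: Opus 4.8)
The plan is to separate the statement into two parts --- the $C^{2}$ regularity of $Q_{\vecv}^{\vecc}$ and the flatness of $Q_{\vecv}^{\vecc}$ at its zeros --- and to derive both from the observation that, by Definition~\ref{d:cof}, $Q_{\vecv}^{\vecc}$ is a finite product of terms each of the form $S_{\infty}(\phi(\vecx))$ with $\phi$ a $C^{2}$ function of $\vecx$, together with the regularity of $S$ and $S_{\infty}$ recorded in Lemma~\ref{lem:stepFunctions} (and Lemma~\ref{l:infty}).

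First I would record three structural facts. (i) Each canonical coordinate $p_{j} = (x_{j} - s_{j})/(t_{j} - s_{j})$ is an affine function of $\vecx$, hence $C^{\infty}$. (ii) The order-$2$ smooth step $S$, being a quintic on $(0,1)$ extended by constants and satisfying $S'(0) = S'(1) = S''(0) = S''(1) = 0$ by Lemma~\ref{lem:stepFunctions}, glues into a function that is $C^{2}$ on all of $\R$. (iii) The order-$\infty$ smooth step $S_{\infty}$ is $C^{\infty}$ on $\R$, because every one of its derivatives vanishes at $0$ and at $1$ (Lemma~\ref{lem:stepFunctions}, Lemma~\ref{l:infty}). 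Combining these, each building block $S_{\infty}(S(p_{\ell}) - S(p_{j}))$, $S_{\infty}(1 - S(p_{\ell}))$, $S_{\infty}(S(p_{j}))$ is a composition of an affine map, the $C^{2}$ map $S$, and the $C^{\infty}$ map $S_{\infty}$, hence is $C^{2}$; and a finite product of $C^{2}$ functions is $C^{2}$. I would also observe that, since $S$ is monotone, the factor $S_{\infty}(S(p_{\ell}) - S(p_{j}))$ already equals $0$ whenever $p_{\ell} \le p_{j}$, so the vanishing clause of Definition~\ref{d:cof} is automatically consistent with the product expression; hence $Q_{\vecv}^{\vecc}$ equals that product throughout the cubelet $L(\vecc)$ --- there is no case boundary to reconcile --- and is therefore continuous and twice differentiable.

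For the vanishing statement, suppose $Q_{\vecv}^{\vecc}(\vecx) = 0$. As a product, $Q_{\vecv}^{\vecc}$ can vanish at $\vecx$ only because some single factor $g(\cdot) = S_{\infty}(\phi(\cdot))$ does; write $Q_{\vecv}^{\vecc} = g \cdot r$ with $r$ the (still $C^{2}$) product of the remaining factors. Since $S_{\infty}$ vanishes exactly on $(-\infty, 0]$, $g(\vecx) = 0$ forces $\phi(\vecx) \le 0$, and I would split into two cases. If $\phi(\vecx) < 0$, then by continuity $\phi < 0$ on a neighborhood of $\vecx$, so $g$ --- and hence $Q_{\vecv}^{\vecc}$ --- is identically zero there, so all its first and second partials at $\vecx$ are trivially zero. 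If $\phi(\vecx) = 0$, then by the chain rule both $\nabla g(\vecx)$ and $\nabla^{2} g(\vecx)$ are $0$, since they are built from $S_{\infty}'(0)$ and $S_{\infty}''(0)$ (times derivatives of $\phi$), and $S_{\infty}'(0) = S_{\infty}''(0) = 0$. In either case $g(\vecx) = 0$, $\nabla g(\vecx) = 0$, and $\nabla^{2} g(\vecx) = 0$, so every term of the product-rule expansions of $\nabla(g r)$ and $\nabla^{2}(g r)$ carries one of these vanishing quantities, which gives $\partial Q_{\vecv}^{\vecc}(\vecx)/\partial x_{i} = 0$ and $\partial^{2} Q_{\vecv}^{\vecc}(\vecx)/\partial x_{i}\,\partial x_{j} = 0$, as required.

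I do not expect a genuine obstacle here: the argument is bookkeeping once the structural decomposition is isolated. The one point that must be handled with care --- and the reason the construction uses $S_{\infty}$ rather than a finite-order smooth step --- is fact (iii): $S_{\infty}$ is not merely smooth but has all derivatives vanishing at $0$ and $1$. This is exactly what lets a factor $S_{\infty}(\phi)$ be flat (zero value, gradient, and Hessian) along the boundary $\{\phi = 0\}$ of its support, and it is the property I would emphasize in the write-up. A minor secondary remark worth including is that $S$ is only $C^{2}$ (its third derivative does not vanish at the endpoints), so the argument yields $C^{2}$ regularity of $Q_{\vecv}^{\vecc}$ and nothing stronger --- which is precisely what Definition~\ref{def:seic} and the downstream Lemmas~\ref{l:well_defined}--\ref{l:positive_corners_boundary} require.
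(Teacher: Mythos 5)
Your proof is correct, but it takes a cleaner route than the paper's. The key move you make --- observing that the explicit vanishing clause in Definition~\ref{d:cof} is redundant, because whenever $p_j \ge p_\ell$ the monotonicity of $S$ forces $S(p_\ell)-S(p_j)\le 0$ and hence $S_\infty(S(p_\ell)-S(p_j))=0$, so $Q^{\vecc}_{\vecv}$ coincides with the product expression throughout the cubelet --- collapses the piecewise structure entirely. After that, $C^2$ regularity is immediate from closure of $C^2$ under affine substitution, composition with the $C^2$ step $S$ and the $C^\infty$ step $S_\infty$, and finite products; and the flatness-at-zeros claim follows from your single-factor argument, using that $S_\infty$ vanishes exactly on $(-\infty,0]$ together with $S_\infty'(0)=S_\infty''(0)=0$ and the product rule. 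The paper instead keeps the case distinction of Definition~\ref{d:cof} and verifies first- and second-order differentiability by hand, checking one-sided difference quotients in each ordering configuration of the $p_j$'s (in particular along the hypersurfaces $p_i=p_j$ where the ``zero'' and ``product'' regimes meet), which yields the same conclusions but with considerably more bookkeeping; its case discipline does, however, mirror the structure reused later for the cross-cubelet gluing in Lemmas~\ref{l:continuity} and~\ref{l:continuity1}. Your version isolates more transparently the only properties actually used (monotonicity of $S$, $C^2$-ness of $S$, and the flatness of $S_\infty$ at $0$), and your closing remark that $S$ is only $C^2$, so nothing beyond twice differentiability should be claimed, is accurate and consistent with what Definition~\ref{def:seic} requires.
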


\begin{proof}
    \textbf{1st order differentiability:} We remind from the Definition
  \ref{d:cof} that if we let $\vecs^{\vecc} = (s_1,\ldots,s_d)$ be the source
  vertex of $R(\vecx)$ and $\vecp_{\vecx}^{\vecc} = (p_1,\ldots,p_d)$ be the
  canonical representation of $\vecx$. Then for each vertex
  $\vecv \in R_c(\vecx)$ we define the following partition of the set of
  coordinates $[d]$,
  \[ A_{\vecv}^{\vecc}=\{j:~~|v_j - s_j|=0\} \text{  and  } B_{\vecv}^{\vecc}=\{j:~~|v_j - s_j|=1\}. \]
  \noindent Now in case $B_{\vecv}^{\vecc} = \varnothing$, which corresponds to
  $\vecv$ being the source node $\vecs^{\vecc}$ then
  $Q_{\vecv}^{\vecc}(\vecx) = \prod_{j = 1}^d S_{\infty}(1 - S(p_j))$ which is
  clearly differentiable as product of compositions of differentiable functions.
  The exact same holds for $A_{\vecv}^{\vecc} = \varnothing$ which corresponds
  to $\vecv$ being the target vertex $\vect^{\vecc}$ of the cubelet $R(\vecx)$.
  We thus focus on the case where
  $A_{\vecv}^{\vecc}, B_{\vecv}^{\vecc} \neq \varnothing$. To simplify notation
  we denote
  $Q_{\vecv}^{\vecc}(\vecx)$ by $Q(\vecx)$, $A_{\vecv}^{\vecc}$ by $A$ and
  $B_{\vecv}^{\vecc}$ by $B$ for the rest of this proof. We prove that in case
  $i \in B$ then $\frac{\partial Q(\vecx)}{\partial x_i}$ always exits. The case
  $i \in A$ follows then symmetrically. We have the following cases
  \begin{enumerate}
    \item[$\blacktriangleright$] Let $j \in A$ and $\ell \in B \setminus \{i\}$
    such that $p_j \geq p_\ell$. By Definition \ref{d:cof}, if $\epsilon$ is
    sufficiently small then
    $Q(x_i - \epsilon, \vecx_{-i}) = Q(x_i + \epsilon, \vecx_{-i}) = Q(x_i, \vecx_{-i}) = 0$.
    Thus $\frac{\partial Q(\vecx)}{\partial x_i}$ exists and equals $0$.
    \item[$\blacktriangleright$] Let $p_\ell > p_j$ for all
    $\ell \in B \setminus \{i\}$ and $j \in A$. In this case we have the
    following subcases.
    \begin{enumerate}
      \item[$\triangleright$] $p_i > p_j$ for all $j \in A$: Then
      $\frac{\partial Q(\vecx)}{\partial x_i}$ exists since both
      $S_{\infty}(\cdot)$ and $S(\cdot)$ are differentiable.
      \item[$\triangleright$] $p_i < p_j$ for some $j \in A$: By
      Definition~\ref{d:cof}, if $\epsilon$ is sufficiently small then
      $Q(x_i - \epsilon, x_{-i}) = Q(x_i + \epsilon, \vecx_{-i}) = Q(x_i, \vecx_{-i}) = 0$.
      Thus $\frac{\partial Q(\vecx)}{\partial x_i}$ exists and equals $0$.
      \item[$\triangleright$] $p_i = p_j$ for some $j \in A$ and
      $p_i \ge p_{j'}$ for all $j' \in A \setminus \{j\}$: By Definition
      \ref{d:cof}, if $\eps$ is sufficiently small then
      $Q(x_i - \eps, \vecx_{-i}) = 0$ and also $Q(x_i, \vecx_{-i}) = 0$, thus
      \[ \lim_{\eps \rightarrow 0^+}\frac{Q(x_i, \vecx_{-i}) - Q(x_i - \eps, \vecx_{-i})}{\eps} = 0. \]
      At the same time
      \[ \lim_{\eps \rightarrow 0^+}\frac{Q(x_i + \eps,x_{-i}) - Q(x_i,x_{-i})}{\eps} = 0 \]
      since both $S_{\infty}(\cdot)$ and $S(\cdot)$ are differentiable
      functions, $S_{\infty}(S(p_i)- S(p_j)) = S_{\infty}(0) = 0$, and
      $S'_{\infty}(S(p_i)- S(p_j)) = S'_{\infty}(0) = 0$.
    \end{enumerate}
  \end{enumerate}

  \noindent \textbf{2nd order differentiability:} Let $Q'(\vecx)$ be equal to
  $\frac{\partial Q(\vecx)}{\partial x_k}$ for convenience. As in the previous analysis in
  case $A_{\vecv}^{\vecc} = \varnothing$ or $B_{\vecv}^{\vecc} = \varnothing$ then $Q'(x)$
  is differentiable with respect to $x_i$ since $S(\cdot), S_{\infty}(\cdot)$ are twice
  differentiable. Thus we again focus in the case where $A, B \neq \varnothing$. Notice
  that by the previous analysis $Q'(\vecx) = 0$ if there exists $\ell \in B$ and $j \in A$
  such that $p_\ell \geq p_j$. Without loss of generality we assume that $i \in B$ and we
  prove that
  $\frac{\partial Q'(\vecx)}{\partial x_i} \triangleq \frac{\partial^2 Q(\vecx)}{\partial x_i \partial x_k}$
  always exists.
  \begin{enumerate}
    \item[$\blacktriangleright$] Let $j \in A$ and $\ell \in B \setminus \{i\}$ such that
    $p_j \geq p_\ell$. By Definition \ref{d:cof},
    $Q'(x_i - \eps, \vecx_{-i}) = Q'(x_i + \eps, \vecx_{-i}) = Q'(x_i, \vecx_{-i}) = 0$.
    Thus
    $\frac{\partial Q'(\vecx)}{\partial x_i} \triangleq \frac{\partial^2 Q'(\vecx)}{\partial x_i \partial x_k}$
    exists and equals $0$.
    \item[$\blacktriangleright$] Let $p_\ell > p_j$ for all $\ell \in B \setminus \{i\}$ and
    $j \in A$.
    \begin{enumerate}
        \item[$\triangleright$] $p_i > p_j$ for all $j \in A$: Then
        $\frac{\partial Q'(\vecx)}{\partial x_i} \triangleq \frac{\partial^2 Q(\vecx)}{\partial x_i \partial x_k}$
        exists since both $S_{\infty}(\cdot)$ and $S(\cdot)$ are twice differentiable.
        \item[$\triangleright$] $p_i < p_j$ for some $j \in A$. By Definition \ref{d:cof},
        $Q'(x_i - \eps, \vecx_{-i}) = Q'(x_i + \eps, \vecx_{-i}) = Q'(x_i, \vecx_{-i}) = 0$.
        Thus
        $\frac{\partial Q'(\vecx)}{\partial x_i} \triangleq \frac{\partial^2 Q(\vecx)}{\partial x_i \partial x_k}$
        exists and equals $0$.
        \item[$\triangleright$] $p_i = p_j$ for some $j \in A$ and $p_i > p_{j'}$ for all
        $j' \in A \setminus \{j\}$. By Definition \ref{d:cof}, if $\eps$ is sufficiently
        small then $Q'(x_i - \eps, \vecx_{-i}) = 0$ and thus
        \[ \lim_{\eps \rightarrow 0^+}\frac{Q'(x_i, \vecx_{-i}) - Q'(x_i - \eps, \vecx_{-i})}{\eps} = 0. \]
        At the same time
        $\lim_{\eps \rightarrow 0^+}\frac{Q'(x_i + \eps, \vecx_{-i}) - Q'(x_i, \vecx_{-i})}{\eps}$
        exists since both $S_{\infty}(\cdot)$ and $S(\cdot)$ are twice differentiable.
        Moreover equals $0$ since $S_{\infty}(S(p_i) - S(p_j)) = S_{\infty}(0) = 0$ and
        $S'_{\infty}(S(p_i) - S(p_j)) = S'_{\infty}(0) = S''_{\infty}(0) = S(0) = 0$.
    \end{enumerate}
  \end{enumerate}

  \noindent In every step of the above proof where we use properties of $S_{\infty}$ and $S$
  we use Lemma \ref{lem:stepFunctions}.
\end{proof}

\noindent So far we have established the fact that the functions $Q^{\vecc}_{\vecv}(\vecx)$
are twice differentiable when $\vecx$ moves within the same cubelet. Next we will show that
when $\vecx$ moves from one cubelet to another then the corresponding $Q^{\vecc}_{\vecv}$
functions changes value smoothly.

\begin{lemma} \label{l:continuity1}
    Let $\vecx \in [0, 1]^d$ such that there exists a coordinate $i \in [d]$
  with the property
  $R(x_i + \eps, \vecx_{- i}) = \left[\frac{c_1}{N - 1}, \frac{c_1 + 1}{N - 1}\right] \times \cdots \times \left[\frac{c_d}{N - 1}, \frac{c_d + 1}{N - 1}\right]$ and
  $R(x_i - \eps, \vecx_{- i}) = \left[\frac{c'_1}{N - 1}, \frac{c'_1 + 1}{N - 1}\right] \times \cdots \times \left[\frac{c'_d}{N - 1}, \frac{c'_d + 1}{N - 1}\right]$,
  with $\vecc, \vecc' \in \p{\nm{N - 1}}^d$ and $\eps$ sufficiently small, i.e.
  $\vecx$ lies in the boundary of two cubelets. Then the following statements
  hold.
  \begin{enumerate}
    \item For all vertices
    $\vecv \in R_c(x_i + \eps, \vecx_{- i}) \cap R_c(x_i - \eps, \vecx_{- i})$,
    it holds that
    \begin{enumerate}
        \item $Q_{\vecv}^{\vecc}(\vecx) = Q_{\vecv}^{\vecc'}(\vecx)$,
        \item $\frac{\partial Q_{\vecv}^{\vecc}(\vecx)}{\partial x_j} = \frac{\partial Q_{\vecv}^{\vecc'}(\vecx)}{\partial x_i}$
        for all $i \in [d]$, and
        \item $\frac{\partial^2 Q_{\vecv}^{\vecc}(\vecx)}{\partial x_i~\partial x_j} = \frac{\partial Q_{\vecv}^{\vecc'}(\vecx)}{\partial x_i~\partial x_j}$
        for all $i, j \in [d]$.
    \end{enumerate}
    \item For all vertices
    $\vecv \in R_c(x_i + \eps, \vecx_{- i}) \setminus R_c(x_i - \eps, \vecx_{- i})$,
    it holds that
    $Q_{\vecv}^{\vecc}(\vecx) = \frac{\partial Q_{\vecv}^{\vecc}(\vecx)}{\partial x_i}=\frac{\partial^2 Q_{\vecv}^{\vecc}(\vecx)}{\partial x_i~\partial x_j}=0$.
    \item for all vertices
    $\vecv \in R_c(x_i - \eps, \vecx_{- i}) \setminus R_c(x_i + \eps, \vecx_{- i})$, it
    holds that
    $Q_{\vecv}^{\vecc'}(\vecx) = \frac{\partial Q_{\vecv}^{\vecc'}(\vecx)}{\partial x_i}=\frac{\partial^2 Q_{\vecv}^{\vecc'}(\vecx)}{\partial x_i~\partial x_j}=0$.
\end{enumerate}
\end{lemma}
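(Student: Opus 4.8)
The plan is to exploit the orientation of cubelets from Definition~\ref{d:orientation} together with the ``double flatness'' of the step functions $S$ and $S_{\infty}$ at the endpoints of $[0,1]$. The structural fact that drives everything is the following: when $\vecx$ lies on the common facet of the two adjacent cubelets with lower corners $\vecc$ and $\vecc'$ — where $c_j = c'_j$ for all $j \neq i$ and $c_i, c'_i$ are consecutive integers — the two canonical representations of $\vecx$ agree, in the sense that $p^{\vecc}_j = p^{\vecc'}_j$ for every $j \neq i$ (the two cubelets induce the same affine coordinate map in those directions), while $p^{\vecc}_i = p^{\vecc'}_i \in \{0,1\}$. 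The value being in $\{0,1\}$, rather than being an arbitrary point of $[0,1]$, is exactly what the parity rule in Definition~\ref{d:orientation} buys us: the shared facet is the source facet of one cubelet in direction $i$ and the target facet of the other, and the definition of $\vecs^{\vecc}, \vect^{\vecc}$ is arranged so that both cubelets assign it the same canonical value. Moreover, for a vertex $\vecv$ on the shared facet one checks that $i \in A^{\vecc}_{\vecv}$ if and only if $i \in A^{\vecc'}_{\vecv}$, whence $A^{\vecc}_{\vecv} = A^{\vecc'}_{\vecv}$ and $B^{\vecc}_{\vecv} = B^{\vecc'}_{\vecv}$, while for a vertex $\vecv$ belonging to the corner set of only one of the two cubelets, $i$ lies on the ``far'' side (the target side if the shared facet is a source facet, and conversely).

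First I would establish Part~1. With the structural fact in hand, the formula of Definition~\ref{d:cof} shows that $Q^{\vecc}_{\vecv}$ and $Q^{\vecc'}_{\vecv}$ are \emph{literally the same expression} in the shared variables $(p_j)_{j\neq i}$ once $p_i$ is frozen at its facet value; hence the two functions agree on the facet, and so do all their tangential derivatives (derivatives in $x_j$ for $j \neq i$). For the normal direction, $Q^{\vecc}_{\vecv}$ depends on $x_i$ only through $S\!\left(p^{\vecc}_i\right)$, and since $p^{\vecc}_i \in \{0,1\}$ on the facet while $S'(0) = S'(1) = S''(0) = S''(1) = 0$ (Lemma~\ref{lem:stepFunctions}), every first or second partial of $Q^{\vecc}_{\vecv}$ that involves $\partial_{x_i}$ carries a factor $S'\!\left(p^{\vecc}_i\right)$ or $S''\!\left(p^{\vecc}_i\right)$ and therefore vanishes on the facet; the same holds for $\vecc'$. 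Thus all first and second partials of $Q^{\vecc}_{\vecv}$ and $Q^{\vecc'}_{\vecv}$ coincide on the facet — the opposite signs of $\partial_{x_i} p^{\vecc}_i$ and $\partial_{x_i} p^{\vecc'}_i$ are irrelevant because they are always multiplied by a vanishing derivative of $S$. Lemma~\ref{l:derivation1} is used here to guarantee that the partials extend continuously to the closed cubelets, so that these facet values are well defined.

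Parts~2 and 3 are symmetric, so I would treat only Part~2: here $\vecv \in R_c(\vecc) \setminus R_c(\vecc')$ sits on the facet of cubelet $\vecc$ opposite to the shared one. If the shared facet is the source facet of $\vecc$ in direction $i$ (so $p^{\vecc}_i = 0$ there) then $i \in B^{\vecc}_{\vecv}$, and the product defining $Q^{\vecc}_{\vecv}$ contains a factor $S_{\infty}\!\left(S(p^{\vecc}_i) - S(p^{\vecc}_j)\right) = S_{\infty}\!\left(-S(p^{\vecc}_j)\right)$ whose argument is nonpositive on the facet (or the factor $S_{\infty}(S(p^{\vecc}_i)) = S_{\infty}(0)$ if $A^{\vecc}_{\vecv} = \varnothing$); symmetrically, if the shared facet is the target facet of $\vecc$ (so $p^{\vecc}_i = 1$ there) the product contains a factor $S_{\infty}\!\left(S(p^{\vecc}_j) - S(p^{\vecc}_i)\right) = S_{\infty}\!\left(S(p^{\vecc}_j) - 1\right)$ with nonpositive argument (or $S_{\infty}(1 - S(p^{\vecc}_i)) = S_{\infty}(0)$ if $B^{\vecc}_{\vecv} = \varnothing$). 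In every case one factor is $S_{\infty}$ evaluated at a nonpositive number, and since $S_{\infty}$ vanishes identically on $(-\infty,0]$ with all derivatives at $0$ equal to zero (Lemma~\ref{lem:stepFunctions}; see also Lemma~\ref{l:infty}), this factor together with its first and second partials vanishes on the facet. By the product rule, $Q^{\vecc}_{\vecv}$ and all of its first and second partials vanish there, which is exactly Part~2.

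The step I expect to require the most care is pinning down the structural fact precisely — namely that consecutive integers $c_i$ and $c'_i = c_i \mp 1$ (of necessarily opposite parity) always yield the \emph{same} canonical value $p_i \in \{0,1\}$ on the shared facet, and the \emph{same} status of the index $i$ in the source/target partition. This is a short but parity-sensitive computation straight from Definition~\ref{d:orientation}, split into the cases $c_i$ even and $c_i$ odd; once it is in place, the rest of the argument reduces entirely to the endpoint flatness of $S$ and the left-flatness of $S_{\infty}$ recorded in Lemma~\ref{lem:stepFunctions}.
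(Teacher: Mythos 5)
Your proof is correct and follows essentially the same route as the paper: the parity/orientation computation you flag as delicate is exactly the paper's Lemma~\ref{l:1}, the matching of values and derivatives for shared vertices via $p_i\in\{0,1\}$ and $S'(0)=S'(1)=S''(0)=S''(1)=0$ is the paper's Lemma~\ref{l:continuity}, and your ``flat $S_\infty$ factor plus product rule'' argument for the far-facet vertices is the same mechanism the paper packages as ``$Q_{\vecv}^{\vecc}(\vecx)=0$ implies vanishing first and second derivatives'' in Lemma~\ref{l:derivation1}. The only organizational difference is that you treat the two cubelets directly (legitimately, since under the lexicographic tie-breaking they differ in the single coordinate $i$), whereas the paper routes the statement through the one-coordinate-adjacent Lemma~\ref{l:continuity} and a chaining argument.
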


\noindent Lemma \ref{l:continuity1} is crucial since it establishes that
$\P_{\vecv}(\vecx)$ is a continuous and twice differentiable even when $\vecx$
moves from  one cubelet to another. Since the proof of Lemma \ref{l:continuity1}
is very long and contains the proof of some sublemmas, we postpone it for the
end of this section in Section \ref{ss:proof_of_continuity1}. We now proceed
with the proof of Lemma \ref{l:well_defined}.

\begin{proof}[Proof of Lemma \ref{l:well_defined}]
    We first prove that $\P_{\vecv}(\vecx)$ is a continuous function. Let
  $\vecx \in [0, 1]^d$ lying on the boundary of the following cubelets
  \[ \left[\frac{c_1^{(1)}}{N - 1}, \frac{c_1^{(1)} + 1}{N - 1}\right] \times \cdots \times \left[\frac{c_d^{(1)}}{N - 1}, \frac{c_d^{(1)} + 1}{N - 1}\right] \]
  \[ \cdots \]
  \[ \left[\frac{c_1^{(i)}}{N - 1}, \frac{c_1^{(i)} + 1}{N - 1}\right] \times \cdots \times \left[\frac{c_d^{(i)}}{N - 1}, \frac{c_d^{(i)} + 1}{N - 1}\right] \]
  \[ \cdots \]
  \[\left[\frac{c^{(m)}_1}{N - 1}, \frac{c^{(m)}_1 + 1}{N - 1}\right] \times \cdots \times \left[\frac{c^{(m)}_d}{N - 1}, \frac{c^{(m)}_d + 1}{N - 1}\right]. \]
  \noindent where $\vecc^{(1)}, \dots, \vecc^{(m)} \in \p{\nm{N - 1}}^d$. This
  means that for every $i \in [m]$ there exists a coordinate $j_i \in [d]$ and a
  value $\eta_i \in \R$ with sufficiently small absolute value such that
  \[ R(x_{j_i} + \eta_i, \vecx_{- j_i}) = \left[\frac{c_1^{(i)}}{N - 1}, \frac{c_1^{(i)} + 1}{N - 1}\right] \times \cdots \times \left[\frac{c_d^{(i)}}{N - 1}, \frac{c_d^{(i)} + 1}{N - 1}\right]. \]

  \noindent We then consider the following cases.
  \begin{enumerate}
    \item[$\blacktriangleright$]
    $\vecv \notin \cup_{i = 1}^m R_c(x_{j_i} + \eta_i, \vecx_{- j_i})$. By
    Definition \ref{d:_coeff}, in all the $m$ aforementioned cubelets, the
    coefficient $\P_{\vecv}$ takes value $0$ and hence it is continuous in this
    part of the space.

    \item[$\blacktriangleright$]
    $\vecv \in \cap_{j \in U} R_c(x_{j_i} + \eta_i, \vecx_{- j_i})$ and
    $\vecv \notin \cup_{i \in \bar{U}} R_c(x_{j_i} + \eta_i, \vecx_{- j_i})$,
    for some $U \subseteq [m]$ with $\bar{U} = [m] \setminus U$. In this case
    $\P_{\vecv}(x_{j_i} + \eta_i, \vecx_{j_i})$ was computed according to a
    cubelet with $\vecv \in R_c(x_{j_i} + \eta_i, \vecx_{- j_i})$. Then Lemma
    \ref{l:continuity1} implies that $Q^{\vecc^{(i)}}_{\vecv}(\vecx) = 0$ since
    $\vecv \in R_c(x_{j_i} + \eta_i, \vecx_{- j_i}) \setminus R_c(x_{j_{i'}} + \eta_{i'}, \vecx_{- j_{i'}})$
    where $i' \in [m]$ and $i \neq i'$. Therefore we conclude that
    $\P_{\vecv}(\vecx) = 0$ and
    \[ \lim\limits_{\eta_i \to 0} \P_{\vecv}(x_{j_i} + \eta_i, \vecx_{-i}) = 0. \]
    \item[$\blacktriangleright$]
    $\vecv \in \cap_{i = 1}^m R_c(x_{j_i} + \eta_i, \vecx_{- j_i})$. By Lemma
    \ref{l:continuity1} for all $i \in [m]$ it holds that
    \[ \frac{Q_{\vecv}^{\vecc^{(i)}}(\vecx)}{\sum_{\vecv \in R_c(x_{j_i} + \eta_i, \vecx_{- j_i})} Q_{\vecv}^{\vecc^{(i)}}(\vecx)} =
    \frac{Q_{\vecv}^{\vecc^{(i)}}(\vecx)}{\sum_{\vecv \in \cap_{i = 1}^m R_c(x_{j_i} + \eta_i, \vecx_{- j_i})} Q_{\vecv}^{\vecc^{(i)}}(\vecx)} \]
    \[ = \frac{Q_{\vecv}^{\vecc^{(i')}}(\vecx)}{\sum_{\vecv \in \cap_{i = 1}^m R_c(x_{j_i} + \eta_i, \vecx_{- j_i})} Q_{\vecv}^{\vecc^{(i')}}(\vecx)}
    = \frac{Q_{\vecv}^{\vecc^{(i')}}(\vecx)}{\sum_{\vecv \in R_c(x_{j_i} + \eta_i, \vecx_{- j_i})} Q_{\vecv}^{\vecc^{(i')}}(\vecx)} \]
    which again implies the continuity of $\P_{\vecv}(\vecx)$ at $\vecx$.
  \end{enumerate}

    Next we prove that $\P_{\vecv}(\vecx)$ is differentiable for all
  $\vecv \in \p{\nm{N}}^d$. Fix some $i \in [d]$ we will prove that
  $\frac{\partial \mathrm{P}(\vecx)}{\partial x_i}$ always exists. Let $C^+$ be
  the set of down-left corners of the cubelets in which
  $\lim_{\eps \rightarrow 0^+} (x_i + \eps, \vecx_{-i})$ belongs to and $C^{-}$
  be the set of down-left corners of the cubelets in which
  $\lim_{\eps \rightarrow 0^+} (x_i - \eps, \vecx_{-i})$ belongs to. It easy to
  see that $C^+$ and $C^-$ are non-empty and fixed for $\epsilon > 0$ and
  sufficiently small.

   To prove that
  $\frac{\partial \P_{\vecv}(\vecx)}{\partial x_i}$ always exists, we consider
  the following $3$ mutually exclusive cases.
  \begin{enumerate}
    \item[$\blacktriangleright$]
    \underline{$\vecv \in L_c(\vecc^{(1)})$ for $\vecc^{(1)} \in C^+$ and
    $\vecv \in L_c(\vecc^{(2)})$ for $\vecc^{(2)} \in C^-$.} Since the
    coefficient $\P_{\vecv}(\vecx)$ is a continuous function, we have that
    \begin{enumerate}
      \item[$\triangleright$]
      $\lim_{\eps \rightarrow 0^+} \frac{\P_{\vecv}(x_i + \epsilon, \vecx_{- i}) - \P_{\vecv}(x_i, \vecx_{- i})}{\epsilon} = \frac{\frac{\partial Q_{\vecv}^{\vecc^{(1)}}(\vecx)}{\partial x_i} \sum_{\vecv' \in L_c(\vecc^{(1)})} Q_{\vecv'}^{\vecc^{(1)}}(\vecx) - Q_{\vecv}^{\vecc^{(1)}}(\vecx) \sum_{\vecv' \in L_c(\vecc^{(1)})} \frac{\partial Q_{\vecv'}^{\vecc^{(1)}}(\vecx)}{\partial x_i}}{\left(\sum_{\vecv' \in L_c(\vecc^{(1)})} Q_{\vecv'}^{\vecc^{(1)}}(\vecx)\right)^2}$
      \item[$\triangleright$]
      $\lim_{\eps \rightarrow 0^+} \frac{\P_{\vecv}(x_i, \vecx_{- i}) - \P_{\vecv}(x_i - \epsilon, \vecx_{- i})}{\epsilon} = \frac{\frac{\partial Q_{\vecv}^{\vecc^{(2)}}(\vecx)}{\partial x_i} \sum_{\vecv' \in L_c(\vecc^{(2)})} Q_{\vecv'}^{\vecc^{(2)}}(\vecx) - Q_{\vecv}^{\vecc^{(2)}}(\vecx) \sum_{\vecv' \in L_c(\vecc^{(2)})}\frac{\partial Q_{\vecv'}^{\vecc^{(2)}}(\vecx)}{\partial x_i}}{\left(\sum_{\vecv' \in L_c(\vecc^{(2)})} Q_{\vecv'}^{\vecc^{(2)}}(\vecx)\right)^2}$
    \end{enumerate}
    Both of the above limits exists due to the fact that $Q^{\vecc}_{\vecv}(\vecx)$ is
    differentiable (Lemma~\ref{l:derivation1}). Moreover, since
    $\vecv \in L_c(\vecc^{(1)}) \cap L_c(\vecc^{(2)})$, Case $1$ of Lemma
    \ref{l:continuity1} implies that the two limits above have exactly the same value and
    hence $\P_{\vecv}$ is differentiable at $\vecx$.
    \item[$\blacktriangleright$]
    \underline{$\vecv \notin L_c(\vecc^{(1)})$ for all $\vecc^{(1)} \in C^+$.} In the case
    where $\vecv \notin L_c(\vecc)$ for all the down-left corners $\vecc$ of the cubelets at
    which $\vecx$ lies, then by Definition \ref{d:_coeff}
    $\P_{\vecv}(x_i, \vecx_{-i}) = \P_{\vecv}(x_i + \eps, \vecx_{-i}) = \P_{\vecv}(x_i - \eps, \vecx_{-i}) = 0$.
    Thus $\frac{\partial \P_{\vecv}(\vecx)}{\partial x_i}$ exists and equals $0$. Therefore
    we may assume that $\vecv \in L_c(\vecc)$ for some down-left corner $\vecc$ of a cubelet
    at which $\vecx$ lies. Due to the fact that $\P_{\vecv}(\vecx)$ is a continuous function
    and that $\vecv \notin L_c(\vecc^{(1)})$ for all $\vecc^{(1)} \in C^+$, we get that
    \[ \P_{\vecv}(x_i + \eps, \vecx_{-i}) = 0 ~~~\text{ and }~~~ \P_{\vecv}(x_i, \vecx_{-i}) = 0. \]
    \noindent We also have that $\vecv \in L_c(\vecc) / L_c{\vecc^{(1)}}$ where $\vecc$,
    $\vecc^{(1)}$ are down-left corners of cubelets at which $\vecx$ lies and
    $(x_i + \eps, \vecx_{-i})$ lies respectively. Therefore we get by Case $1$ of
    Lemma \ref{l:continuity1} that $Q_{\vecv}^{\vecc}(\vecx) = 0$ implying that
    $\P_{\vecv}(x_i, \vecx_{-i}) = 0$. As a result,
    \[ \lim_{\eps \to 0^+} \frac{\P_{\vecv}(x_i + \eps, \vecx_{- i}) - \P_{\vecv}(x_i, x_{- i})}{\eps} = 0 \]
    We now need to argue that
    $\lim_{\eps \to 0^+} \frac{\P_{\vecv}(x_i, \vecx_{-i}) - \P_{\vecv}(x_i - \eps, x_{-i})}{\eps}$
    exists and equals $0$. At first observe that $0 \leq x_i - c_i \leq \delta$ since
    $\vecx$ lies in the cubelet with down-left corner $\vecc$. In case $x_i - c_i < \delta$
    then $(x_i + \eps, \vecx_{-i})$ lies in $\vecc$ for arbitrarily small $\eps$, meaning
    that $\vecc \in C^+$. The latter contradicts the fact that
    $\vecv \notin L_c{\vecc^{(1)}}$ for all $\vecc^{(1)} \in C^+$. As a result,
    $x_i - c_i = \delta$ which implies that $\vecc \in C^-$ and hence
    \[ \lim_{\eps \to 0^+} \frac{\P_{\vecv}(x_i, \vecx_{- i}) - \P_{\vecv}(x_i - \eps, \vecx_{- i})}{\eps} = \frac{\frac{\partial Q_{\vecv}^{\vecc}(\vecx)}{\partial x_i} \sum_{\vecv' \in L_c(\vecc)} Q_{\vecv'}^{\vecc}(\vecx) - Q_{\vecv}^{\vecc}(\vecx) \sum_{\vecv' \in L_c(\vecc)}\frac{\partial Q_{\vecv'}^{\vecc}(\vecx)}{\partial x_i}}{\left(\sum_{\vecv' \in L_c(\vecc)} Q_{\vecv'}^{\vecc}(\vecx) \right)^2}. \]
    The above limit equals to $0$ since
    $Q_{\vecv}^{\vecc}(\vecx) = \frac{\partial Q_{\vecv}^{\vecc}(\vecx)}{\partial x_i} = 0$
    by applying Lemma \ref{l:continuity1} due to the fact that
    $\vecv \in L_c(\vecc) \setminus L_c(\vecc^{(1)})$.
    \item[$\blacktriangleright$]
    \underline{$\vecv \notin L_c(\vecc^{(2)})$ for all $\vecc^{(2)} \in C^-$.} Symmetrically
    with the previous case.
  \end{enumerate}
  The second order differentiability of $\P_{\vecv}(\vecx)$ can be established using exactly
  the same arguments for computing the following limit
  \[ \lim_{\eps, \eps' \to 0} \frac{\P_{\vecv}(x_i + \eps, x_j + \eps', \vecx_{- i, j}) - \P_{\vecv}(\vecx)}{\eps^2}. \]

    The last thing that we need to show to prove Lemma \ref{l:well_defined} is that the set
  $R_+(\vecx)$ has cardinality at most $d + 1$ and that it can be computed in $\poly(d)$
  time. Let $p_{\vecx}^{\vecc} \in [0,1]^d$ be the canonical representation of $\vecx$ with
  the respect to a cubelet $L(\vecc)$ in which $\vecx$ belongs to. We define the source
  vertex $\vecs^{\vecc} = (s_1,\ldots,s_d)$ and the target vertex
  $\vect^{\vecc} = (t_1,\ldots,t_d)$ of $L(\vecc)$. Once this is done the vertices in
  $R_+(\vecv)$ are exactly the vertices of $L_c(\vecc)$ for which it holds that
  \[ p_{\ell} > p_j ~~~~\text{ for all } \ell \in A_{\vecv}^{\vecc}, j \in B_{\vecv}^{\vecc} \]
  \noindent since for all the others $\vecv \in \p{\nm{N}}^d$ it holds that
  $Q_{\vecv}^{\vecc} (\vecx) = 0$, $\nabla Q_{\vecv}^{\vecc} (\vecx) = 0$, and
  $\nabla^2 Q_{\vecv}^{\vecc} (\vecx) = 0$. These vertices $\vecv \in R_+(\vecx)$ can be
  computed in polynomial time as follows: \textbf{i)} the coordinates $p_1,\ldots,p_d$ are
  sorted in increasing order, and \textbf{ii)} for each $m=0, \ldots, d$ compute the vertex
  $\vecv^{(m)} \in L_c(\vecc)$,
  \[ \vecv_{j}^m = \left\{
     \begin{array}{ll}
       s_j & \text{if coordinate $j$ belongs in the first } m \text{ coordinates wrt  the order of } \vecp_{\vecx}^{\vecc}\\
       t_j & \text{if coordinate $j$ belongs in the last } d-m \text{ coordinates wrt  the order of } \vecp_{\vecx}^{\vecc}\\
     \end{array}
     \right. \]
  \noindent By Definition~\ref{d:cof} it immediately follows that
  $R_+(\vecx) \subseteq \{\vecv^{(1)}, \ldots, \vecv^{(m)}\}$ from which we get that
  $\abs{R_+(\vecx)} \leq d + 1$ and also they can be computed in $\poly(d)$ time.
\end{proof}

\noindent To finish the proof of Lemma \ref{l:well_defined} we only need the proof of Lemma
\ref{l:continuity1} which we present in the following section.

\subsubsection{Proof of Lemma \ref{l:continuity1}} \label{ss:proof_of_continuity1}

\begin{lemma}\label{l:1}
    Let a point $\vecx \in [0,1]^d$ lying in the boundary of the cubelets with down-left
  corners $\vecc = (c_1, \ldots, c_{m - 1}, c_m, c_{m + 1}, \ldots, c_d)$ and
  $\vecc' = (c_1, \ldots, c_{m-1}, c_m + 1, c_{m + 1}, \ldots, c_d)$. Then the canonical
  representation of $\vecx$ in the cubelet $L(\vecc)$ is the same with the the canonical
  representation of $\vecx$ in the cubelet $L(\vecc')$. More precisely,
  $\vecp_{\vecx}^{\vecc} = \vecp_{\vecx}^{\vecc'}$.
\end{lemma}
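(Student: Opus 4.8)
The plan is to compare the two canonical representations coordinate by coordinate using Definitions~\ref{d:orientation} and~\ref{d:canonical_representation}, and to observe that the only coordinate in which $\vecc$ and $\vecc'$ differ is coordinate $m$, where the argument splits into two symmetric cases according to the parity of $c_m$.

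First I would dispose of the coordinates $j \neq m$: since $\vecc$ and $\vecc'$ agree in coordinate $j$, Definition~\ref{d:orientation} gives $s_j^{\vecc} = s_j^{\vecc'}$ and $t_j^{\vecc} = t_j^{\vecc'}$, so the formula $p_j = (x_j - s_j)/(t_j - s_j)$ of Definition~\ref{d:canonical_representation} yields $p_j^{\vecc} = p_j^{\vecc'}$ with no further work. Here I read $s_j, t_j$ in that formula as the real grid coordinates $s_j/(N-1), t_j/(N-1)$, which is the only reading making $p_j \in [0,1]$; with that convention $t_j - s_j = \pm\delta$ and all the computations below are literal.

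The substance is coordinate $m$. Because $\vecx$ lies on the common facet of the two cubelets --- the upper facet of $L(\vecc)$ in direction $m$, which coincides with the lower facet of $L(\vecc')$ in direction $m$ --- we must have $x_m = (c_m+1)/(N-1)$. Now I would case on the parity of $c_m$. If $c_m$ is even, Definition~\ref{d:orientation} gives $s_m^{\vecc} = c_m$, $t_m^{\vecc} = c_m + 1$, so $x_m$ is the target endpoint and $p_m^{\vecc} = 1$; on the other side $c_m + 1$ is odd, so $s_m^{\vecc'} = c_m + 2$, $t_m^{\vecc'} = c_m + 1$, whence again $x_m$ is the target endpoint and $p_m^{\vecc'} = 1$. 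If instead $c_m$ is odd, then $s_m^{\vecc} = c_m + 1$, $t_m^{\vecc} = c_m$, so $x_m$ is the source endpoint and $p_m^{\vecc} = 0$; and $c_m + 1$ is even, so $s_m^{\vecc'} = c_m + 1$, $t_m^{\vecc'} = c_m + 2$, again making $x_m$ the source endpoint and $p_m^{\vecc'} = 0$. In both cases $p_m^{\vecc} = p_m^{\vecc'}$, and together with the previous paragraph this gives $\vecp_{\vecx}^{\vecc} = \vecp_{\vecx}^{\vecc'}$.

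I do not expect a real obstacle: the content of the lemma is exactly that the orientation rule of Definition~\ref{d:orientation} was designed so that two cubelets adjacent across a facet either both regard that facet as their ``target face'' (canonical coordinate $1$) or both regard it as their ``source face'' (canonical coordinate $0$), which is the parity flip exploited above. The only point worth stating carefully is the notational convention for the rescaling in Definition~\ref{d:canonical_representation}, after which the proof is the short case check just described; this lemma then feeds directly into the proof of Lemma~\ref{l:continuity1}.
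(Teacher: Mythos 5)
Your proof is correct and follows essentially the same argument as the paper: coordinates $j\neq m$ are unchanged because the source/target agree there, and in coordinate $m$ the parity flip between $c_m$ and $c_m+1$ makes the shared facet either the target face of both cubelets ($p_m=p_m'=1$, $c_m$ even) or the source face of both ($p_m=p_m'=0$, $c_m$ odd). Your remark about reading $s_j,t_j$ in the rescaled grid coordinates is a sensible clarification of the paper's notation and does not alter the argument.
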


\begin{proof}
    Let $c_m$ be even. By the definition of the canonical representation in Definition
  \ref{d:canonical_representation}, the source and target of the cubelets $L(\vecc)$ and
  $L(\vecc')$ are respectively,
  \begin{enumerate}
    \item[$\diamond$] $\vecs^{\vecc} = (s_1, \ldots, s_{m - 1}, c_m, s_{m + 1}, \ldots, s_d)$,
    \item[$\diamond$] $\vect^{\vecc} = (t_1, \ldots, s_{m - 1}, c_m + 1, t_{m + 1}, \ldots, t_d)$,
    \item[$\diamond$] $\vecs^{\vecc'} = (s_1, \ldots, s_{m - 1}, c_m + 2, s_{m + 1}, \ldots, s_d)$,
    \item[$\diamond$] $\vect^{\vecc'} = (t_1, \ldots, t_{m - 1}, c_m + 1, t_{m + 1}, \ldots, t_d)$.
  \end{enumerate}
  \noindent Hence we get that $p_j = p_j'$ for $j \neq m$. Since $\vecx$ belongs to the
  boundary of both cublets $L(\vecc)$ and $L(\vecc')$ we get that $x_m = c_m + 1$ which
  implies that $p_m = p_m' = 1$. In case $c_m$ is odd we get that
  $\vecp_{\vecx}^{\vecc} = \vecp_{\vecx}^{\vecc'}$ but with $p_m = p_m'= 0$.
\end{proof}

\begin{lemma} \label{l:continuity}
    Let $\vecx \in [0,1]^d$ lying at the intersection of the cubelets $L(\vecc)$,
  $L(\vecc')$ with down-left corners
  $\vecc = (c_1, \ldots, c_{m - 1}, c_m, c_{m + 1}, \ldots, c_d)$, and
  $\vecc' = (c_1, \ldots, c_{m-1}, c_m + 1, c_{m + 1}, \ldots, c_d)$. Then the following
  statements are true.
  \begin{enumerate}
    \item For all vertices $\vecv \in L_c(\vecc) \cap L_c(\vecc')$ it holds that
    \begin{enumerate}
        \item $Q_{\vecv}^{\vecc}(\vecx) = Q_{\vecv}^{\vecc'}(\vecx)$,
        \item $\frac{\partial Q_{\vecv}^{\vecc}(\vecx)}{\partial x_i} = \frac{\partial Q_{\vecv}^{\vecc'}(\vecx)}{\partial x_i}$,
        \item $\frac{\partial^2 Q_{\vecv}^{\vecc}(\vecx)}{\partial x_i~\partial x_j} = \frac{\partial^2 Q_{\vecv}^{\vecc'}(\vecx)}{\partial x_i~\partial x_j}$.
    \end{enumerate}
    \item For all vertices
    $\vecv \in L_c(\vecc) \setminus L_c(\vecc')$ it holds that
    $Q_{\vecv}^{\vecc}(\vecx) = \frac{\partial Q_{\vecv}^{\vecc}(\vecx)}{\partial x_i} = \frac{\partial^2 Q_{\vecv}^{\vecc}(\vecx)}{\partial x_i~\partial x_j} = 0$.
    \item For all vertices $\vecv \in L_c(\vecc') / L_c(\vecc)$ it holds that
    $Q_{\vecv}^{\vecc'}(\vecx) = \frac{\partial Q_{\vecv}^{\vecc'}(\vecx)}{\partial x_i} = \frac{\partial^2 Q_{\vecv}^{\vecc'}(\vecx)}{\partial x_i~\partial x_j} = 0$.
  \end{enumerate}
\end{lemma}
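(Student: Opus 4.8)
The plan is to push everything through the canonical coordinates and then isolate the single coordinate, $m$, in which the two cubelets differ. By Definition \ref{d:orientation}, since $\vecc$ and $\vecc'$ differ only in coordinate $m$ (by $+1$), the source/target vertices $\vecs^{\vecc},\vect^{\vecc}$ and $\vecs^{\vecc'},\vect^{\vecc'}$ agree in every coordinate $j\neq m$, while in coordinate $m$ the two cubelets are oriented oppositely: $t_m^{\vecc}-s_m^{\vecc}=-(t_m^{\vecc'}-s_m^{\vecc'})=\pm1$, the sign depending on the parity of $c_m$. Hence the canonical maps $\vecx\mapsto\vecp_{\vecx}^{\vecc}$ and $\vecx\mapsto\vecp_{\vecx}^{\vecc'}$ are affine in each coordinate separately, are identical in the coordinates $j\neq m$, and satisfy $\partial p_m^{\vecc}/\partial x_m=-\partial p_m^{\vecc'}/\partial x_m=\pm(N-1)$. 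By Lemma \ref{l:1} they also agree at the boundary point $\vecx$ itself, where moreover $p_m:=p_m^{\vecc}(\vecx)=p_m^{\vecc'}(\vecx)\in\{0,1\}$ (equal to $1$ if $c_m$ is even, to $0$ if $c_m$ is odd). Write $\vecp^\ast$ for this common point.

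For a shared vertex $\vecv\in L_c(\vecc)\cap L_c(\vecc')$, i.e.\ $v_m=c_m+1$, the first step is the combinatorial check that the partitions coincide: $A^{\vecc}_{\vecv}=A^{\vecc'}_{\vecv}$ and $B^{\vecc}_{\vecv}=B^{\vecc'}_{\vecv}$. For $j\neq m$ this is immediate ($v_j$ and $s_j$ are the same in both cubelets); for $j=m$ one computes $|v_m-s_m^{\vecc}|=|v_m-s_m^{\vecc'}|$ from the parity rules (it equals $1$ when $c_m$ is even and $0$ when $c_m$ is odd), so $m$ lands on the same side of the partition in both. Consequently $Q_{\vecv}^{\vecc}$ and $Q_{\vecv}^{\vecc'}$, read through the canonical coordinates, are the same function $\widehat Q$ of $\vecp$ (depending only on the partition), composed with the two maps above; since $\vecp_{\vecx}^{\vecc}=\vecp_{\vecx}^{\vecc'}=\vecp^\ast$ the values agree, giving (a). For (b) and (c), the chain rule (each $p_j$ depends affinely only on $x_j$) reduces $\partial Q_{\vecv}^{\vecc}/\partial x_i$ to $(\partial\widehat Q/\partial p_i)(\vecp^\ast)\cdot(\partial p_i^{\vecc}/\partial x_i)$ and $\partial^2 Q_{\vecv}^{\vecc}/\partial x_i\partial x_j$ to the analogous product of a second derivative of $\widehat Q$ with the relevant $\partial p/\partial x$ factors; for all indices $\neq m$ these are literally identical for $\vecc$ and $\vecc'$. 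The only possible discrepancy is in a derivative taken in the $x_m$ direction, where $\partial p_m^{\vecc}/\partial x_m$ flips sign. Here the key fact is that $\widehat Q$ depends on $p_m$ only through the composition $S(p_m)$ (inspect the three branches of Definition \ref{d:cof}), so $(\partial\widehat Q/\partial p_m)(\vecp^\ast)$ and $(\partial^2\widehat Q/\partial p_m\partial p_j)(\vecp^\ast)$ carry a factor $S'(p_m)=0$ since $p_m\in\{0,1\}$ and $S'(0)=S'(1)=0$ (Lemma \ref{lem:stepFunctions}); and should $\vecx$ happen to lie on an internal zero-set of $\widehat Q$ (where the branch split makes the $S(p_m)$-description fail), the vanishing of its gradient and Hessian is exactly Lemma \ref{l:derivation1}. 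The surviving term $\partial^2/\partial x_m^2$ comes with coefficient $(\partial p_m/\partial x_m)^2=(N-1)^2$, which is sign-insensitive and hence equal for both cubelets (its $\widehat Q$-factor also vanishes, using $S''(0)=S''(1)=0$). This establishes claim~1.

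For $\vecv\in L_c(\vecc)\setminus L_c(\vecc')$, i.e.\ $v_m=c_m$, I claim $Q_{\vecv}^{\vecc}(\vecx)=0$, whence $\nabla Q_{\vecv}^{\vecc}(\vecx)=0$ and $\nabla^2 Q_{\vecv}^{\vecc}(\vecx)=0$ by Lemma \ref{l:derivation1}. The parity rules place $m$ in $A^{\vecc}_{\vecv}$ when $c_m$ is even (then $s_m^{\vecc}=c_m=v_m$) and in $B^{\vecc}_{\vecv}$ when $c_m$ is odd (then $s_m^{\vecc}=c_m+1$, so $|v_m-s_m^{\vecc}|=1$). In the even case $p_m=1$ is the maximal canonical value, so $p_m\ge p_\ell$ for every $\ell\in B^{\vecc}_{\vecv}$ with $m\in A^{\vecc}_{\vecv}$, forcing $Q_{\vecv}^{\vecc}(\vecx)=0$ by Definition \ref{d:cof}; if $B^{\vecc}_{\vecv}=\varnothing$ (so $\vecv=\vecs^{\vecc}$) then $Q_{\vecv}^{\vecc}(\vecx)=\prod_\ell S_\infty(1-S(p_\ell))$ contains the factor $S_\infty(1-S(1))=S_\infty(0)=0$. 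In the odd case $p_m=0$ is minimal, so $p_j\ge 0=p_m$ for every $j\in A^{\vecc}_{\vecv}$ with $m\in B^{\vecc}_{\vecv}$, again forcing $Q_{\vecv}^{\vecc}(\vecx)=0$; if $A^{\vecc}_{\vecv}=\varnothing$ (so $\vecv=\vect^{\vecc}$) then $Q_{\vecv}^{\vecc}(\vecx)=\prod_j S_\infty(S(p_j))$ contains $S_\infty(S(0))=0$. This gives claim~2, and claim~3 is the mirror image, carried out relative to $L(\vecc')$ for vertices with $v_m=c_m+2$, again using that the canonical coordinate $p_m^{\vecc'}(\vecx)\in\{0,1\}$ is extremal.

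The main obstacle is claim~1, specifically matching the first and second derivatives in the $x_m$ direction across the shared face: the orientation flip of coordinate $m$ between $\vecc$ and $\vecc'$ must be exactly absorbed, and it is, precisely because the face sits at a canonical value $p_m\in\{0,1\}$ where $S,S',S''$ are flat, so that $\widehat Q$ has vanishing $p_m$-gradient and vanishing mixed $p_m$-second-derivatives there, while the one sign-insensitive term $\partial^2/\partial x_m^2$ matches automatically. The secondary difficulty — that $Q_{\vecv}^{\vecc}$ is only piecewise given by the product formula, with an internal zero-region $\{\exists\,j\in A,\ell\in B:\,p_j\ge p_\ell\}$ — is dispatched uniformly by Lemma \ref{l:derivation1}, which says the derivatives vanish wherever $Q_{\vecv}^{\vecc}$ does, avoiding a case explosion at the sub-boundaries.
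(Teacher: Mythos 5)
Your proposal is correct and takes essentially the same route as the paper's proof: Lemma \ref{l:1} gives equality of canonical representations, the chain rule handles all coordinates $i\neq m$, the flatness $S'(0)=S'(1)=S''(0)=S''(1)=0$ at the extremal value $p_m\in\{0,1\}$ kills the $x_m$-derivatives for shared vertices, and the branch condition of Definition \ref{d:cof} together with Lemma \ref{l:derivation1} disposes of the vertices not on the common face. If anything, you are more explicit than the paper on points it treats implicitly (the agreement of the partitions $A^{\vecc}_{\vecv},B^{\vecc}_{\vecv}$ with $A^{\vecc'}_{\vecv},B^{\vecc'}_{\vecv}$, the $A=\varnothing$ or $B=\varnothing$ edge cases, and the piecewise-definition issue), which is welcome but not a different argument.
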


\begin{proof}
  \begin{enumerate}
    \item Let $\vecv \in L_c(\vecc) \cap L_c(\vecc')$ then we have that
    \begin{enumerate}
      \item \underline{$Q_{\vecv}^{\vecc}(\vecx) = Q_{\vecv}^{\vecc'}(\vecx)$.}
      By Lemma \ref{l:1} we get that the canonical representation
      $\vecp_{\vecx}^{\vecc} = \vecp_{\vecx}^{\vecc'}$. Since $Q_{\vecv}^{\vecc}(\vecx)$ is
      a function of the canonical representation $\vecp_{\vecx}^{\vecc}$
      (see Definition~\ref{d:_coeff}), it holds that
      $Q_{\vecv}^{\vecc}(\vecx) = Q_{\vecv}^{\vecc'}(\vecx)$ for all vertices
      $\vecv \in L_c(\vecc) \cap L_c(\vecc')$.
      \item
      \underline{$\frac{\partial Q_{\vecv}^{\vecc}(\vecx)}{\partial x_i} = \frac{\partial Q_{\vecv}^{\vecc'}(\vecx)}{\partial x_i}$.}
      For $i \neq m$, we get that
      $\frac{\partial Q_{\vecv}^{\vecc}(\vecx)}{\partial x_i} = \frac{1}{t_i -s_i} \frac{\partial Q_{\vecv}^{\vecc}(\vecx)}{\partial p_i} = \frac{1}{t'_i -s'_i} \frac{\partial Q_{\vecv}^{\vecc'}(\vecx)}{\partial p'_i} = \frac{\partial Q_{\vecv}^{\vecc'}(\vecx)}{\partial x_i}$
      since $t_i = t_i'$ and $s_i = s_i'$ for all $i \neq m$. The latter argument cannot be
      applied for the $m$-th coordinate since $t_m - s_m = -(t_m' - s_m')$. However since
      $\vecx$ belongs to the boundary of both the cubelets $L(\vecc)$ and $L(\vecc')$ it is
      implied that $p_m = p_m'$ is either $0$ or $1$, meaning that
      $\frac{\partial Q_{\vecv}^{\vecc}(\vecx)}{\partial x_m} = \frac{\partial Q_{\vecv}^{\vecc'}(\vecx)}{\partial x_m} = 0$
      since $S'(0) = S'(1) = 0$ from Lemma \ref{lem:stepFunctions}.
      \item
      \underline{$\frac{\partial^2 Q_{\vecv}^{\vecc}(\vecx)}{\partial x_i~\partial x_j} = \frac{\partial^2 Q_{\vecv}^{\vecc'}(\vecx)}{\partial x_i~\partial x_j}$.}
      For $i, j \neq m$, we get that
      $\frac{\partial^2 Q_{\vecv}^{\vecc}(\vecx)}{\partial x_i~\partial x_j} = \frac{1}{t_i -s_i} \frac{1}{t_j -s_j} \frac{\partial^2 Q_{\vecv}^{\vecc}(\vecx)}{\partial  p_i~\partial p_j} = \frac{1}{t'_i - s'_i}\frac{1}{t'_j - s'_j} \frac{\partial Q_{\vecv}^{\vecc'}(\vecx)}{\partial p'_i~\partial p'_j} = \frac{\partial^2 Q_{\vecv}^{\vecc'}(\vecx)}{\partial x_i~\partial x_j}$
      since $t_i = t_i'$ and $s_i = s_i'$ for all $i \neq m$. As in the previous case,
      $p_m = p_m'$ equals either $0$ or $1$. As a result,
      $\frac{\partial^2 Q_{\vecv}^{\vecc}(\vecx)}{\partial x_m ~\partial x_j} = \frac{\partial^2 Q_{\vecv}^{\vecc'}(\vecx)}{\partial x_m~\partial x_j} = 0$ since
      $S'(0) = S'(1) = S''(0) = S''(1) = 0$ by Lemma \ref{lem:stepFunctions}.
    \end{enumerate}
    \item Since $\vecv \in L_c(\vecc) \setminus L_c(\vecc')$, we get that $v_{m} = c_m$. In
    case $c_m$ is even, we get that $s_m = c_m = v_m$ and thus the coordinate the coordinate
    $m$ belongs in the set $A_{\vecv}^{\vecc}$. Since $\vecx$ coincides with one of the
    corners in $L_c(\vecc) \setminus L_c(\vecc')$ we get that $p_m = 1$ which combined with
    the fact that $m \in A_{\vecv}^{\vecc}$ implies that $Q_{\vecv}^{\vecc}(\vecx) = 0$
    (see Definition~\ref{d:cof}). Then by Lemma \ref{l:derivation1},
    $\frac{\partial Q_{\vecv}^{\vecc'}(\vecx)}{\partial x_i} = \frac{\partial^2 Q_{\vecv}^{\vecc'}(\vecx)}{\partial x_i~\partial x_j} = 0$.
    In case is odd, we get that $s_m = c_m + 1$. The latter combined with the fact that
    $v_m = c_m$ implies that the $m$-th coordinate belongs in $B_{\vecv}^{\vecc}$. Now
    $p_m = 0$ and by Definition~\ref{d:cof}, $Q_{\vecv}^{\vecc}(\vecx) = 0$. Then again by
    Lemma \ref{l:derivation1},
    $\frac{\partial Q_{\vecv}^{\vecc'}(\vecx)}{\partial x_i} = \frac{\partial^2 Q_{\vecv}^{\vecc'}(\vecx)}{\partial x_i~\partial x_j} = 0$.
    \item This case follows with the same reasoning with previous case $2$.
  \end{enumerate}
\end{proof}

  We are now ready to prove Lemma \ref{l:continuity1}.

\begin{proof}[Proof of Lemma \ref{l:continuity1}]
  \begin{enumerate}
    \item Let $\vecv \in L_c(\vecc) \cap L_c(\vecc')$. There exists a sequence of corners
    \[ \vecc = \vecc^{(1)}, \ldots, \vecc^{(m)} = \vecc'\]
    \noindent such that $\norm{\vecc^{(j)} - \vecc^{(j + 1)}}_1 = 1$ and $\vecv \in L_c(\vecc^j)$
    for all $j \in [m]$. By Lemma \ref{l:continuity} we get that,
    \begin{enumerate}
      \item $Q_{\vecv}^{\vecc^{(j)}}(\vecx) = Q_{\vecv}^{\vecc^{(j + 1)}}(\vecx)$.
      \item $\frac{\partial Q_{\vecv}^{\vecc^{(j)}}(\vecx)}{\partial x_i} = \frac{\partial Q_{\vecv}^{\vecc^{(j + 1)}}(\vecx)}{\partial x_i}$.
      \item $\frac{\partial^2 Q_{\vecv}^{\vecc^{(j)}}(\vecx)}{\partial x_i~\partial x_j} = \frac{\partial Q_{\vecv}^{\vecc^{(j + 1)}}(\vecx)}{\partial x_i~\partial x_j}$.
    \end{enumerate}
    which implies Case $1$ of Lemma \ref{l:continuity1}.
    \item Let $\vecv \in L_c(\vecc) \setminus L_c(\vecc')$. There exists a sequence of
    corners $\vecc = \vecc^{(1)} \ldots, \vecc^{(i)}$ such that
    $\norm{\vecc^{(j)} - \vecc^{(j + 1)}}_1 = 1$ and $\vecv \notin L_c{\vecc^{(i)}}$ and
    $\vecv \in L_c(\vecc^{(j)})$ for all $j < i$. By case $2$ of Lemma \ref{l:continuity} we
    get that
    $Q_{\vecv}^{\vecc^{(i - 1)}}(\vecx) = \frac{\partial Q_{\vecv}^{\vecc^{(i - 1)}}(\vecx)}{\partial x_i} = \frac{\partial^2 Q_{\vecv}^{\vecc^{(i - 1)}}(\vecx)}{\partial x_i~\partial x_j} = 0$.
    Then case $2$ of Lemma \ref{l:continuity1} follows by case $1$ of Lemma
    \ref{l:continuity}.
    \item Similarly with case $2$.
  \end{enumerate}
\end{proof}

\addtocontents{toc}{\protect\setcounter{tocdepth}{1}}
\subsection{Proof of Lemma \ref{l:bounding_gradients}} \label{ss:gradinet_bounds}
\addtocontents{toc}{\protect\setcounter{tocdepth}{2}}

  We start this section with some fundamental properties of the smooth step function
$S_{\infty}$ that are more fine-grained than the properties we presented in Lemma
\ref{lem:stepFunctions}.

\begin{lemma} \label{l:infty}
    For $d \ge 10$ there exists a universal constant $c > 0$ such that the following
  statements hold.
  \begin{enumerate}
    \item If $x \ge 1/d$ then $S_{\infty}(x) \ge c \cdot 2^{-d}$.
    \item If $x \le 1/d$ then $S_{\infty}'(x) \le c \cdot d^2 \cdot 2^{-d}$.
    \item If $x \ge 1/d$ then $\frac{S_{\infty}'(x)}{S_{\infty}(x)} \le c \cdot d^2$.
    \item If $x \le 1/d$ then $\abs{S''_{\infty}(x)} \le c \cdot d^4 \cdot 2^{-d}$.
    \item If $x \ge 1/d$ then $\frac{\abs{S''_{\infty}(x)}}{S_{\infty}(x)} \le c \cdot d^4$.
  \end{enumerate}
\end{lemma}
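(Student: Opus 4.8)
The plan is to pass to the ``logistic'' form of $S_{\infty}$ and differentiate there. Writing $2^{-1/x} = e^{-(\ln 2)/x}$ and dividing numerator and denominator by $2^{-1/x}\,2^{-1/(1-x)}$, for $x \in (0,1)$ one has
\[ S_{\infty}(x) = \frac{1}{1 + e^{u(x)}}, \qquad u(x) \triangleq \ln 2 \cdot \left(\frac1x - \frac1{1-x}\right) = \ln 2 \cdot \frac{1 - 2x}{x(1-x)}, \]
so that $S_{\infty}(1-x) = 1 - S_{\infty}(x)$ and $u(1-x) = -u(x)$. A direct computation gives $S_{\infty}'(x) = S_{\infty}(x)\bigl(1 - S_{\infty}(x)\bigr)v(x)$ and $S_{\infty}''(x) = S_{\infty}(x)\bigl(1 - S_{\infty}(x)\bigr)\bigl[(1 - 2S_{\infty}(x))v(x)^2 + v'(x)\bigr]$, where $v(x) \triangleq -u'(x) = \ln 2 \cdot \frac{1 - 2x(1-x)}{(x(1-x))^2} > 0$. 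Since $x(1-x)\in(0,1/4]$ on $(0,1)$ we have the crude bounds $1 - 2x(1-x)\in[1/2,1]$, $0 < v(x) \le \ln 2/(x(1-x))^2$ and $|v'(x)| \le 2\ln 2/(x(1-x))^3$, so every one of the five inequalities reduces to elementary control of $S_{\infty}(x)$, $1 - S_{\infty}(x)$, $v(x)$ and $v'(x)$ on the two regimes $x \le 1/d$ and $x \ge 1/d$, using also the trivial $0 \le S_{\infty}, 1 - S_{\infty} \le 1$.

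For the ``$x \le 1/d$'' estimates (2) and (4) the key is that $S_{\infty}$ is exponentially small. From $1 + e^{u} \ge e^{u}$ we get $S_{\infty}(x) \le e^{-u(x)}$, and since $\frac{1-2x}{x(1-x)} = \frac1x\bigl(1 - \frac{x}{1-x}\bigr) \ge \frac1x\cdot\frac{d-2}{d-1}$ for $0 < x \le 1/d$, this yields $S_{\infty}(x) \le 2^{-\beta/x}$ with $\beta \triangleq \frac{d-2}{d-1} \ge \frac89$ (here $d \ge 10$ is used). Now for each $k \le 4$ the one-variable function $\psi_k(t) = t^k 2^{-\beta t}$ has derivative $t^{k-1}2^{-\beta t}(k - \beta\ln 2\cdot t)$, which is negative once $t > \frac{k}{\beta\ln 2}$; as $\frac{4}{(8/9)\ln 2} < 10 \le d$, $\psi_k$ is decreasing on $[d,\infty)$. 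Substituting $t = 1/x \ge d$ then gives $S_{\infty}(x)/(x(1-x))^k \le (1-1/d)^{-k}\psi_k(1/x) \le (1-1/d)^{-k}\,d^k\,2^{-\beta d}$, and since $\beta d = (d-1) - \frac1{d-1} \ge d-2$ we have $2^{-\beta d} \le 4\cdot 2^{-d}$. Feeding this into the formulas for $S_{\infty}'$ and $S_{\infty}''$ — taking $k = 2$ for (2) and $k \in \{3,4\}$ for (4) — and absorbing the bounded prefactors (each a negative power of $1 - 1/d \ge \frac9{10}$) into the constant, one obtains $S_{\infty}'(x) = O(d^2 2^{-d})$ and $|S_{\infty}''(x)| = O(d^4 2^{-d})$.

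Estimate (1) is proved by hand: if $x \ge \frac12$ then $\frac1{1-x} \ge \frac1x$, so $2^{-1/(1-x)} \le 2^{-1/x}$ and $S_{\infty}(x) \ge \frac12$; if $\frac1d \le x \le \frac12$ then the numerator $2^{-1/x} \ge 2^{-d}$ while the denominator $2^{-1/x} + 2^{-1/(1-x)} \le \frac14 + \frac12 < 1$, so $S_{\infty}(x) \ge 2^{-d}$; hence (1) holds with $c = 1$. For the ratio estimates (3) and (5), on the central band $\frac1d \le x \le 1 - \frac1d$ we just use $x(1-x) \ge \frac{d-1}{d^2}$, which gives $v(x) \le \frac{100\ln 2}{81}d^2$ and $|v'(x)| \le \frac{2000\ln 2}{729}d^3$, and then $S_{\infty}'(x)/S_{\infty}(x) = (1 - S_{\infty}(x))v(x) = O(d^2)$ and $|S_{\infty}''(x)|/S_{\infty}(x) \le v(x)^2 + |v'(x)| = O(d^4)$. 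On the remaining sliver $1 - \frac1d < x < 1$ we use the reflection symmetry: there $u(x) < 0$, hence $1 - S_{\infty}(x) = \frac{e^{u(x)}}{1 + e^{u(x)}} \le e^{u(x)}$, and with $\eps \triangleq 1 - x \in (0, \frac1d)$ the quantity $(1 - S_{\infty}(x))(v(x)^2 + |v'(x)|)$ is dominated by $C\,t^4\,2^{-\beta' t}$ with $t = 1/\eps \ge d$ and $\beta' = \frac{1-2\eps}{1-\eps} \ge \frac89$ — the same decreasing-$\psi_k$ situation — hence $O(d^4 2^{-d}) = O(d^4)$, and likewise the factor $v(x)^2 + |v'(x)|$ alone costs only $t^2$, giving $O(d^2)$ for (3). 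Taking $c$ to be a large enough absolute constant in (2)--(5) (and $c = 1$ in (1)) completes the proof.

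The step I expect to be most delicate is the bookkeeping for (3) and (5) near $x = 1$: although $v$ and $v'$ blow up like $(1-x)^{-2}$ and $(1-x)^{-3}$ there, $S_{\infty}'$ and $S_{\infty}''$ stay bounded, and extracting this cleanly forces one to combine $1 - S_{\infty}(x) \le e^{u(x)}$ with the monotonicity of $\psi_k(t) = t^k 2^{-\beta t}$ on $[d,\infty)$. Checking that monotonicity for all needed $k \le 4$ — equivalently $\frac{k}{\beta\ln 2} < d$, which is exactly what pins down the hypothesis $d \ge 10$ and the bound $\beta \ge \frac89$ — is the calculation that makes every loss polynomial in $d$ while preserving the $2^{-d}$ decay.
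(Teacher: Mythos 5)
Your proposal is correct, and it reaches the five estimates by a somewhat different route than the paper. The paper works directly with the explicit formulas $S'_{\infty}(x)=\ln 2\, S_{\infty}(x)S_{\infty}(1-x)\big(\tfrac{1}{x^2}+\tfrac{1}{(1-x)^2}\big)$ and the analogous expression for $S''_{\infty}$, and handles the regime $x\le 1/d$ by arguing that $S_{\infty},S'_{\infty},S''_{\infty}$ (via a sign analysis of $S''_{\infty}$ and $S'''_{\infty}$, the latter asserted by "more tedious calculations") are increasing on $[0,1/10]$, so that it suffices to evaluate at the endpoint $x=1/d$ and use $h(x)=2^{-1/x}+2^{-1/(1-x)}\in[1/4,1]$; for the ratio bounds (3) and (5) it divides the same formulas and bounds $1/x^2\le d^2$. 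Your logistic reparametrization $S_{\infty}(x)=1/(1+e^{u(x)})$ gives the equivalent identities $S'_{\infty}=S_{\infty}(1-S_{\infty})v$ and $S''_{\infty}=S_{\infty}(1-S_{\infty})[(1-2S_{\infty})v^2+v']$, but you replace the monotonicity-of-derivatives step by the pointwise bound $S_{\infty}(x)\le 2^{-\beta/x}$ with $\beta=\tfrac{d-2}{d-1}$ and the elementary fact that $\psi_k(t)=t^k 2^{-\beta t}$ is decreasing on $[d,\infty)$ for $k\le 4$ once $d\ge 10$; this is where your proof spends the hypothesis $d\ge 10$, whereas the paper spends it on the sign analysis. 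Your approach buys two things: it avoids the unproved $S'''_{\infty}\ge 0$ claim entirely, and the central-band/near-$1$ split with $1-S_{\infty}(x)\le e^{u(x)}$ makes explicit how the blow-up of $1/(1-x)^2$ and $1/(1-x)^3$ near $x=1$ is killed by the decay of $S_{\infty}(1-x)$ — a point the paper's proof of statement 3 glosses over when it drops the $1/(1-x)^2$ term (and its "for $x\le 1/d$" there is a typo for $x\ge 1/d$). Two cosmetic remarks: in your last sentence about (3) the relevant factor is $v$ alone (cost $t^2$), not $v^2+|v'|$, though your intent is clear; and, as you note by taking $c=1$ in (1) and a large constant in (2)–(5), the single constant $c$ in the lemma statement should be read as possibly different per item (the paper's own proof does the same), since $S_{\infty}(1/d)=\Theta(2^{-d})$ forbids a large $c$ in (1).
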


\begin{proof}
    We compute the derivative of $S_{\infty}$ and we have that
  \[ S'_{\infty}(x) = \ln(2) S_{\infty}(x) S_{\infty}(1 - x) \p{\frac{1}{x^2} + \frac{1}{(1 - x)^2}} \]
  from which we immediately get $S'_{\infty}(x) \ge 0$. Then we can compute the second derivative
  of $S_{\infty}$ as follows
  \[ S''_{\infty}(x) = \ln(2) S_{\infty}(x) S_{\infty}(1 - x) \cdot \]
  \[ \cdot \p{\ln(2) \p{S_{\infty}(1 - x) - S_{\infty}(x)}\p{\frac{1}{x^2} + \frac{1}{(1 - x)^2}}^2 - 2 \p{\frac{1}{x^3} - \frac{1}{\p{1 - x}^3}}}. \]
  We next want to prove that $S''_{\infty}(x) \ge 0$ for $x \le 1/10$. To see this observe that
  $1 - 2 \cdot S_{\infty}(x) \ge 1/2$ for $x \le 1/d$ and therefore
  \[ S''_{\infty}(x) \ge \frac{\ln(2)}{x^3} S_{\infty}(x) S_{\infty}(1 - x) \p{\frac{\ln(2)}{2 x} - 2} \]
  hence for $x \le 4 / \ln(2)$ it holds that $S''_{\infty}(x) \ge 0$. By similar but more tedious
  calculations we can conclude that $S'''_{\infty}(x) \ge 0$ for $x \le 1/10$. Hence in the
  interval $x \in [0, 1/10]$ all the functions $S_{\infty}$, $S'_{\infty}$, $S''_{\infty}$ are
  all increasing functions of $x$.

    Next we show that the function $h(x) = 2^{-1/x} + 2^{-1/(1-x)}$ is upper and lower bounded.
  First observe that $h(x) \ge \max\{2^{-1/x}, 2^{-1/(1-x)}\}$. Now if we set $t(x) = 2^{-1/x}$
  then $t'(x) = \ln(2) t(x) / x^2$ and hence $t(x) \ge t(1/2) = 1/4$ for $x \ge 1/2$. The same
  way we can prove that $ 2^{-1/(1-x)} \ge 1/4$ for $x \le 1/2$. Therefore $h(x) \ge 1/4$ for all
  $x \in [0, 1]$. Also it is not hard to see that $2^{-1/x} \le 1/2$ and $2^{-1/(1-x)} \le 1/2$
  which implies $h(x) \le 1$. Hence overall we have that $h(x) \in [1/4, 1]$ for all
  $x \in [0, 1]$. We are now ready to prove the statements.
  \begin{enumerate}
    \item We have shown that $S'_{\infty}(x) \ge 0$ for all $x \in [0, 1]$. Hence $S_{\infty}$ is
          an increasing function and therefore $S_{\infty}(x) \ge S_{\infty}(1/d)$ for
          $x \ge 1/d$. Now we have that $S_{\infty}(1/d) = 2^{-d}/h(1/d) \ge 2^{-d}$.
    \item Since $S'_{\infty}(x)$ is increasing for $x \in [0, 1/10]$, we have that
          $S'_{\infty}(x) \le S'_{\infty}(1/d)$ for $x \le 1/d$ and therefore
          \[ S'_{\infty}(x) \le \ln(2) S_{\infty}(1 - 1/d) S_{\infty}(1/d) \p{d^2 + \frac{1}{\p{1 - \frac{1}{d}}^2}} \]
          \[ \le 2 \ln(2) \frac{2^{-d}}{h(1/d)} \le 8 \ln(2) 2^{-d}. \]
    \item We have that for $x \le 1/d$
          \[ \frac{S'_{\infty}(x)}{S_{\infty}(x)} = \ln(2) S_{\infty}(1 - x) \p{\frac{1}{x^2} + \frac{1}{(1 - x)^2}} \le 2 \ln(2) \frac{1}{x^2} \le 2 \ln(2) d^2. \]
    \item Follows directly from the statement 1., the fact that $S''_{\infty}(x)$ is increasing
          for $x \in [0, 1/10]$ and the above expression of $S''_{\infty}$ this statement
          follows.
    \item This statement follows using the same reasoning with statement 3.
  \end{enumerate}
\end{proof}

In this section we establish the bounds
on the gradient and the hessian of $\mathrm{P}_{\vecv}(\vecx)$. These bounds are formally stated in Lemma \ref{l:bounding_gradients} the proof of which is the main goal of the section.

\bgradients*

\noindent In order to prove Lemma \ref{l:bounding_gradients}. We first introduce several technical lemmas.

\begin{lemma}\label{l:bounding_one_gradient}
    Let $\vecx \in [0,1]^d$ lying in cublet $L(\vecc)$, with
  $\vecc \in \p{\nm{N}}^d$ and let $\vecp_{\vecx}^{\vecc} = (p_1, \ldots, p_d)$
  be the canonical representation of $\vecx$. Then for all vertices
  $\vecv \in L_c(\vecc)$, it holds that
  \[  \abs{\frac{\partial Q_{\vecv}^{\vecc}(\vecx)}{\partial p_i}} \leq \Theta(d^{11}) \cdot \sum_{\vecv \in V_c} Q_{\vecv}^{\vecc}(\vecx). \]
\end{lemma}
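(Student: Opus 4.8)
Fix $\vecx$ in the cubelet $L(\vecc)$ and a coordinate $i \in [d]$; all $\Theta(\cdot)$-notations below hide absolute constants. First I would reduce to the $d+1$ vertices that matter: for any $\vecv \in L_c(\vecc)$ with $Q_{\vecv}^{\vecc}(\vecx) = 0$, Lemma~\ref{l:derivation1} gives $\partial Q_{\vecv}^{\vecc}(\vecx)/\partial p_i = 0$, so the asserted inequality is trivial; hence it suffices to treat $\vecv \in R_+(\vecx)$, which by Lemma~\ref{l:well_defined} (and its proof) are the at most $d+1$ vertices $\vecv^{(0)}, \dots, \vecv^{(d)}$ obtained by sorting the canonical coordinates and assigning the $m$ smallest to the ``$A$-side'' and the rest to the ``$B$-side''. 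Writing $q_j := S(p_j)$ (so that sorting by $p_j$ is the same as sorting by $q_j$) and introducing phantom coordinates $q_0 = 0$, $q_{d+1} = 1$, the coefficient $Q_{\vecv^{(m)}}^{\vecc}(\vecx)$ becomes a product of factors $S_{\infty}(q_b - q_a)$ over pairs $a \le m < b$, a form that also absorbs the source/target boundary factors. By the product rule, $\partial Q_{\vecv^{(m)}}^{\vecc}(\vecx)/\partial p_i$ is a sum of $O(d)$ terms, one per factor whose argument depends on $p_i$; the term coming from a factor $S_{\infty}(z)$ equals $\pm S_{\infty}'(z)\cdot S'(p_i)\cdot\left(Q_{\vecv^{(m)}}^{\vecc}(\vecx)/S_{\infty}(z)\right)$, and $\abs{S'(p_i)} \le 2$ by Lemma~\ref{lem:stepFunctions}.

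Next I would bound each such term by a dichotomy on the size of $z$, using the sharp estimates of Lemma~\ref{l:infty}. If $z \ge 1/d$, then $\abs{S_{\infty}'(z)}/S_{\infty}(z) \le \Theta(d^2)$, so the term is at most $\Theta(d^2)\cdot Q_{\vecv^{(m)}}^{\vecc}(\vecx) \le \Theta(d^2)\sum_{\vecv' \in L_c(\vecc)} Q_{\vecv'}^{\vecc}(\vecx)$. If $z < 1/d$, then $\abs{S_{\infty}'(z)} \le \Theta(d^2)\,2^{-d}$, and since $Q_{\vecv^{(m)}}^{\vecc}(\vecx)/S_{\infty}(z)$ is a product of $S_{\infty}$-values each at most $1$, the term is at most $\Theta(d^2)\,2^{-d}$. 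Summing over the $O(d)$ terms,
\[ \abs{\frac{\partial Q_{\vecv^{(m)}}^{\vecc}(\vecx)}{\partial p_i}} \le \Theta(d^3)\sum_{\vecv'\in L_c(\vecc)} Q_{\vecv'}^{\vecc}(\vecx) + \Theta(d^3)\,2^{-d}. \]

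The remaining point---and the one I expect to be the main obstacle---is to absorb the additive residual $\Theta(d^3)\,2^{-d}$, i.e. to establish a lower bound $\sum_{\vecv'\in L_c(\vecc)} Q_{\vecv'}^{\vecc}(\vecx) \ge c\cdot 2^{-d}$ for an absolute constant $c>0$. This is sharp (taking all $p_j$ with $S_{\infty}(S(p_j)) = 1/2$ leaves only the source and target coefficients nonzero, each equal to $2^{-d}$), so no cruder exponent suffices, and I would isolate it as a separate lemma. Its proof would again pass to $q_j = S(p_j)$ with phantoms, so that $\sum_{\vecv'} Q_{\vecv'}^{\vecc}(\vecx) = \sum_{m=0}^{d}\prod_{0 \le a \le m < b \le d+1} S_{\infty}(q_b - q_a)$, and then split into cases: if $q_d < 1/2$ the source coefficient $\prod_a S_{\infty}(1 - q_a)$ is already $\ge 2^{-d}$; if $q_1 > 1/2$ the target coefficient $\prod_a S_{\infty}(q_a)$ is $\ge 2^{-d}$; and the remaining ``straddling'' case requires a more delicate argument, inductively peeling off coordinates and using the identity $S_{\infty}(x) + S_{\infty}(1-x) = 1$ together with the monotonicity and the quantitative bounds of Lemma~\ref{l:infty} to track how much weight survives across the intermediate vertices $\vecv^{(m)}$. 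Granting this, $\Theta(d^3)\,2^{-d} \le \Theta(d^{11})\sum_{\vecv'} Q_{\vecv'}^{\vecc}(\vecx)$ with enormous slack, which together with the displayed bound gives the lemma; the generous exponent $\Theta(d^{11})$ costs nothing here and is convenient when these estimates are later fed, via the quotient rule, into the proof of Lemma~\ref{l:bounding_gradients}.
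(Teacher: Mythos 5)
Your first half runs parallel to the paper's proof: expand $\partial Q_{\vecv}^{\vecc}/\partial p_i$ by the product rule into at most $\abs{A_{\vecv}^{\vecc}}\le d$ terms, bound $\abs{S'(p_i)}$ by a constant, and split each term according to the size of the argument $z$ of the differentiated factor, using Lemma~\ref{l:infty} to get $\abs{S'_{\infty}(z)}/S_{\infty}(z)\le \Theta(d^2)$ when $z$ is large. The divergence, and the gap, is in how you handle the small-$z$ terms. You bound such a term by $\abs{S'_{\infty}(z)}\le \Theta(d^2)2^{-d}$ after discarding all remaining factors, and then you must absorb the additive residual by a lower bound $\sum_{\vecv'\in L_c(\vecc)}Q_{\vecv'}^{\vecc}(\vecx)\ge c\,2^{-d}$, which you yourself identify as the main obstacle and only sketch. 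That claimed lemma is not just unproven -- it is false. Choose $\vecx$ so that $S(p_j)=j/(d+1)$ for $j=1,\dots,d$. Using $S_{\infty}(x)\le 4\cdot 2^{-1/x}$, the source and target coefficients are at most $4^{d}2^{-(d+1)H_d}$, and every intermediate coefficient $Q_{\vecv^{(m)}}^{\vecc}(\vecx)$ contains the factors paired with the split, giving a bound of the form $4^{d}2^{-(d+1)(H_m+H_{d-m}-1)}\le 4^{d}2^{-(d+1)H_{d-1}}$. Hence $\sum_{\vecv'}Q_{\vecv'}^{\vecc}(\vecx)=2^{-\Omega(d\log d)}$, which for large $d$ is far below $2^{-d}/d^{8}$, so the residual $\Theta(d^3)2^{-d}$ cannot be absorbed into $\Theta(d^{11})\sum_{\vecv'}Q_{\vecv'}^{\vecc}(\vecx)$ no matter how the ``straddling case'' of your auxiliary lemma is argued. (Your two-point example with all $S_{\infty}(S(p_j))=1/2$ only shows the sum can be as small as roughly $2\cdot 2^{-d}$; it does not show it cannot be much smaller.)

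The lemma itself survives in such configurations only because the small-$z$ derivative terms also carry the remaining product factors, which are themselves exponentially small there -- precisely the factors you threw away with ``$\le 1$''. The paper's proof avoids needing any sharp lower bound by making the dichotomy at $z\le 1/d^{5}$ rather than $1/d$: then the residual is $\Theta(d^{10}2^{-d^{5}})$, and a crude pigeonhole bound suffices -- among $0,p_1,\dots,p_d,1$ two consecutive points are at distance $\ge 1/(d+1)$, and splitting there yields a vertex $\vecv^{\ast}$ with $Q_{\vecv^{\ast}}^{\vecc}(\vecx)\ge c^{d^2}2^{-(d+1)^2d^2}$, which dominates $2^{-d^{5}}$. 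So your route can be repaired either by adopting that smaller threshold (and the correspondingly cruder lower bound on the sum), or by retaining the companion factors when bounding the small-argument terms; as written, the argument has a genuine hole at its acknowledged key step. A further minor point: your ``phantom coordinate'' formula is not an identity -- it adds extra factors $S_{\infty}(q_b)$ and $S_{\infty}(1-q_a)$ to the intermediate coefficients and thus only underestimates the true sum (harmless in direction, but it should be stated as an inequality).
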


\begin{proof}
    To simplify notation we use $Q_{\vecv}(\vecx)$ instead of $Q_{\vecv}^{\vecc}(\vecx)$,
  $A$ instead of $A_{\vecv}^{\vecc}$ and $B$ instead of $B_{\vecv}^{\vecc}$ for the rest of the
  proof. Without loss of generality we assume that for all $j \in A$ and $\ell \in B$,
  $p_\ell > p_j$ since otherwise $\frac{\partial Q_{\vecv}^{\vecc}(\vecx)}{\partial p_i} = 0$
  trivially by the Definition \ref{d:cof}. Let $i \in B$ (symmetrically for $i \in A$) then,
  \begin{align*}
    & \abs{\frac{\partial Q_{\vecv}^{\vecc}(\vecx)}{\partial p_i}} = \\
    & ~~~~~~~~~ =   ~~ \prod_{\ell \neq i} \prod_{j \in A} S_\infty(S(p_{\ell}) - S(p_{j})) \cdot \left[ \sum_{j \in A} \abs{S'_{\infty}(S(p_i) - S(p_j))} \prod_{j' \in A /\{j\}} S_{\infty}(S(p_i) - S(p_{j'})) \right]S'(p_i) \\
    & ~~~~~~~~~ \le ~~ 6 \sum_{j \in A} \abs{S'_{\infty}(S(p_i) - S(p_j))} \cdot \prod_{(j',\ell) \ne (j,i)} S_{\infty}(S(p_{\ell}) - S(p_{j'}))
  \end{align*}
  where the last inequality follows by the fact that $\abs{S'(\cdot)} \le 6$. Since
  $\abs{A} \leq d$ the proof of the lemma will be completed if we are able to show that for any
  $j \in A$, it holds that
  \[ \abs{S'_{\infty}(S(p_i) - S(p_j))} \cdot \prod_{(j',\ell) \neq (j,i)} S_{\infty}(S(p_{\ell}) - S(p_{j'})) \le \Theta(d^{10}) \cdot \sum_{\vecv' \in L_c(\vecc)}Q_{\vecv'}(\vecx)  \]
  In case $S(p_i) - S(p_j) \geq 1/d^5$ then by case $3.$ of Lemma~\ref{l:infty} we get that
  $\abs{S_{\infty}'(S(p_i) - S(p_j))} \leq c \cdot d^{10} \cdot S_{\infty}(S(p_i) - S(p_j))$,
  which implies gthe following
  \begin{align*}
    & \abs{S'_{\infty}(S(p_i) - S(p_j))} \cdot \prod_{(j',\ell) \neq (j,i)} S_{\infty}(S(p_{\ell}) - S(p_{j'})) \le \\
    & ~~~~~~~~~~ \le c \cdot d^{10} \cdot S_{\infty}(S(p_i) - S(p_j)) \cdot \prod_{(j',\ell) \neq (j, i)} S_{\infty}(S(p_{\ell}) - S(p_{j'})) \\
    & ~~~~~~~~~~ = c \cdot d^{10} \cdot Q_{\vecv}(\vecx) \\
    & ~~~~~~~~~~ \le c \cdot d^{10} \cdot \sum_{\vecv' \in L_c(\vecc)}Q_{\vecv'}(\vecx)
  \end{align*}
  \noindent Now consider the case where $S(p_i) - S(p_j) \le 1/d^5$. Using case $2.$ of
  Lemma \ref{l:infty}, we have that
  \[ \abs{S'_{\infty}(S(p_i) - S(p_j))} \cdot \prod_{(j',\ell) \neq (j,i)} S_{\infty}(S(p_{\ell}) - S(p_{j'})) \le \abs{S'_{\infty}(S(p_i) - S(p_j))} \le \Theta(d^{10} \cdot 2^{-d^5}) \]
  \noindent Consider the sequence of points in the $[0, 1]$ interval $0, p_1, \ldots, p_d, 1$.
  There always exist two consecutive points with distance greater that $1/(d + 1)$. As a result,
  there exists $\vecv^\ast \in L_c(\vecc)$ such that $p_\ell - p_j \ge 1/(d + 1)$ for all
  $\ell \in B_{\vecv^{\ast}}$ and $j \in A_{\vecv^{\ast}}$. Then
  $S(p_\ell) - S(p_j) \ge 1/(d + 1)^2$ and by case $1.$ of Lemma \ref{l:infty},
  $S_{\infty}(S(p_\ell) - S(p_j)) \ge c 2^{-(d+1)^2}$. If we also use the fact that
  $\abs{A_{\vecv^{\ast}}} \cdot \abs{B_{\vecv^{\ast}}} \le d^2$, we get that
  \[ Q_{\vecv^\ast}(\vecx) \ge (c \cdot 2^{-(d + 1)^2})^{d^2} = c^{d^2} 2^{-(d+1)^2 \cdot d^2}. \]
  Then it holds that
  \[ \frac{1}{Q_{\vecv^\ast}(\vecx)} \cdot \abs{S'_{\infty}(S(p_i) - S(p_j))} \cdot  \prod_{(j',\ell) \neq (j,i)} S_{\infty}(S(p_{\ell}) - S(p_{j'})) \le \]
  \[ \le \Theta\p{d^{10} \cdot \left( (1/c) \cdot 2^{-d^3 + (d + 1)^2}  \right)^{d^2}} \le \Theta(d^{10}). \]
  Combining the later with the discussion in the rest of the proof the lemma follows.
\end{proof}

\begin{lemma}\label{l:derivation}
    For any vertex $\vecv \in \p{\nm{N}}^d$ it holds that
  $\abs{\frac{\partial \P_{\vecv}(\vecx)}{\partial x_i}} \leq \Theta\p{d^{12}/\delta}$.
\end{lemma}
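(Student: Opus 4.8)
The plan is to differentiate the explicit quotient expression for $\P_{\vecv}$ from Definition~\ref{d:_coeff} and bound the two terms produced by the quotient rule, using the pointwise gradient bound on the functions $Q^{\vecc}_{\vecv}$ from Lemma~\ref{l:bounding_one_gradient} together with the sparsity guarantee $\abs{R_+(\vecx)} = d+1$ from Lemma~\ref{l:well_defined}.

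First I would dispose of the easy cases. Fix $\vecx \in [0,1]^d$ and a coordinate $i \in [d]$. By Lemma~\ref{l:well_defined}, $\P_{\vecv}$ is differentiable, so $\partial \P_{\vecv}(\vecx)/\partial x_i$ exists; moreover if $\vecv \notin R_c(\vecx)$ then $\nabla \P_{\vecv}(\vecx) = 0$ and the claimed bound holds trivially. By Lemma~\ref{l:continuity1}, at a point $\vecx$ lying on the common boundary of several cubelets the value of $\partial \P_{\vecv}(\vecx)/\partial x_i$ agrees with the value obtained by computing inside any one of those cubelets; hence it suffices to assume $\vecx$ lies in a cubelet $L(\vecc)$ with $\vecv \in R_c(\vecx) = L_c(\vecc)$ and to carry out the computation with the functions $Q^{\vecc}_{\vecv'}$, $\vecv' \in L_c(\vecc)$, which are twice differentiable on $L(\vecc)$ by Lemma~\ref{l:derivation1}.

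Next I would set $Z(\vecx) \triangleq \sum_{\vecv' \in L_c(\vecc)} Q^{\vecc}_{\vecv'}(\vecx)$, so that $\P_{\vecv}(\vecx) = Q^{\vecc}_{\vecv}(\vecx)/Z(\vecx)$. Since each $Q^{\vecc}_{\vecv'} \ge 0$ (Definition~\ref{d:cof}) and, by Lemma~\ref{l:well_defined}, at least $d+1$ of them are strictly positive at $\vecx$, we have $Z(\vecx) > 0$ and the quotient rule gives
\[
  \frac{\partial \P_{\vecv}(\vecx)}{\partial x_i}
  = \frac{1}{Z(\vecx)}\cdot\frac{\partial Q^{\vecc}_{\vecv}(\vecx)}{\partial x_i}
  - \frac{Q^{\vecc}_{\vecv}(\vecx)}{Z(\vecx)^2}\sum_{\vecv' \in L_c(\vecc)} \frac{\partial Q^{\vecc}_{\vecv'}(\vecx)}{\partial x_i}.
\]
The chain rule through the canonical representation contributes a factor $\abs{\partial p_i/\partial x_i} = 1/\delta$ (the canonical representation maps the side-$\delta$ cubelet $L(\vecc)$ affinely onto $[0,1]^d$), so $\abs{\partial Q^{\vecc}_{\vecv'}(\vecx)/\partial x_i} = (1/\delta)\,\abs{\partial Q^{\vecc}_{\vecv'}(\vecx)/\partial p_i} \le (1/\delta)\,\Theta(d^{11})\,Z(\vecx)$ by Lemma~\ref{l:bounding_one_gradient}. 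For the first term this yields $\Theta(d^{11}/\delta)$. For the second term, I would use $0 \le Q^{\vecc}_{\vecv}(\vecx) \le Z(\vecx)$ to bound the prefactor $Q^{\vecc}_{\vecv}(\vecx)/Z(\vecx)^2 \le 1/Z(\vecx)$, and use Lemma~\ref{l:derivation1} (if $Q^{\vecc}_{\vecv'}(\vecx)=0$ then its $x_i$-derivative vanishes) together with Lemma~\ref{l:well_defined} (only the $d+1$ vertices of $R_+(\vecx)$ have $Q^{\vecc}_{\vecv'}(\vecx) \neq 0$) to reduce the sum to at most $d+1$ nonzero summands, each of magnitude at most $(1/\delta)\,\Theta(d^{11})\,Z(\vecx)$; hence the second term is at most $(d+1)\cdot(1/\delta)\,\Theta(d^{11}) = \Theta(d^{12}/\delta)$. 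Adding the two contributions gives $\abs{\partial \P_{\vecv}(\vecx)/\partial x_i} \le \Theta(d^{12}/\delta)$.

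The only point requiring care — and it is not really an obstacle, since the relevant lemmas are already in place — is the legitimacy of the quotient-rule computation at points on cubelet boundaries; this is exactly what Lemma~\ref{l:continuity1} (agreement of the $Q^{\vecc}_{\vecv}$ and their derivatives across adjacent cubelets) and Lemma~\ref{l:derivation1} (vanishing of the derivatives wherever $Q^{\vecc}_{\vecv}=0$) supply, so the bound extends verbatim to such points by evaluating in any incident cubelet.
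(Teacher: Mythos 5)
Your proof is correct and follows essentially the same route as the paper's: apply the quotient rule to $\P_{\vecv}=Q^{\vecc}_{\vecv}/\sum_{\vecv'}Q^{\vecc}_{\vecv'}$, bound both resulting terms via Lemma~\ref{l:bounding_one_gradient} together with the fact (Lemmas~\ref{l:well_defined} and~\ref{l:derivation1}) that at most $d+1$ summands have nonvanishing derivative, and pick up the factor $1/\delta$ from $\partial p_i/\partial x_i$. Your extra remark handling cubelet boundaries via Lemma~\ref{l:continuity1} is a harmless (and slightly more careful) addition to what the paper states.
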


\begin{proof}
    To simplify notation we use $Q_{\vecv}(\vecx)$ instead of $Q_{\vecv}^{\vecc}(\vecx)$ for the
  rest of the proof. Without loss of generality we assume that $\vecx$ lies on a cubelet
  $L(\vecc)$ with $\vecc \in \p{\nm{N}}^d$ and $\vecv \in L_c(\vecc)$, since
  otherwise $\frac{\partial \P_{\vecv}(\vecx)}{\partial x_i} = 0$. Let
  $\vecp_{\vecx}^{\vecc} = (p_1, \ldots, p_d)$ be the canonical representation
  of $\vecx$ in the cubelet $L(\vecc)$. Then it holds that
  \begin{eqnarray*}
    \abs{\frac{\partial \mathrm{P}_{\vecv}(\vecx)}{\partial p_i}}
    &  =  & \frac{\abs{\frac{\partial Q_{\vecv}(\vecx)}{\partial p_i} \cdot \left[\sum_{\vecv' \in L_c(\vecc)}Q_{\vecv'}(\vecx) \right] - Q_{\vecv}(\vecx) \cdot \left[\sum_{\vecv' \in L_c(\vecc)} \frac{\partial Q_{\vecv'}(\vecx)}{\partial p_i} \right]}}{(\sum_{\vecv' \in L_c(\vecc)} Q_{\vecv'}(\vecx))^2} \\
    & \le & \frac{\abs{\frac{\partial Q_{\vecv}(\vecx)}{\partial p_i}}}{\sum_{\vecv' \in L_c(\vecc)} Q_{\vecv'}(\vecx)} + \frac{ \sum_{\vecv' \in L_c(\vecc)} \abs{\frac{\partial Q_{\vecv'}(\vecx)}{\partial p_i}}}{\sum_{\vecv' \in L_c(\vecc)} Q_{\vecv'}(\vecx)} \\
    & \le & (d + 2) \cdot \Theta(d^{11}) = \Theta(d^{12})
  \end{eqnarray*}
  \noindent where the last inequality follows by Lemma \ref{l:bounding_one_gradient} and the fact
  that at most $d + 1$ vertices $\vecv$ of $L_c(\vecc)$ have non-zero gradient as we have proved
  in Lemma \ref{l:well_defined}. Then the proof of Lemma \ref{l:derivation} follows by the fact
  that $p_i = \frac{x_i - s_i}{t_i - s_i}$.
\end{proof}

\begin{lemma}\label{l:2nd-derivative}
    Let $\vecc \in \p{\nm{N}}^d$ and $\vecv \in L_c(\vecc)$ then it holds that
  $\abs{\frac{\partial^2 \mathrm{Q}_{\vecv}^{\vecc}(\vecx)}{\partial p_i~\partial p_j} } \le \Theta(d^{22}) \cdot \sum_{\vecv \in R_{\vecc}(\vecx)} Q_{\vecv}^{\vecc}(\vecx)$.
\end{lemma}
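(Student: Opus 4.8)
The plan is to mimic closely the proof of Lemma~\ref{l:bounding_one_gradient} (the first-derivative bound on $Q^{\vecc}_{\vecv}$), but now differentiating twice. As before I would use the shorthand $Q_{\vecv}(\vecx) = Q^{\vecc}_{\vecv}(\vecx)$, $A = A^{\vecc}_{\vecv}$, $B = B^{\vecc}_{\vecv}$, and work under the assumption that $p_\ell > p_j$ for all $\ell \in B$, $j \in A$ (otherwise the relevant second derivative vanishes identically by Definition~\ref{d:cof}, exactly as in Lemma~\ref{l:derivation1}). Writing $Q_{\vecv}(\vecx) = \prod_{j\in A}\prod_{\ell\in B} S_\infty(S(p_\ell)-S(p_j))$ (the two boundary cases $A=\varnothing$ or $B=\varnothing$ are handled by the same bookkeeping with $S_\infty(S(p_j))$ or $S_\infty(1-S(p_\ell))$ in place of the pairwise factors), I would differentiate this product with respect to $p_i$ and then $p_j$. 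By the product rule, $\partial^2 Q_{\vecv}/\partial p_i\partial p_j$ is a sum of $O(d^2)$ terms, each of which is a product over \emph{all but at most two} of the pairs $(j',\ell')$ of the factor $S_\infty(S(p_{\ell'})-S(p_{j'}))$, times at most two ``derivative'' factors of the form $S_\infty'(\cdot)S'(\cdot)$ or $S_\infty''(\cdot)(S'(\cdot))^2$ or $S_\infty'(\cdot)S''(\cdot)$, with $|S'|\le 6$ and $|S''|\le 6$ uniformly by Lemma~\ref{lem:stepFunctions}.

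The crux, just as in Lemma~\ref{l:bounding_one_gradient}, is to divide each such term by $\sum_{\vecv'\in L_c(\vecc)} Q_{\vecv'}(\vecx)$ and show it is $O(d^{22})$. I would split into cases according to whether the ``gaps'' $S(p_{\ell})-S(p_{j})$ appearing inside the at most two derivative factors are large (say $\ge 1/d^5$) or small ($\le 1/d^5$). In the large-gap case I would invoke the ratio bounds from Lemma~\ref{l:infty}: parts~3 and~5 give $|S_\infty'(x)|/S_\infty(x)\le\Theta(d^{10})$ and $|S_\infty''(x)|/S_\infty(x)\le\Theta(d^{20})$ for $x\ge 1/d^5$ (applying the lemma with the parameter $d^5$ in place of $d$), so each large-gap derivative factor can be absorbed into the corresponding missing $S_\infty$-factor at the cost of a $\mathrm{poly}(d)$ multiplier, reconstituting $Q_{\vecv}(\vecx)\le\sum_{\vecv'}Q_{\vecv'}(\vecx)$ up to a $\Theta(d^{20})$ loss. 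In the small-gap case, part~4 of Lemma~\ref{l:infty} gives $|S_\infty''|\le\Theta(d^{20}\cdot 2^{-d^5})$ and part~2 gives $|S_\infty'|\le\Theta(d^{10}\cdot 2^{-d^5})$, making the whole term exponentially small; I then lower-bound the denominator exactly as in Lemma~\ref{l:bounding_one_gradient}: among $0,p_1,\dots,p_d,1$ two consecutive values differ by $\ge 1/(d+1)$, which exhibits a vertex $\vecv^*\in L_c(\vecc)$ with $Q_{\vecv^*}(\vecx)\ge c^{d^2}2^{-(d+1)^2 d^2}$, and the double-exponential gap $2^{-d^5}$ beats this, so the ratio is $o(1)$.

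Summing the $O(d^2)$ terms, each bounded by $\Theta(d^{20})\cdot\sum_{\vecv'\in L_c(\vecc)}Q_{\vecv'}(\vecx)$, gives the claimed $\Theta(d^{22})\cdot\sum_{\vecv\in R_c(\vecx)}Q^{\vecc}_{\vecv}(\vecx)$ (using $R_c(\vecx)=L_c(\vecc)$ for the cubelet $L(\vecc)$ containing $\vecx$). I would then remark that this lemma, combined with Lemma~\ref{l:bounding_one_gradient} and the quotient-rule computation already carried out in Lemma~\ref{l:derivation}, yields $|\partial^2\P_{\vecv}(\vecx)/\partial x_i\partial x_j|\le\Theta(d^{24}/\delta^2)$ via $p_i=(x_i-s_i)/(t_i-s_i)$ and $|t_i-s_i|=\delta(N-1)^{-1}$-type scaling together with the fact (Lemma~\ref{l:well_defined}) that only $d+1$ of the $Q_{\vecv}$ are nonzero at any point; that completes part~2 of Lemma~\ref{l:bounding_gradients}. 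The main obstacle I anticipate is purely organizational: carefully enumerating the $O(d^2)$ product-rule terms and verifying that in \emph{every} one of them the at most two derivative factors can each be charged either to a large-gap $S_\infty$-factor or to the exponentially small tail, without double-counting the factor $S_\infty(S(p_\ell)-S(p_j))$ that might be shared between the two derivative slots when $i$ and $j$ index the same pair.
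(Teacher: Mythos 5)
Your proposal is correct and follows essentially the same route as the paper's proof: expand the second derivative via the product rule into $O(d^2)$ terms with at most two derivative factors, then for each such factor split on whether the corresponding gap $S(p_\ell)-S(p_j)$ exceeds $1/d^5$, absorbing large-gap factors into the missing $S_\infty$ factors via the ratio bounds of Lemma~\ref{l:infty} (parts 3 and 5, with $d^5$ in place of $d$) and killing small-gap terms with the exponential bounds (parts 2 and 4) against the pigeonhole lower bound $Q_{\vecv^*}(\vecx)\ge c^{d^2}2^{-(d+1)^2d^2}$ on the denominator, exactly as the paper does in its three-case enumeration ($i,j\in B$; $i\in B, j\in A$; $i=j$). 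Your flagged concern about the shared pair when $i$ and $j$ hit the same factor is precisely the $S''_\infty$ term the paper treats separately, and your accounting ($\Theta(d^{20})$ per term times $O(d^2)$ terms) matches the paper's $\Theta(d^{22})$ bound.
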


\begin{proof}
    To simplify the notation we use $CS(p_\ell - p_m)$ to denote
  $S_{\infty}(S(p_\ell) - S(p_m))$, $CS'(p_\ell - p_m)$ to denote
  $\abs{S'_{\infty}(S(p_\ell) - S(p_m))}$, $A$ to denote $A_{\vecv}^{\vecc}$ and $B$ to denote
  $B_{\vecv}^{\vecc}$ for the rest of the proof. As in Lemma \ref{l:derivation}, we assume that
  $p_{\ell} > p_{m}$ for all $\ell \in B$ and $m \in A$ since otherwise
  $\frac{\partial^2 \mathrm{Q}_{\vecv}(\vecx)}{\partial p_i~\partial p_j} = 0$. We have the
  following cases for the indices $i$ and $j$
  \begin{enumerate}
    \item[$\blacktriangleright$] If $i,j \in B$ then
    \begin{align*}
      & ~~ \abs{\frac{\partial^2 \mathrm{Q}_{\vecv}(\vecx)}{\partial p_i~\partial p_j}} = \\
      & =   \sum_{m_1, m_2 \in A} CS'(p_i - p_{m_1}) CS'(p_j - p_{m_2}) \cdot \prod_{(m,\ell) \neq \{(m_1,i),(m_2,j)\}} CS(p_\ell - p_m) \cdot S'(p_i) S'(p_j) \\
      & \le 36 \sum_{m_1,m_2 \in A} \underbrace{CS'(p_i - p_{m_1}) CS'(p_j - p_{m_2}) \cdot \prod_{(m,\ell)\neq\{(m_1,i),(m_2,j)\}} CS(p_\ell - p_m)}_{\triangleq U(i,j)}.
    \end{align*}
    If additionally it holds that $S(p_i) - S(p_{m_1}) \le 1/d^5$ or
    $S(p_j) - S(p_{m_2}) \leq 1/d^5$, then by the case $2.$ of Lemma \ref{l:infty}, we have that
    \[ U(i,j) \leq CS'(p_i - p_{m_1}) \cdot CS'(p_j - p_{m_2}) \leq \Theta( d^{10} e^{-d^5}). \]
    The latter follows from the fact that the function $S'_{\infty}(\cdot)$ is bounded in the
    $[0, 1]$ interval and that $CS(p_\ell - p_m) \leq 1$. With the exact same arguments as in
    Lemma \ref{l:bounding_one_gradient}, we hence get that
    \[ CS'(p_i - p_{m_1}) CS'(p_j - p_{m_2}) \cdot\Pi_{(m,\ell)\neq\{(m_1,i),(m_2,j)\}} CS(p_\ell - p_m) \leq \Theta(d^{10}) \sum_{\vecv' \in L_c(\vecc)} Q_{\vecv'}^{\vecc}(\vecx). \]
    Thus $\abs{\frac{\partial^2 Q_{\vecv}(\vecx)}{\partial p_i~\partial p_j}} \leq \Theta(d^{12})\sum_{\vecv' \in L_c(\vecc)} Q_{\vecv'}^{\vecc}(\vecx)$.

    On the other hand if $S(p_i) - S(p_{m_1}) \geq 1/d^5$ and $S(p_j) - S(p_{m_2}) \geq 1/d^5$
    then by case $1.$ of Lemma \ref{l:infty},
    $CS'(p_i - p_{m_1}) \leq c \cdot d^{10} \cdot CS(p_i - p_{m_1})$ and
    $CS'(p_j - p_{m_2}) \leq c \cdot d^{10} \cdot CS(p_j - p_{m_2})$ and thus
    $U(i,j) \leq \Theta(d^{20})\cdot Q_{\vecv}^{\vecc}(\vecx)$. Overall we get that
    $\abs{\frac{\partial^2 Q_{\vecv}(\vecx)}{\partial p_i~\partial p_j}} \le \Theta(d^{22}) \cdot \sum_{\vecv' \in R_{\vecc}(\vecx)}Q_{\vecv'}^{\vecc}(\vecx)$.
    \item[$\blacktriangleright$] If $i \in B$ and $j \in A$ then
    \begin{align*}
      & \abs{\frac{\partial^2 \mathrm{Q}_{\vecv}(\vecx)}{\partial p_i~\partial p_j}} \le \\
      & \le \sum_{m_1\in A,\ell_2 \in B} CS'(p_i - p_{m_1}) CS'(p_{\ell_2} - p_{j}) \cdot \prod_{(m,\ell)\neq\{(i,m_1),(\ell_2,j)\}} CS(p_\ell - p_m) \cdot S'(p_i)S'(p_j) \\
      & ~~~~~ + \abs{CS^{''}(p_i - p_{j}) \cdot \prod_{(m,\ell)\neq(i,j)} CS(p_\ell - p_m) \cdot S'(p_i) S'(p_j)} \\
      & \le \Theta(d^{22}) \sum_{\vecv \in L_c(\vecc)}Q_{\vecv}^{\vecc}(\vecx) + 36 \underbrace{\abs{CS^{''}(p_i - p_{j}) \cdot \prod_{(m,\ell)\neq(i,j)} CS(p_\ell - p_m)}}_{Q^{''}(\vecx)}.
    \end{align*}
    In case $S(p_i) - S(p_j) \geq 1/d^5$ then by case $4.$ of Lemma \ref{l:infty}, we get that
    $\abs{CS^{''}(p_i - p_{j})} \leq c d^{20} \cdot CS(p_i - p_{j})$ which implies that
    $Q^{''} \leq \Theta(d^{20}) \cdot Q_{\vecv}^{\vecc}(\vecx)$.

    On the other hand if $S(p_i) - S(p_j) \leq 1/d^5$ then by case $5.$ of Lemma \ref{l:infty},
    we get that $Q^{''} \leq \abs{CS^{''}(p_i - p_j)} \leq c \cdot d^{20}e^{-d^5}$. As in the
    proof of Lemma~\ref{l:bounding_one_gradient}, there exists a vertex
    $\vecv^\ast \in R_{\vecc}(\vecx)$ such that
    $Q_{\vecv^{\ast}}^{\vecc}(\vecx) \geq c^{d^2}e^{-(d+1)^2d^2}$ and thus
    $Q^{''} \leq \Theta(d^{20})\sum_{\vecv \in L_c(\vecc)}Q_{\vecv}^{\vecc}(\vecx)$. Overall we
    get that
    \[ \abs{\frac{\partial^2 \mathrm{Q}_{\vecv}(\vecx)}{\partial p_i~\partial p_j}} \leq \Theta(d^{22})\sum_{\vecv \in L_c(\vecc)} Q_{\vecv}^{\vecc}(\vecx). \]
    \item[$\blacktriangleright$] If $i = j \in B$ then
    \begin{align*}
      & \abs{\frac{\partial^2 \mathrm{Q}_{\vecv}(\vecx)}{\partial^2 p_i}} \le \\
      & \le \sum_{m_1, m_2 \in A}\abs{ CS'(p_i - p_{m_1}) CS'(p_i - p_{m_2}) \cdot \prod_{(m,\ell)\neq\{(m_1,i), (m_2,i)\}} CS(p_\ell - p_m) \cdot S'(p_i) S'(p_i)} \\
      & ~~~~ + \sum_{m_1 \in A}\abs{CS''(p_i - p_{m_1}) \cdot \prod_{(m,\ell)\neq(m_1,\ell)}CS(p_\ell - p_m)S'(p_i)S'(p_i)} \\
      & \le \Theta(d^{22} + d \cdot d^{20}) \cdot \sum_{\vecv \in L_c(\vecc)}Q_{\vecv}^{\vecc}(\vecx).
    \end{align*}
  \end{enumerate}
  If we combine all the above cases then the Lemma follows.
\end{proof}

\begin{lemma}  \label{l:derivation2}
    For any vertex $\vecv \in \p{\nm{N}}^d$, it holds that
  $\abs{\frac{\partial^2 \P_{\vecv}(\vecx)}{\partial x_i~\partial x_j}} \leq \Theta(d^{24}/\delta^2)$.
\end{lemma}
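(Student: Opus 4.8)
The plan is to follow the proof of Lemma~\ref{l:derivation} almost verbatim, except that I apply the quotient rule twice rather than once and invoke the second-order estimate of Lemma~\ref{l:2nd-derivative} alongside the first-order estimate of Lemma~\ref{l:bounding_one_gradient}. First I would dispose of the trivial case: if $\vecx$ does not lie in any cubelet $L(\vecc)$ with $\vecv\in L_c(\vecc)$, then $\P_{\vecv}$ is identically zero near $\vecx$ by Definition~\ref{d:_coeff}, so every second partial vanishes there; and since $\P_{\vecv}$ is globally twice differentiable by Lemma~\ref{l:well_defined}, it is enough to bound $\partial^2\P_{\vecv}/\partial x_i\,\partial x_j$ at points in the interior of a cubelet $L(\vecc)$ with $\vecv\in L_c(\vecc)$. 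Fix such an $\vecx$, let $\vecp_{\vecx}^{\vecc}=(p_1,\dots,p_d)$ be its canonical representation, and set $T(\vecx)=\sum_{\vecv'\in L_c(\vecc)}Q^{\vecc}_{\vecv'}(\vecx)>0$, so that $\P_{\vecv}(\vecx)=Q^{\vecc}_{\vecv}(\vecx)/T(\vecx)$.

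Differentiating the quotient twice with respect to $p_i$ and $p_j$ gives
\[
\frac{\partial^2}{\partial p_i\,\partial p_j}\!\left(\frac{Q^{\vecc}_{\vecv}}{T}\right)
=\frac{1}{T}\frac{\partial^2 Q^{\vecc}_{\vecv}}{\partial p_i\,\partial p_j}
-\frac{\frac{\partial Q^{\vecc}_{\vecv}}{\partial p_i}\frac{\partial T}{\partial p_j}+\frac{\partial Q^{\vecc}_{\vecv}}{\partial p_j}\frac{\partial T}{\partial p_i}+Q^{\vecc}_{\vecv}\frac{\partial^2 T}{\partial p_i\,\partial p_j}}{T^2}
+\frac{2\,Q^{\vecc}_{\vecv}\frac{\partial T}{\partial p_i}\frac{\partial T}{\partial p_j}}{T^3}.
\]
To bound each term I would use three inputs: (i) $\bigl|\partial Q^{\vecc}_{\vecv'}/\partial p_i\bigr|\le\Theta(d^{11})\,T$ for every vertex $\vecv'$, from Lemma~\ref{l:bounding_one_gradient}; (ii) $\bigl|\partial^2 Q^{\vecc}_{\vecv'}/\partial p_i\,\partial p_j\bigr|\le\Theta(d^{22})\,T$ for every vertex $\vecv'$, from Lemma~\ref{l:2nd-derivative}; and (iii) the sparsity fact $\abs{R_{+}(\vecx)}\le d+1$ from Lemma~\ref{l:well_defined}, which implies that the sums defining $T$, $\partial T/\partial p_j$ and $\partial^2 T/\partial p_i\,\partial p_j$ have at most $d+1$ nonzero summands, so that $\bigl|\partial T/\partial p_j\bigr|\le\Theta(d^{12})\,T$ and $\bigl|\partial^2 T/\partial p_i\,\partial p_j\bigr|\le\Theta(d^{23})\,T$. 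Combining these with $0\le Q^{\vecc}_{\vecv}\le T$, the four groups of terms above are at most $\Theta(d^{22})$, $\Theta(d^{23})$, $\Theta(d^{23})$, and $\Theta(d^{24})$ in absolute value, respectively, whence $\bigl|\partial^2\P_{\vecv}(\vecx)/\partial p_i\,\partial p_j\bigr|\le\Theta(d^{24})$. Finally, as in Lemma~\ref{l:derivation}, I would pass from $\vecp$-derivatives to $\vecx$-derivatives via $p_k=(x_k-s_k)/(t_k-s_k)$, picking up a factor $\abs{t_i-s_i}^{-1}\abs{t_j-s_j}^{-1}=\Theta(1/\delta^2)$, to conclude $\bigl|\partial^2\P_{\vecv}(\vecx)/\partial x_i\,\partial x_j\bigr|\le\Theta(d^{24}/\delta^2)$.

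I do not expect a genuine obstacle here: all the analytic work is already packaged in Lemmas~\ref{l:bounding_one_gradient} and~\ref{l:2nd-derivative}. The only points needing care are the bookkeeping of the powers of $d$ (with the dominant $\Theta(d^{24})$ coming from the last, cubic-in-$1/T$, term) and the systematic use of the sparsity bound $\abs{R_+(\vecx)}\le d+1$ so that differentiating the normalizing sum $T$ costs only an extra $\Theta(d)$ factor rather than $\Theta(N^d)$.
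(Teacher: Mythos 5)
Your proposal is correct and follows essentially the same route as the paper's proof: apply the quotient rule twice to $\P_{\vecv}=Q^{\vecc}_{\vecv}/\sum_{\vecv'}Q^{\vecc}_{\vecv'}$, bound each resulting term via Lemma~\ref{l:bounding_one_gradient} and Lemma~\ref{l:2nd-derivative} together with the sparsity bound $\abs{R_+(\vecx)}\le d+1$, obtaining $\Theta(d^{24})$ for the $\vecp$-derivatives, and then convert to $\vecx$-derivatives picking up the $1/\delta^2$ factor. The bookkeeping of powers of $d$ matches the paper's, so nothing further is needed.
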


\begin{proof}
    Without loss of generality we assume that $\vecv \in L_c(\vecc)$, where $\vecc \in \p{\nm{N - 1}}^d$
  such that $\vecx \in L(\vecc)$, since otherwise
  $\frac{\partial^2 \P_{\vecv}(\vecx)}{\partial x_i~\partial x_j} = 0$.
  \begin{eqnarray*}
    \frac{\partial^2 \mathrm{P}_{\vecv}(\vecx)}{\partial p_i~\partial p_j}
    &  =  & \frac{\partial^2 Q_{\vecv}(\vecx)}{\partial p_i~\partial p_j} \left(\sum_{\vecv' \in L_c(\vecc)} Q_{\vecv'}(\vecx)\right)^3 \cdot \frac{1}{\left (\sum_{\vecv' \in L_c(\vecc)} Q_{\vecv'}(\vecx) \right)^4} \\
    & ~~~~~ +  & \frac{\partial Q_{\vecv}(\vecx)}{\partial p_i} \sum_{\vecv' \in L_c(\vecc)}\frac{\partial Q_{\vecv'}(\vecx)}{\partial p_j} \left( \sum_{\vecv' \in L_c(\vecc)} Q_{\vecv'}(\vecx)\right)^2\cdot \frac{1}{\left (\sum_{\vecv' \in L_c(\vecc)} Q_{\vecv'}(\vecx) \right)^4} \\
    & ~~~~~ -  & \frac{\partial Q_{\vecv'}(\vecx)}{\partial p_j} \sum_{\vecv' \in L_c(\vecc)}\frac{\partial Q_{\vecv'}(\vecx)}{\partial p_i} \left( \sum_{\vecv' \in L_c(\vecc)} Q_{\vecv'}(\vecx)\right)^2\cdot \frac{1}{\left (\sum_{\vecv' \in L_c(\vecc)} Q_{\vecv'}(\vecx) \right)^4} \\
    & ~~~~~ -  & Q_{\vecv}(\vecx)\sum_{\vecv' \in L_c(\vecc)}\frac{\partial^2 Q_{\vecv'}(\vecx)}{\partial p_i~\partial p_j}\left(\sum_{\vecv' \in L_c(\vecc)}Q_{\vecv'}(\vecx)\right)^2\cdot \frac{1}{\left (\sum_{\vecv' \in L_c(\vecc)}Q_{\vecv'}(\vecx) \right)^4}\\
    & ~~~~~ -  & \frac{\partial Q_{\vecv}(\vecx)}{\partial p_i} \sum_{\vecv' \in L_c(\vecc)} Q_{\vecv'}(\vecx) \cdot 2 \sum_{\vecv' \in L_c(\vecc)} Q_{\vecv'}(\vecx) \sum_{\vecv' \in L_c(\vecc)}\frac{\partial Q_{\vecv'}(\vecx)}{\partial p_j}\cdot \frac{1}{\left (\sum_{\vecv' \in L_c(\vecc)} Q_{\vecv'}(\vecx) \right)^4}\\
    & ~~~~~ +  & Q_{\vecv}(\vecx) \sum_{\vecv' \in L_c(\vecc)} \frac{\partial Q_{\vecv'}(\vecx)}{\partial p_i} \cdot 2 \sum_{\vecv' \in L_c(\vecc)} Q_{\vecv'}(\vecx) \sum_{\vecv' \in L_c(\vecc)} \frac{\partial Q_{\vecv'}(\vecx)}{\partial p_j} \cdot \frac{1}{\left (\sum_{\vecv' \in L_c(\vecc)} Q_{\vecv'}(\vecx) \right)^4}
  \end{eqnarray*}
  Using Lemma \ref{l:2nd-derivative} and Lemma \ref{l:bounding_one_gradient} we can bound every
  term in the above expression and hence we get that
  $\abs{\frac{\partial^2 \mathrm{P}_{\vecv}(\vecx)}{\partial p_i~\partial p_j}} \leq \Theta(d^{24})$. Then the lemma follows from the fact that
  $\frac{\partial p_i}{\partial x_i} = 1/\delta$.
\end{proof}

\noindent Finally using Lemma \ref{l:derivation} and Lemma \ref{l:derivation2} we get the proof
of Lemma \ref{l:bounding_gradients}.

\addtocontents{toc}{\protect\setcounter{tocdepth}{1}}
\subsection{Proof of Lemma \ref{l:positive_corners_boundary}} \label{sec:proof:l:positive_corners_boundary}
\addtocontents{toc}{\protect\setcounter{tocdepth}{2}}

  Let $0 \leq x_i < 1/(N - 1)$ and $\vecc = (c_1,\ldots,c_i,\ldots,c_d)$ denote down-left corner of the
cubelet $R(\vecx)$ at which $\vecx \in [0, 1]^d$ lies, i.e. $\vecx \in L(\vecc)$. Since
$\vecx \le 1/(N - 1)$, this means that $c_i = 0$. By the definition of \textit{sources and targets} in
Definition \ref{d:canonical_representation}, we have that $s_i = 0$ and $t_i = 1/(N - 1)$, where
$s_i$, $t_i$ are respectively the $i$-th coordinate of the source $\vecs_{\vecc}$ and the target
$\vect_{\vecc}$ vertex. Let the canonical representation $p_{\vecx}^{\vecc} = (p_1, \ldots, p_d)$
of $\vecx$ in the cubelet $L(\vecc)$. Now partition the coordinates $[d]$ in the following sets
\[ A = \set{j \mid p_j \le p_i} ~~~\text{ and }~~~ B = \set{j \mid p_i < p_j}. \]
If $B = \varnothing$ then notice that $\P_{\vecs_{\vecc}}(\vecx) > 0$, since $p_i < 1$, by the
fact that $x_i < 1/(N - 1)$. Thus the lemma follows since $s_i = 0$. So we may assume that
$B \neq \varnothing$. In this case consider the corner $\vecv = (v_1,\ldots,v_d)$ defined as
follows
\[ v_j =  \left\{ \begin{array}{ll}
                    s_j & j \in A\\
                    t_j & j \in B \\
                  \end{array} \right.. \]
\noindent Observe that $Q_{\vecv}^{\vecc}(\vecx) > 0$ and thus $\vecv \in R_+(\vecx)$. Moreover
the coordinate $i\in A$ and therefore it holds that $v_i = s_i = 0$. This proves the first
statement of the Lemma.

  For the second statement let $1 - 1/(N - 1) \leq x_i \leq 1/(N - 1)$ and
$\vecc = (c_1,\ldots,c_i,\ldots,c_d)$ denote down-left corner of the cubelet $R(\vecx)$ at which
$\vecx \in [0, 1]^d$ lies, i.e. $\vecx \in L(\vecc)$. This means that $c_i = \frac{N - 2}{N - 1}$.
\begin{enumerate}
    \item[$\blacktriangleright$] Let $N$ be odd. In this case by the definition of sources and
    targets in Definition \ref{d:canonical_representation}, we have that $s_i = 1 - 1/(N - 1)$ and
    $t_i = 1$, where $s_i$, $t_i$ are respectively the $i$-th coordinate of the source and target
    vertex. Let $p_{\vecx}^{\vecc} = (p_1,\ldots,p_d)$ be the canonical representation of $\vecx$
    under in the cubelet $L(\vecc)$. Now partition the coordinates $[d]$ as follows,
    \[ A = \set{j \mid p_j < p_i} ~~~\text{ and }~~~ B = \set{j \mid p_i \leq p_j} \]
    If $A = \varnothing$ then notice that for the target vertex $\vect_{\vecc}$,
    $\P_{\vect_{\vecc}}(\vecx) > 0$, since $p_i > 0$, by the fact that $x_i > 1 - 1/(N - 1)$. Thus the
    lemma follows since $t_i = 1$. So we may assume that $A \neq \varnothing$. In this case
    consider the corner $\vecv = (v_1, \ldots, v_d)$ defined as follows,
    \[ v_j = \left\{ \begin{array}{ll}
                       s_j & j \in A\\
                       t_j & j \in B \\
                     \end{array} \right. \]
    \noindent Observe that $Q_{\vecv}^{\vecc}(\vecx) > 0$ and thus $\vecv \in R_+(\vecx)$.
    Moreover the coordinate $i \in B$ and thus $v_i = t_i = 1$.
    \item[$\blacktriangleright$] Let $N$ be even. In this case we have that $t_i = 1 - 1/(N - 1)$ and
    $s_i = 1$. Now partition the coordinates $[d]$ as follows,
    \[ A = \set{j \mid p_j \leq p_i} ~~~\text{ and }~~~ B = \set{j \mid p_i < p_j} \]
    If $B = \varnothing$ then notice that for the source vertex $\vecs_{\vecc}$,
    $\P_{\vecs_{\vecc}}(\vecx) > 0$, since $p_i < 1$, by the fact that $x_i > 1 - 1/(N - 1)$. Thus the
    lemma follows since $s_i = 1$. In case $B \neq \varnothing$ consider the corner
    $\vecv = (v_1, \ldots, v_d)$ defined as follows,
    \[ v_j =  \left\{ \begin{array}{ll}
                        s_j & j \in A \\
                        t_j & j \in B \\
                      \end{array} \right. \]
    \noindent Observe that $Q_{\vecv}^{\vecc}(\vecx) > 0$ and thus $\vecv \in R_+(\vecx)$.
    Moreover the coordinate $i \in A$ and thus $v_i = s_i = 1$.
\end{enumerate}
\noindent If we put together the last two cases then this implies the second statement of the
lemma.

\section{Constructing the Turing Machine -- Proof of Theorem \ref{t:Turing_Machine}} \label{sec:proof:TuringMachine}

  In this section we prove Theorem~\ref{t:Turing_Machine} establishing that both
the function $f_{\calC_l}(\vecx, \vecy)$ of Definition~\ref{d:d-payoff} and its
gradient, is computable by a polynomial-time Turing Machine. We prove
Theorem~\ref{t:Turing_Machine} through a series of Lemmas. To simplify notation
we set $b \triangleq \log 1/\eps$.

\begin{definition}
    For a $x \in \R$, we denote by $\left [ x \right ]_{b} \in \R$, a value
  represented by the $b$ bits such that
  \[ \abs{\left[ x\right]_b - x} \leq 2^{-b}.\]
\end{definition}

\begin{lemma}\label{l:computing_S_infty}
    There exist Turing Machines $M_{S_{\infty}}$, $M_{S'_{\infty}}$ that given
  input $x \in [0,1]$ and $\eps$ in binary form, compute
  $\left[S_{\infty}(x) \right]_{b}$ and $\left[S'_{\infty}(x) \right]_b$
  in time polynomial in $b = \log(1/\eps)$ and the binary representation of $x$.
\end{lemma}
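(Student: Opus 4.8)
The plan is to split the computation at a threshold separating a \emph{saturation} regime, where $S_\infty(x)$ (resp.\ $S'_\infty(x)$) is already below the requested accuracy $\eps$ and may simply be reported as $0$, from a \emph{moderate} regime, where every quantity occurring in the closed-form expressions has bit-length polynomial in $b=\log(1/\eps)$ and in the bit-length $b'$ of the input $x$, so that it can be evaluated directly; recall that a dyadic $x\in(0,1)$ of bit-length $b'$ satisfies $x\ge 2^{-b'}$, and that we are allowed running time polynomial in both $b$ and $b'$.

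For $M_{S_\infty}$ I would first dispose of $x\le 0$ and $x\ge 1$ (outputs $0$ and $[1]_b$). For $x\in(0,1)$ I would write $S_\infty(x)=\bigl(1+2^{g(x)}\bigr)^{-1}$ with $g(x)=\tfrac1x-\tfrac1{1-x}$, and, using $S_\infty(1-x)=1-S_\infty(x)$ from Lemma~\ref{lem:stepFunctions}, reduce to $x\le 1/2$, hence $g(x)\ge 0$. By ordinary long division of dyadic rationals I would compute $\tilde g$ with $|\tilde g-g(x)|\le 2^{-3b-2b'}$, which costs time polynomial in $b$ and $b'$. If $\tilde g> b+10$ then $g(x)>b+5$, so $2^{g(x)}>2^{b+5}$ and $S_\infty(x)<2^{-b}$, and I output $0$; otherwise $g(x)\le b+15$ is polynomially bounded, and I evaluate $2^{\tilde g}=2^{\lfloor\tilde g\rfloor}\cdot 2^{\{\tilde g\}}$, the integer part being a left shift by at most $b+15$ positions and the fractional part $2^{\{\tilde g\}}=\exp(\{\tilde g\}\ln 2)$ being obtained by truncating the exponential series (argument in $[0,\ln 2]$, tail after $K$ terms at most $\tfrac{e}{(K+1)!}$, so $K=O(b)$ terms suffice) with each term carried to $O(b+b')$ bits. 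Tracking errors through $2^{g(x)}$, whose derivative in $g$ is at most $2^{b+16}\ln 2$, and then through $a\mapsto (1+a)^{-1}$, which is $1$-Lipschitz for $a\ge 0$, the chosen precisions keep the result within $2^{-b-1}$ of $S_\infty(x)$; a final division rounds it to $[S_\infty(x)]_b$.

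For $M_{S'_\infty}$ I would use $S'_\infty(x)=\ln 2\cdot S_\infty(x)\,S_\infty(1-x)\,\bigl(\tfrac1{x^2}+\tfrac1{(1-x)^2}\bigr)$ together with the bound $S_\infty(x)\le 4\cdot 2^{-1/x}$ valid for $x\le 1/2$ (immediate from the definition, since $2^{-1/(1-x)}\ge 1/4$ there). By the same symmetry one reduces to $x\le 1/2$. If $x<1/(b+2b'+10)$ then $2^{-1/x}<2^{-(b+2b'+10)}$ while $\tfrac1{x^2}+\tfrac1{(1-x)^2}\le 2^{2b'}+4$, whence a short computation gives $S'_\infty(x)<2^{-b}$ and I output $0$; otherwise $1/x,1/(1-x)=O(b+b')$, so each factor has polynomially bounded bit-length, and I assemble the product from the subroutine for $S_\infty$, from divisions computing $\tfrac1{x^2}+\tfrac1{(1-x)^2}$, and from a polynomially-many-bit approximation of $\ln 2$, inflating the working precisions by $O(b+b')$ so that the accumulated error on this product (which is bounded, $|S'_\infty|\le 16$ by Lemma~\ref{lem:stepFunctions}) stays below $2^{-b}$; a final rounding yields $[S'_\infty(x)]_b$.

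The rest is routine error bookkeeping; the single point requiring care, and the crux of the argument, is the passage through the exponential $2^{g(x)}$, since $g(x)$ can be as large as $\Theta(b)$ and $2^{g(x)}$ as large as $2^{\Theta(b)}$. This is exactly what forces (i) the saturation test, so that we never attempt to form the ill-conditioned ratio $2^{-1/x}/(2^{-1/x}+2^{-1/(1-x)})$ when it is exponentially small, and (ii) an additive $\Theta(b)$ extra bits of ``headroom'' in every intermediate quantity feeding the exponentiation, so that the magnification by $2^{g(x)}$ does not wipe out the target accuracy. Once these are in place, the exponentiation itself---bit-shift for the integer part, geometrically convergent series for the fractional part---is entirely standard, as are the auxiliary subroutines (division and series truncation on binary rationals), and all run in time polynomial in $b$ and $b'$.
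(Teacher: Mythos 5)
Your proof is correct, and it rests on the same overall strategy as the paper's: write $S_\infty$ and $S'_\infty$ in closed form in terms of base-$2$ exponentials and rational operations, evaluate each ingredient to polynomially many bits, and push the error through Lipschitz estimates for the outer compositions. The concrete choices differ, though, in ways worth recording. For $S'_\infty$ you use the product identity $S'_\infty(x)=\ln 2\cdot S_\infty(x)\,S_\infty(1-x)\bigl(\tfrac{1}{x^2}+\tfrac{1}{(1-x)^2}\bigr)$ (the same identity the paper exploits in the proof of Lemma~\ref{l:infty}), whereas the paper evaluates a normalized ratio of quantities $A,B,C$ chosen so that every intermediate is bounded above, and the denominator bounded below, by universal constants, after which a single Lipschitz bound finishes the error analysis; and you implement the exponentiation from scratch (bit shift for the integer part, truncated series for the fractional part) rather than citing the Brent-style polynomial-time routines for $2^{\gamma}$, $\ln$ and division. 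The genuinely useful extra in your write-up is the saturation test together with the symmetry reduction to $x\le 1/2$: with the correct exponent $g(x)=\tfrac1x-\tfrac1{1-x}$ (the exponent $-\tfrac1x+\tfrac1{x-1}$ displayed in the paper's formula for $S_\infty$ is a sign slip), $g(x)$ can be as large as $2^{\Theta(b')}$ when $x$ is a short dyadic near $0$, so $2^{g(x)}$ has exponentially many integer bits and cannot be formed to additive accuracy in polynomial time; one must detect this regime and output $0$ directly (equivalently, keep all intermediates polynomially bounded), which your thresholds do explicitly and which the paper's appeal to black-box evaluation of $2^{\gamma}$ leaves implicit. In short, the paper's route is shorter because it outsources the numerics; yours is self-contained and makes the large-exponent case, the only delicate point in the lemma, explicit.
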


\begin{proof}
    The Turing Machine $M_{S_{\infty}}$ outputs the fist $b$ bits of the
  following quantity,
  \[ W(x) = \left[\frac{1}{1 + \left[ 2^{\left[ -\frac{1}{x} + \frac{1}{x-1}\right]_{b'} } \right]_{b'} } \right]_{b'} \]
  where $b'$ will be selected sufficiently large. Notice it is possible to compute
  the above quantity due to the fact that all functions
  $\frac{1}{\gamma} + \frac{1}{\gamma - 1}$, $2^\gamma$ and
  $\frac{1}{1 + \gamma}$ can be computed with accuracy $2^{-b'}$ in polynomial
  time with respect to $b'$ and the binary representation of $\gamma$
  \cite{Brent1976}. Moreover,
  \begin{align*}
    & \abs{ \left[\frac{1}{1 + \left[ 2^{\left[ -\frac{1}{x} + \frac{1}{x-1}\right]_{b'} } \right]_{b'} } \right]_{b'} - \frac{1}{1 + 2^{ -\frac{1}{x} + \frac{1}{x-1} } } } \\
    & ~~~~ \le \abs{ \left[\frac{1}{1 + \left[ 2^{\left[ -\frac{1}{x} + \frac{1}{x-1}\right]_{b'} } \right]_{b'} } \right]_{b'} - \frac{1}{1 + \left[ 2^{\left[ -\frac{1}{x} + \frac{1}{x-1}\right]_{b'} } \right]_{b'} } } \\
    & ~~~~~~~~~~~~~~~ +  \abs{ \frac{1}{1 + \left[ 2^{\left[ -\frac{1}{x} + \frac{1}{x-1}\right]_{b'} } \right]_{b'} } - \frac{1}{1 +  2^{\left[ -\frac{1}{x} + \frac{1}{x-1}\right]_{b'} } } } \\
    & ~~~~~~~~~~~~~~~ + \abs{ \frac{1}{1 +  2^{\left[ -\frac{1}{x} + \frac{1}{x-1}\right]_{b'} } } - \frac{1}{1 + 2^{ -\frac{1}{x} + \frac{1}{x-1} } } } \\
    & ~~~~ \le 2^{-b'} + \abs{ \left[ 2^{\left[ -\frac{1}{x} + \frac{1}{x-1}\right]_{b'} } \right]_{b'} - 2^{\left[ -\frac{1}{x} + \frac{1}{x-1}\right]_{b'} }} \\
    & ~~~~~~~~~~~~~~~ + \ln 2 \abs{\left[ -\frac{1}{x} + \frac{1}{x-1}\right]_{b'} -\left( -\frac{1}{x} + \frac{1}{x-1}\right)} \\
    & ~~~~ \le  4 \cdot  2^{-b'}
  \end{align*}
  where the first inequality follows from triangle inequality and the second
  follows from the facts that $1/(1 + \gamma)$ is a $1$-Lipschitz function of
  $\gamma$ for $\gamma \ge 0$, and $1/(1 + 2^{\gamma})$ is an $\ln(2)$-Lipschitz
  function of $\gamma$ for $\gamma \ge 0$. The last inequality follows from the
  definition of $\b{\cdot}_{b'}$. Hence $W(x)$ is indeed equal to
  $\b{S_{\infty}(x)}_b$ if we choose $b' = b + 2$.
  \smallskip

    Next we explain how $M_{S'_{\infty}}$ computes
  $\left[ S'_{\infty}(x)\right]_b$. First notice that $S'_{\infty}(x)$ is equal
  to
  \[ S'_{\infty}(x) = \ln2 \cdot \frac{\frac{1}{x^2} 2^{-\frac{1}{x} + \frac{1}{x-1}} - \frac{1}{(x-1)^2} 2^{-\frac{1}{x} + \frac{1}{x-1}}}{\left(2^{-\frac{1}{x}} + 2^{\frac{1}{x-1}}\right)^2}.\]

  \noindent To describe how to compute $S'_{\infty}(x)$ we first assume that we
  have computed the following quantities. Then based on these quantities we show
  how $S'_{\infty}(x)$ can be computed and finally we consider the computation
  of these quantities.
  \begin{itemize}
    \item[$\triangleright$] $\left [\ln 2 \right]_{b'}$,
    \item[$\triangleright$] $A \leftarrow \left[\frac{1}{x^2} 2^{-\frac{1}{x} + \frac{1}{x-1}} \right]_{b'}$,
    \item[$\triangleright$] $B \leftarrow \left[\frac{1}{(x-1)^2} 2^{-\frac{1}{x} + \frac{1}{x-1}} \right]_{b'}$,
    \item[$\triangleright$] $C \leftarrow \left[ \left(2^{-\frac{1}{x}} + 2^{\frac{1}{x-1}}\right)^2 \right]_{b'}$.
  \end{itemize}
  \noindent Then $M_{S'_{\infty}}$ outputs the fist $b$ bits of the quantity
  $\left[\left[\ln2 \right]_{b'} \cdot\left[ \frac{A+B}{C}\right]_{b'} \right]_{b'}$.
  We now prove that
  \[ \abs{[\ln2]_{b'} \left[\frac{A+B}{C}\right]_{b'} - \underbrace{\ln2 \frac{A+B}{C}}_{S'_{\infty}(x)}} \leq \Theta \left( 2^{-b'}\right) \]
  \noindent Consider the function
  $g(\alpha,\beta,\gamma) = \frac{\alpha + \beta}{\gamma}$
  where $\abs{\alpha},\abs{\beta} \leq c_1$ and $\abs{\gamma} \geq c_2$ where
  $c_1, c_2$ are universal constants. Notice that $g(\alpha,\beta,\gamma)$ is
  $c$-Lipschitz for $c = \sqrt{\frac{2}{c_2^2} + \frac{2c_1}{c_2^2}}$. Since for
  sufficiently large $b'$ all the quantities
  $\abs{A}, \abs{B}, \abs{\frac{1}{x^2} 2^{-\frac{1}{x} + \frac{1}{x-1}}}, \abs{\frac{1}{(x-1)^2} 2^{-\frac{1}{x} + \frac{1}{x-1}}} \leq c_1$
  and $\abs{C}, \left(2^{-\frac{1}{x}} + 2^{\frac{1}{x-1}}\right)^2 \geq c_2$
  where $c_1, c_2$ are universal constants we get that
  \[ \abs{\left[\frac{A+B}{C} \right]_{b'} - \frac{A+B}{C}} \leq \Theta \left(2^{-b'}\right). \]
  \noindent Now consider the function $g(\alpha,\beta) = \alpha \cdot \beta$
  where $\abs{\alpha},\abs{\beta} \leq c$ where $c$ is a universal constant. In
  this case $g(\alpha,\beta)$ is $\sqrt{2}c$-Lipschitz continuous. Since for
  $b'$ sufficiently large all the quantities
  $\abs {[\ln2]_{b'}}, \abs{\left[\frac{A + B}{C}\right]_{b'}}, \ln2, \abs{\frac{A + B}{C}}$
  are bounded by a universal constant $c$, we have that,
  \[ \abs{[\ln2]_{b'} \left[\frac{A + B}{C}\right]_{b'} - \ln2 \frac{A + B}{C}} \leq \Theta \left( 2^{-b'}\right) \]

  \noindent Next we explain how the values $A,B$ and $C$ are computed while
  $\b{\ln(2)}_b'$ can easily be computed via standard techniques
  \cite{Brent1976}.
  \begin{itemize}
    \item[$\blacktriangleright$] \textbf{Computation of $\boldsymbol{A}$.} The
      Turing Machine $M_{S'_{\infty}}$ will compute $A$ by taking the first $b'$
      bits of the following quantity,
      \[ \left[2^{\left[ -\frac{1}{x} + \frac{1}{x-1} + 2\ln x / \ln 2 \right]_{b''}}\right]_{b''} \]
      where $b''$ will be taken sufficiently large. We remark that both where
      both the exponentiation and the natural logarithm can be computed in
      polynomial-time with respect to the number of accuracy bits and the binary
      representation of the input \cite{Brent1976}. The function
      $\frac{1}{x^2}2^{-\frac{1}{x} + \frac{1}{x-1}} = 2^{-\frac{1}{x} + \frac{1}{x-1} + 2\ln x / \ln 2}$
      is $c$-Lipschitz where $c$ is a universal constant. Thus,
      \[ \abs{\left[2^{\left[ -\frac{1}{x} + \frac{1}{x-1} + 2\ln x / \ln 2 \right]_{b''}}\right]_{b''} - \frac{1}{x^2} 2^{-\frac{1}{x} + \frac{1}{x-1}} } \leq \Theta(2^{-b''}). \]
    \item[$\blacktriangleright$] \textbf{Computation of $\boldsymbol{B}$.} Using
      the same arguments as for $A$.
    \item[$\blacktriangleright$] \textbf{Computation of $\boldsymbol{C}$.} To
      compute $C$ we first compute $b''$ bits of the following quantity,
      \[ \left[\frac{1}{\left[ 2^{-\left[ \frac{1}{x} \right]_{b''}} \right]_{b''} + \left[ 2^{\left[ \frac{1}{x-1}   \right]_{b''}} \right]_{b''} } \right]_{b''}^2 \]
      \noindent We first argue that
      \begin{eqnarray*}
        \abs{\left[\frac{1}{\left[ 2^{-\left[ \frac{1}{x} \right]_{b''}} \right]_{b''} + \left[ 2^{\left[ \frac{1}{x-1}   \right]_{b''}}\right]_{b''}} \right]_{b''}^2 - \left(\frac{1}{2^{-\frac{1}{x}} + 2^{ \frac{1}{x-1}}} \right)^2} &\le & \Theta\left( 2^{-b''} \right)
      \end{eqnarray*}
      The latter follows by applying the triangle inequality and the following
      $3$ inequalities.
      \begin{enumerate}
        \item
          \begin{eqnarray*}
            \abs{\left[\frac{1}{\left[ 2^{-\left[ \frac{1}{x} \right]_{b''}} \right]_{b''} + \left[ 2^{\left[ \frac{1}{x-1} \right]_{b''}} \right]_{b''}} \right]_{b''}^2 - \left(\frac{1}{\left[2^{-\left[ \frac{1}{x} \right]_{b''}}\right]_{b''} + \left[2^{\left[ \frac{1}{x - 1} \right]_{b''}} \right]_{b''} } \right)^2} & \le \Theta(2^{-b''})\\
          \end{eqnarray*}
          this holds since for $b'' > 1$ we have
          \[ \left[\frac{1}{\left(\left[ 2^{-\left[ \frac{1}{x} \right]_{b''}} \right]_{b''} + \left[ 2^{\left[ \frac{1}{x-1}   \right]_{b''}} \right]_{b''}\right)}\right]_{b''} \quad \text{ and } \quad
          \frac{1}{\left(\left[ 2^{-\left[ \frac{1}{x} \right]_{b''}} \right]_{b''} + \left[ 2^{\left[ \frac{1}{x-1}   \right]_{b''}} \right]_{b''}\right)} \]
          are both upper-bounded by $2$ while the function
          $g(\alpha) = \alpha^2$ is $4$-Lipschitz for $\abs{\alpha} \le 2$.
        \item
          \begin{eqnarray*}
            \abs{ \left(\frac{1}{\left[ 2^{-\left[ \frac{1}{x} \right]_{b''}} \right]_{b''} + \left[ 2^{\left[ \frac{1}{x-1}   \right]_{b''}} \right]_{b''}} \right)^2 - \left(\frac{1}{2^{-\left[ \frac{1}{x} \right]_{b''}} + 2^{\left[ \frac{1}{x - 1} \right]_{b''}}} \right)^2} & \le \Theta \left( 2^{-b''} \right)
          \end{eqnarray*}
          \noindent The latter follows since for $b''$ larger than a universal
          constant, both
          $\left[2 ^{-\left[ \frac{1}{x} \right]_{b''}} \right]_{b''} + \left[ 2^{\left[ \frac{1}{x-1}   \right]_{b''}} \right]_{b''}$
          and
          $2^{-\left[ \frac{1}{x}\right]_{b''}} + 2^{\left[ \frac{1}{x-1} \right]_{b''}}$
          are greater than a universal constant $c$, while the function
          $g(\alpha,\beta) = 1/(\alpha + \beta)^2$ is
          $\Theta\left(c^3\right)$-Lipschitz for $\alpha + \beta \geq c$.
        \item
          \begin{eqnarray*}
            \abs{ \left(\frac{1}{2^{-\left[ \frac{1}{x} \right]_{b''}} + 2^{\left[ \frac{1}{x-1}   \right]_{b''}}} \right)^2 - \left(\frac{1}{ 2^{-\frac{1}{x}} + 2^{\frac{1}{x-1}} } \right)^2} \le \Theta \left( 2^{-b''} \right)
          \end{eqnarray*}
          \noindent The latter follows since for $b''$ larger than a universal
          constant it holds that both the quantities in the left hand side are
          greater than a positive universal constant $c$, while the function
          $g(\alpha, \beta) = 1/(2^{-\alpha} + 2^{\beta})$ for
          $2^{-\alpha} + 2^{\beta} \geq c$, $\alpha \geq 0$, and $\beta \leq 0$
          is $\Theta \left(1/c^3 \right)$-Lipschitz.
      \end{enumerate}
  \end{itemize}
  This concludes the proof of the lemma.
\end{proof}

\begin{lemma}\label{l:computing_Q}
    There exist Turing Machines $M_Q$ and $M_{Q'}$ that given
  $\vecx \in [0, 1]^d$ and $\eps > 0$ in binary form, respectively compute
  $\b{Q_{\vecv}^{\vecc}(\vecx)}_b$ and $\b{\nabla Q_{\vecv}^{\vecc}(\vecx)}_b$
  for all vertices $\vecv \in \p{\nm{N}}^d$ with $Q_{\vecv}^{\vecc}(\vecx) > 0$,
  where $b = \log(1/\eps)$. These vertices are most $d + 1$. Moreover both $M_Q$
  and $M_{Q'}$ run in polynomial time with respect to $b$, $d$ and the binary
  representation of $\vecx$.
\end{lemma}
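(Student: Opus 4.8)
The plan is to build $M_Q$ and $M_{Q'}$ by composing the single-variable machines $M_{S_\infty}$ and $M_{S'_\infty}$ of Lemma~\ref{l:computing_S_infty} with the explicit combinatorial description of $Q_{\vecv}^{\vecc}$ from Definitions~\ref{d:orientation}, \ref{d:canonical_representation} and~\ref{d:cof}. First I would have the machine determine, from $\vecx$, the cubelet $R(\vecx)$: set $c_i = \lfloor x_i (N-1) \rfloor$ for each $i$, breaking ties lexicographically exactly as in the definition of $R(\vecx)$; this is $\poly(d, b)$ integer arithmetic on numbers with $\poly(b)$ bits. From $\vecc$ one reads off the source $\vecs^{\vecc}$ and target $\vect^{\vecc}$ by a parity test per coordinate, and then the canonical representation $\vecp_{\vecx}^{\vecc}$, where $p_j = (x_j - s_j)/(t_j - s_j) = \pm (N-1)(x_j - s_j) \in [0,1]$ is obtained exactly from the bits of $x_j$. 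Next, following the sorting procedure in the proof of Lemma~\ref{l:well_defined}, the machine sorts $p_1, \dots, p_d$ and produces the $d+1$ candidate vertices $\vecv^{(0)}, \dots, \vecv^{(d)} \in R_c(\vecc)$; by that lemma every $\vecv$ with $Q_{\vecv}^{\vecc}(\vecx) \ne 0$ is among them, so it suffices to evaluate $Q_{\vecv}^{\vecc}$ and $\nabla Q_{\vecv}^{\vecc}$ at these $O(d)$ vertices.

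For a fixed candidate $\vecv$, $M_Q$ computes the partition $A_{\vecv}^{\vecc}, B_{\vecv}^{\vecc}$, checks whether some $j \in A_{\vecv}^{\vecc}$, $\ell \in B_{\vecv}^{\vecc}$ satisfy $p_j \ge p_\ell$ (in which case it outputs $0$), and otherwise evaluates the appropriate one of the three product formulas of Definition~\ref{d:cof}. Each factor is of the form $S_\infty(S(p_\ell) - S(p_j))$, $S_\infty(1 - S(p_\ell))$, or $S_\infty(S(p_j))$, where $S$ is the fixed degree-$5$ polynomial of Definition~\ref{def:smoothStep}; hence $S(p)$ is computable exactly up to the precision of $p$, the argument of $S_\infty$ lies in $[0,1]$ in the non-vanishing case, and $M_{S_\infty}$ returns each factor to any requested accuracy $2^{-b'}$ in time $\poly(b', d)$. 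Since the product has at most $\abs{A_{\vecv}^{\vecc}}\cdot\abs{B_{\vecv}^{\vecc}} \le d^2$ factors, all in $[0,1]$, computing each factor to accuracy $2^{-b'}$ with $b' = b + \lceil 2\log d\rceil + O(1)$ and multiplying (carrying $\poly(b')$ bits throughout) yields $\bracks{Q_{\vecv}^{\vecc}(\vecx)}_b$; the total running time is $\poly(b, d)$ and the binary length of $\vecx$.

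For $M_{Q'}$ I would differentiate the product formula by the product rule. Because each $p_k$ depends only on $x_k$, with $\partial p_k/\partial x_k = 1/(t_k - s_k) = \pm(N-1)$, the partial derivative $\partial Q_{\vecv}^{\vecc}/\partial x_i$ is a sum of at most $d$ terms, each obtained from the full product by replacing one factor $S_\infty(S(p_i)-S(p_j))$ by $S'_\infty(S(p_i)-S(p_j)) \cdot S'(p_i)\cdot \bigl(\pm(N-1)\bigr)$ (and analogously when $i$ lies in the other index set, or for the boundary-type factors). Here $S'$ is again a fixed polynomial and $S'_\infty$ is supplied by $M_{S'_\infty}$; each of the $O(d)$ terms is a product of at most $d^2$ bounded factors, and $\abs{S'_\infty} \le 16$ keeps the whole derivative bounded. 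The same precision bookkeeping as above --- computing every $S_\infty$ and $S'_\infty$ value to $b' = b + O(\log d)$ bits and summing $O(d)$ products of $O(d^2)$ factors --- gives $\bracks{\nabla Q_{\vecv}^{\vecc}(\vecx)}_b$ in $\poly(b,d)$ time.

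I expect the only real work to be the error analysis: verifying that perturbing each of the (at most $d^2$) factors of a product of $[0,1]$-valued numbers by $2^{-b'}$ perturbs the product by at most $d^2\, 2^{-b'}$, that the analogous estimate holds for the $O(d)$-term sums defining the gradient, and that the intermediate quantities fed into $M_{S_\infty}$ and $M_{S'_\infty}$ stay in their domains; none of this is deep, but it is where care is needed. I should also note the mild point that $S(p_\ell)-S(p_j)$ may come out slightly negative under rounding, which is harmless since $S_\infty$ is identically $0$ on $(-\infty,0]$, so one simply clamps the argument to $[0,1]$ (equivalently, extends $M_{S_\infty}$ to all of $[-1,1]$ in the obvious way).
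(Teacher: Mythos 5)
Your proposal is correct and follows essentially the same route as the paper's proof: locate the cubelet and canonical representation, identify the at most $d+1$ candidate vertices by sorting the $p_j$'s, evaluate each factor via $M_{S_\infty}$ and $M_{S'_\infty}$ at accuracy $b' = b + O(\log d)$ extra bits, and propagate errors through the products and the product-rule sums. The only quibble is that the $\pm(N-1)$ factor in $\partial p_i/\partial x_i$ amplifies the gradient error by $N$, so strictly one needs an extra $\Theta(\log N)$ bits of working precision (as the paper takes), though since $N = \poly(d)$ here this is absorbed into your $O(\log d)$ up to constants and does not affect the polynomial-time conclusion.
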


\begin{proof}
    Both $M_Q$, $M_{Q'}$ firsts compute the canonical representation
  $p_{\vecx}^{\vecc} \in [0,1]^d$ with the respect to the cell $R(\vecx)$ in
  which $\vecx$ lies. Such a cell $R(\vecx)$ can be computed by taking the first
  $(\log N + 1)$-bits at each coordinate of $\vecx$. The source vertex
  $\vecs^{\vecc} = (s_1, \ldots, s_d)$ and the target vertex
  $\vect^{\vecc} = (t_1, \ldots, t_d)$ with respect to $R(\vecx)$ are also
  computed. Once this is done we are only interested in vertices
  $\vecv \in R_{\vecc}(\vecx)$ for which
  \[ p_{\ell} > p_j~~~~\text{ for all } \ell \in A_{\vecv}^{\vecc}, j \in B_{\vecv}^{\vecc}\]
  \noindent since for all the other $\vecv \in \p{\nm{N}}^d$ both
  $Q_{\vecv}^{\vecc} (\vecx) = 0$ and $\nabla Q_{\vecv}^{\vecc} (\vecx) = 0$.
  These vertices, that are denoted by $R_{+}(\vecx)$, are at most $d + 1$ and
  can be computed in polynomial time.

  The vertices $\vecv \in R_+(\vecx)$ can be computed in polynomial time as
  follows: \textbf{(i)} the coordinates $p_1,\ldots,p_d$ are sorted in
  increasing order \textbf{ii)} for each $m=0, \ldots, d$ compute the vertex
  $\vecv^m \in R_{\vecc}(\vecx)$,
  \[ \vecv_{j}^m = \left\{
     \begin{array}{ll}
        s_j & \text{if coordinate $j$ belongs in the first } m \text{ coordinates wrt  the order of } \vecp_{\vecx}^{\vecc}\\
        t_j & \text{if coordinate $j$ belongs in the last } d-m \text{ coordinates wrt  the order of } \vecp_{\vecx}^{\vecc}\\
     \end{array} \right. \]

  \noindent By Definition~\ref{d:cof} it immediately follows that
  $R_{+}(\vecx) \subseteq \bigcup_{m =0}^{d}\{\vecv^m\}$ which also establish
  that $\abs{R_{+}(\vecx)} \leq d+1$.

    Once $R_{+}(\vecx)$ is computed, $M_Q$ computes for each pair
  $(\ell,j) \in B_{\vecv}^{\vecc} \times A_{\vecv}^{\vecc}$ the value of the
  number $\left[S_{\infty}(S(p_\ell) - S(p_j))\right]_{b'}$ for some accuracy
  $b'$ that we determine later but depends polynomially on $b$, $d$ and the input
  accuracy of $\vecx$. Then each $\vecv \in R_{+}(\vecx)$, $M_Q$ outputs as
  $\left[Q_{\vecv}^{\vecc}(\vecx)\right]_{b}$ the fist $b$ bits of the following
  quantity
  \[ \left[ \prod_{\ell \in B_{\vecv}^{\vecc},j \in A_{\vecv}^{\vecc}} \left[S_{\infty}(S(p_\ell) - S(p_j))\right]_{b'}\right]_{b'}\]
  where $b'$ is selected sufficiently large. We next prove that this computation
  indeed outputs $\left[Q_{\vecv}^{\vecc}(\vecx)\right]_{b}$ accurately.
  \smallskip

  \noindent To simplify notation let $S_{\infty}(S(p_\ell) - S(p_j))$ be denoted
  by $S_{\ell j}$, $A_{\vecv}^{\vecc}$ denoted by $A$ and $B_{\vecv}^{\vecc}$
  denoted by $B$. Then,
  \begin{eqnarray*}
    \abs{\left[ \Pi_{\ell \in B,j \in A}\left[S_{\ell j}\right]_{b'}\right]_{b'} - \Pi_{\ell \in B,j \in A}S_{\ell j}}
    & \le & \abs{\left[ \Pi_{\ell \in B,j \in A}\left[S_{\ell j}\right]_{b'}\right]_{b'} - \Pi_{\ell \in B,j \in A}\left[S_{\ell j}\right]_{b'}} \\
    &  +  & \abs{\Pi_{\ell \in B,j \in A}\left[S_{\ell j}\right]_{b'} - \Pi_{\ell \in B,j \in A}S_{\ell j}} \\
    & \le & 2^{-b'} + \abs{\Pi_{\ell \in B,j \in A}\left[S_{\ell j}\right]_{b'} - \Pi_{\ell \in B,j \in A}S_{\ell j}}
  \end{eqnarray*}
  \noindent Consider the function
  $g(\vecy) = \prod_{\ell \in B,j \in A} y_{\ell j}$. For $\vecy \in [0,1 + 1/d^2]^{|A| \times |B|}$,
  $\norm{\nabla g (\vecy)}_2 \leq \Theta(d)$.
  As a result, for all $\vecy, \vecz \in [0,1 + 1/d^2]^{|A| \times |B|}$,
  \[ \abs{g(\vecy) - g(\vecz)} \leq \Theta(d) \cdot \left[\sum_{\ell \in B,j \in A}(y_{\ell j} - z_{\ell j})\right]^{1/2} \]
  \noindent In case the accuracy $b' \geq \Theta(\log d)$ then
  $\left[S_{\ell j}\right]_{b'} \leq S_{\ell j} + 1/d^2 \leq 1 + 1/d^2$ and the
  above inequality applies. Thus,
  \begin{eqnarray*}
    \abs{\prod_{\ell \in B,j \in A}\left[S_{\ell j}\right]_{B'} - \Pi_{\ell \in B,j \in A}S_{\ell j}}
    & \le & \Theta(d)  \left[\sum_{\ell \in B,j \in A}\left(\left[S_{\ell j}\right]_{B'} - S_{\ell j}\right)\right]^{1/2} \\
    & \le & \Theta(d^2) \cdot 2^{-b'}
  \end{eqnarray*}
  \noindent Overall,
  $\abs{\left[ \Pi_{\ell \in B,j \in A}\left[S_{\ell j}\right]_{b'}\right]_{b'} - \Pi_{\ell \in B,j \in A}S_{\ell j}} \leq \Theta(d^2) \cdot 2^{-b'}$
  which concludes the proofof the corrected of
  $\left[Q_{\vecv}^{\vecc}(\vecx)\right]_{b}$ by selecting
  $b' = b + \Theta(\log d)$. \\
  \medskip

    In order to compute
  $\frac{\partial Q_{\vecv}^{\vecc}(\vecx)}{\partial x_\ell}$ where
  $\ell \in B_{\vecv}^{\vecc}$ (symmetrically for $j\in A_{\vecv}^{\vecc}$),
  $M_{Q'}$ additionally computes the
  $\left [S'_{\infty}(S(p_\ell) - S(p_j))\right]_{b'}$ with accuracy $b'$. To
  simplify notation we denote with $S'_{\infty}(S(p_\ell) - S(p_j))$ with
  $S'_{\ell j}$ and $S'(p_i)$ by $S'_i$. Then $M_{Q'}$ outputs,
  \[ \left[\frac{\partial Q_{\vecv}^{\vecc}(\vecx)}{\partial x_i}\right]_{b'} \leftarrow \left[\frac{1}{t_i - s_i} \cdot \left[\frac{\partial Q_{\vecv}^{\vecc}(\vecx)}{\partial p_i}\right]_{b'} \right]_{b'} \]
  \[ \text{where }~~~\left[\frac{\partial Q_{\vecv}^{\vecc}(\vecx)}{\partial p_i}\right]_{b'}\leftarrow\left[ \sum_{j \in A} \left[S'_{i j}\right]_{b'} \cdot \left[ S'_i \right]_{b'} \Pi_{m\in A/{j},\ell \in B}\left[S_{\ell m}\right]_{b'} \right]_{b'} \]
  \noindent Observe that $t_i - s_i = \frac{\mathrm{sign}(t_i- s_i)}{N - 1}$ and
  thus
  $\frac{1}{t_i - s_i} \cdot \left[\frac{\partial Q_{\vecv}^{\vecc}(\vecx)}{\partial p_i}\right]_{b'}$
  can be exactly computed. We next prove that these computations of
  $\left[\frac{\partial Q_{\vecv}^{\vecc}(\vecx)}{\partial x_i}\right]_{b'}$ and
  $\left[\frac{\partial Q_{\vecv}^{\vecc}(\vecx)}{\partial p_i}\right]_{b'}$ are
  correct.

    We first bound
  $\abs{\left [S'_{i j}\right]_{b'} \cdot \left[ S'_i \right]_{b'} \cdot \Pi_{m\in A/\{j\},\ell \in B}\left[S_{\ell m}\right]_{b'} - S'_{i j} \cdot S'_i \cdot \Pi_{m\in A/\{j\},\ell \in B}S_{\ell m}}$.
  \medskip

  \noindent Consider the function
  $g(y_1,y_2,\vecy)= y_1 \cdot y_2 \cdot \prod_{m \in A/\{j\},\ell \in B}y_{\ell m}$.
  As previously done, for $y_1, y_2 \in [0, 6]$ and
  $\vecy \in [0, 1+1/d^2]^{|A| \times |B| - 1}$ we have that,
  $\norm{\nabla g(y_1, y_2, \vecy)}_2 \leq \Theta(d)$.
  If $b' \leq \Theta(\log d)$ then $\abs{S'_{ij}},S'_i \leq 6$ and
  $S_{\ell m} \in [0,1 + 1/d^2]$. As a result,
  \begin{eqnarray*}
    \abs{\left [S'_{i j}\right]_{b'} \cdot \left[ S'_i \right]_{b'} \cdot \Pi_{m\in A/\{j\},\ell \in B}\left[S_{\ell m}\right]_{b'} - S'_{i j} \cdot S'_i \cdot \Pi_{m\in A/\{j\},\ell \in B}S_{\ell m} }
    & \le & \Theta(d^2) \cdot  2^{-b'}.
  \end{eqnarray*}

  \noindent We can now use the above inequality to bound
  $\abs{\left[\frac{\partial Q_{\vecv}^{\vecc}(\vecx)}{\partial p_i}\right]_{b'} - \frac{\partial Q_{\vecv}^{\vecc}(\vecx)}{\partial p_i}}$.
  More precisely,
  \begin{align*}
    & \abs{\left[\frac{\partial Q_{\vecv}^{\vecc}(\vecx)}{\partial p_i}\right]_{b'} - \frac{\partial Q_{\vecv}^{\vecc}(\vecx)}{\partial p_i}} \\
    & ~~~~~~ \le 2^{-b} + \abs{\sum_{j \in A} \left[S'_{i j}\right]_{b'} \cdot \left[ S'_i \right]_{b'} \cdot \prod_{m\in A/\{j\},\ell \in B}\left[S_{\ell m}\right]_{b'} - \sum_{j \in A} S'_{i j} \cdot S'_i \cdot \prod_{m\in A/\{j\},\ell \in B}S_{\ell m}} \\
    & ~~~~~~ \le \Theta(d^3) \cdot 2^{-b'}
  \end{align*}
  We finally get that
  \begin{align*}
    \abs{\left[\frac{\partial Q_{\vecv}^{\vecc}(\vecx)}{\partial x_i}\right]_{b'} - \frac{\partial Q_{\vecv}^{\vecc}(\vecx)}{\partial x_i}} \leq 2^{-b'} + N\abs{\left[\frac{\partial Q_{\vecv}^{\vecc}(\vecx)}{\partial p_i}\right]_{b'} - \frac{\partial Q_{\vecv}^{\vecc}(\vecx)}{\partial p_i}}
    & \le\Theta(N d^3) \cdot 2^{-b'}.
  \end{align*}
  \noindent Thus the analysis is completed by selecting
  $b' = b + \Theta(\log d)$ + $\Theta(\log N)$.
\end{proof}

\begin{lemma}\label{l:computing_P}
    There exist Turing Machines $M_P$ and $M_{P'}$ that given
  $\vecx \in [0,1]^d$ and $\eps > 0$ in binary form compute
  $\b{\mathrm{P}_{\vecv}(\vecx)}_b$ and $\b{\nabla \mathrm{P}_{\vecv}(\vecx)}_b$
  respectively for all vertices $\vecv \in \p{\nm{N}}^d$ with
  $\mathrm{P}_{\vecv}(\vecx) > 0$, where $b = \log(1/\eps)$. These vertices are
  most $d + 1$. Moreover both $M_P$ and $M_{P'}$ run in polynomial time with
  respect to $b$, $d$ and the binary representation of $\vecx$.
\end{lemma}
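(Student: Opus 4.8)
The goal is to construct polynomial-time Turing machines $M_P$ and $M_{P'}$ that evaluate $\mathrm{P}_{\vecv}(\vecx)$ and $\nabla \mathrm{P}_{\vecv}(\vecx)$ to $b$ bits of accuracy, for the (at most $d+1$) vertices $\vecv$ in $R_+(\vecx)$. Recall from Definition~\ref{d:_coeff} that on the cubelet $L(\vecc)$ containing $\vecx$,
\[ \mathrm{P}_{\vecv}(\vecx) = \frac{Q_{\vecv}^{\vecc}(\vecx)}{\sum_{\vecv' \in R_c(\vecx)} Q_{\vecv'}^{\vecc}(\vecx)}, \]
and $\mathrm{P}_{\vecv}(\vecx) = 0$ for $\vecv \notin R_c(\vecx)$. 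So the construction is: first run $M_Q$ of Lemma~\ref{l:computing_Q} to obtain $\b{Q_{\vecv}^{\vecc}(\vecx)}_{b'}$ for all $\vecv \in R_+(\vecx)$ at some internal accuracy $b'$ to be fixed later; these are the only vertices with $Q_{\vecv}^{\vecc}(\vecx) > 0$ (Lemma~\ref{l:well_defined}), so the denominator is exactly $\sum_{\vecv \in R_+(\vecx)} Q_{\vecv}^{\vecc}(\vecx)$, a sum of at most $d+1$ computed quantities. Then $M_P$ outputs the first $b$ bits of $\b{\b{Q_{\vecv}^{\vecc}(\vecx)}_{b'} / \sum_{\vecv' \in R_+(\vecx)} \b{Q_{\vecv'}^{\vecc}(\vecx)}_{b'}}_{b'}$.

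\textbf{Accuracy analysis for $M_P$.} The key point, exactly as in Lemma~\ref{l:computing_Q}, is that the denominator $D \triangleq \sum_{\vecv' \in R_+(\vecx)} Q_{\vecv'}^{\vecc}(\vecx)$ is bounded below by a positive constant (independent of $\vecx$): the argument already used inside the proof of Lemma~\ref{l:bounding_one_gradient} shows there is always a vertex $\vecv^\ast \in R_c(\vecx)$ with $Q_{\vecv^\ast}^{\vecc}(\vecx) \ge c^{d^2} 2^{-(d+1)^2 d^2}$, which is $\ge 2^{-\poly(d)}$. Hence the function $g(\alpha, \beta) = \alpha/\beta$ restricted to $\alpha \in [0, 1+1/d^2]$, $\beta \ge 2^{-\poly(d)}$ is $2^{\poly(d)}$-Lipschitz, and by the triangle-inequality bookkeeping (perturbing numerator, denominator, and the outer rounding, as in Lemma~\ref{l:computing_Q}) one gets
\[ \abs{\b{\b{Q_{\vecv}^{\vecc}(\vecx)}_{b'} / \textstyle\sum \b{Q_{\vecv'}^{\vecc}(\vecx)}_{b'}}_{b'} - \mathrm{P}_{\vecv}(\vecx)} \le 2^{\poly(d)} \cdot 2^{-b'}, \]
so choosing $b' = b + \poly(d)$ suffices. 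For $\vecv \notin R_+(\vecx)$ the machine simply outputs $0$, which is exact.

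\textbf{The machine $M_{P'}$.} For the gradient, differentiating the quotient gives, for $\vecv \in R_+(\vecx)$,
\[ \frac{\partial \mathrm{P}_{\vecv}(\vecx)}{\partial x_i} = \frac{\frac{\partial Q_{\vecv}^{\vecc}(\vecx)}{\partial x_i} \cdot D - Q_{\vecv}^{\vecc}(\vecx) \cdot \sum_{\vecv' \in R_+(\vecx)} \frac{\partial Q_{\vecv'}^{\vecc}(\vecx)}{\partial x_i}}{D^2}. \]
$M_{P'}$ calls both $M_Q$ and $M_{Q'}$ (Lemma~\ref{l:computing_Q}) to get $b'$-bit approximations of $Q_{\vecv}^{\vecc}(\vecx)$ and each $\partial Q_{\vecv'}^{\vecc}(\vecx)/\partial x_i$, assembles the numerator and denominator above (finite sums of $\le d+1$ terms each), and divides, outputting the first $b$ bits. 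The error analysis is the same shape: numerator and denominator are bounded in magnitude (using $\abs{\partial Q/\partial x_i} \le \Theta(d^{11}) D$ from Lemma~\ref{l:bounding_one_gradient} together with $D \le d+1$), $D^2 \ge 2^{-\poly(d)}$, and the relevant rational function is $2^{\poly(d)}$-Lipschitz on the region in play; so again $b' = b + \poly(d)$ gives the claimed $b$-bit accuracy. All arithmetic operations used ($+$, $\times$, $/$, and the inner $S_\infty, S_\infty'$ via Lemma~\ref{l:computing_S_infty}) run in time polynomial in $b'$, $d$, and the bit-length of $\vecx$, and $\abs{R_+(\vecx)} \le d+1$ with $R_+(\vecx)$ computable in $\poly(d)$ time (Lemma~\ref{l:well_defined}), so both $M_P$ and $M_{P'}$ run in the required polynomial time.

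\textbf{Main obstacle.} The only real content beyond routine Lipschitz/triangle-inequality bookkeeping is certifying that the denominator $D = \sum_{\vecv' \in R_+(\vecx)} Q_{\vecv'}^{\vecc}(\vecx)$ never gets too small — i.e., the uniform lower bound $D \ge 2^{-\poly(d)}$ — since division by a quantity that could be doubly-exponentially small would force super-polynomially many accuracy bits. Fortunately this is precisely the ``spacing of $0, p_1, \dots, p_d, 1$'' pigeonhole argument already carried out inside the proof of Lemma~\ref{l:bounding_one_gradient}, so it can be invoked rather than reproved, and the rest is mechanical propagation of rounding errors. With these pieces Theorem~\ref{t:Turing_Machine} follows by plugging $M_P$ and $M_{P'}$ into Definition~\ref{d:d-payoff}: $f_{\calC_l}(\vecx,\vecy) = \sum_{j=1}^d (x_j - y_j)\alpha_j(\vecx)$ with $\alpha_j(\vecx) = -\sum_{\vecv \in R_+(\vecx)} \mathrm{P}_{\vecv}(\vecx)\calC_l^j(\vecv)$, each evaluated via $\le d+1$ calls to $\calC_l$ and $M_P$ (resp.\ $M_{P'}$), and a final Lipschitz bound on the (bilinear-in-$\mathrm{P}$) expression converts $b'$-bit coefficient accuracy into $\eps$-accuracy for $f_{\calC_l}$ and $\nabla f_{\calC_l}$ with $b' = \log(1/\eps) + \poly(d) + \poly(\text{bit-length})$.
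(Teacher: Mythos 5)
Your proposal is correct and follows essentially the same route as the paper: run $M_Q$ (and $M_{Q'}$) at an inflated accuracy $b'$, form the quotient (resp.\ the quotient-rule expression), and control the rounding via the pigeonhole lower bound on the denominator $\sum_{\vecv' \in R_+(\vecx)} Q_{\vecv'}^{\vecc}(\vecx) \ge \Theta((1/d)^{d^2})$ together with a Lipschitz bound on the division map, yielding $b' = b + \poly(d)$ (the paper's choice is $b + \Theta(d^2 \log d)$, plus a $\Theta(\log N)$ term for the $1/(t_i - s_i)$ factor in the gradient, which your $\poly(d)$ accounting subsumes since $N = \poly(d)$).
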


\begin{proof}
    $M_{P}$ first runs $M_Q$ of Lemma~\ref{l:computing_Q} to find the
  coefficients $Q_{\vecv}^{\vecc}(\vecx) > 0$. We remind that these vertices are
  denoted with $R_{+}(\vecx)$ and $\abs{R_{+}(\vecx)} \le d + 1$. Then for each
  $\vecv \in R_{+}(\vecx)$, $M_P$ outputs as
  $\left[\mathrm{P}_{\vecv}(\vecx) \right]_{b}$ the fist $b$ bits of the
  quantity,
  \[ \left[ \frac{\left[Q_{\vecv}^{\vecc}(x)\right]_{b'}}{\sum_{\vecv' \in R_{+}(\vecx)}\left[Q_{\vecv'}^{\vecc}(x)\right]_{b'}} \right]_{b'} \]
  where we determine the value of $b'$ later in the proof but it is chosen to be
  polynomial in $b$ and $d$. We next present the proof that the above expression
  correctly computes $\left[\mathrm{P}_{\vecv}(\vecx) \right]_{b}$.
  \smallskip

  \noindent For accuracy $b' \geq \Theta(d^2 \log d)$ we get that,
  \begin{align*}
    \sum_{\vecv' \in R_{+}(\vecx)}\left[Q_{\vecv'}^{\vecc}(x)\right]_{b'}
    & \ge \sum_{\vecv' \in R_{+}(\vecx)}Q_{\vecv'}^{\vecc}(x) - \Theta(d)\cdot 2^{-b'} \\
    & = \sum_{\vecv' \in R_{\vecc}(\vecx)}Q_{\vecv'}^{\vecc}(x) - \Theta(d)\cdot 2^{-b'} \\
    & \ge \Theta \left(1/d)^{d^2} \right) - \Theta(d)\cdot 2^{-b'} \\
    & \ge \Theta \left( (1/d)^{d^2} \right)
  \end{align*}

  \noindent Consider the function $g(\vecy) = y_i/(\sum_{j=1}^{d+1} y_j)$.
  Notice that for $\vecy \in [0,1]^{d+1}$ and $\sum_{j=1}^{d+1} y_j \geq \mu$
  then $\norm{\nabla g(\vecy)}_2 \le \Theta(d^{3/2}/\mu^2)$. The latter implies
  that for $\vecy, \vecz \in [0,1]^{d+1}$ such that
  $\sum_{j=1}^{d+1}y_j \ge \mu$ and that $\sum_{j=1}^{d+1}z_j \ge \mu$, it holds
  that
  \[ \abs{\frac{y_i}{\sum_{j=1}^{d+1} y_j} - \frac{z_i}{\sum_{j=1}^{d+1} z_j}} \le \Theta\p{\frac{d^{3/2}}{\mu^2}} \cdot \norm{\vecy - \vecz}_2. \]
  \noindent Since there are at most $d+1$ vertices $\vecv' \in R_{+}(\vecx)$
  while both the term
  $\sum_{\vecv' \in R_{+}(\vecx)} \left[Q_{\vecv'}^{\vecc}(\vecx)\right]_{b'}$
  and the term $\sum_{\vecv' \in R_{+}(\vecx)}Q_{\vecv'}^{\vecc}(\vecx)$ are
  greater than $\Theta\left((1/d)^{d^2}\right)$, we can apply the above
  inequality with $\mu = \Theta\left( (1/d)^{d^2}\right)$ and we get the
  following
  \begin{align*}
    & \abs{\frac{\left[Q_{\vecv}^{\vecc}(x)\right]_{b'}}{\sum_{\vecv' \in R_{+}(\vecx)}\left[Q_{\vecv'}^{\vecc}(x)\right]_{b'}} - \frac{Q_{\vecv}^{\vecc}(x)}{\sum_{\vecv' \in R_{+}(\vecx)}Q_{\vecv'}^{\vecc}(x)}} \\
    & ~~~~~~~~~~ \le \Theta\left(d^{2d^2 + 3/2}\right) \cdot \left[\sum_{\vecv' \in R_{+}(\vecx)}\left(\left[Q_{\vecv'}^{\vecc}(x)\right]_{b'} -  Q_{\vecv'}^{\vecc}(x)\right)^2\right]^{1/2} \\
    & ~~~~~~~~~~ \le \Theta\left(d^{2d^2 + 2}\right) \cdot 2^{-b'}
  \end{align*}
  \noindent Overall, we have that
  \begin{align*}
    & \abs{\left[\frac{\left[Q_{\vecv}^{\vecc}(x)\right]_{b'}}{\sum_{\vecv' \in R_{+}(\vecx)}\left[Q_{\vecv'}^{\vecc}(x)\right]_{b'}}\right]_{b'} - \frac{Q_{\vecv}^{\vecc}(x)}{\sum_{\vecv' \in R_{\vecc}(\vecx)}Q_{\vecv'}^{\vecc}(x)}} \\
    & ~~~~~~~~~~~~ \le \abs{\left[\frac{\left[Q_{\vecv}^{\vecc}(x)\right]_{b'}}{\sum_{\vecv' \in R_{+}(\vecx)}\left[Q_{\vecv'}^{\vecc}(x)\right]_{b'}}\right]_{b'}- \frac{\left[Q_{\vecv}^{\vecc}(x)\right]_{b'}}{\sum_{\vecv' \in R_{+}(\vecx)}\left[Q_{\vecv'}^{\vecc}(x)\right]_{b'}}} \\
    & ~~~~~~~~~~~~~~~~~ + \abs{\frac{\left[Q_{\vecv}^{\vecc}(x)\right]_{b'}}{\sum_{\vecv' \in R_{+}(\vecx)}\left[Q_{\vecv'}^{\vecc}(x)\right]_{b'}} - \frac{Q_{\vecv}^{\vecc}(x)}{\sum_{\vecv' \in R_{+}(\vecx)}Q_{\vecv'}^{\vecc}(x)}} \\
    & ~~~~~~~~~~~~ \le \Theta\left(  d^{2d^2 + 1}\right) 2^{-b'}
  \end{align*}
  \noindent The proof is completed via selecting $b' = b + \Theta(d^2\log d)$.
  \medskip

  In order to compute $\frac{\partial \mathrm{P}_{\vecv}(\vecx)}{\partial x_i}$
  the Turing machine $M_{P'}$ computes all vertices $R_{+}(\vecx)$ the
  coefficients $\frac{\partial Q_{\vecv}^{\vecc}(\vecx)}{\partial x_i}$ with
  accuracy $b'$. Then for each $\vecv \in R_{+}(\vecx)$ the Turing Machine
  $M_{P'}$ outputs,
  \[  \left[\frac{\partial \mathrm{P}_{\vecv}(\vecx)}{\partial x_i} \right]_{b'} \leftarrow \left[ \frac{1}{t_i -s_i} \cdot \left[\frac{\partial \mathrm{P}_{\vecv}(\vecx)}{\partial p_i} \right]_{b'} \right]_{b'} \]
  \[ \text{where  }~~~ \left[\frac{\partial \mathrm{P}_{\vecv}(\vecx)}{\partial p_i} \right]_{b'} \leftarrow \left[\frac{ \left[\frac{\partial Q_{\vecv}(\vecx)}{\partial p_i}\right]_{b'} \cdot \sum_{\vecv' \in R_{+}(\vecx)} \left[Q_{\vecv'}(\vecx)\right]_{b'} -\left[Q_{\vecv}(\vecx)\right]_{b'} \cdot \sum_{\vecv' \in R_{+}(\vecx)}\left[\frac{\partial Q_{\vecv'}(\vecx)}{\partial p_i}\right]_{b'} }{\left(\sum_{\vecv' \in R_{+}(\vecx)} \left[Q_{\vecv'}(\vecx)\right]_{b'}\right)^2}\right]_{b'} \]
  \noindent Similarly as above and as in Lemma \ref{l:computing_Q} we can prove
  that if $b' \geq b + \Theta(d^2\log d) + \Theta(\log N)$,
  $\abs{\left[\frac{\partial \mathrm{P}_{\vecv}(\vecx)}{\partial p_i} \right]_{b'} - \frac{\partial \mathrm{P}_{\vecv}(\vecx)}{\partial p_i}} \leq 2^{-b}$.
\end{proof}

\begin{proof}[Proof of Theorem~\ref{t:Turing_Machine}]
    Let $R(\vecx)$ be the cell at which $\vecx$ lies. The Turing Machine
  $M_{f_{\calC_l}}$ initially calculates the vertices
  $\vecv \in R_{\vecc}(\vecx)$ with coefficient $\mathrm{P}_{\vecv}(\vecx) > 0$.
  We remind that this set is denoted by $R_{+}(\vecx)$ and
  $\abs{R_{+}(\vecx)} \leq d + 1$. Then $M_{f_{\calC_l}}$ outputs the first
  $b$ bits of the following quantity,
  \[ \left[f_{\calC_l(\vecx,\vecy)}\right]_{b'}=\sum_{j=1}^d \left[\alpha(\vecx,j)\right]_{b'} \cdot(x_j - y_j)~~~~\text{where}~~ \left[\alpha(\vecx,j)\right]_{b'} = \sum_{\vecv' \in R_{+}(\vecx)} \calC_l(\vecv,j) \cdot \left[\mathrm{P}_{\vecv}(\vecx)\right]_{b'} \]
  \noindent we next prove that the above computation is correct.

  \begin{align*}
    \abs{\left[f_{\calC_l(\vecx,\vecy)}\right]_{b'} - f_{\calC_l(\vecx,\vecy)}}
    & =  \abs{\sum_{j=1}^d \left[\alpha(\vecx,j)\right]_{b'} \cdot(x_j - y_j) - \sum_{j=1}^d \alpha(\vecx,j) \cdot(x_j - y_j) } \\
    & \le \sum_{j=1}^d \abs{\left [ \alpha(\vecx,j) \right] - \alpha(\vecx,j)} \\
    & = \sum_{j=1}^d\abs{\sum_{\vecv' \in R_{+}(\vecx)} \calC_l(\vecv,j) \cdot \left[ \mathrm{P}_{\vecv}(\vecx) \right]_{b'} - \sum_{\vecv' \in R_{+}(\vecx)} \calC_l(\vecv,j) \cdot \mathrm{P}_{\vecv}(\vecx)} \\
    & \le \sum_{j=1}^d\sum_{\vecv' \in R_{+}(\vecx)} \abs{\left[ \mathrm{P}_{\vecv}(\vecx)\right]_{b'} - \mathrm{P}_{\vecv}(\vecx)} \\
    & \le d \cdot (d + 1) \cdot 2^{-b'}
  \end{align*}
  Setting $b' = b + \Theta\left(\log d\right)$ we get the desired result.
  Similarly for $\frac{\partial f_{\calC_l(\vecx,\vecy)}}{\partial x_i}$ and
  $\frac{\partial f_{\calC_l(\vecx,\vecy)}}{\partial y_i}$.
\end{proof}

\section{Convergence of PGD to Approximate Local Minimum}
\label{sec:gdStationary}

  In this section we present for completeness the folklore result that the
Projected Gradient Descent with convex projection set converges fast to a first
order stationary point. Using the same ideas that we presented in Section
\ref{sec:existence} this result implies that Projected Gradient Descent solves
the $\lmin$ problem in time $\poly(1/\eps, L, G, d)$ when $(\eps, \delta)$ in
the input are in the local regime. Also observe that although the following
proof assumes access to the exact value of the gradient $\nabla f$ it is very
simple to adapt the proof to the case where we only have access to $\nabla f$
with accuracy $\eps^3$. We leave this as an exercise to the reader.

\begin{theorem} \label{thm:gdStationary}
    Let $f : K \to \R$ be an $L$-smooth function and $K \subseteq \R^d$ be a
  convex set. The projected gradient descent algorithm started at $\vecx_0$, with
  step size $\eta$, after at most
  $T \ge \frac{2 L \p{f(\vecx_0) - f(\vecx^{\star})}}{\eps^2}$ steps outputs a
  point $\hat{\vecx}$ such that
  \[ \norm{\hat{\vecx} - \Pi_K \p{\hat{\vecx} - \eta \nabla f(\hat{\vecx})}}_2 \le \eta \cdot \eps \]
  \noindent where $\eta = 1/L$ and $\vecx^{\star}$ is a global minimum of $f$.
\end{theorem}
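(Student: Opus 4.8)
# Proof Proposal for Theorem \ref{thm:gdStationary}

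The plan is to run the standard ``descent lemma'' argument for projected gradient descent, with the twist that the relevant progress measure is the size of the projected gradient step rather than the gradient norm itself. First I would set up notation: let $\vecx_{t+1} = \Pi_K(\vecx_t - \eta \nabla f(\vecx_t))$ be the iterates, and define the \emph{gradient mapping} $G_\eta(\vecx) = \frac{1}{\eta}(\vecx - \Pi_K(\vecx - \eta \nabla f(\vecx)))$, so that $\vecx_{t+1} = \vecx_t - \eta G_\eta(\vecx_t)$. The quantity we want to make small is $\norm{\vecx_t - \Pi_K(\vecx_t - \eta\nabla f(\vecx_t))}_2 = \eta\norm{G_\eta(\vecx_t)}_2$, so it suffices to find some $t < T$ with $\norm{G_\eta(\vecx_t)}_2 \le \eps$.

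The key step is the per-iteration decrease bound: I would show that $f(\vecx_{t+1}) \le f(\vecx_t) - \frac{\eta}{2}\norm{G_\eta(\vecx_t)}_2^2$ when $\eta = 1/L$. This follows from two ingredients. The $L$-smoothness of $f$ gives $f(\vecx_{t+1}) \le f(\vecx_t) + \langle \nabla f(\vecx_t), \vecx_{t+1} - \vecx_t\rangle + \frac{L}{2}\norm{\vecx_{t+1} - \vecx_t}_2^2$. Then I would invoke the variational characterization of the projection (Theorem 1.5.5(b) of \cite{facchinei2007finite}, already used elsewhere in the paper): since $\vecx_{t+1} = \Pi_K(\vecx_t - \eta\nabla f(\vecx_t))$ and $\vecx_t \in K$, we get $\langle \vecx_t - \eta\nabla f(\vecx_t) - \vecx_{t+1}, \vecx_t - \vecx_{t+1}\rangle \le 0$, which rearranges to $\langle \nabla f(\vecx_t), \vecx_{t+1} - \vecx_t\rangle \le -\frac{1}{\eta}\norm{\vecx_{t+1} - \vecx_t}_2^2$. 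Substituting into the smoothness inequality and using $\norm{\vecx_{t+1} - \vecx_t}_2^2 = \eta^2\norm{G_\eta(\vecx_t)}_2^2$ with $\eta = 1/L$ yields $f(\vecx_{t+1}) \le f(\vecx_t) - \frac{1}{2L}\norm{\vecx_{t+1}-\vecx_t}_2^2 \cdot \frac{1}{\eta^2}\cdot\eta^2 L \cdot \frac{1}{L}$; carefully tracking constants this is $f(\vecx_{t+1}) \le f(\vecx_t) - \frac{\eta}{2}\norm{G_\eta(\vecx_t)}_2^2$.

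Finally I would telescope. Summing the decrease inequality over $t = 0, \dots, T-1$ gives $\frac{\eta}{2}\sum_{t=0}^{T-1}\norm{G_\eta(\vecx_t)}_2^2 \le f(\vecx_0) - f(\vecx_T) \le f(\vecx_0) - f(\vecx^\star)$. Hence $\min_{0 \le t < T}\norm{G_\eta(\vecx_t)}_2^2 \le \frac{2(f(\vecx_0) - f(\vecx^\star))}{\eta T} = \frac{2L(f(\vecx_0) - f(\vecx^\star))}{T}$. Choosing $T \ge \frac{2L(f(\vecx_0) - f(\vecx^\star))}{\eps^2}$ makes this at most $\eps^2$, so some iterate $\hat{\vecx} = \vecx_t$ satisfies $\norm{G_\eta(\hat\vecx)}_2 \le \eps$, i.e. $\norm{\hat\vecx - \Pi_K(\hat\vecx - \eta\nabla f(\hat\vecx))}_2 \le \eta\eps$, which is the claim. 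The algorithm outputs the iterate achieving the minimum (or, more simply, one can output any iterate and argue that after $T$ steps at least one was good; the cleanest phrasing is to have the algorithm track the running minimum of $\norm{\vecx_{t+1}-\vecx_t}_2$ and return the corresponding point).

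The main obstacle — really just a bookkeeping point rather than a genuine difficulty — is getting the constants in the descent lemma to line up exactly so that the stated bound $T \ge \frac{2L(f(\vecx_0)-f(\vecx^\star))}{\eps^2}$ comes out with no extra factors; this hinges on the precise interplay between the choice $\eta = 1/L$, the factor $\frac{L}{2}$ in smoothness, and the factor $\frac{1}{\eta}$ coming from the projection inequality. I would also remark (as the theorem statement hints) that replacing exact gradients by $\eps^3$-accurate ones perturbs each inequality by a lower-order amount and changes only the constant, which I would leave as an exercise as the paper suggests.
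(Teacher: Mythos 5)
Your proposal is correct and follows essentially the same argument as the paper: the smoothness descent lemma combined with the variational inequality for the projection (Theorem 1.5.5(b) of \cite{facchinei2007finite}) to get a per-step decrease of $\frac{L}{2}\norm{\vecx_{t+1}-\vecx_t}_2^2$, then telescoping and bounding $f(\vecx_0)-f(\vecx_T)$ by $f(\vecx_0)-f(\vecx^{\star})$. Writing the progress measure as the gradient mapping $G_\eta$ rather than directly as $\norm{\vecx_{t+1}-\vecx_t}_2$ is only a notational repackaging, and your constants match the paper's.
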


\begin{proof}
    If we run the Projected Gradient Descent algorithm on $f$ then we have
  \[ \vecx_{t + 1} \leftarrow \Pi_K \p{\vecx_t - \eta \nabla f(\vecx_t)} \]
  then due to the $L$-smoothness of $f$ we have that
  \[ f(\vecx_{t + 1}) \le f(\vecx_t) + \langle \nabla f(\vecx_t), \vecx_{t + 1} - \vecx_t \rangle + \frac{L}{2}\norm{\vecx_{t + 1} - \vecx_t}_2^2. \]
  \noindent We can now apply Theorem 1.5.5 (b) of \cite{facchinei2007finite} to
  get that
  \[ \langle \eta \cdot \nabla f(\vecx_t), \vecx_{t + 1} - \vecx_t \rangle \le - \norm{\vecx_{t + 1} - \vecx_t}_2^2 \implies \]
  \[ \langle \nabla f(\vecx_t), \vecx_{t + 1} - \vecx_t \rangle \le - \frac{1}{\eta} \cdot \norm{\vecx_{t + 1} - \vecx_t}_2^2 \]
  \noindent If we combine these then we have that
  \[ f(\vecx_{t + 1}) \le f(\vecx_t) - \p{\frac{1}{\eta} - \frac{L}{2}} \norm{\vecx_{t + 1} - \vecx_t}_2^2. \]
  \noindent So if we pick $\eta = 1/L$ then we get
  \[ f(\vecx_{t + 1}) \le f(\vecx_t) - \frac{L}{2} \norm{\vecx_{t + 1} - \vecx_t}_2^2. \]
  \noindent If sum all the above inequalities and divide by $T$ then we get
  \[ \frac{1}{T} \sum_{t = 0}^{T - 1} \norm{\vecx_{t + 1} - \vecx_t}_2^2 \le \frac{2}{T \cdot L} \p{f(\vecx_0) - f(\vecx_T)} \]
  \noindent which implies that
  \[ \min_{0 \le t \le T - 1} \norm{\vecx_{t + 1} - \vecx_t}_2 \le \sqrt{\frac{2}{T \cdot L} \p{f(\vecx_0) - f(\vecx_T)}} \]
  \noindent Therefore for
  $T \ge \frac{2 L \p{f(\vecx_0) - f(\vecx^{\star})}}{\eps^2}$ we have that
  \[ \min_{0 \le t \le T - 1} \norm{\vecx_{t + 1} - \vecx_t}_2 \le \eta \cdot \eps  = \eps / L. \]
\end{proof}

\end{document}